\newtheorem{thm}{Theorem}[section]
\newtheorem{lem}[thm]{Lemma}
\newtheorem{prop}[thm]{Proposition}
\newtheorem{cor}[thm]{Corollary}
\newtheorem{dfn}[thm]{Definition}
\newcommand{\tr}{\operatorname{tr}}
\newcommand{\id}{\mathbbm{1}}
\newcommand{\diag}{\operatorname{diag}}
\newcommand{\N}{\mathbb{N}}
\newcommand{\R}{\mathbb{R}}
\newcommand{\C}{\mathbb{C}}
\newcommand{\Ca}{\mathcal{C}}
\definecolor{linkblue}{rgb}{0.1,0.2,.7}
\definecolor{citegreen}{rgb}{0.1,0.7,.2}
\title{\vspace{-15mm}%
	An operational measure for squeezing}
\author{%
	\large
	\textsc{Martin Idel, Daniel Lercher, Michael M. Wolf} \\[2mm]
	\normalsize	Zentrum Mathematik, Technische Universit\"{a}t M\"{u}nchen \\
	\vspace{-5mm}
	}
\date{}
\begin{document}

\maketitle

\begin{abstract}
We propose and analyse a mathematical measure for the amount of squeezing contained in a continuous variable quantum state. We show that the proposed measure operationally quantifies the minimal amount of squeezing needed to prepare a given quantum state and that it can be regarded as a squeezing analogue of the “entanglement of formation”. We prove that the measure is convex and superadditive and we provide analytic bounds as well as a numerical convex optimisation algorithm for its computation. By example, we then show that the amount of squeezing needed for the preparation of certain multi-mode quantum states can be significantly lower than naive approaches suggest.
\end{abstract}

\tableofcontents
	
\section{Introduction}
The interplay between quantum optics and the field of quantum information processing, in particular via the subfield of continuous variable quantum information, has been developing for several decades and is interesting also due to its experimental success (see \cite{kok10} for a thorough introduction). 

Coherent bosonic states and the broader class of Gaussian bosonic states, quantum states whose Wigner function is characterised by its first and second moments, are of particular interest in the theory of continuous variable quantum information. Their interest is also due to the fact that modes of light in optical experiments behave like Gaussian coherent states.

For any bosonic state, its matrix of second moments, the so called covariance matrix, must fulfil Heisenberg's uncertainty principle in all modes. If the state possesses a mode, where despite this inequality $\Delta x\Delta p\geq\hbar/2$ either $\Delta x $ or $\Delta p$ is strictly smaller than $\sqrt{\hbar/2}$, it is called \emph{squeezed}. The production of squeezed states is experimentally possible, but it requires the use of nonlinear optical elements \cite{bra05a}, which are more difficult to produce and handle than the usual linear optics (i.e. beam splitters and phase shifters). Nevertheless, squeezed states play a crucial role in many experiments in quantum information processing and beyond. Therefore, it is natural both theoretically and practically to investigate the amount of squeezing which is necessary to create an arbitrary quantum state. 

As a qualitative answer, squeezing is known to be an irreducible resource with respect to linear quantum optics \cite{bra05a}. In the Gaussian case, it is also known to be closely related to entanglement of states \cite{wol03} and the non-additivity of quantum channel capacities \cite{ler13}. In addition, quantitative measures of squeezing have been provided on multiple occasions \cite{lee88,kra03b}, yet none of these measures are operational for more than a single mode in the sense that they do not measure the minimal amount of squeezing necessary to prepare a given state. 

The goal of this paper is therefore twofold: First, we define and study operational squeezing measures, especially measures quantifying the amount of squeezing needed to prepare a given state. Second, we reinvestigate in how far squeezing is a resource in a mathematically rigorous manner and study the resulting resource theory by defining preparation measures.

In order to give a brief overview of the results, we assume the reader is familiar with standard notation of the field, which is also gathered in section \ref{sec:preliminaries}. In particular, let $\gamma$ denote covariance matrices. A squeezed state is a state where at least one of the eigenvalues of $\gamma$ is smaller than one. The value of the smallest eigenvalue has been taken as a measure for squeezing before \cite{kra03b}, however it is an extremely coarse measure, as it only accounts for one mode. To remedy this fact, one might naturally propose the following measure:
\begin{align*}
	G_{\mathrm{squeeze}}(\gamma)=\prod_{\lambda_i<1} \lambda_i(\gamma)^{-1},
\end{align*}
where the $\lambda_i$ are the eigenvalues of $\gamma$. However, it is not clear how to interpret this measure operationally, thus we proceed via a different route.

To obtain operational squeezing measures, we first study operational squeezing in section \ref{sec:mathmeasure}: Suppose we want to implement an operation on our quantum state corresponding to some unitary $U$. Any such unitary can be implemented as the time-evolution of Hamiltonians. Recall that any quantum-optical Hamiltonian can be split into ``passive'' and ``active'' parts, where the passive parts are implementable by linear optics and the active parts require nonlinear media. We assume that the active transformations available are single-mode squeezers with Hamiltonian 
\begin{align*}
	H_{\mathrm{squeeze},j}=i\frac{\hbar}{2}(a_j^2-a_j^{\dagger\,2})
\end{align*}
where the $j$ denotes squeezing in the $j$-th mode and the $c$ is a complex coefficient, which can be seen as the interaction strength of the medium. We therefore consider any Hamiltonian of the form
\begin{align}
	H=H_{\mathrm{passive}}(t)+\sum_i c_i(t)H_{\mathrm{squeeze},j} \label{eqn:hamilton}
\end{align}
with any passive Hamiltonian $H_{\mathrm{passive}}$. Then, a natural measure of the squeezing costs to implement this Hamiltonian would be given by
\begin{align*}
	f_{\mathrm{squeeze}}(H)=\int \sum_i |c_i(t)|\, \mathrm{d}t
\end{align*}
Our squeezing measure for the operation $U$ is then defined as the mimimum of $f_{\mathrm{squeeze}}(H)$ for all Hamiltonians implementing the operation $U$ of the form (\ref{eqn:hamilton}). With this definition, we have an operational measure answering the question: Given an operation $U$, what is the most efficient way (in terms of squeezing) to implement it using passive operations and single-mode squeezers?

Instead of working with the generators, which are unbounded operators and therefore introduce a lot of analytic problems, we will work on the level of Wigner functions and therefore with the symplectic group. The unitary $U$ then corresponds to a symplectic matrix $S$ and we prove that the most efficient way to implement it is by using the Euler decomposition, also known as Bloch-Messiah decomposition. We show this result first in the case where the functions $c_i$ are step functions and later on in the more general case of measurable $c$ (Section \ref{sec:liealg}). In particular, the result implies that the minimum amount of squeezing to implement the symplectic matrix $S\in \R^{2n\times 2n}$ is given by
\begin{align}
	F(S):=\sum_{i=1}^n \log s^{\downarrow}_i(S)
\end{align}
where $s^{\downarrow}_i$ denotes the $i$-th singular value of $S$ ordered decreasingly.

With this in mind, we define a squeezing measure for preparation procedures where one starts out with a covariance matrix of an unsqueezed state and then performs symplectic (and possibly other) operations to obtain the state. More precisely, we define
\begin{align}
	G(\gamma):=\inf\left\{\sum_{i=1}^n \log s^{\downarrow}_i(S)\middle|\gamma \geq S^TS,~S\in Sp(2n)\right\}.
\end{align}
One of the main results of this paper, which will be proven in section \ref{sec:operational}, is that this measure is indeed operational in that it quantifies the minimal amount of single-mode squeezing necessary to prepare a state with covariance matrix $\gamma$, using linear optics with single-mode squeezers, ancillas, measurements, convex combinations and addition of classical noise.

We also define a second squeezing measure, which is a squeezing-analogue of the entanglement of formation, the ``squeezing of formation'', i.e. the amount of single-mode squeezed resource states needed to prepare a given state using only passive operations and adding of noise. This is done in section \ref{sec:resourcemeasure}, where we also prove that this measure is equal to $G$. 

In addition, we prove several structural facts about $G$ in section \ref{sec:measureg}. In particular, $G$ is convex, lower semicontinuous everywhere, continuous on the interior and subadditive. Moreover, we show
\begin{align*}
	\frac{1}{2}\log G_{\mathrm{squeeze}}(\gamma)\leq G(\gamma)
\end{align*}
where equality in this lower bound is usually not achievable, albeit numerical tests have shown that the bound is often very good. 

The measure would lose a lot of its appeal, if it could not be computed. Although we cannot give an efficient analytical formula for more than one mode, we provide a numerical algorithm to obtain $G$ for any state. To demonstrate that this works in principle, we calculate $G$ approximately for a state studied in \cite{mis08} (section \ref{sec:examples}). The calculations also demonstrate that the preparation procedure obtained from minimizing $G$ can greatly lower the squeezing costs when compared to naive preparation procedures.

Finally, we critically discuss the flexibility and applicability of our measures in Section \ref{sec:discopen}. We believe that while we managed to give reasonable measures and interesting tools to study the resource theory of squeezing from a theoretical perspective, $G$ might not reflect the experimental reality in all parts. In particular, it becomes extraordinarily difficult to achieve high squeezing in a single mode \cite{and15}, which is not reflected by taking the logarithm of the squeezing parameter. We show that this shortcoming can be easily corrected for a broad class of cost functions. In addition, the form of the active part of the Hamiltonian (\ref{eqn:hamilton}) might not reflect the form of the Hamiltonian in the lab. This cannot be corrected as easily but in any case, our measure will give a lower bound. 

\section{Preliminaries} \label{sec:preliminaries}
In this section, we collect basic notions from continuous variable quantum information and symplectic linear algebra that we need in the process. For a broader overview, we refer the reader to \cite{ade14,bra05b}.
\subsection{Phase Space in Quantum Physics}
Consider a bosonic system with $n$-modes, each of which is characterised by a pair of canonical variables $\{Q_k,P_k\}$. Setting $R=(Q_1,P_1,\ldots, Q_n,P_n)^T$ the canonical commutation relations (CCR) take on the form $[R_k,R_l]=i\sigma_{kl}$ with the standard symplectic form
\begin{align*}
	\sigma=\bigoplus_{i=1}^n \begin{pmatrix}{} 0 & 1 \\ -1 & 0 \end{pmatrix}
\end{align*}
Since it will sometimes be convenient, we also introduce another basis of the canonical variables: Let $\tilde{R}=(Q_1,Q_2,\ldots,Q_n,P_1,P_2,\ldots,P_n)^T$, then the symplectic canonical commutation relations take on the form $[\tilde{R}_k,\tilde{R}_l]=iJ_{kl}$ with the symplectic form
\begin{align*}
	J=\begin{pmatrix} 0 & \id_n \\ -\id_n & 0 \end{pmatrix}.
\end{align*}
Clearly, $J$ and $\sigma$ differ only by a permutation, since $R$ and $\tilde{R}$ differ only by a permutation. 

From functional analysis, it is well-known that the operators $Q_k$ and $P_k$ cannot be represented by bounded operators on a Hilbert space. In order to avoid complications associated to unbounded operators, it is usually easier to work with a representation of the CCR-relations on some Hilbert space $\mathcal{H}$, instead. The standard representation is known as the \emph{Schr\"{o}dinger representation} and defines the \emph{Weyl system}, a family of unitaries $W_{\xi}$ with $\xi\in\R^{2n}$ and
\begin{align*}
	W_{\xi}:=\exp(i\xi \sigma R), \quad \xi\in\R^{2n}
\end{align*}
fulfiling the Weyl relations $W_{\xi}W_{\eta}=\exp^{-i/2\xi\sigma\eta}W_{\xi+\eta}$ for all $\xi,\eta$. Such a system is unique up to isomorphism under further assumptions of continuity and irreducibility as obtained by the Stone-von Neumann theorem. Given $W_{\xi}$ it is important to note that
\begin{align}
	W_{\xi}R_kW_{\xi}^*=R_k+\xi_k\id \qquad \forall \xi\in\R^{2n}. \label{eqn:weylsystem}
\end{align}
In this paper, we will not use many properties of the Weyl system, since instead, we can work with the much simpler \emph{moments} of the state: Given a quantum state $\rho\in \mathcal{S}_1(L^2(\mathbb{R}^{2n}))$ (trace-class operators on $L^2$), its first and second centred moments are given by 
\begin{align}
	d_k&:=\tr(\rho R_k) \\
	\gamma_{kl}&:=\tr(\rho \{R_k-d_k\id,R_l-d_l\id\}_+) \label{eqn:gamma}
\end{align}
with $\{\cdot,\cdot\}_+$ the regular anticommutator. We will write $\Gamma$ instead of $\gamma$ for the covariance matrix, if we work with $\tilde{R}$ instead of $R$. Again, a simple permutation relates the two. 

An important question one can ask is when a matrix $\gamma$ can occur as a covariance matrix of a quantum state. The answer is given by Heisenberg's principle, which here takes the form of a matrix inequality:
\begin{prop}
Let $\gamma\in\mathbb{R}^{2n\times 2n}$, then there exists a quantum state $\rho$ with covariance matrix $\gamma$ if and only if 
\begin{align*}
	\gamma\geq i\sigma
\end{align*}
where $\geq$ denotes the standard partial order on matrices (i.e. $\gamma \geq i\sigma$ if $\gamma-i\sigma$ is positive semidefinite). Note that we leave out the usual factor of $\hbar/2$ to simplify notation
\end{prop}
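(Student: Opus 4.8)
The plan is to prove the two implications separately. For necessity, suppose $\rho$ is a quantum state with mean $d$ and covariance matrix $\gamma$, and write $\tilde R_k := R_k - d_k\id$ for the centred quadratures, which obey the same CCR $[\tilde R_k,\tilde R_l]=i\sigma_{kl}$. For an arbitrary $c\in\C^{2n}$ I would form the operator $A:=\sum_k c_k\tilde R_k$ and exploit the elementary positivity $\tr(\rho\,A^\dagger A)\geq 0$. Splitting each product $\tilde R_k\tilde R_l$ into its anticommutator and commutator parts yields $\tr(\rho\,\tilde R_k\tilde R_l)=\tfrac12(\gamma_{kl}+i\sigma_{kl})$, so that $\tr(\rho\,A^\dagger A)=\tfrac12\,c^\dagger(\gamma+i\sigma)c\geq 0$. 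Hence $\gamma+i\sigma\geq 0$, and since $\gamma$ and $\sigma$ are real this is equivalent, by entrywise complex conjugation (which preserves the spectrum of a Hermitian matrix), to $\gamma-i\sigma\geq 0$, i.e.\ $\gamma\geq i\sigma$.

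For sufficiency I would first draw out some structural consequences of $\gamma\geq i\sigma$. Testing the inequality against real vectors $c$, for which $c^\dagger i\sigma c = i\,c^T\sigma c=0$ by antisymmetry, already gives $\gamma\geq 0$; a short argument with vectors of the form $c=v+itw$ ($v,w\in\R^{2n}$, $t\in\R$) then upgrades this to strict positive definiteness, since a real kernel vector $v$ of $\gamma$ would force the linear-in-$t$ term $2t\,v^T\sigma w$ to vanish for all $w$, hence $\sigma v=0$, contradicting the invertibility of $\sigma$. With $\gamma>0$ secured I would invoke Williamson's normal form from symplectic linear algebra: there exist $S\in Sp(2n)$ and symplectic eigenvalues $\nu_1,\dots,\nu_n\geq 0$ with $\gamma=S^T\nu S$, where $\nu=\diag(\nu_1,\nu_1,\dots,\nu_n,\nu_n)$.

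Because $S^T\sigma S=\sigma$, congruence by the invertible $S$ preserves the matrix inequality, so $\gamma\geq i\sigma$ is equivalent to $\nu\geq i\sigma$. Block by block this reduces to $\left(\begin{smallmatrix}\nu_i & -i\\ i & \nu_i\end{smallmatrix}\right)\geq 0$, whose eigenvalues are $\nu_i\pm 1$, so the condition is exactly $\nu_i\geq 1$ for every $i$. This is precisely the regime in which each single-mode block is realized by a genuine thermal state with covariance $\nu_i\id_2$ (the vacuum at the extremal value $\nu_i=1$). I would therefore take $\rho_0$ to be the tensor product of these $n$ thermal states, which has covariance matrix $\nu$, and then act with a Gaussian (metaplectic) unitary $U$ implementing the symplectic transformation on the quadratures; the induced congruence on covariance matrices sends $\nu$ to $S^T\nu S=\gamma$. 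An optional Weyl displacement via \eqref{eqn:weylsystem} fixes any desired first moments $d$, with both steps manifestly preserving positivity and trace-class membership. The resulting state has covariance matrix $\gamma$.

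I expect the genuine work to lie entirely on the sufficiency side: necessity is a one-line operator inequality, whereas the construction rests on two nontrivial inputs --- Williamson's theorem and the existence of the metaplectic representation $S\mapsto U_S$ --- and one must check that the thermal building blocks remain bona fide states at the extremal value $\nu_i=1$. The cleanest packaging is to treat Williamson's decomposition and the unitary implementability of symplectic maps as standing facts of the symplectic formalism, recalled or cited among the preliminaries; the reduction to thermal states then makes positivity of the constructed operator transparent, rather than something to be verified by hand at the level of unbounded operators.
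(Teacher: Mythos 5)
The paper states this proposition in the preliminaries without proof, treating it as a standard fact of the continuous-variable literature, so there is no in-paper argument to compare against. Your proof is the standard one and it is correct. The necessity direction via $\tr(\rho A^\dagger A)\geq 0$ with $A=\sum_k c_k(R_k-d_k\id)$ correctly uses the paper's normalisation $\gamma_{kl}=\tr(\rho\{R_k-d_k\id,R_l-d_l\id\}_+)$ (no factor $\tfrac12$ in the anticommutator), giving $\tr(\rho\tilde R_k\tilde R_l)=\tfrac12(\gamma_{kl}+i\sigma_{kl})$ and hence $\gamma\pm i\sigma\geq 0$. Your upgrade to strict positivity via $c=v+itw$ is sound, and the sufficiency construction (Williamson reduction to $\nu_i\geq 1$ from the $2\times2$ blocks with eigenvalues $\nu_i\pm1$, thermal/vacuum building blocks, metaplectic unitary, optional Weyl displacement) is exactly the construction the paper implicitly relies on elsewhere (Corollary \ref{cor:symplecticspec} and the discussion of Williamson's theorem). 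Two minor caveats worth being aware of: the manipulation $\tr(\rho A^\dagger A)\geq 0$ for unbounded $A$ tacitly assumes $\rho$ has finite second moments, which is implicit in the hypothesis that the covariance matrix exists; and Williamson's theorem yields $\nu_i>0$ for positive definite $\gamma$ (not merely $\nu_i\geq 0$), though this is immaterial since you then show $\nu_i\geq 1$.
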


Another question one might ask is when a covariance matrix belongs to a pure quantum state. This question cannot be answered without more information about the higher order terms. If we however require the state to be uniquely determined by its first and second moments, i.e. if we consider the so called \emph{Gaussian states}, we have an answer (cf. \cite{ade04}):
\begin{prop} \label{prop:puregaussian}
Let $\rho$ be an $n$-mode Gaussian state (i.e. completely determined by its first and second moments), then $\rho$ is pure if and only if $\det(\gamma_{\rho})=1$.
\end{prop}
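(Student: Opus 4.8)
The plan is to reduce the statement to a computation of symplectic eigenvalues via Williamson's normal form. First I would recall that the physicality condition $\gamma\geq i\sigma$ forces $\gamma$ to be strictly positive definite, so Williamson's theorem applies: there is a symplectic $S\in Sp(2n)$ with $\gamma=S^T D S$ and $D=\bigoplus_{i=1}^n\diag(\nu_i,\nu_i)$, where the $\nu_i>0$ are the symplectic eigenvalues. Transporting the constraint through $S$ (using $S^T\sigma S=\sigma$) turns $\gamma\geq i\sigma$ into $D\geq i\sigma$, which block by block reads $\nu_i\geq 1$. Since every symplectic matrix has determinant $+1$, one gets $\det\gamma=\det D=\prod_{i=1}^n\nu_i^2$. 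Hence $\det\gamma=1$ together with $\nu_i\geq 1$ forces all $\nu_i=1$, and conversely all $\nu_i=1$ gives $\det\gamma=1$. This reduces the proposition to the equivalence: a Gaussian state is pure if and only if all its symplectic eigenvalues equal $1$.

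For the purity part I would invoke that the symplectic matrix $S$ lifts to a Gaussian unitary $U_S$ on the Hilbert space (the metaplectic representation), under which the covariance matrix transforms as $\gamma\mapsto S^T\gamma S$ and which, being unitary, preserves the spectrum of the density operator and in particular the purity $\tr\rho^2$. Thus without loss of generality I may assume $\gamma=D$ is already in Williamson form. The Gaussian state with covariance matrix $D$ factorises as a tensor product of single-mode states $\rho_i$ with covariance $\diag(\nu_i,\nu_i)$, each of which is a thermal state. A single-mode thermal state has purity $\tr\rho_i^2=1/\nu_i$, so $\tr\rho^2=\prod_i\nu_i^{-1}=1/\sqrt{\det\gamma}$, and this equals $1$ exactly when every $\nu_i=1$, i.e.\ exactly when $\rho$ is pure.

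The main obstacle is the last step, where the genuinely ``Gaussian'' (higher-moment) input enters rather than pure symplectic linear algebra: establishing the purity formula $\tr\rho^2=1/\sqrt{\det\gamma}$, equivalently that the single-mode thermal state at symplectic eigenvalue $\nu$ has purity $1/\nu$ and is pure only at $\nu=1$ (the vacuum). I would supply this in one of two ways. The first is an explicit Wigner-function computation: with $\tr\rho^2\propto\int W_\rho^2$ and $W_\rho$ a centred Gaussian of covariance $\gamma$, the Gaussian integral yields $\tr\rho^2=1/\sqrt{\det\gamma}$ after fixing the normalisation against $\tr\rho=1$. The second, more hands-on route is to diagonalise the thermal state as $\rho_i\propto\exp(-\beta_i a_i^\dagger a_i)$, read off its eigenvalues $p_n=\bar n^n/(\bar n+1)^{n+1}$ with $\nu_i=2\bar n_i+1$, and sum the geometric series to obtain $\tr\rho_i^2=1/\nu_i$, so that $\nu_i=1\Leftrightarrow\beta_i=\infty\Leftrightarrow$ vacuum. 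Either computation closes the argument.
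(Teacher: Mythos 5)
The paper itself offers no proof of this proposition; it is stated with a pointer to \cite{ade04}, so there is no in-paper argument to compare against. Your proof is correct and is the standard one: Williamson's theorem together with $\det S=1$ for symplectic $S$ reduces the determinant condition to the statement that all symplectic eigenvalues equal one, the metaplectic unitary $U_S$ lets you pass to the normal form without changing $\tr\rho^2$, and the single-mode thermal purity $1/\nu_i$ (equivalently $\tr\rho^2=1/\sqrt{\det\gamma}$) finishes the equivalence. This is also consistent with the remark the paper makes after Williamson's theorem, that a Gaussian state is pure iff its covariance matrix is symplectic. Two minor points you should make explicit if you write this up: first, you tacitly centre the state before factorising it into a tensor product of thermal states; this is harmless because the displacement is a Weyl unitary that affects neither the (centred) covariance matrix nor the purity, but it should be said. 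Second, the prefactor in the Wigner-overlap formula $\tr\rho^2\propto\int W_\rho^2$ is convention-dependent (the paper drops the usual $\hbar/2$, so the vacuum has $\gamma=\id$), which is why your suggestion to fix it against $\tr\rho=1$, or to use the spectral route through $p_n=\bar n^n/(\bar n+1)^{n+1}$, is the safer way to close the argument.
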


\subsection{The linear symplectic group and squeezing} \label{sec:liegroups}
A very important set of operations on a quantum system are those, that leave the canonical commutation relations invariant, i.e. linear transformations $S$ such that $[SR_k,SR_l]=i\sigma_{kl}$. Such transformations are called \emph{symplectic transformations}.
\begin{dfn} Given a symplectic form $\sigma$ on $\mathbb{R}^{2n\times 2n}$, the set of matrices $S\subset \R^{2n\times 2n}$ such that $S^T\sigma S=\sigma$ is called the \emph{linear symplectic group} and is denoted by $Sp(2n,\R,\sigma)$.
\end{dfn}
We will usually drop both $\sigma$ and $\R$ in the description of the symplectic group since this will be clear from the context. The linear symplectic group is a Lie group and as such contains a lot of structure. For more information on the linear symplectic group and its connection to physics, we refer the reader to \cite{deg06} and \cite{mcd98} chapter 2. An overview for physicists is also found in \cite{arv95a}. All of the following can be found in that paper:
\begin{dfn}
Let $O(2n,\R)$ be the real orthogonal group, Then we define the following three subsets of $Sp(2n)$: 
\begin{align*}
	K(n)&:=Sp(2n,\R)\cap O(2n,\R) \\
	Z(n)&:=\left\{\id_{2(j-1)}\oplus \diag(s_i,s_i^{-1})\oplus \id_{2(n-(j+1))}|s\geq 0,j=1,\ldots,n\right\} \\
	\Pi(n)&:=\{S\in Sp(2n,\R)|S\geq 0\}
\end{align*}
The first subset is the \emph{maximally compact subgroup} of $Sp(2n)$, the second subset is the subset of \emph{single-mode-squeezers}. It generates the multiplicative subgroup $\mathcal{A}(2n)$, a maximally abelian subgroup of $Sp(2n)$. The third set is the set of positive definite symplectic matrices.
\end{dfn}

In addition, since $Sp(2n)$ is a Lie group, it possesses a \emph{Lie algebra}. Let us collect a number of relevant facts about the Lie algebra and some subsets:
\begin{prop} \label{prop:liealg}
The Lie algebra $\mathfrak{sp}(2n)$ of $Sp(2n)$ is given by
\begin{align*}
	\mathfrak{sp}(2n):=\{T\in \R^{2n\times 2n}| \sigma T+T\sigma=0\}
\end{align*}
together with the commutator as Lie bracket. Certain other Lie algebras or subsets of Lie algebras are of relevance to us:
\begin{enumerate}
	\item $\mathfrak{so}(2n):=\{A\in \R^{2n\times 2n}| A+A^T=0\}$ the Lie algebra of $SO(2n)$.
	\item $\mathfrak{k}(n):=\{A\in\R^{2n\times 2n}|A=\begin{pmatrix}{} a & b \\ -b & a\end{pmatrix}, a=-a^T, b=b^T\}$ the Lie algebra of $K(n)$. 
	\item $\pi(n):=\{A\in\R^{2n\times 2n}|A=\begin{pmatrix}{} a & b \\ b & -a\end{pmatrix}, a=a^T, b=b^T\}$ the subspace of the Lie algebra $\mathfrak{sp}(2n)$ corresponding to $\Pi(n)$.
\end{enumerate}
\end{prop}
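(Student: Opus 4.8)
The plan is to treat each algebra as the tangent space at the identity of its group and to verify the stated forms by the two standard moves: differentiate the defining relation of the group to obtain the \emph{necessary} linear condition on a tangent vector, and exponentiate to obtain \emph{sufficiency}. For the symplectic algebra itself I would take a smooth curve $S(t)\in Sp(2n)$ with $S(0)=\id$ and $\dot S(0)=T$, and differentiate $S(t)^T\sigma S(t)=\sigma$ at $t=0$ to get $T^T\sigma+\sigma T=0$; using $\sigma^T=-\sigma=\sigma^{-1}$ this is the asserted characterization of $\mathfrak{sp}(2n)$. For the converse inclusion I would check that any $T$ with $\sigma T\sigma^{-1}=-T^T$ satisfies $\exp(tT)^T\sigma\exp(tT)=\sigma$ for all $t$, so that $t\mapsto\exp(tT)$ is a curve in $Sp(2n)$ with derivative $T$; a dimension count against $\dim\mathfrak{sp}(2n)=n(2n+1)$ then confirms we have captured the whole algebra.

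The orthogonal case is the textbook computation: differentiating $O(t)^TO(t)=\id$ gives $A+A^T=0$, and since $SO(2n)$ is the identity component of $O(2n)$ its algebra is again $\mathfrak{so}(2n)$. For $\mathfrak{k}(n)$ I would use that $K(n)=Sp(2n)\cap O(2n)$ is an intersection of closed subgroups, hence a closed subgroup whose algebra is $\mathfrak{sp}(2n)\cap\mathfrak{so}(2n)$. Working in the $\tilde R$ basis with $J=\left(\begin{smallmatrix}0&\id_n\\-\id_n&0\end{smallmatrix}\right)$, a general $T\in\mathfrak{sp}$ has the block form $\left(\begin{smallmatrix}a&b\\c&-a^T\end{smallmatrix}\right)$ with $b,c$ symmetric; imposing antisymmetry $T=-T^T$ forces $a=-a^T$, $c=-b$ and $b=b^T$, which is exactly the claimed shape $\left(\begin{smallmatrix}a&b\\-b&a\end{smallmatrix}\right)$. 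This is a short block calculation once the general form of $\mathfrak{sp}$ in the $J$-basis is in hand.

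The subspace $\pi(n)$ needs separate treatment, because $\Pi(n)$ is not a subgroup and so $\pi(n)$ is not the algebra of a subgroup. The precise reading, which I would make explicit, is that $\pi(n)$ is the space of \emph{symmetric} elements of $\mathfrak{sp}(2n)$ — the $(-1)$-eigenspace of the Cartan involution $\theta(T)=-T^T$ — and that $\exp$ restricts to a bijection $\pi(n)\to\Pi(n)$. Imposing $T=T^T$ on the block form above immediately yields $\left(\begin{smallmatrix}a&b\\b&-a\end{smallmatrix}\right)$ with $a=a^T$, $b=b^T$. For the correspondence with $\Pi(n)$ I would argue both directions: if $T\in\pi(n)$ then $\exp(T)$ is symmetric positive definite (since $T$ is symmetric) and symplectic (since $T\in\mathfrak{sp}$), so $\exp(\pi(n))\subseteq\Pi(n)$; conversely, given $S\in\Pi(n)$ let $X=\log S$ be its unique symmetric logarithm and deduce $X\in\mathfrak{sp}$ from $S=S^T\in Sp(2n)$. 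Concretely, $S^T\sigma S=\sigma$ together with $S=S^T$ gives $\sigma^{-1}S\sigma=S^{-1}$; applying $\log$ and invoking uniqueness of the symmetric logarithm yields $\sigma^{-1}X\sigma=-X$, i.e.\ $\sigma X+X\sigma=0$, which for the symmetric $X$ is precisely the condition $X\in\mathfrak{sp}$.

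I expect the only genuinely delicate step to be this last one: showing that the symmetric logarithm of a positive definite symplectic matrix lands back in $\mathfrak{sp}(2n)$, since it rests on uniqueness of the principal (symmetric) logarithm together with the conjugation identity $\sigma^{-1}S\sigma=S^{-1}$ rather than a bare differentiation. Everything else — the two differentiation arguments, the exponential check for sufficiency, and the two block computations singling out $\mathfrak{k}(n)$ and $\pi(n)$ inside $\mathfrak{sp}(2n)$ — is routine linear algebra once the general block form of $\mathfrak{sp}(2n)$ in the $J$-basis has been established.
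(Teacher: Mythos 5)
Your proposal is correct, but there is nothing in the paper to compare it against: the paper states this proposition without proof, deferring entirely to the cited reference \cite{arv95a}. Your argument is the standard one and all the steps check out — differentiating the defining relations gives the necessary linear conditions, exponentiation gives sufficiency, the Lie algebra of $K(n)=Sp(2n)\cap O(2n)$ is the intersection of the Lie algebras because both are closed subgroups, and the block computations starting from the general form $\bigl(\begin{smallmatrix}a&b\\c&-a^T\end{smallmatrix}\bigr)$, $b=b^T$, $c=c^T$ of an element of $\mathfrak{sp}(2n)$ in the $J$-basis correctly yield the stated shapes of $\mathfrak{k}(n)$ and $\pi(n)$. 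You are also right that the only genuinely delicate point is the identification $\exp(\pi(n))=\Pi(n)$, and your argument there is sound: $\sigma^{-1}S\sigma$ is again symmetric positive definite (conjugation by the orthogonal matrix $\sigma$), it equals $S^{-1}$ by symplecticity plus symmetry of $S$, and uniqueness of the symmetric logarithm then forces $\sigma^{-1}X\sigma=-X$.

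One caveat you should not gloss over: your differentiation produces $T^T\sigma+\sigma T=0$, and your remark that "using $\sigma^T=-\sigma=\sigma^{-1}$ this is the asserted characterization" is not accurate — the condition displayed in the proposition, $\sigma T+T\sigma=0$, is \emph{not} equivalent to $T^T\sigma+\sigma T=0$ (transposing the latter only reproduces itself). The displayed condition is a typo in the paper for $\sigma T+T^T\sigma=0$; taken literally it describes the matrices anticommuting with $\sigma$, which is exactly the symmetric part $\pi(n)$ of $\mathfrak{sp}(2n)$ and has the wrong dimension (for $n=1$ it is $2$-dimensional, whereas $\dim\mathfrak{sp}(2)=3$). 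Your derivation gives the correct condition; just state explicitly that you are correcting the displayed formula rather than claiming equivalence with it.
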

Since the Lie algebra is a vector space, it is spanned by a set of vectors, the \emph{generators}. A standard decomposition is given by taking the generators of $\mathfrak{k}(n)$, the so called \emph{passive transformations} as one part and the generators of $\pi(n)$, the so called \emph{active transformations} as the other part. That these two sets together determine the Lie algebra completely can be seen with the \emph{polar decomposition}: 
\begin{prop}[Polar decomposition \cite{arv95a}]
For every symplectic matrix $S\in Sp(2n)$ there exists a unique $U\in K(n)$ and a unique $P\in \Pi(n)$ such that $S=UP$.
\end{prop}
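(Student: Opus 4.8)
The plan is to reduce this to the ordinary polar decomposition of invertible real matrices and then to verify that both factors remain symplectic. Recall that any invertible $S\in\R^{2n\times 2n}$ decomposes uniquely as $S=UP$ with $U$ orthogonal and $P$ symmetric positive definite: one sets $A:=S^TS$, takes $P:=A^{1/2}$ to be the unique symmetric positive definite square root, and defines $U:=SP^{-1}$. Since every symplectic matrix is invertible, this applies to our $S$, and the entire content of the proposition is the assertion that $P\in\Pi(n)$ and $U\in K(n)$.

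First I would record that $Sp(2n)$ is closed under transposition. Starting from $S^T\sigma S=\sigma$ one obtains $S^T=\sigma S^{-1}\sigma^{-1}$, and a short computation using $\sigma^{-1}=-\sigma$ (so that $\sigma^2=-\id$) gives $S\sigma S^T=\sigma$, i.e.\ $S^T\in Sp(2n)$. Consequently $A=S^TS$ is a product of symplectic matrices and hence symplectic, while being manifestly symmetric positive definite. The crux is then to show $P=A^{1/2}\in Sp(2n)$. Since $A$ is symmetric and symplectic, $A\sigma A=\sigma$, which rearranges to $\sigma A\sigma^{-1}=A^{-1}$. Because the principal square root of a positive definite matrix commutes with any similarity transformation, conjugating by $\sigma$ yields
\[
\sigma A^{1/2}\sigma^{-1}=(\sigma A\sigma^{-1})^{1/2}=(A^{-1})^{1/2}=(A^{1/2})^{-1}=P^{-1}.
\]
Rearranging $\sigma P\sigma^{-1}=P^{-1}$ gives $P\sigma P=\sigma$, and as $P$ is symmetric this is exactly $P^T\sigma P=\sigma$; hence $P\in Sp(2n)$, and being positive definite it lies in $\Pi(n)$.

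It then remains to note that $U=SP^{-1}$ is a product of symplectic matrices, hence symplectic, and is orthogonal by the classical polar decomposition, so $U\in Sp(2n)\cap O(2n)=K(n)$. Uniqueness of the pair $(U,P)$ is inherited verbatim from the uniqueness of the ordinary polar decomposition, since the positive definite factor is forced to equal $(S^TS)^{1/2}$ and then $U$ is determined as $SP^{-1}$. The only genuinely nontrivial step is the identity $\sigma A^{1/2}\sigma^{-1}=(A^{1/2})^{-1}$, i.e.\ that the positive definite square root stays inside the symplectic group; I expect this to be the main obstacle, and the cleanest justification is that $A^{1/2}$ is a limit of polynomials in $A$ (equivalently, invoke the spectral theorem), so that conjugation by $\sigma$ passes through the functional calculus. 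Once this is granted, everything else is bookkeeping with the group axioms.
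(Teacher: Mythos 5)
Your argument is correct and complete: reducing to the ordinary polar decomposition, checking that $S^TS$ is symplectic, and pushing the square root through conjugation by $\sigma$ via the functional calculus is exactly the standard proof of this fact. The paper itself gives no proof (it cites the result from the literature), so there is nothing to compare against; your treatment of the one nontrivial step, $\sigma A^{1/2}\sigma^{-1}=(A^{1/2})^{-1}$, is properly justified and the uniqueness claim does indeed follow verbatim from the classical polar decomposition.
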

A basis for the Lie algebras $\mathfrak{k}(n)$ and $\pi(n)$ therefore characterises the complete Lie algebra $\mathfrak{sp}(2n)$. Elements of the Lie algebras are also called \emph{generators} and a basis of generators therefore fixes the Lie algebra. Via the polar decomposition, this implies that they also generate the whole Lie group. We will need a set of generators $g_{ij}^{(p)}\in \mathfrak{k}(n)$ and $g_{ij}^{(a)}\in \pi(n)$ later on, which we will fix via the metaplectic representation: 
\begin{prop}[Metaplectic representation \cite{arv95a}]
Let $W_{\xi}$ be the continuous irreducible Weyl system defined above and let $S\in Sp(2n)$. Then there exists an up to a phase unique unitary $U_S$ with 
\begin{align*}
	\forall \xi:\quad U_SW_{\xi}U_S^{\dagger}=W_{S\xi}
\end{align*}
\end{prop}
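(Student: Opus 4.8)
The plan is to exhibit $U_S$ as the intertwiner guaranteed by the Stone--von Neumann theorem, applied to a suitably ``rotated'' Weyl system. First I would define a new family of unitaries by $W'_{\xi}:=W_{S\xi}$ and verify that it is again a continuous irreducible Weyl system for the \emph{same} symplectic form $\sigma$. Continuity is immediate, since $\xi\mapsto S\xi$ is linear (hence continuous) and $W$ is continuous by hypothesis; irreducibility is inherited because $S$ is invertible, so that $\{W'_{\xi}\mid\xi\in\R^{2n}\}=\{W_{\eta}\mid\eta\in\R^{2n}\}$ as sets and the generated von Neumann algebra is unchanged. The one computation that matters is the Weyl relation: using $W_{S\xi}W_{S\eta}=\exp(-\tfrac{i}{2}(S\xi)\sigma(S\eta))\,W_{S(\xi+\eta)}$ together with the defining identity $S^{T}\sigma S=\sigma$, one obtains $(S\xi)\sigma(S\eta)=\xi\,(S^{T}\sigma S)\,\eta=\xi\sigma\eta$, whence $W'_{\xi}W'_{\eta}=\exp(-\tfrac{i}{2}\xi\sigma\eta)\,W'_{\xi+\eta}$, exactly the relation satisfied by $W$.

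With $W'$ thus identified as an irreducible continuous Weyl system for $\sigma$, the Stone--von Neumann theorem asserts that any two such systems are unitarily equivalent. Applying this to $W$ and $W'$ yields a unitary $U_S$ with $U_S W_{\xi}U_S^{\dagger}=W'_{\xi}=W_{S\xi}$ for all $\xi$, which is precisely the asserted existence.

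For uniqueness up to a phase, suppose $U_S$ and $V_S$ both satisfy the intertwining relation. Then $V_S^{\dagger}U_S$ commutes with $W_{\xi}$ for every $\xi$, so irreducibility of the Weyl system together with Schur's lemma forces $V_S^{\dagger}U_S=\lambda\,\id$ for some scalar $\lambda$; unitarity then gives $|\lambda|=1$, and hence $U_S=\lambda V_S$ with $|\lambda|=1$, i.e.\ $U_S$ is determined up to a phase.

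The substantive content is carried entirely by the Stone--von Neumann theorem, so the only genuine work is checking that its hypotheses really hold for $W'$ — the essential point being the invariance $S^{T}\sigma S=\sigma$, which is exactly what makes the rotated system satisfy the same Weyl relation rather than one with a rescaled symplectic form. I therefore expect the delicate part to be bookkeeping rather than analysis: tracking the sign and factor conventions in the exponent $\exp(-\tfrac{i}{2}\xi\sigma\eta)$ consistently, and confirming that strong continuity of $W'$ is genuinely inherited so that Stone--von Neumann applies. No unbounded-operator estimates enter at any stage.
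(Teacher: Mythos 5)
Your argument is correct and is exactly the standard route the paper relies on (the paper states this result without proof, citing the literature, and has already invoked Stone--von Neumann uniqueness of continuous irreducible Weyl systems earlier in Section \ref{sec:preliminaries}): the rotated system $W'_{\xi}=W_{S\xi}$ satisfies the same Weyl relation precisely because $S^{T}\sigma S=\sigma$, Stone--von Neumann supplies the intertwining unitary, and Schur's lemma gives uniqueness up to a phase. No gaps.
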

Since we have the liberty of a phase, this is not really a representation of the symplectic group, but of its two-fold cover, the metaplectic group (hence the name ``metaplectic representation''). We can also study the generators of this representation, which are given by $1/2\{R_k,R_l\}_+$. For the reader familiar with annihilation and creation operators, if we denote by $a_i,a_i^{\dagger}$ the annihilation and creation operators of the $n$ bosonic modes, the generators of the metaplectic representation are given by 
\begin{align}
	G_{ij}^{p(1)}&:=i(a_j^{\dagger}a_i-a_i^{\dagger}a_j) \qquad G_{ij}^{p(2)}:=a_i^{\dagger}a_j+a_j^{\dagger}a_i \label{eqn:passivetrafo} \\
	G_{ij}^{a(3)}&:=i(a_j^{\dagger}a_i^{\dagger}-a_ia_j) \qquad G_{ij}^{a(4)}:=a_i^{\dagger}a_j^{\dagger}+a_ia_j \label{eqn:activetrafo}
\end{align}
where the $p$ stands for passive and the $a$ for active. The passive generators are also frequently called \emph{linear transformations} in the literature \cite{kok07}. We can now define a set of generators of the symplectic group $Sp(2n)$ by using the set of metaplectic generators above and take the corresponding generators in the Lie algebra $\mathfrak{sp}(2n)$ in a consistent way. As one would expect from the name, the passive metaplectic generators correspond to a set of passive generators of $\mathfrak{k}(n)$ and the set of active metaplectic generators corresponds to a set of active generators of $\pi(n)$.

With this description, we could write down the corresponding set of generators $g_{ij}$. However, we only note that the generators $G_{ii}^{a(3)}$, $i=1,\ldots,n$, correspond to the generators $g_{ii}^{a(3)}$ generating matrices in $Z_n$. This is explicitly spelled out in equations (6.6b) in \cite{arv95a}.

Given a Hamiltonian and the correspondence of the generators $G_i$ with generators of the Lie algebra $g_i$, the time evolution associated to the Hamiltonian corresponds to a path on the Lie group. Using the identifications above, we obtain the following picture: Recall that the Lie algebra $\mathfrak{g}$ of a Lie group $G$ is its tangent space $T_eG$ at the identity. The generators $g\in \mathfrak{g}$ then define left-invariant vector fields $g(\cdot):G\to TG$ with $x\mapsto g(x)\in T_xG$ by using the derivative. In matrix Lie groups, this amounts to setting $g(x)=x\cdot g$, where the product is matrix multiplication. Given a basis $g_i$ of the Lie algebra $\mathfrak{g}$ the smooth left-invariant vector fields $g_i(x)$ then define a basis of $T_xG$ at every point $x\in G$. Now, given a differentiable path $\gamma:[0,1]\to Sp(2n)$, this means that we can find differentiable coefficients $c_i:[0,1]\to \R$ such that 
\begin{align*}
	\gamma^{\prime}(t)=\sum_i c_i(t)g_i(\gamma(t))=\left(\sum_i c_i(t)g_i(e)\right)\gamma(t)=:A(t)\gamma(t)\end{align*}
where $A(t)\in \mathfrak{g}$ for all $t$ is a differentiable function. Instead of directly studying Hamiltonians with time-dependent coefficients as in equation (\ref{eqn:hamilton}), it is equivalent to study functions $A:[0,1]\to \mathfrak{g}$.

There are a number of decompositions of the Lie group and its subgroup in addition to the polar decomposition. We will mostly be concerned with the so called \emph{Euler decomposition} (sometimes called \emph{Bloch-Messiah} decomposition) and \emph{Williamson's decomposition}:
\begin{prop}[Euler decomposition \cite{arv95a}]
Let $S\in Sp(2n)$, then there exist $K,K^{\prime}\in K(n)$ and $A\in \mathcal{A}(n)$ such that $S=KAK^{\prime}$.
\end{prop}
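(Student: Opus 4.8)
The plan is to reduce the Euler decomposition to the polar decomposition together with a symplectic diagonalisation of the positive part. First I would invoke the polar decomposition to write $S = UP$ with $U \in K(n)$ and $P \in \Pi(n)$; since symplectic matrices are invertible, $P = \sqrt{S^T S}$ is in fact positive \emph{definite} and symmetric. The whole problem then reduces to diagonalising $P$ by a transformation that can be chosen simultaneously orthogonal and symplectic: if I can produce $O \in K(n)$ with $O^T P O = D$ for some $D \in \mathcal{A}(n)$, then $S = U O D O^T$, and setting $K := UO$, $A := D$, $K' := O^T$ (all three in the correct sets, since $K(n)$ is a group) yields $S = K A K'$.

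Next I would analyse the spectrum of $P$. Being symmetric and symplectic, $P$ satisfies $P\sigma P = \sigma$, hence $P\sigma = \sigma P^{-1}$; thus if $Pv = \lambda v$ then $P(\sigma v) = \sigma P^{-1} v = \lambda^{-1}(\sigma v)$, so the eigenvalues occur in reciprocal pairs $(\lambda,\lambda^{-1})$ and $\sigma$ maps the $\lambda$-eigenspace isometrically (as $\sigma\in O(2n)$) onto the $\lambda^{-1}$-eigenspace. The fixed subspace $E_1$ (eigenvalue $\lambda = 1$) is $\sigma$-invariant, and since $\sigma^2 = -\id$ the restriction $\sigma|_{E_1}$ is invertible, so $\sigma$ endows $E_1$ with a nondegenerate antisymmetric form of its own.

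Then I would build $O$ column by column from these pairs. For each $\lambda > 1$ I would pick an orthonormal basis $\{v^{(a)}\}$ of its eigenspace and pair each $v^{(a)}$ with $\sigma v^{(a)}$; the images automatically form an orthonormal basis of the $\lambda^{-1}$-eigenspace (because $\sigma$ is orthogonal), and the two families are mutually orthogonal since they belong to distinct eigenspaces of the symmetric matrix $P$. On $E_1$ I would choose an orthonormal basis adapted to $\sigma|_{E_1}$. Ordering the columns so that each canonical pair $(v,\sigma v)$ occupies adjacent slots makes $\sigma$ take its standard block form in the new basis — equivalently $O\sigma = \sigma O$, which for an orthogonal $O$ is exactly the condition $O \in K(n)$. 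By construction $O^T P O = D$ is a direct sum of blocks $\diag(\lambda,\lambda^{-1})$, i.e. a product of single-mode squeezers, so $D \in \mathcal{A}(n)$ and the decomposition follows.

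The step I expect to be the main obstacle is guaranteeing that the diagonalising matrix lands in $K(n)$ rather than merely in $O(2n)$: the orthogonal and symplectic structures have to be made compatible by organising the eigenvectors into the canonical $\sigma$-pairs above. In the non-degenerate reciprocal pairs this compatibility is automatic, but in the degenerate case — and in particular on the fixed subspace $E_1$ — it requires a symplectic Gram–Schmidt step to produce a basis that is at once orthonormal and canonical for $\sigma$. Once this compatibility is secured, the assembly $S = UO\,D\,O^T = K A K'$ is immediate.
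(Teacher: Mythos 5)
The paper does not actually prove this proposition---it is quoted from \cite{arv95a}---so there is no in-paper argument to compare against; your proof is the standard one (polar decomposition $S=UP$, reciprocal eigenvalue pairs of the positive symplectic factor $P$ via $P\sigma=\sigma P^{-1}$, and simultaneous diagonalisation by a matrix in $K(n)$ built from the pairs $(v,\sigma v)$ together with a symplectic Gram--Schmidt on $E_1$), and it is correct. The only detail worth flagging is the sign convention when ordering the pairs: since $\sigma(\sigma v)=-v$, the matrix of $\sigma$ in the ordered basis $(v,\sigma v)$ is $\begin{pmatrix} 0 & -1 \\ 1 & 0\end{pmatrix}$ rather than the standard block $\begin{pmatrix} 0 & 1 \\ -1 & 0\end{pmatrix}$, so you should take the partner to be $-\sigma v$ (or reverse the order within each pair) to land exactly in $K(n)$; this is cosmetic and does not affect the argument.
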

\begin{prop}[Williamson's Theorem \cite{wil36}]
Let $M\in\mathbb{R}^{2n\times 2n}$ be a positive definite matrix, then there exists a symplectic matrix $S\in Sp(2n,\mathbb{R})$ and a diagonal matrix $D\in\mathbb{R}^{n\times n}$ such that $$M=S^T\tilde{D}S$$ where $\tilde{D}=\operatorname{diag}(D,D)$ is diagonal. The entries of $D$ are also called \emph{symplectic eigenvalues}.
\end{prop}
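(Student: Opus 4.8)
The plan is to reduce Williamson's theorem to the real canonical form of a skew-symmetric matrix. Throughout I would work with the symplectic form $J$ belonging to the ordering $\tilde{R}=(Q_1,\dots,Q_n,P_1,\dots,P_n)^T$, for which the target matrix $\tilde{D}=\diag(D,D)$ is genuinely block-diagonal; the statement for the interleaved form $\sigma$ then follows immediately by conjugating with the fixed permutation that relates $R$ and $\tilde{R}$, since this permutation carries $Sp(2n,\R,\sigma)$ onto $Sp(2n,\R,J)$.

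First I would exploit $M>0$ to form its unique positive definite square root $M^{1/2}$ and set $A:=M^{1/2}JM^{1/2}$. Because $J^T=-J$, the matrix $A$ is real and skew-symmetric, and it is invertible since $M^{1/2}$ and $J$ are. The one structural input I would invoke is the real normal form for skew-symmetric matrices: there is an orthogonal $O$ with
\[
O^TAO=\begin{pmatrix} 0 & \Lambda \\ -\Lambda & 0\end{pmatrix},\qquad \Lambda=\diag(\mu_1,\dots,\mu_n),\quad \mu_j>0.
\]
Reality and positivity of the $\mu_j$ come from observing that $iA$ is Hermitian, so that $A$ has purely imaginary eigenvalues occurring in conjugate pairs $\pm i\mu_j$, none of which vanishes by invertibility; assembling the resulting $2\times 2$ blocks into the displayed shape only costs a further permutation, which I absorb into $O$.

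The heart of the argument is then a rescaling that upgrades the orthogonal matrix $O$ to a symplectic one. With $\tilde{\Lambda}^{1/2}:=\diag(\Lambda^{1/2},\Lambda^{1/2})$ one checks the elementary identity $\tilde{\Lambda}^{-1/2}\begin{pmatrix} 0 & \Lambda \\ -\Lambda & 0\end{pmatrix}\tilde{\Lambda}^{-1/2}=J$. Defining $S:=M^{1/2}O\tilde{\Lambda}^{-1/2}$ I would then compute $S^TJS=\tilde{\Lambda}^{-1/2}O^TAO\tilde{\Lambda}^{-1/2}=J$, so that $S\in Sp(2n,\R,J)$. Since $O$ is orthogonal and $\tilde{\Lambda}^{-1/2}\tilde{\Lambda}\tilde{\Lambda}^{-1/2}=\id$, a direct computation gives $S\tilde{\Lambda}S^T=M^{1/2}OO^TM^{1/2}=M$. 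Transposing this relation and using the standard fact that $Sp(2n,\R)$ is closed under transposition, I obtain $M=S^T\tilde{D}S$ with $\tilde{D}=\tilde{\Lambda}$ and $D=\Lambda$, as claimed; the symplectic eigenvalues are exactly the moduli $\mu_j$ of the eigenvalues of $JM$, which is similar to $A$.

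I expect the only genuinely nontrivial ingredient to be the real skew-symmetric normal form itself; everything else is the bookkeeping of the square-root rescaling, whose sole purpose is to turn an orthogonal block-diagonalization of $A$ into a symplectic diagonalization of $M$. The one point deserving care is the ordering convention, since the precise shape of $\tilde{D}$ --- and hence which symplectic group one lands in --- depends on whether one uses the split basis $\tilde{R}$ or the interleaved basis $R$.
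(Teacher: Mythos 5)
Your proof is correct, and it is essentially the standard argument that the paper itself relies on: the paper does not prove Williamson's theorem (it cites \cite{wil36}), but the construction it uses in its numerical appendix — block-diagonalising $\Gamma^{-1/2}J\Gamma^{-1/2}$ by an orthogonal (Schur) transformation and then rescaling by $D^{-1/2}$, following \cite{sim99} — is the same square-root-plus-skew-normal-form argument you give, merely with $M^{-1/2}$ in place of $M^{1/2}$. One cosmetic remark: the step you call ``transposing the relation'' is really just renaming $S^T$ as the new symplectic matrix in $M=S\tilde{\Lambda}S^T$, which is fine since $Sp(2n,\R)$ is closed under transposition, but the phrasing could be tightened.
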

In particular, for $M\in \Pi(n)$, this implies that $M$ has a symplectic square root. Since covariance matrices are always positive definite, this implies also that a Gaussian state is pure if and only if its covariance matrix is symplectic. Heisenberg's uncertainty principle has also a Williamson version: 
\begin{cor} \label{cor:symplecticspec}
A positive definite matrix $M$ is a covariance matrix of a quantum state if and only if all symplectic eigenvalues are larger or equal to one.
\end{cor}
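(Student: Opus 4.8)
The plan is to combine the Heisenberg criterion (the first Proposition of Section~\ref{sec:preliminaries}) with Williamson's Theorem, the crucial observation being that symplectic congruence leaves the Heisenberg inequality invariant. Since the corollary concerns \emph{symplectic} eigenvalues, Williamson's decomposition is the natural bridge: it brings $M$ into the normal form $\diag(D,D)$ whose entries are exactly the symplectic eigenvalues, and for that normal form the Heisenberg inequality becomes a transparent mode-by-mode computation.

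First I would restate the criterion in the $\tilde R$-convention. As noted in the preliminaries, $J$ and $\sigma$ are related by a fixed permutation matrix $P$ with $P^T\sigma P=J$, so conjugating the inequality $M\geq i\sigma$ by $P$ shows that a positive definite $M$ is a covariance matrix if and only if $M\geq iJ$. This is the convention matching the block form used in the statement of Williamson's Theorem. Next I would apply Williamson's Theorem to $M$ (positive definite by hypothesis) to obtain $M=S^T\tilde{D}S$ with $S\in Sp(2n)$ and $\tilde{D}=\diag(D,D)$, $D=\diag(d_1,\ldots,d_n)$, where the $d_k>0$ are the symplectic eigenvalues.

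The key step is then to use the defining symplectic relation in the $J$-convention, $S^TJS=J$, to write
\[
	M-iJ=S^T\tilde{D}S-iS^TJS=S^T(\tilde{D}-iJ)S.
\]
Because $S$ is real and invertible we have $S^{*}=S^{T}$, so the right-hand side is a $*$-congruence of the Hermitian matrix $\tilde{D}-iJ$. By Sylvester's law of inertia, congruence by an invertible matrix preserves the signature and in particular positive semidefiniteness, whence $M\geq iJ$ if and only if $\tilde{D}-iJ\geq 0$. Finally I would analyse $\tilde{D}-iJ$ directly: in block form it is $\begin{pmatrix} D & -i\id_n \\ i\id_n & D \end{pmatrix}$, and since $D$ is diagonal this Hermitian matrix splits, after the mode-wise permutation, into $n$ independent $2\times 2$ blocks $\begin{pmatrix} d_k & -i \\ i & d_k \end{pmatrix}$ with eigenvalues $d_k\pm 1$. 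Thus $\tilde{D}-iJ\geq 0$ holds exactly when $d_k-1\geq 0$ for every $k$, i.e. when all symplectic eigenvalues are at least one, which is the assertion.

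I expect the main obstacle to be purely notational---keeping the two symplectic conventions ($\sigma$ versus $J$) consistent through the argument---rather than conceptual. The single genuine ingredient is the congruence identity, which works only because $S$ is simultaneously symplectic (so that $iJ$ can be rewritten as $iS^TJS$ and absorbed into the congruence) and real (so that the transpose coincides with the adjoint, making $A\mapsto S^{*}AS$ a Hermitian $*$-congruence to which Sylvester's law applies); everything after that reduction is the elementary $2\times 2$ eigenvalue calculation.
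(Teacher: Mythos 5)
Your proof is correct and follows essentially the same route as the paper: Williamson's normal form, the observation that the symplectic congruence $M-iJ=S^T(\tilde D-iJ)S$ preserves positive semidefiniteness, and the reduction to $2\times 2$ blocks with eigenvalues $d_k\pm 1$. You simply make explicit the steps the paper leaves implicit (the $\sigma$-versus-$J$ bookkeeping and the invocation of Sylvester's law), and in doing so you also avoid the paper's typographical slip of writing the final condition as $d_i\geq 0$ rather than $d_i\geq 1$.
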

\begin{proof}
Let $M=S^T\tilde{D}S$, then $S^T\tilde{D}S\geq i\sigma~\Leftrightarrow~\tilde{D}-i\sigma\geq 0$ and one can easily check that this last inequality holds if and only if 
\begin{align*}
	\begin{pmatrix}{} d_i & i \\ -i & d_i \end{pmatrix}\geq 0 \quad \forall d_i
\end{align*}
where $d_i$ are the diagonal elements of $D$ in Williamson's Theorem. The latter however is true iff $d_i\geq 0$ for all $i$.
\end{proof}

\subsection{Quantum optical operations and squeezing}
We have already noted that an important class of operations are those, which leave the CCR-relations invariant, namely the symplectic transformations. Given a quantum state $\rho$, the action of the symplectic group on the canonical variables $R$ descends to a subgroup of unitary transformations on $\rho$ via the metaplectic representation (cf. \cite{arv95b}). Its action on the covariance matrix $\gamma_{\rho}$ of $\rho$ is even easier: Given $S\in Sp(2n)$, 
\begin{align}
	\gamma_{\rho} \mapsto S^T\gamma_{\rho}S. \label{eqn:symptraf}
\end{align}
In quantum optics, symplectic transformations can be implemented by the means of 
\begin{enumerate}
	\item beam splitters and phase shifters, implementing operations in $K(n)$ (\cite{rec94})
	\item single-mode squeezers, implementing operations in $Z(n)$.
\end{enumerate}
Via the Euler decomposition, this implies that any symplectic transformation can be implemented (approximately) by a combination of those three elements. 

\begin{dfn}
An $n$-mode bosonic state $\rho$ is called \emph{squeezed}, if its covariance matrix $\gamma_{\rho}$ possesses an eigenvalue $\lambda < 1$. 
\end{dfn}
Especially in the early literature, squeezing is usually defined differently: A state $\rho$ is squeezed if there exists a unitary transformation $K\in K(n)$ such that $K^T\gamma_{\rho}K$ has a diagonal entry smaller than one. This again comes from the physical definition of squeezed states being states where the Heisenberg uncertainty relations are satisfied with equality for at least one mode. These definitions however are well-known to be equivalent (cf. \cite{sim94}). 

\section{An operational squeezing measure for symplectic transformations} \label{sec:mathmeasure}
Throughout this section, we will always use $\sigma$ as our standard symplectic form.
\subsection{Definition and basic properties}
We will now define a first operational squeezing measure for symplectic transformations, which will later be used to define a measure for operational squeezing.
\begin{dfn}
Define the function $F:\R^{2n\times 2n}\to \R$
\begin{align}
	F(A)=\sum_{i=1}^n \log(s_i^{\downarrow}(A)) \label{eqn:defF}
\end{align}
where $s_i^{\downarrow}$ are the decreasingly ordered singular values of $A$.
\end{dfn}
Note that we sum only over half of the singular values. Restricting this function to symplectic matrices will yield an operational squeezing measure for symplectic transformations: Recall that the symplectic group is generated by symplectic orthogonal matrices and single-mode squeezers. The orthogonal matrices are easy to implement and therefore will be considered a free resource. The squeezers have singular values $s$ and $s^{-1}$ and they are experimentally hard to implement and should therefore be assigned a cost that depends on the squeezing parameter $s$. Using this, the amount of squeezing seems to be characterised by the largest singular values. Here, we quantify the amount of squeezing by a cost $\log(s)$, which can be seen as the interaction strength of the Hamiltonian needed to implement the squeezing. 

Let us make this more precise: Define the map 
\begin{align*}
	\Delta:  Sp(2n)&\to \bigcup_{m\in \N} Sp(2n)^{\times m} \\
	S &\mapsto \bigcup_{m\in\N} \{(S_1,\ldots,S_m)|S=S_1\cdots S_m, S_i\in K(n)\cup Z(n)\}
\end{align*}
The image of $\Delta$ for a given symplectic matrix contains all possible ways to construct $S$ as a product of matrices from $K(n)$ or $Z(n)$. We define:
\begin{dfn}
Let $\overline{F}:Sp(2n)\to \R$ be a map defined via
\begin{align}
	\overline{F}(S):=\log\inf\left\{ \prod_{i=1}^m s_1^{\downarrow}(S_i)\middle|(S_1,\ldots,S_m)\in \Delta(S)\right\} \label{eqn:defbarF}
\end{align}
\end{dfn}
\begin{prop} \label{prop:foverlinef}
If $S\in Sp(2n)$ then $F(S)=\overline{F}(S)$.
\end{prop}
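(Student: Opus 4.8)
The plan is to establish the two inequalities $\overline{F}(S)\le F(S)$ and $\overline{F}(S)\ge F(S)$ separately: the first by exhibiting one explicit decomposition that realises the value $F(S)$, the second by a submultiplicativity argument valid for \emph{every} decomposition in $\Delta(S)$.

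For the upper bound I would invoke the Euler (Bloch--Messiah) decomposition $S=KAK'$ with $K,K'\in K(n)$ and $A\in\mathcal{A}(n)$. Since $A$ lies in the abelian group generated by the single-mode squeezers and acts separately on the individual modes, it factors as a product $A=Z_1\cdots Z_n$ of $n$ single-mode squeezers $Z_i\in Z(n)$, each carrying one squeezing parameter $s_i$, which I may assume satisfies $s_i\ge 1$. This produces an explicit element $(K,Z_1,\ldots,Z_n,K')\in\Delta(S)$. Because the orthogonal factors have all singular values equal to one and each $Z_i$ has largest singular value $s_i$, the associated product is $\prod_{i=1}^n s_i$. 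On the other hand the singular values of $S$ coincide with those of $A$, since orthogonal conjugation is isometric, so the $n$ largest are exactly $s_1,\ldots,s_n$ and $F(S)=\sum_{i=1}^n\log s_i$. Thus the Euler decomposition attains the value $F(S)$, giving $\overline{F}(S)\le F(S)$ (and showing the infimum is in fact a minimum).

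For the lower bound I would show that every $(S_1,\ldots,S_m)\in\Delta(S)$ satisfies $\prod_{j=1}^m s_1^{\downarrow}(S_j)\ge\prod_{i=1}^n s_i^{\downarrow}(S)$. The key tool is the identity $\prod_{i=1}^k s_i^{\downarrow}(A)=\|\Lambda^k A\|$, where $\Lambda^k A$ is the $k$-th exterior power and $\|\cdot\|$ the operator norm; equivalently, the Weyl-type multiplicative inequality $\prod_{i=1}^k s_i^{\downarrow}(AB)\le\prod_{i=1}^k s_i^{\downarrow}(A)\,\prod_{i=1}^k s_i^{\downarrow}(B)$. Taking $k=n$ and iterating over the $m$ factors yields
\begin{align*}
	\prod_{i=1}^n s_i^{\downarrow}(S)\le\prod_{j=1}^m\prod_{i=1}^n s_i^{\downarrow}(S_j).
\end{align*}
The decisive observation is then that for each admissible generator the inner product over the top $n$ singular values collapses to the single largest one: if $S_j\in K(n)$ all its singular values equal one, while if $S_j\in Z(n)$ its nontrivial singular values are $s,s^{-1}$ with $2n-2$ remaining values equal to one, so the $n$ largest are $\max(s,s^{-1})$ together with $n-1$ ones. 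In both cases $\prod_{i=1}^n s_i^{\downarrow}(S_j)=s_1^{\downarrow}(S_j)$. Substituting and taking logarithms gives $F(S)\le\overline{F}(S)$, and combining with the upper bound proves equality.

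I expect the main obstacle to be the lower bound, and specifically the correct formulation of the submultiplicativity of products of leading singular values; the exterior-power identity packages this cleanly, reducing it to multiplicativity of $\Lambda^n$ under matrix products together with submultiplicativity of the operator norm. The complementary ingredient — that each generator in $K(n)\cup Z(n)$ has its top-$n$ singular-value product equal to its operator norm — is an elementary but essential bookkeeping step that both makes the bound tight and matches the Euler construction, so that the two estimates meet exactly.
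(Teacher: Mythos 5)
Your proof is correct and follows essentially the same route as the paper's: the Euler decomposition gives the upper bound, and the lower bound rests on the multiplicative inequality $\prod_{i=1}^n s_i^{\downarrow}(AB)\le\prod_{i=1}^n s_i^{\downarrow}(A)\prod_{i=1}^n s_i^{\downarrow}(B)$ together with the observation that each generator in $K(n)\cup Z(n)$ has top-$n$ singular-value product equal to $s_1^{\downarrow}$. The only cosmetic difference is that you justify that inequality via the exterior-power identity, whereas the paper cites it as a special case of the Gel'fand--Naimark theorem; these are the same fact.
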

\begin{proof}
Let $S=KAK^{\prime}$ be the Euler decomposition of $S$ with $K,K^{\prime}\in K(n)$ and $A\in\mathcal{A}(n)$. Assume without loss of generality that $A=\diag(a_1,a_1^{-1},\ldots,a_n,a_n^{-1})$ and $a_1\geq a_2 \geq \ldots \geq a_n\geq 1$ and define $A_i=\diag(1,\ldots, 1,a_i,a_i^{-1},1,\ldots,1)$. By construction $A=A_1\cdots A_n$ and $A_i\in Z(n)$. Since $K,K^{\prime}\in K(n)$, $(K,A_1,\ldots,A_n,K^{\prime})\in \Delta(S)$. Using that $s_i^{\downarrow}(K)=s_i^{\downarrow}(K^{\prime})=1$ and the fact that the Euler decomposition is actually equivalent to the singular value decomposition of $S$, we obtain:
\begin{align*}
	\overline{F}(S)&\leq \log \left(s_1^{\downarrow}(K) \prod_{i=1}^n s_1^{\downarrow}(A_i)s_1^{\downarrow}(K^{\prime})\right)
	=\log \prod_{i=1}^n s_i^{\downarrow}(S) 
	=F(S).
\end{align*}

Conversely, consider $(S_1,\ldots, S_m)\in \Delta(S)$. Using that by definition for each $S_j\in K(n)\cup Z(n)$ we have $\prod_{i=1}^n s_i^{\downarrow}(S_j)=s_1^{\downarrow}(S_j)$, we conclude:
\begin{align*}
	F(S)&=\log \left(\prod_{i=1}^n s_i^{\downarrow}(S)\right) 
		\stackrel{(*)}{\leq} \log \left(\prod_{j=1}^m\prod_{i=1}^n s_i^{\downarrow}(S_j)\right) 
		=\log \left(\prod_{j=1}^m s_1^{\downarrow}(S_j) \right)
\end{align*}
where in $(*)$ we used a special case of a theorem by Gel'fand and Naimark (\cite{bha96}, Theorem III.4.5 and equation (III.19)). Taking the infimum on the right hand side gives $F(S)\leq \overline{F}(S)$.
\end{proof}
Let us write the last observation in $(*)$ as a small lemma for later use:
\begin{lem} \label{lem:gelfand}
Let $S,S^{\prime}\in Sp(2n)$. Then $F(SS^{\prime})\leq F(S)+F(S^{\prime})$.
\end{lem}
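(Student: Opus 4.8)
The plan is to recognise that this lemma is nothing but the two–factor special case of the inequality already used in step $(*)$ of the preceding proof, so the essential work is done. The single tool needed is the multiplicative majorisation for singular values of a product — the case of the Gel'fand--Naimark theorem cited above (\cite{bha96}, Theorem III.4.5) — which asserts that for arbitrary matrices $A,B\in\R^{2n\times 2n}$ and every $1\le k\le 2n$ one has
\begin{align*}
	\prod_{i=1}^k s_i^{\downarrow}(AB)\le \prod_{i=1}^k s_i^{\downarrow}(A)\,s_i^{\downarrow}(B).
\end{align*}

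First I would apply this with $A=S$, $B=S'$ and, crucially, the truncation index $k=n$, since $F$ sums logarithms of exactly the largest $n$ singular values. Taking logarithms of both sides of the resulting inequality and using monotonicity of $\log$ then gives
\begin{align*}
	F(SS')=\sum_{i=1}^n \log s_i^{\downarrow}(SS')\le \sum_{i=1}^n \log\!\left(s_i^{\downarrow}(S)\,s_i^{\downarrow}(S')\right).
\end{align*}
Second I would split each summand via $\log(xy)=\log x+\log y$ and regroup the sum into its two halves, which by definition are $F(S)$ and $F(S')$; this yields $F(SS')\le F(S)+F(S')$ immediately.

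Note that although the lemma is stated for symplectic $S,S'$, the majorisation I invoke holds for arbitrary square matrices, and the symplectic hypothesis enters only to guarantee $SS'\in Sp(2n)$ so that $F$ is the relevant quantity. I do not expect any real obstacle: the entire content is carried by the cited inequality. The only point requiring a moment's care is to select the correct truncation $k=n$ (matching the number of terms in the definition \eqref{eqn:defF} of $F$) rather than $k=2n$; once that is fixed, the remaining steps are purely algebraic manipulations of the logarithm.
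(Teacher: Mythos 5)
Your proof is correct and is essentially the paper's own argument: the lemma is recorded precisely as the two-factor instance of the step $(*)$ in the proof of Proposition \ref{prop:foverlinef}, which invokes the same singular-value product inequality $\prod_{i=1}^k s_i^{\downarrow}(AB)\le\prod_{i=1}^k s_i^{\downarrow}(A)s_i^{\downarrow}(B)$ from \cite{bha96} with the truncation $k=n$. Nothing further is needed.
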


\subsection{Lie algebraic definition} \label{sec:liealg}
Up to now, we have only considered products of symplectic matrices, which would correspond to a discrete chain of beam splitters, phase shifters and single-mode squeezers. We have also seen that the squeezing-optimal way of implementation is given by the Euler decomposition. The goal of this section is to prove that this does not change if we consider arbitrary paths on $Sp(2n)$, corresponding to general Hamiltonians of the form of equation (\ref{eqn:hamilton}) as described in section \ref{sec:preliminaries}.

Let $g^p,g^a$ denote a set of passive and active transformations for the Lie algebra as defined in equations (\ref{eqn:passivetrafo}) and (\ref{eqn:activetrafo}). We choose the normalisation of $g^a$ such that $F(\exp(cg_{ij}^a))=c$ for all $i,j$ and $c\in \R$. The normalisation of $g^p$ can be chosen arbitrarily. Furthermore, we order the generators into one vector $g^p$ and one vector $g^a$ such that $g_i^a=g_{ii}^{a(4)}$ for $i=1,\ldots,n$. Note that the $g_i^a$ generate $Z(n)$ for $i=1,\ldots,n$ as we pointed out following the equations (\ref{eqn:passivetrafo}) and (\ref{eqn:activetrafo}).

Now we need to decide which paths we want to consider. Clearly, these paths should not be pathological. We could imagine to work with continuously differentiable paths. However, the Euler decomposition will define a non-smooth path, so we have to at least allow for a finite amount of corners. 

The paths can also be defined by differential equations on the generators of the group. In other words, to each diffentiable path $\gamma:[0,1]\to Sp(2n)$ corresponds a set of differentiable coefficients $c_i:[0,1]\to \R$ such that the derivative fulfills:
\begin{align}
	\gamma^{\prime}(t)=\sum_i c_i(t)g_i(\gamma(t))=\left(\sum_i c_i(t)g_i(e)\right)\gamma(t)=:A(t)\gamma(t) \label{eqn:odepath}
\end{align}
By construction, $A(t)\in \mathfrak{g}$ is a differentiable function in $t$. Let us now consider our case $G=Sp(2n)$. Since we want the Euler decomposition to define a path, we also need to consider nondifferentiable $A$. Moreover, it makes sense physically to restrict to bounded functions $A(t)$, since $c_i(t)$ is like an interaction strength, which should not be infinite. Therefore, from a mathematical perspective, restricting $A\in L^{\infty}([0,1],\mathfrak{sp}(2n))$ seems to capture all cases we are interested in. If we make these restrictions, it turns out that the paths $\gamma$ solving differential equations of the type of equation (\ref{eqn:odepath}) are absolutely continuous. Conversely, given an absolutely continuous curve $[0,1]\to Sp(2n)$, then it is in particular of bounded variation and therefore rectifiable. Also such a curve is differentiable almost everywhere and the fundamental theorem of calculus applies, which implies that it solves a differential equation (see \cite{rud87} Theorem 7.18). Hence, the class of absolutely continuous functions with bounded derivative (almost everywhere) seems to capture all paths we are interested in.

Therefore, let $\mathcal{C}^r(S)$ be the set of absolutely continuous curves $\alpha:[0,1]\to Sp(2n)$ with a derivative which is bounded almost everywhere such that $\alpha(0)=\id$ and $\alpha(1)=S$.
\begin{dfn}
We define the function $\tilde{F}:Sp(2n)\to \R$:
\begin{align}
	\tilde{F}(S):=\inf \left\{ \int_0^1\|\vec{c}_{\alpha}^{\,a}(t)\|_1\,\mathrm{d}t\middle|\alpha\in \mathcal{C}^r(S), \dot{\alpha}(t)=(\vec{c}_{\alpha}^{\,p}(t)g^p(\alpha(t)), \vec{c}_{\alpha}^{\,a}(t)g^a(\alpha(t)))^T\right\} \label{eqn:deftildeF}
\end{align}
where we introduced the notation $\vec{c}$ to clarify that $g^{p/a}$ are actually vectors containing a set of generators each, and the coefficients might differ for each of these generators. 
\end{dfn}
The goal of this section is to prove that this does not give us any better way to avoid squeezing:
\begin{thm} \label{thm:liepaths}
For any $S\in Sp(2n)$, we have $\tilde{F}(S)=F(S)$.
\end{thm}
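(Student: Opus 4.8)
I would prove $\tilde F(S)=F(S)$ by establishing the two inequalities separately: the bound $\tilde F(S)\le F(S)$ by exhibiting one concrete near-optimal path built from the Euler decomposition, and the bound $\tilde F(S)\ge F(S)$ by an infinitesimal estimate showing that $F$ cannot grow faster along any admissible path than the active cost rate $\|\vec c^a\|_1$. This is the continuous-path analogue of Proposition \ref{prop:foverlinef}, so the Euler decomposition will again play the central role.

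For the upper bound, write $S=KAK'$ with $K,K'\in K(n)$ and $A=\diag(a_1,a_1^{-1},\dots,a_n,a_n^{-1})\in\mathcal A(n)$, where the $a_i\ge 1$ are the $n$ largest singular values of $S$ (recall the Euler decomposition is the singular value decomposition). Since $K(n)\cong U(n)$ is compact and connected, $\exp$ maps $\mathfrak k(n)$ onto $K(n)$, so I can reach $K'$ from $\id$ along a path using only passive generators $g^p$, i.e. with $\vec c^a\equiv 0$ and hence zero active cost. I then flow along the active generators $g^a_i=g^{a(4)}_{ii}$, one at a time, so as to multiply by $A=\prod_i\exp(\log(a_i)g^a_i)$; by the normalisation $F(\exp(cg^a_{ij}))=c$ and the reparametrisation invariance of the $\ell^1$-length, this segment contributes exactly $\sum_i\log a_i$. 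A final passive segment reaches $KAK'=S$ at no active cost. Concatenating gives a path in $\mathcal C^r(S)$ of total active cost $\sum_i\log a_i=\sum_i\log s^\downarrow_i(S)=F(S)$, so $\tilde F(S)\le F(S)$.

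For the lower bound I would use the reformulation $F(M)=\log\|\Lambda^n M\|$, where $\Lambda^n M$ is the $n$-th exterior power acting on $\Lambda^n\R^{2n}$ and $\|\cdot\|$ is the operator norm, which is valid because $\|\Lambda^n M\|=\prod_{i=1}^n s^\downarrow_i(M)$. Let $\alpha\in\mathcal C^r(S)$ be arbitrary with $\dot\alpha(t)=A(t)\alpha(t)$ and $A(t)=A^p(t)+A^a(t)$, where $A^p(t)\in\mathfrak k(n)$ collects the passive and $A^a(t)\in\pi(n)$ the active generators (the other sign convention for the vector field is handled identically, since $F(\alpha)=F(\alpha^T)$). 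Put $\beta(t)=\Lambda^n\alpha(t)$ and $\phi(t)=\log\|\beta(t)\|$. As $\alpha(t)$ is symplectic, $\beta(t)$ is invertible with $\|\beta(t)\|\ge 1$, so $\phi$ is locally Lipschitz in the absolutely continuous path $\beta$, hence absolutely continuous, and $F(S)=\phi(1)-\phi(0)=\int_0^1\phi'(t)\,\mathrm dt$. At a.e.\ $t$, submultiplicativity gives $\phi(t+h)-\phi(t)\le\log\|\beta(t+h)\beta(t)^{-1}\|$, and with $\beta(t+h)\beta(t)^{-1}=\id+h\,d\Lambda^n(A(t))+o(h)$ the logarithmic norm (matrix measure) yields
\begin{align*}
\phi'(t)\le\lim_{h\to0^+}\frac{\log\|\id+h\,d\Lambda^n(A(t))\|}{h}=\lambda_{\max}\!\big(\tfrac12(d\Lambda^n(A(t))+d\Lambda^n(A(t))^T)\big).
\end{align*}
The structural point is that the induced action of $\mathfrak k(n)\subseteq\mathfrak{so}(2n)$ on $\Lambda^n\R^{2n}$ is antisymmetric while that of $\pi(n)$ (symmetric matrices) is symmetric, so the symmetric part above equals $d\Lambda^n(A^a(t))$ and the passive part drops out entirely. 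Writing $A^a(t)=\sum_k c^a_k(t)g^a_k$ and using that each $d\Lambda^n(g^a_k)$ is symmetric with spectrum symmetric about $0$ (reciprocity of symplectic singular values), so that $\|d\Lambda^n(g^a_k)\|=\lambda_{\max}(d\Lambda^n(g^a_k))=1$ by the normalisation of $g^a$, I obtain $\lambda_{\max}(d\Lambda^n(A^a(t)))\le\sum_k|c^a_k(t)|\,\|d\Lambda^n(g^a_k)\|=\|\vec c^a(t)\|_1$. Integrating gives $F(S)\le\int_0^1\|\vec c^a(t)\|_1\,\mathrm dt$, and taking the infimum over $\alpha$ yields $F(S)\le\tilde F(S)$.

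The routine parts are the Euler-path construction and the identity $F=\log\|\Lambda^n\cdot\|$. The real obstacle is the infinitesimal estimate in the lower bound: singular values are only Lipschitz (not smooth) and may cross, and admissible paths are merely absolutely continuous, so $\phi$ is differentiable only almost everywhere. The remedy is precisely the logarithmic-norm upper bound on the Dini derivative together with the fundamental theorem of calculus for absolutely continuous functions, which sidesteps any direct differentiation of individual singular values; the two points needing careful justification are that $\phi$ is genuinely absolutely continuous (so that the fundamental theorem of calculus applies) and that the first-order expansion of $\beta(t+h)\beta(t)^{-1}$ is legitimate at a.e.\ $t$ under the sole assumption $A\in L^\infty$.
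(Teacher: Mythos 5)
Your argument is correct, and while the upper bound $\tilde F(S)\le F(S)$ follows the paper's route (an explicit piecewise path realising the Euler decomposition, with the passive segments free and the active segments costing exactly $\sum_i\log a_i$), your lower bound is a genuinely different and more direct proof than the one in the paper. The paper proceeds by discretisation: it first shows $\hat F=\overline F=F$ for finite products of exponentials (Lemma \ref{lem:fbarfhat}), then approximates an arbitrary $A\in L^\infty([0,1],\mathfrak{sp}(2n))$ by step functions, controls the resulting endpoint error via propagator bounds (Proposition \ref{prop:odeprop}) and a Gr\"onwall-type estimate, and finally patches the path with a small polar-decomposition correction so that it actually terminates at $S$. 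You instead derive a pointwise differential inequality: passing to the induced action on $\Lambda^n\R^{2n}$, where $F(M)=\log\|\Lambda^n M\|$, you bound the Dini derivative of $\phi(t)=\log\|\Lambda^n\alpha(t)\|$ by the logarithmic norm of $d\Lambda^n(A(t))$; since $\mathfrak{k}(n)$ consists of antisymmetric and $\pi(n)$ of symmetric matrices and $d\Lambda^n$ intertwines transposition, only the active part survives, and the normalisation $F(\exp(cg^a_k))=c$ gives precisely $\|d\Lambda^n(g^a_k)\|=1$ (the top eigenvalue of $d\Lambda^n$ of a symmetric element of $\mathfrak{sp}(2n)$ is the sum of its $n$ largest eigenvalues, which the normalisation fixes to one), whence $\phi'(t)\le\|\vec c^{\,a}(t)\|_1$ a.e.\ and integration finishes the proof. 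The two technical points you flag --- absolute continuity of $\phi$ (which follows since $\alpha$ is Lipschitz for $A\in L^\infty$, $\Lambda^n$ is polynomial, and $\|\Lambda^n\alpha(t)\|\ge1$ keeps the logarithm Lipschitz) and the first-order expansion of $\alpha(t+h)\alpha(t)^{-1}$ at Lebesgue points of $A$ --- are exactly the right ones and are standard. Your approach buys a cleaner, approximation-free argument that makes structurally transparent why passive generators are free (they act by isometries on $\Lambda^n\R^{2n}$); the paper's approach buys, as a by-product, the discrete identity $F=\overline F=\hat F$ for circuits of beam splitters and squeezers, which it needs independently.
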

The proof of this theorem is quite technical and lengthy in details, thus we split it up into several lemmata. The general idea is easy to relate: A first step will be to show that for paths that descend to products of symplectic matrices of type $Z(n)$ or $K(n)$, the integral in equation (\ref{eqn:deftildeF}) is exactly equal to the corresponding product in equation (\ref{eqn:defbarF}). This is a mere consistency check (otherwise, we cannot expect the theorem to hold), but it already proves that $\tilde{F}(S)\leq F(S)$. The main part of the proof proceeds by an approximation argument: Using the differential equations defined by the paths, we can show that we can approximate the generators by paths of products of symplectic matrices to arbitrary precision, thereby proving that $\tilde{F}(S)\geq F(S)$.

In addition, we will need the following lemma:
\begin{lem} \label{lem:lieproof}
Let $A\in \mathfrak{sp}(2n)$ and write $A=1/2(A+A^T)+1/2(A-A^T)=:A_++A_-$. Then $A_+\in\pi(2n)$ and $A_-\in\mathfrak{k}(n)$ and we have $F(\exp(A))\leq F(\exp(A_+))$.
\end{lem}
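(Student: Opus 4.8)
The plan is to treat the two assertions separately: first the membership of $A_{\pm}$ in the Cartan summands, which is routine linear algebra, then the singular-value inequality, which is the real substance.

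\textbf{Memberships.} First I would check that $\mathfrak{sp}(2n)$ is closed under transposition. Writing the defining relation as $\sigma T + T^{T}\sigma = 0$ and multiplying on both sides by $\sigma$ (using $\sigma^2 = -\id$ and $\sigma^T = -\sigma$) yields $\sigma T^{T} + T\sigma = 0$, which is exactly the condition for $T^{T}\in\mathfrak{sp}(2n)$. Hence $A^{T}\in\mathfrak{sp}(2n)$ whenever $A$ is, and since $\mathfrak{sp}(2n)$ is a vector space, both $A_+=\tfrac12(A+A^{T})$ and $A_-=\tfrac12(A-A^{T})$ lie in $\mathfrak{sp}(2n)$. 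By Proposition~\ref{prop:liealg} the subspace $\pi(n)$ (resp. $\mathfrak{k}(n)$) consists precisely of the symmetric (resp. antisymmetric) elements of $\mathfrak{sp}(2n)$, a property invariant under the orthogonal permutation relating the $\sigma$- and $J$-bases. Since $A_+$ is symmetric and $A_-$ is antisymmetric, this gives $A_+\in\pi(n)$ and $A_-\in\mathfrak{k}(n)$.

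\textbf{The inequality.} The key reformulation is that $F$ is the logarithm of an operator norm of an exterior power: for any $X\in\R^{2n\times 2n}$ the singular values of the $n$-th compound $\Lambda^n X$ are the products $s_{i_1}^{\downarrow}(X)\cdots s_{i_n}^{\downarrow}(X)$ over $n$-subsets, so the largest equals $\prod_{i=1}^n s_i^{\downarrow}(X)$ and therefore $F(X)=\log\|\Lambda^n X\|$. Since $\Lambda^n$ is multiplicative, $t\mapsto\Lambda^n\exp(tA)$ is a one-parameter group, so $\Lambda^n\exp(A)=\exp(A^{[n]})$ where $A^{[n]}:=\frac{d}{dt}\big|_{0}\Lambda^n\exp(tA)$ is the $n$-th additive compound of $A$, depending linearly on $A$.

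\textbf{Driving facts.} Two estimates finish the argument: (i) the logarithmic-norm bound $\|\exp(M)\|\le\exp\big(\lambda_{\max}(\tfrac12(M+M^{T}))\big)$, which follows from differentiating $\|\exp(tM)v\|^2$ and using $M+M^{T}=2M_+$; and (ii) the identity $\tfrac12\big(A^{[n]}+(A^{[n]})^{T}\big)=(A_+)^{[n]}$, i.e. taking the symmetric part commutes with forming the additive compound. Fact (ii) holds because $X\mapsto X^{[n]}$ is linear and commutes with transposition, the exterior power of the adjoint being the adjoint of the exterior power. Combining, $F(\exp(A))=\log\|\exp(A^{[n]})\|\le\lambda_{\max}\big((A_+)^{[n]}\big)$. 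Since $A_+$ is symmetric with eigenvalues $\lambda_1^{\downarrow}\ge\cdots\ge\lambda_{2n}^{\downarrow}$, the compound $(A_+)^{[n]}$ has eigenvalues given by the $n$-fold sums $\sum_j\lambda_{i_j}^{\downarrow}$, so its top eigenvalue is $\sum_{i=1}^n\lambda_i^{\downarrow}(A_+)$. This is exactly $F(\exp(A_+))$, because $\exp(A_+)$ is symmetric positive definite with singular values $\exp(\lambda_i^{\downarrow}(A_+))$; hence $F(\exp(A))\le F(\exp(A_+))$.

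\textbf{Main obstacle.} I expect the memberships and the logarithmic-norm bound to be routine; the crux is identity (ii) together with the compound-matrix reformulation of $F$, namely justifying that $\Lambda^n$ intertwines exponentials and that the additive compound commutes with transposition. With those in hand the inequality holds for every $k$ (not only $k=n$), and notably the inequality step uses no symplectic structure at all, only the memberships do. A reader-friendly alternative that avoids explicit compound matrices is to invoke the known log-majorization $\prod_{i=1}^k s_i^{\downarrow}(\exp A)\le\prod_{i=1}^k\exp(\lambda_i^{\downarrow}(A_+))$ and specialise to $k=n$.
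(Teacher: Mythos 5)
Your proof is correct, but it follows a genuinely different route from the paper's. The paper proves the inequality by writing $\exp(A)=\lim_{m\to\infty}(\exp(A_+/m)\exp(A_-/m))^m$ via the Lie--Trotter formula and then applying continuity of $F$ together with the Gel'fand--Naimark submultiplicativity $F(SS')\le F(S)+F(S')$ (Lemma \ref{lem:gelfand}) to each finite product; the terms $F(\exp(A_-/m))$ vanish because $\exp(A_-/m)$ is orthogonal, and $mF(\exp(A_+/m))=F(\exp(A_+))$ exactly because $A_+$ is symmetric. You instead avoid any limiting argument by working directly with the $n$-th exterior power: the reformulation $F(X)=\log\|\Lambda^n X\|$, the intertwining $\Lambda^n\exp(A)=\exp(A^{[n]})$, the logarithmic-norm bound, and the compatibility of symmetrisation with additive compounds together yield the log-majorisation $\prod_{i=1}^k s_i^{\downarrow}(e^A)\le\prod_{i=1}^k e^{\lambda_i^{\downarrow}(A_+)}$ for every $k$ (this is a known inequality, essentially Bhatia's Theorem IX.3.1/Cohen's inequality, which you could cite outright as you note at the end). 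Each step you list checks out, including the identity $(A^{[n]})^T=(A^T)^{[n]}$. What your approach buys is a self-contained, limit-free argument that makes explicit that the inequality holds for all $k$ and uses no symplectic structure; what the paper's approach buys is brevity, since Lemma \ref{lem:gelfand} is already available and is reused in the same form immediately afterwards in the proof of Lemma \ref{lem:fbarfhat}. Your treatment of the membership claims ($A_+\in\pi(n)$, $A_-\in\mathfrak{k}(n)$) is also sound --- you correctly use the relation $\sigma T+T^T\sigma=0$ (the paper's displayed definition of $\mathfrak{sp}(2n)$ contains a typo here), show closure under transposition, and identify $\pi(n)$ and $\mathfrak{k}(n)$ as the symmetric and antisymmetric parts; the paper simply asserts these memberships without proof.
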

\begin{proof}
First note that $F$ is continuous in $S$ since the singular values are. Using the Trotter-formula, we obtain:
\begin{align*}
	F(\exp(A))&=F\left(\lim\limits_{n\to\infty} (\exp(A_+/n)\exp(A_-/n))^n\right) \\
	&\stackrel{\mathclap{(*)}}{\leq} \lim\limits_{n\to\infty} (nF(\exp(A_+/n))+nF(\exp(A_-/n)))\\ 
	&=\lim\limits_{n\to \infty} nF(\exp(A_+/n))=F(\exp(A_+))
\end{align*}
where we used that $F(\exp(A_-))=0$ since $A_-\in\mathfrak{k}(n)$ and in $(*)$, we used a version of a theorem by Gel'fand and Naimark again (cf. \cite{bha96}, equation (III.20)).
\end{proof}

\paragraph*{First step of the proof:} Let us define yet another version of $F$ which we call $\hat{F}$ in the following way:
\begin{align*}
	C^N(S)&:=\left\{(\vec{c}_1^{\,a},\vec{c}_1^{\,p},\ldots,\vec{c}_N^{\,a},\vec{c}_N^{\,p})\middle|S=\prod_{j=1}^N \exp(\vec{c}_j^{\,a}g^a+\vec{c}_j^{\,p}g^p), \vec{c}_j\in \R^{4n^2} \right\},\\
	C(S)&:=\bigcup_{N\in\N} C^N(S), \\
	\hat{F}(S)&:=\inf\left\{ \sum_i \|\vec{c}_i^{\,a}\|_1\middle|\vec{c}\in C(S)\right\}.
\end{align*}
This definition is of course reminiscent of the definition of $\overline{F}$ in equation (\ref{eqn:defbarF}):
\begin{lem} \label{lem:fbarfhat} For $S\in Sp(2n)$, we have $\hat{F}(S)=\overline{F}(S)$.
\end{lem}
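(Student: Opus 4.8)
The plan is to prove the two inequalities $\hat{F}(S)\leq \overline{F}(S)$ and $\hat{F}(S)\geq \overline{F}(S)$ separately, exploiting the fact that $C(S)$ is nothing but the Lie-algebraic (exponential) analogue of the product set $\Delta(S)$, and that by Proposition \ref{prop:foverlinef} we already know $\overline{F}(S)=F(S)$. The whole content is the translation between the ``group/product'' picture behind $\overline{F}$ and the ``algebra/exponential'' picture behind $\hat{F}$.

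For $\hat{F}(S)\leq \overline{F}(S)$ I would start from an arbitrary factorisation $(S_1,\ldots,S_m)\in\Delta(S)$ and rewrite each factor as an exponential. A factor $S_i\in K(n)$ lies in the compact connected subgroup whose Lie algebra is spanned by the passive generators $g^p$, so (using surjectivity of $\exp$ on a compact connected group) $S_i=\exp(\vec{c}_i^{\,p}g^p)$ with vanishing active part, contributing cost $\|\vec{c}_i^{\,a}\|_1=0$, consistent with $s_1^{\downarrow}(S_i)=1$. A factor $S_i\in Z(n)$ is a single-mode squeezer, hence by the choice $g_i^a=g_{ii}^{a(4)}$ it is exactly $\exp(c\,g_i^a)$ with $|c|=\log s_1^{\downarrow}(S_i)$, so its cost is $\log s_1^{\downarrow}(S_i)$. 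Concatenating these factors yields an element of $C(S)$ whose total active cost equals $\log\prod_i s_1^{\downarrow}(S_i)$; taking the infimum over $\Delta(S)$ gives $\hat{F}(S)\leq\overline{F}(S)$.

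For the reverse inequality I would take any $\vec{c}\in C(S)$, so $S=\prod_{j=1}^N\exp(A_j)$ with $A_j=\vec{c}_j^{\,a}g^a+\vec{c}_j^{\,p}g^p\in\mathfrak{sp}(2n)$. Subadditivity of $F$ (Lemma \ref{lem:gelfand}) gives $F(S)\leq\sum_j F(\exp(A_j))$. Since the active generators span $\pi(n)$ (symmetric matrices) and the passive generators span $\mathfrak{k}(n)$ (antisymmetric matrices), the symmetric/antisymmetric splitting of $A_j$ is precisely $(A_j)_+=\vec{c}_j^{\,a}g^a$ and $(A_j)_-=\vec{c}_j^{\,p}g^p$, so Lemma \ref{lem:lieproof} yields $F(\exp(A_j))\leq F(\exp(\vec{c}_j^{\,a}g^a))$. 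A Trotter decomposition of $\exp(\sum_k (c_j^a)_k g_k^a)$, combined with continuity and subadditivity of $F$ and the per-generator normalisation $F(\exp(c\,g_k^a))=|c|$, then gives $F(\exp(\vec{c}_j^{\,a}g^a))\leq\sum_k|(c_j^a)_k|=\|\vec{c}_j^{\,a}\|_1$. Summing over $j$ and invoking $F=\overline{F}$, the infimum over $C(S)$ produces $\overline{F}(S)\leq\hat{F}(S)$, which together with the first part closes the argument.

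The main obstacle is the final estimate $F(\exp(\vec{c}^{\,a}g^a))\leq\|\vec{c}^{\,a}\|_1$: a single exponential mixes all active generators and cannot be factored exactly. The Trotter product formula converts it into a limit of products on which subadditivity and the per-generator normalisation apply, but one must verify that the outer power $m$ in $(\cdots)^m$ cancels the factor $1/m$ inside each $F(\exp(\cdot/m))$ as $m\to\infty$; this needs continuity of $F$ (the singular values depend continuously on the matrix) and that the normalisation $F(\exp(c\,g_k^a))=|c|$ is exact rather than merely asymptotic. A secondary point demanding care is the zero-cost realisation of the compact factors $K(n)$ as exponentials of passive generators, for which the surjectivity of the exponential map on the compact connected subgroup $K(n)$ is used.
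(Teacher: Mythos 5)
Your proof is correct and follows essentially the same route as the paper's: the hard direction uses exactly the same chain of ingredients (subadditivity via Lemma \ref{lem:gelfand}, the symmetric/antisymmetric split via Lemma \ref{lem:lieproof}, and Trotter together with continuity of $F$ and the normalisation $F(\exp(c\,g_k^a))=|c|$). The only cosmetic difference is in the easy direction, where you translate an arbitrary factorisation in $\Delta(S)$ into an element of $C(S)$, whereas the paper does this only for the Euler decomposition (which suffices because it attains $\overline{F}$ by Proposition \ref{prop:foverlinef}); both versions are fine.
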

\begin{proof}
To prove $\hat{F}\leq \overline{F}$, consider the Euler decomposition $S=K_1A_1\ldots A_nK_2$ with $A_i\in Z(n)$ and $K_1,K_2\in Sp(2n)$. Since $K(n)$ is compact, the exponential map is surjective and there exist $\vec{c}_1^{\,p}$ and $\vec{c}_2^{\,p}$ such that $\exp(\vec{c}_1^{\,p}g^p)=K_1$ and $\exp(\vec{c}_2^{\,p}g^p)=K_2$. Recall that we ordered the vector $g^a$ in such a way that the generators $g^a_i$ generate the matrices in $Z(n)$ for $i=1,\ldots,n$, hence we know that there exist $\vec{c}_i^{\,a}=(0,\ldots,0, (\vec{c}_i^a)_{(i)},0,\ldots,0)$ for $i=1,\ldots,n$ such that
\begin{align*}
	S=\exp(\vec{c}_1^{\,p}g^p)\prod_{i=1}^n \exp(\vec{c}_i^{\,a}g^a)\exp(\vec{c}_2^{\,p}g^p).
\end{align*}
This implies
\begin{align*}
	\hat{F}(S)\leq \sum_i \|\vec{c}_i^{\,a}\|_1=\sum_i F(\exp(\vec{c}_i^{\,a}g^a))=\sum_i F(\exp((\vec{c}_i^{\,a})_{(i)}g^a_i))=\sum_i \log s_1(A_i)=\overline{F}(S).
\end{align*}
Here we used that $(\vec{c}_i^{\,a})_{(i)}$ is also the largest singular value of $\exp((\vec{c}_i^{\,a})_{(i)}g^a_i)\in Z(n)$, as $F(\exp((\vec{c}_i^{\,a})_{(i)}g^a_i))=(\vec{c}_i^{\,a})_{(i)}$ by normalisation of $g$.

For the other direction $\hat{F}\geq \overline{F}$, let $S$ be arbitrary. Let $c\in C(S)$ and consider each vector $\vec{c}_i$ separately. We drop the index $i$ for readability, since we need to consider the entries of the vector $\vec{c}_i$. To make the distinction clear, we denote the $j$-th entry of the vector $\vec{c}$ by $\vec{c}_{(j)}$. Recall that the active generators are exactly those generating the positive matrices. Then:
\begin{align*}
	F(\exp(\vec{c}g))\quad&\stackrel{\mathclap{\mathrm{Lemma~}\ref{lem:lieproof}}}{\leq}\quad F(\exp(\vec{c}^{\,a}g^a))
	=\lim_{n\to\infty}F\left(\left(\prod_i \exp(\vec{c}_{(i)}^{\,a}g_i^a/n)\right)^n\right) \\
	&\leq \sum_i F(\exp(\vec{c}_{(i)}^{\,a}g_i^a))=\sum_i|\vec{c}_{(i)}^a|=\|\vec{c}^{\,a}\|_1
\end{align*}
where we basically redid the calculations we used to prove Lemma \ref{lem:lieproof}, using the continuity of $F$ and the Trotter formula from matrix analysis. Until now, we have considered only one $\vec{c}_i$ of $c\in C(S)$. Now, if we define $S_i=\exp(\vec{c}_ig)$, then we have $\prod_i S_i=S$ and hence, using Lemma \ref{lem:gelfand}, we find:
\begin{align*}
	\overline{F}(S)\stackrel{\mathrm{Lemma~}\ref{lem:gelfand}}{\leq} \sum_iF(S_i)\leq \sum_iF(\exp(\vec{c}_ig)) \leq \sum_i\|\vec{c}^{\,a}_i\|_1 \qquad \forall c\in C(S)
\end{align*}
But this means $\overline{F}(S)\leq \hat{F}(S)$, as we claimed.
\end{proof}
This nearly proves $\tilde{F}(S)\leq F(S)$ for all $S\in Sp(2n)$. The remaining part is to show that $\tilde{F}$ is actually equal to $\hat{F}$ for paths corresponding to decompositions such as the Euler decomposition.

\paragraph*{Second step of the proof:} For the other direction, we need some facts about ordinary differential equations:
\begin{prop} \label{prop:odeprop}
Consider the following system of differential equations for $x:[0,1]\to \R^{2n}$:
\begin{align}
\begin{split}
	\dot{x}(t)^T&=x(t)^TA(t)\qquad \forall t\in [0,1] \\
	x(s)&=x_s \qquad \quad \mathrm{for~some~}x_s\in\mathbb{R}^{2n}, s\in [0,1] \label{eqn:ODEsystem}
\end{split}
\end{align}
where $A\in L^{\infty}([0,1],\mathfrak{sp}(2n))$. Then this system has a unique solution, which is linear in $x_s$ and defined on all of $[0,1]$ such that we can define a map
\begin{align*}
	\forall s,t\in [0,1]:\quad (s,t)\mapsto U(s,t)\in \mathcal{B}(\R^{2n})
\end{align*}
via $x(t)^T=x_s^TU(s,t)$ called the \emph{propagator} of (\ref{eqn:ODEsystem}) that fulfils:
\begin{enumerate}
	\item $U$ is continuous and differentiable almost everywhere.
	\item $U(s,\cdot)$ is absolutely continuous in $t$.
	\item $U(t,t)=\id$ and $U(s,r)U(r,t)=U(s,t)$ for all $s,t\in [0,1]$.
	\item $U(s,t)^{-1}=U(t,s)$ for all $s,t\in [0,1]$.
	\item $U$ is the unique generalised (i.e. almost everywhere) solution to the initial value problem
		\begin{align}
		\begin{split}
			\partial_t U(s,t)-U(s,t)A(t)=0 \\
			U(s,s)=\id \label{eqn:propequation}
		\end{split}
		\end{align}
		on $C([0,1]^2,\R^{2n\times 2n})$.
	\item If $A(t)=A$ does not depend on $t$, then $S(r)=\exp(rA)$ solves equation (\ref{eqn:propequation}) with $U(s,t):=S(t-s)$.
	\item for all $s,t\in[0,1]$:
	\begin{align*}
		\|U(s,t)\|_\infty \leq \exp\left(\int_{s}^t\|A(\tau)\|_1\,\mathrm{d}\tau\right).
	\end{align*}
	\item $U(s,t)\in Sp(2n)$ for all $t,s\in [0,1]$ and $\gamma(t)=U(0,t)$ fulfills equation (\ref{eqn:odepath}) with $\gamma(0)=\id$.
\end{enumerate}
\end{prop}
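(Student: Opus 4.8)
The plan is to recognise \eqref{eqn:ODEsystem} as a non-autonomous linear ODE with an $L^\infty$ (hence locally integrable) coefficient matrix $A(t)$, and to invoke the Carathéodory existence and uniqueness theory for such systems rather than the classical Picard--Lindelöf theorem, since $A$ is only measurable in $t$. Concretely, I would first fix the initial time $s$ and study the matrix-valued initial value problem \eqref{eqn:propequation} directly, constructing $U(s,\cdot)$ as the unique absolutely continuous solution of $\partial_t U(s,t) = U(s,t)A(t)$ with $U(s,s)=\id$. Existence and uniqueness follow from the standard Carathéodory theory because $t\mapsto U\,A(t)$ is measurable in $t$, continuous (indeed linear) in $U$, and dominated by the integrable bound $\|U\|_\infty\,\|A(t)\|_\infty$; absolute continuity of the solution and differentiability almost everywhere are then automatic, giving items (1), (2) and the ``unique generalised solution'' part of (5). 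Linearity in the initial datum $x_s$ is immediate from linearity of the equation, which justifies writing $x(t)^T = x_s^T U(s,t)$ and defining the propagator.

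From this foundation the remaining items are consequences of the composition structure of linear flows. For the cocycle identity (3), I would observe that both $t\mapsto U(s,r)U(r,t)$ and $t\mapsto U(s,t)$ solve \eqref{eqn:propequation} on $[r,1]$ with the same initial value $U(s,r)$ at $t=r$, so uniqueness forces them to coincide; items (3) and (4) (the inverse relation) then follow by specialising to $U(t,t)=\id$ and reading off $U(s,t)^{-1}=U(t,s)$. Item (6) is a direct verification: for constant $A$, $\exp((t-s)A)$ solves the IVP, and uniqueness identifies it with $U(s,t)$. The norm bound (7) is the Grönwall estimate applied to $g(t):=\|U(s,t)\|_\infty$, using $\tfrac{d}{dt}\|U\|_\infty \le \|U\|_\infty\,\|A(t)\|_\infty$ in integrated form together with $\|A(\tau)\|_1 \ge \|A(\tau)\|_\infty$ (or whichever norm comparison the authors intend), yielding the exponential bound.

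The one item requiring genuine structural input rather than soft ODE theory is (8): the propagator must land in $Sp(2n)$. The clean way is to note that $\mathfrak{sp}(2n)=\{T : \sigma T + T^\top\sigma = 0\}$ (Proposition \ref{prop:liealg}), so $A(t)\in\mathfrak{sp}(2n)$ means $A(t)^\top\sigma = -\sigma A(t)$. I would then define $M(t):=U(s,t)\sigma U(s,t)^\top$ and differentiate almost everywhere, using the product rule and $\partial_t U = UA$, to get $\dot M = U(A\sigma + \sigma A^\top)U^\top = 0$; since $M(s)=\sigma$ and $M$ is absolutely continuous with zero derivative a.e., $M(t)\equiv\sigma$, which is precisely the symplectic condition. (One must be slightly careful that this is the correct symplectic relation for the convention $S^\top\sigma S=\sigma$; differentiating $U\sigma U^\top$ versus $U^\top\sigma U$ and matching the left/right action in \eqref{eqn:ODEsystem} is the only place where a sign or transpose can go wrong.) Finally $\gamma(t):=U(0,t)$ manifestly satisfies \eqref{eqn:odepath} with $\gamma(0)=\id$ by construction.

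I expect the main obstacle to be purely the care needed in the measurable-coefficient setting: ensuring that ``solves the ODE'' is consistently interpreted in the Carathéodory (almost-everywhere, absolutely continuous) sense throughout, so that the uniqueness arguments underpinning (3)--(6) are legitimate, and that the differentiation step in (8) is valid at almost every $t$ and then promoted to an identity everywhere via absolute continuity. None of the individual steps is deep, but keeping the a.e.\ bookkeeping coherent across all eight items is where the technical attention is required.
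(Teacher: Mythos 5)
Your proposal is correct and follows essentially the same route as the paper: the paper simply cites Sontag (Theorem 55 and Lemma C.4.1) for items (1)--(7), and those cited results are exactly the Carath\'eodory existence/uniqueness theory, cocycle identities and Gr\"onwall bound you reconstruct. For item (8) your argument --- differentiating the symplectic form along the flow and using $A(t)\in\mathfrak{sp}(2n)$ to conclude the derivative vanishes a.e., then integrating via absolute continuity --- is the same computation the paper performs (it differentiates $U(t,s)^TJU(t,s)$ rather than $U(s,t)\sigma U(s,t)^T$, but the two quadratic forms are equivalent for this purpose).
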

\begin{proof} The proof of this (except for the part about $U(s,t)\in Sp(2n)$) can be found in \cite{son98} (Theorem 55 and Lemma C.4.1) for the transposed differential equation $\dot{x}(t)=A(t)x(t)$.

For the last part, note that since $U(s,s)=\id\in Sp(2n)$, we have $U(s,s)^T J U(s,s)=J$. We can now calculate almost everywhere:
\begin{align*}
	\partial_t(U(t,s)^TJU(t,s))=-U(t,s)^T(A^T(t)J-JA(t))U(t,s)=0
\end{align*}
since $A(t)\in \mathfrak{sp}(2n)$ and therefore $A^T(t)J-JA(t)=0$. 

But this implies $U(t,s)^TJU(t,s)=J$, hence $U$ is symplectic. Obviously, $U(0,t)$ solves equation (\ref{eqn:odepath}).
\end{proof}
We will also need another well-known lemma from functional analysis:
\begin{lem} \label{lem:proofstep}
Let $A:[0,1]\to \mathfrak{sp}(2n)$, $A\in L^{\infty}([0,1],\R^{2n\times 2n})$. Then $A$ can be approximated in $\|\cdot\|_1$-norm by step-functions, which we can assume to map to $\mathfrak{sp}(2n)$ without loss of generality.
\end{lem}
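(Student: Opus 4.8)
The plan is to reduce this to the scalar, one-dimensional fact that a bounded measurable real function on $[0,1]$ can be approximated in $L^1$ by step functions, and then to exploit that $\mathfrak{sp}(2n)$ is a \emph{linear} subspace in order to keep the approximants inside the Lie algebra. The only thing that needs any care is precisely this last point, namely that the step functions respect the linear constraint defining $\mathfrak{sp}(2n)$; everything else is standard.

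First I would observe that, by Proposition \ref{prop:liealg}, $\mathfrak{sp}(2n)=\{T\in\R^{2n\times 2n}\mid \sigma T+T\sigma=0\}$ is the solution set of a homogeneous linear system in the entries of $T$, hence a finite-dimensional linear subspace of $\R^{2n\times 2n}$. I would fix a basis $e_1,\ldots,e_d$ of $\mathfrak{sp}(2n)$. Since $A$ takes values in this subspace, it can be written uniquely as $A(t)=\sum_{j=1}^d a_j(t)\,e_j$ with coordinate functions $a_j:[0,1]\to\R$. Extracting the $j$-th coordinate is a linear functional on the finite-dimensional space $\R^{2n\times 2n}$, hence bounded, so $a_j=\ell_j\circ A$ is both measurable (composition of a continuous map with a measurable one) and essentially bounded; in particular $a_j\in L^1([0,1])$.

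Next I would invoke the standard scalar approximation result: each $a_j$ can be approximated in $\|\cdot\|_{L^1([0,1])}$ by real-valued step functions $s_j^{(k)}$, so that $\int_0^1 |a_j(t)-s_j^{(k)}(t)|\,\mathrm{d}t\to 0$ as $k\to\infty$ (approximate $a_j$ by simple functions, then approximate the underlying measurable sets by finite unions of intervals using regularity of Lebesgue measure). Setting $\phi_k(t):=\sum_{j=1}^d s_j^{(k)}(t)\,e_j$, each $\phi_k$ is a step function, and as a pointwise linear combination of the fixed generators $e_j\in\mathfrak{sp}(2n)$ it takes values in $\mathfrak{sp}(2n)$ for every $t$. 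This is exactly where the constraint of mapping into the Lie algebra is discharged, and it is automatic because $\mathfrak{sp}(2n)$ is a linear subspace.

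Finally, using subadditivity and homogeneity of the trace norm on matrices, I would estimate
\begin{align*}
	\int_0^1 \|A(t)-\phi_k(t)\|_1\,\mathrm{d}t
	\;=\;\int_0^1 \Bigl\|\sum_{j=1}^d\bigl(a_j(t)-s_j^{(k)}(t)\bigr)e_j\Bigr\|_1\,\mathrm{d}t
	\;\le\;\sum_{j=1}^d \|e_j\|_1 \int_0^1 |a_j(t)-s_j^{(k)}(t)|\,\mathrm{d}t,
\end{align*}
and the right-hand side tends to $0$ as $k\to\infty$, giving approximation of $A$ in $\|\cdot\|_1$-norm by step functions valued in $\mathfrak{sp}(2n)$. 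There is no genuine obstacle here: the statement is essentially the density of step functions in $L^1$, with the single subtlety of staying in $\mathfrak{sp}(2n)$, handled by the basis decomposition. An equivalent and slightly slicker route would be to approximate $A$ by arbitrary matrix-valued step functions first and then apply the orthogonal projection $P$ onto $\mathfrak{sp}(2n)$ pointwise; since $PA=A$ and $P$ does not increase the relevant norm, $P\phi_k$ simultaneously inherits the step-function property, the approximation bound, and membership in $\mathfrak{sp}(2n)$.
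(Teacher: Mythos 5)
Your proof is correct and follows essentially the same route as the paper, which simply cites the standard $L^1$-density of step functions (Rudin, Ch.~2, Ex.~24) and asserts the reduction to $\mathfrak{sp}(2n)$ without detail. Your coordinate-wise argument (or equivalently the pointwise orthogonal projection onto the linear subspace $\mathfrak{sp}(2n)$) correctly supplies the justification for the ``without loss of generality'' clause that the paper leaves implicit.
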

The approximation by step-function can be found e.g. in \cite{rud87} (Chapter 2, exercise 24). With this in mind, we can finally prove the section's main theorem:

\begin{proof}[Proof of theorem \ref{thm:liepaths}]
$F\geq \tilde{F}$: Let $S\in Sp(2n)$ be arbitrary and consider the Euler decomposition $S=K_1S_1\cdots S_nK_2$. We can define a function $A:[0,1]\to\mathfrak{sp}(2n)$ via:
\begin{align}
	A(t):=\begin{cases} (n+2)\cdot \vec{c}_{1}^{\,p}g^p &t\in [0,1/(n+2)) \\
	 (n+2)\cdot (\vec{c}_{i+1}^{\,a})_{(i)}g^a_{i} &t\in[i/(2(n+2)),(i+1)/(n+2)),i=1,\ldots,n \\
	 (n+2)\cdot \vec{c}_{n+2}^{\,p}g^p &t\in[(n+1)/(n+2),1]
	\end{cases}
\end{align}
where $(\vec{c}_1^{\,p},0,0,\vec{c}_2^{\,a},\ldots,0,\vec{c}_{n+1}^{\,a},\vec{c}_{n+2}^{\,p},0)$ denotes the element in $C^{n+2}(S)$ for the Euler decomposition and vector indices are denoted by a subscript $_{(i)}$ as before. Let $U(s,t)$ be the propagator corresponding to $A$, then for $t\in [0,1/(n+2))$ according to Proposition \ref{prop:odeprop}, since $A$ does not depend on $t$ on this interval, it is given by $U(t,s)=\exp((t-s)A)$. In particular, $U(1/(n+2),0)=\exp(\vec{c}_{n+2}^{\,p}g^p)=K_2$. 

Iterating the procedure above, using $U(0,1)=U(0,1/(n+2))\cdots U((n+1)/(n+2),1)$, we can see that by construction, $U(0,1)=K_1S_1\cdots S_nK_2=S$. Hence $A$ defines a continuous path on $Sp(2n)$ via $U(s,t)$. We can calculate:
\begin{align*}
	\tilde{F}(S)\leq \int_0^1 \|\vec{c}^{\,a}(t)\|_1\,\mathrm{d}t &=\sum_{i=1}^n \int_{i/(n+2)}^{(i+1)/(n+2)} |(n+2)\cdot (\vec{c}^{\,a}_{i+1})_{(i)}|\,\mathrm{d}t \\
	&=\sum_{i=1}^n |(\vec{c}^{\,a}_{i+1})_{(i)}|\stackrel{\mathrm{Lemma}~\ref{lem:fbarfhat}}{=}F(S)
\end{align*}
where we used that the integral over the interval $[0,1/(n+2))$ and $[(n+1)/(n+2),1]$ is empty due to the fact that all active components are zero. In the last step, we used that for the Euler decomposition, which takes the minimum in $\hat{F}$, this value is exactly $\sum_i |(\vec{c}^{\,a}_{i+1})_{(i)}|=\sum_i \|\vec{c}^{\,a}_{i+1}\|_1$, since $(\vec{c}_{i+1}^a)_{(j)}=0$ for $j\neq i$. 

$F\leq \tilde{F}$: Let $S\in Sp(2n)$ be arbitrary. While proving $F\geq \tilde{F}$, we have seen for a special example that propagators for step-functions $A:[0,1]\to\mathfrak{sp}(2n)$ relate to vectors $c\in C^N(S)$, where $N$ is the number of steps. We have also seen that the measure of $c$ and the measure of the path $\alpha$ on $Sp(2n)$ corresponding to $A$ in the definition of $\hat{F}$ and $\tilde{F}$ are equivalent in this case. 

This means that using Proposition \ref{prop:foverlinef} and Lemma \ref{lem:fbarfhat} we have already proved:
\begin{align*}
	F(S)=\inf &\left\{ \int_0^1\|\vec{c}_{\alpha}^{\,a}(t)\|_1\,\mathrm{d}t\middle|\alpha\in \mathcal{C}^r(S),\right. \\
		&\left. \dot{\alpha}(t)=(\vec{c}_{\alpha}^{\,p}(t)g^p(\alpha(t)), \vec{c}_{\alpha}^{\,a}(t)g^a(\alpha(t)))^T, \vec{c}\mathrm{~step~fct.}\right\}.
\end{align*}
The only thing left to prove is that we can drop the step-function assumption. This will be done by an approximation argument: Let $\varepsilon>0$ be arbitrary and let $A\in L^{\infty}([0,1],\R^{2n\times 2n})$ with $A(t)\in \mathfrak{sp}(2n)$ always. Then $A$ defines a path $\alpha:[0,1]\to Sp(2n)$ via Proposition \ref{prop:odeprop}, which in turn defines $c_{\alpha}(t)g(\alpha(t))=\dot{\alpha}(t)$ at any point with the usual generators $g$ of $\mathfrak{sp}(2n)$. We know that $c_{\alpha}:[0,1]\to \R^{4n^2}$ is an $L^{\infty}$-function, since $A \in L^{\infty}$. Now consider an arbitrary $A\in L^{\infty}$ such that
\begin{align}
	\left| \int_{0}^1 \|\vec{c}^{\,a}_{\alpha}(t)\|_1\,\mathrm{d}t-\tilde{F}(S)\right|<\varepsilon \label{eqn:bigproof1}
\end{align}
i.e. $A$ corresponds to a path that is close to the infimum in the definition of $\tilde{F}$. We can now approximate $c_{\alpha}$ by step-functions $c_{\alpha^{\prime}}$ (corresponding to a function $A^{\prime}$, see Lemma \ref{lem:proofstep}) such that
\begin{align}
	\left| \int_0^1 \|c_{\alpha}(t)-c_{\alpha^{\prime}}\|_1\,\mathrm{d}t\right|<\varepsilon \label{eqn:bigproof2}
\end{align}
Using the fact that the propagators $U_A,U_{A^{\prime}}$ are differentiable almost everywhere (Proposition \ref{prop:odeprop}) and absolutely continuous when one entry is fixed, we can define a function $f(s):=U_A(0,s)U_{A^{\prime}}(s,t)$, which is also differentiable almost everywhere. Furthermore, the fundamental theorem of calculus holds for $f(s)$, since it is absolutely continuous as $U(s,t)$ is absolutely continuous in $s$ or $t$ and the fundamental theorem holds for absolutely continuous functions (cf \cite{rud87}, theorems 6.10 and 7.8).
\begin{align*}
	\frac{\mathrm{d}}{\mathrm{d}s}f(s)=-U_A(0,s)A(s)U_{A^{\prime}}(s,t)+U_A(0,s)A^{\prime}(s)U_{A^{\prime}}(s,t)
\end{align*}
almost everywhere, which implies:
\begin{align*}
	U_{A^{\prime}}(0,t)-U_A(0,t)&=f(t)-f(0)=\int_0^t \frac{\mathrm{d}}{\mathrm{d}s}f(s)\,\mathrm{d}s \\
	&=\int_0^t U_A(0,s)(A^{\prime}(s)-A(s))U_{A^{\prime}}(s,t)\,\mathrm{d}s
\end{align*}
and:
\begin{align}
	\|U_{A^{\prime}}(0,t)-U_A(0,t)\|_1&\leq \int_0^t \|A^{\prime}(s)-A(s)\|_{1}\,\mathrm{d}s \sup_{\tau\in [0,t]}\|U_{A^{\prime}}(0,\tau)\|_{\infty} \sup_{\tau\in [0,t]}\|U_A(0,\tau)\|_{\infty} \nonumber\\
	&\leq M\cdot \int_0^t \|A^{\prime}(s)-A(s)\|_{1}\,\mathrm{d}s \nonumber\\
	&= M\cdot \int_0^t \|c_{\alpha^{\prime}}(s)g-c_{\alpha}(s)g\|_{1}\,\mathrm{d}s \nonumber\\
	&\leq M\cdot \|g\|_{\infty} \int_0^t \|c_{\alpha^{\prime}}(s)-c_A(s)\|_{1}\,\mathrm{d}s \nonumber \\
	&\leq M M^{\prime} \varepsilon \label{eqn:proof1}
\end{align}
for some $M<\infty, M^{\prime}<\infty$ using that the propagators are bounded ($M$ can explicitly be computed by the bounds given in Proposition \ref{prop:odeprop}). 

Up to now, we have taken a path $\alpha$ to $S$ close to the infimum and approximated it by a path $\alpha^{\prime}$. It is immediate by equations (\ref{eqn:bigproof1}) and (\ref{eqn:bigproof2}) that
\begin{align}
	\left| \int_0^1 \| c_{\alpha^{\prime}}(t)\|_1\,\mathrm{d}t-\tilde{F}(S)\right|<2\varepsilon. \label{eqn:proof4}
\end{align}
Since $c_{\alpha^{\prime}}\in C^N(S^{\prime})$ for some $N\in\N$ and $S^{\prime}=U_{A^{\prime}}(0,1)$, we would be done if $S^{\prime}=S$. To remedy this, we want to extend $\alpha^{\prime}$ to a path $\tilde{\alpha}$ such that it ends at $S$. This is where equation (\ref{eqn:proof1}) enters: Set $\tilde{S}:=U_{A^{\prime}}(0,1)^{-1}U_A(0,1)$, then
\begin{align}
	\|\tilde{S}-\id\|_1\leq \frac{MM^{\prime}}{\|S\|_1}\varepsilon \label{eqn:proof2}
\end{align}
hence $\tilde{S}\approx \id$ for $\varepsilon$ small enough. Using the polar decomposition, we can write $\tilde{S}=\exp(\vec{c}_{N+1}^{\,p})\exp(\vec{c}_{N+2}^{\,a})$. From equation (\ref{eqn:proof2}) we obtain
\begin{align*}
	\|\tilde{S}\|_1<n+\varepsilon n\frac{MM^{\prime}}{\|S\|_1}<\|\tilde{S}\|_1+2n\varepsilon \frac{MM^{\prime}}{\|S\|_1}, \\
	\|\tilde{S}\|_{\infty}<1+\varepsilon \frac{MM^{\prime}}{\|S\|_1},
\end{align*}
which in turn implies for each singular value $s_i(\tilde{S})\leq \varepsilon \frac{MM^{\prime}}{\|S\|_1}$ for $i=1,\ldots,n$ and therefore 
\begin{align}
	\|\log \tilde{S}\|_1\leq n\log\left(\varepsilon \frac{MM^{\prime}}{\|S\|_1}\right)\leq n\varepsilon \frac{MM^{\prime}}{\|S\|_1}=:C\varepsilon. \label{eqn:proof3} 
\end{align}

This lets us construct a new  $\tilde{A}: [0,2]\to \mathfrak{sp}(2n)$:
\begin{align*}
	t\mapsto \begin{cases} A^{\prime}(t) &t\in [0,1] \\
		2\cdot\vec{c}_{N+1}^{\,p}g^p &t\in(1,3/2) \\
		2\cdot\vec{c}_{N+2}^{\,a}g^a &t\in(3/2,2] 
		\end{cases}.
\end{align*}
By construction, for the corresponding propagator we have $U_{\tilde{A}}(0,2)=S$ and $\tilde{\alpha}$ is a feasible path for $\tilde{F}(S)$ (at least after reparameterisation) fulfiling:
\begin{align*}
	\left|\int_0^2 \|c_{\tilde{\alpha}}^a(t)\|_1\,\mathrm{d}t-\tilde{F}(S)\right|
	&\leq \left|\int_0^1 \|c_{\tilde{\alpha}}^a(t)\|_1\,\mathrm{d}t-\tilde{F}(S)\right|+
	\left|\int_0^1 \|c_{\tilde{\alpha}}^a(t)\|_1\,\mathrm{d}t\right| \\
	&\stackrel{\mathclap{(\ref{eqn:proof4})+(\ref{eqn:proof3})}}{\leq}\quad (2+C)\varepsilon
\end{align*}
Since, $c_{\tilde{\alpha}}\in C^{N+2}(S)$, $\tilde{\alpha}$ is a valid path for $\hat{F}(S)$, which implies that for any $\epsilon>0$, choosing $\varepsilon:=\epsilon/(2+C)$, we have seen:
\begin{align}
	\hat{F}(S)<\tilde{F}(S)+\epsilon
\end{align}
For $\epsilon\to 0$, $\hat{F}(S)\leq \tilde{F}(S)$, which implies $F(S)\leq \tilde{F}(S)$ via Lemma \ref{lem:fbarfhat}.
\end{proof}

\section{A mathematical measure for squeezing of arbitrary states} \label{sec:measureg}
Throughout this section, for convenience, we will switch to using $J$ as symplectic form. Having defined the measure $F$, we will now proceed to define a squeezing measure for creating an arbitrary (mixed) state:
\begin{dfn}
Let $\rho$ be an $n$-mode bosonic quantum state with covariance matrix $\Gamma$. We then define:
\begin{align}
	G(\rho)\equiv G(\Gamma):=\inf\{F(S)|\Gamma\geq S^{T\!}S, S\in Sp(2n)\} \label{eqn:defG}
\end{align}
\end{dfn}
How is this motivated? The vacuum has covariance matrix $\Gamma_{\mathrm{vac}}=\id$ and Gaussian noise is given by covariance matrices $E\geq 0$, thus any covariance matrix $\Gamma_0\geq \id$ should be free to produce. Given $\Gamma$, for any $S^TS\leq \Gamma$, we can find $\Gamma_0\geq \id$ such that $S\Gamma_0S^T=\Gamma$ and we have then created a state with covariance matrix $\Gamma$ where the only squeezing costs required are those required for $S$. Since common wisdom from older measures (cf. \cite{kra03b}) tells us that measurements, ancillas, or convex combinations cannot increase squeezing, $G$ seems a likely candidate for an operational measure. $G$ is also reminiscent of already existing measures for Gaussian states such as the entanglement of formation \cite{wol04}.

Note that $G$ is always finite: For any given covariance matrix $\Gamma$, by Williamson's Theorem and Corollary \ref{cor:symplecticspec}, we can find $S\in Sp(2n)$ and $\tilde{D}\geq \id$ such that $\Gamma=S^T\tilde{D}S\geq S^TS$. Furthermore $G$ is also nonnegative since $F$ is nonnegative for symplectic $S$, as for any symplectic matrix $S$ the $n$ biggest singular values $s_i$ need to fulfil $s_i\geq 1$. 

\subsection{Different reformulations of the measure}
We will now give several reformulations of the squeezing measure and prove some of its properties. In particular, $G$ is convex and one of the crucial steps towards proving convexity of $G$ is given by a reformulation of $G$ with the help of the Cayley transform:
\begin{prop} \label{prop:cayley}
Define the Cayley transform and its inverse via:
\begin{align}
\begin{split}
	\Ca:\{H\in\R^{n\times n}|\operatorname{spec}(H)\cap\{+1\}=\emptyset\}\to \R^{n\times n}\\
	H \mapsto \frac{\id+H}{\id-H}
\end{split} \\
\begin{split}
	\Ca^{-1}\{S\in\R^{n\times n}|\operatorname{spec}(H)\cap\{-1\}=\emptyset\}\to\R^{n\times n} \\
	S\mapsto \frac{S-\id}{S+\id}
\end{split}
\end{align}
$\Ca$ is a diffeomorphism onto its image with inverse $\Ca^{-1}$. Furthermore, it has the following properties:
\begin{enumerate}
	\item $\mathcal{C}$ is operator monotone and operator convex on matrices $A$ with $\operatorname{spec}(A)\subset (-1,1)$.
	\item $\mathcal{C}^{-1}$ is operator monotone and operator concave on matrices $A$ with $\operatorname{spec}(A)\subset (-1,\infty)$.
	\item $\Ca:\mathbb{R}\to \mathbb{R}$ with $\Ca(x)=(1+x)/(1-x)$ is log-convex on $[0,1)$. 
	\item For $n=2m$ even, $H\in \R^{2m\times 2m}$ and $H\in \mathcal{H}$ if and only if $\Ca(H)\in Sp(2m,\R)$ and $\Ca(H)\geq iJ$.
\end{enumerate}
where $\mathcal{H}$ is defined via:
\begin{align}
	\mathcal{H}=\left\{H=\begin{pmatrix}{} A & B \\ B & -A\end{pmatrix}\in\R^{2m\times 2m}\middle| A^T=A, B^T=B, \operatorname{spec}(H)\subset (-1,1)\right\}
\end{align}
\end{prop}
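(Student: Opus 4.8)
The plan is to treat the bijection/diffeomorphism claim and the four enumerated items separately, since items (1)--(3) are statements about (operator) functions of one variable while item (4) is the genuinely symplectic one. For the diffeomorphism claim I would first note that the factors $\id+H$ and $(\id-H)^{-1}$ commute (both are rational functions of $H$), so $\Ca(H)$ is unambiguous, and likewise for $\Ca^{-1}$. The identity $\Ca^{-1}(\Ca(H))=H$ is then a one-line computation: setting $S=\Ca(H)$ one has $S-\id=2H(\id-H)^{-1}$ and $S+\id=2(\id-H)^{-1}$, whence $(S-\id)(S+\id)^{-1}=H$; the reverse composition is symmetric. Since $S+\id=2(\id-H)^{-1}$ is invertible, $\Ca$ lands in the domain of $\Ca^{-1}$, and smoothness of both maps follows because matrix inversion is smooth on the open set of invertible matrices.

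For items (1)--(3) I would reduce the operator statements to the single function $t\mapsto 1/t$. Writing $\Ca(x)=-1+2/(1-x)$ and $\Ca^{-1}(s)=1-2/(1+s)$, the point is that $t\mapsto 1/t$ is operator monotone decreasing and operator convex on $(0,\infty)$, and that operator monotonicity as well as operator convexity/concavity are preserved under affine reparametrisation of the argument together with affine post-composition. Since $x\mapsto 1-x$ and $s\mapsto 1+s$ map the stated spectral ranges into $(0,\infty)$, item (1) follows directly, and in item (2) the negative post-multiplication converts convexity into concavity. Item (3) is then elementary calculus: $\tfrac{d^2}{dx^2}\log\Ca(x)=\tfrac{1}{(1-x)^2}-\tfrac{1}{(1+x)^2}$, which is nonnegative precisely on $[0,1)$.

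The real work is item (4). I would first record that the Cayley transform intertwines the Lie algebra with the group: for $H\in\mathfrak{sp}(2m)$, i.e.\ $H^TJ+JH=0$, a short computation yields $(\id+H)^TJ(\id+H)=J(\id-H^2)=(\id-H)^TJ(\id-H)$ and hence $\Ca(H)^TJ\,\Ca(H)=J$, so $\Ca(H)\in Sp(2m)$; running the same identity backwards shows $\Ca^{-1}$ maps $Sp(2m)$ into $\mathfrak{sp}(2m)$. For the forward direction, $H\in\mathcal{H}$ is symmetric with $\operatorname{spec}(H)\subset(-1,1)$, so $\Ca(H)$ is symmetric and, diagonalising $H$, has eigenvalues $(1+\lambda)/(1-\lambda)>0$; thus $\Ca(H)$ is a positive definite symplectic matrix. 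Such a matrix satisfies $(J\,\Ca(H))^2=-\id$ (since $J\Ca(H)J\Ca(H)=J(\Ca(H)J\Ca(H))=J\cdot J$ using symmetry and symplecticity), so all its symplectic eigenvalues equal $1$, and Corollary \ref{cor:symplecticspec} gives $\Ca(H)\geq iJ$. Conversely, $\Ca(H)\geq iJ$ forces $\Ca(H)$ to be symmetric (so that $\Ca(H)-iJ$ is Hermitian) and, being symplectic hence invertible, positive definite; then $H=\Ca^{-1}(\Ca(H))\in\mathfrak{sp}(2m)$ is symmetric, so $H\in\pi(m)$ has the required block form, while positivity of the eigenvalues $(1+\lambda)/(1-\lambda)$ forces $\operatorname{spec}(H)\subset(-1,1)$, i.e.\ $H\in\mathcal{H}$.

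The step I expect to be the main obstacle is the equivalence, implicit in item (4), between the pair ``$\Ca(H)\in Sp(2m)$ and $\Ca(H)\geq iJ$'' and the single condition ``$\Ca(H)$ is a positive definite symplectic matrix''. The delicate direction requires showing that a positive definite symplectic matrix has all symplectic eigenvalues equal to $1$ — for which the identity $(J\,\Ca(H))^2=-\id$ together with Williamson's theorem and Corollary \ref{cor:symplecticspec} is the natural route — and keeping careful track of the real-symmetric versus complex-Hermitian reading of the inequality $\geq iJ$. By contrast, the intertwining computation, though central, is routine once the relation $H^TJ+JH=0$ is invoked.
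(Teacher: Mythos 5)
Your proposal is correct and follows the same overall strategy as the paper's proof in Appendix \ref{app:cay}: items (1)--(3) are reduced to operator monotonicity/convexity of $t\mapsto 1/t$ via the same affine decomposition $\Ca(x)=-\id+2(\id-x)^{-1}$, item (3) is the same second-derivative computation, and item (4) rests on positive definiteness of $\Ca(H)$ together with Corollary \ref{cor:symplecticspec}. You differ in two sub-arguments, both to your advantage. First, you show that $\Ca$ lands in the domain of $\Ca^{-1}$ via the identity $\Ca(H)+\id=2(\id-H)^{-1}$, which is manifestly invertible; the paper instead runs a Jordan-normal-form computation to exclude the eigenvalue $-1$, which is longer and gains nothing. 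Second, for symplecticity in item (4) you use the congruence identity $(\id+H)^TJ(\id+H)=J(\id-H^2)=(\id-H)^TJ(\id-H)$, valid for every $H\in\mathfrak{sp}(2m)$, whereas the paper verifies $HJ=-JH$ blockwise and then manipulates $\Ca(H)J=J\Ca(H)^{-1}$; your identity is more structural and yields the converse inclusion $\Ca^{-1}(Sp(2m))\subseteq\mathfrak{sp}(2m)$ for free. You also make explicit, via $(J\,\Ca(H))^2=-\id$, the fact that a positive definite symplectic matrix has all symplectic eigenvalues equal to one --- a step the paper leaves implicit when it invokes Corollary \ref{cor:symplecticspec}. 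One phrasing caveat: $\Ca(H)\geq iJ$ does not \emph{force} symmetry so much as presuppose it (the inequality is only meaningful when $\Ca(H)-iJ$ is Hermitian), but this is the same implicit convention the paper adopts and does not affect correctness.
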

The definition and the fact that this maps the upper half plane of positive definite matrices to matrices inside the unit circle is present in \cite{arn88} (I.4.2) and \cite{mcd98} (Prop. 2.51, Proof 2). Since no proof is given in the references and they do not cover the whole proposition, we provide one in Appendix \ref{app:cay}. We can use the Cayley transform to reformulate our squeezing measure:
\begin{prop} \label{prop:reform}
Let $\Gamma\geq iJ$ and $\Gamma\in\R^{2n\times 2n}$ symmetric. Then:
\begin{align}
	G(\Gamma)&=\inf\{F(S)|\Gamma\geq S^TS, S\in Sp(2n)\} \label{eqn:reform1}\\
	&=\inf\{F(\Gamma_0^{1/2})|\Gamma\geq \Gamma_0\geq iJ\} \label{eqn:reform2}\\
	&=\inf\left\{\frac{1}{2}\sum_{i=1}^n \log \left(\frac{1+s_i(A+iB)}{1-s_i(A+iB)}\right)\middle|\Ca^{-1}(\Gamma)\geq H, H\in \mathcal{H}\right\} \label{eqn:reform3}
\end{align}
\end{prop}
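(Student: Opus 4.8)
The plan is to establish the three expressions are equal by proving a cycle of inequalities, treating the two equalities separately since they rely on different tools. For the first equality \eqref{eqn:reform1}$=$\eqref{eqn:reform2}, I would show mutual inclusion of the feasible sets up to the value of the objective. Given a feasible $S$ for \eqref{eqn:reform1}, i.e.\ $\Gamma \geq S^TS$ with $S \in Sp(2n)$, set $\Gamma_0 := S^TS$. Then $\Gamma_0 \geq iJ$ holds because $S$ is symplectic (so $S^TS$ is a positive symplectic matrix, whose symplectic eigenvalues are all $1$, hence $\geq iJ$ by Corollary \ref{cor:symplecticspec}), and $\Gamma \geq \Gamma_0$ by assumption; moreover $F(\Gamma_0^{1/2}) = F((S^TS)^{1/2}) = F(S)$ since $(S^TS)^{1/2}$ and $S$ share the same singular values. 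This gives \eqref{eqn:reform2}$\leq$\eqref{eqn:reform1}. Conversely, given $\Gamma_0$ feasible for \eqref{eqn:reform2}, the key step is Williamson's Theorem together with the condition $\Gamma_0 \geq iJ$: by Corollary \ref{cor:symplecticspec} the symplectic eigenvalues of $\Gamma_0$ are all $\geq 1$, but to produce a genuine symplectic square root one needs them equal to $1$, so I would instead invoke that $\Pi(n)$ matrices have symplectic square roots (noted after Williamson's Theorem) applied to $\Gamma_0$ directly. The cleanest route is: since $\Gamma_0 \geq iJ$ is symmetric positive definite, write $\Gamma_0 = S_0^T \tilde D S_0$ with $\tilde D = \diag(D,D) \geq \id$; then $S := \tilde D^{1/2} S_0$ is symplectic, $S^TS = S_0^T \tilde D S_0 = \Gamma_0 \leq \Gamma$, and $F(S) = \frac12 \sum \log(\text{sing.\ values of } \Gamma_0)$, which one checks equals $F(\Gamma_0^{1/2})$. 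This yields \eqref{eqn:reform1}$\leq$\eqref{eqn:reform2}.

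For the second equality \eqref{eqn:reform2}$=$\eqref{eqn:reform3}, the plan is to substitute via the Cayley transform and track the feasibility condition and the objective simultaneously. By property (4) of Proposition \ref{prop:cayley}, a matrix $H \in \mathcal H$ corresponds bijectively (via $\Ca$) to a symplectic matrix $\Ca(H)$ with $\Ca(H) \geq iJ$, i.e.\ exactly to a feasible $\Gamma_0 = \Ca(H)$ in \eqref{eqn:reform2}. The constraint $\Gamma \geq \Gamma_0$ translates, using operator monotonicity of $\Ca^{-1}$ (property (2)), into $\Ca^{-1}(\Gamma) \geq \Ca^{-1}(\Gamma_0) = H$, which is precisely the constraint appearing in \eqref{eqn:reform3}. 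It then remains to verify that the objectives match under this substitution: writing $H = \left(\begin{smallmatrix} A & B \\ B & -A \end{smallmatrix}\right)$, I would compute the singular values of $\Gamma_0^{1/2} = \Ca(H)^{1/2}$ in terms of the singular values of $A + iB$. The mechanism is that $H$, having the form in $\mathcal H$, is unitarily equivalent (via the standard $\binom{Q}{P}$-to-complex transformation) to $\diag(A+iB, -(A+iB)^T)$ or similar, so the eigenvalues of $H$ come in pairs $\pm s_i(A+iB)$; applying $\Ca$ entrywise on the spectrum and then taking the square root of $\Gamma_0 = \Ca(H)$ reproduces exactly $\frac12 \sum_i \log\bigl((1+s_i(A+iB))/(1-s_i(A+iB))\bigr)$.

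The main obstacle I expect is this last objective-matching computation, specifically relating the singular values $s_i(\Ca(H)^{1/2})$ to the singular values $s_i(A+iB)$ and confirming the factor of $\tfrac12$ and the decreasing-ordering conventions are consistent. The subtlety is that $F$ is defined via singular values of a \emph{non-symmetric} matrix $\Ca(H)^{1/2}$, whereas the formula in \eqref{eqn:reform3} is phrased through singular values of the complex matrix $A + iB$; bridging these requires carefully diagonalizing $H \in \mathcal H$ in the complex structure and verifying that $\Ca$ commutes with this diagonalization in the right sense. I would handle this by exploiting the block structure $H = \left(\begin{smallmatrix} A & B \\ B & -A \end{smallmatrix}\right)$ and the fact that conjugation by the unitary $\tfrac{1}{\sqrt2}\left(\begin{smallmatrix} \id & i\id \\ i\id & \id \end{smallmatrix}\right)$ puts $H$ into block-diagonal complex form, reducing everything to a scalar identity $\Ca(s)^{1/2}$ giving the claimed logarithm; the remaining pieces (operator monotonicity of $\Ca^{-1}$, bijectivity) are then direct citations of Proposition \ref{prop:cayley}.
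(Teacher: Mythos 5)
Your handling of the second equality \eqref{eqn:reform2}$=$\eqref{eqn:reform3} is essentially the paper's own argument: the Cayley bijection between $\mathcal{H}$ and the positive symplectic matrices $\geq iJ$, operator monotonicity of $\Ca^{-1}$ to translate the constraint, and a unitary block transformation pairing the eigenvalues of $H$ as $\pm s_i(A+iB)$. One small imprecision there: the correct normal form is the anti-block-diagonal matrix $\left(\begin{smallmatrix}0 & A+iB\\ A-iB & 0\end{smallmatrix}\right)$, not $\diag(A+iB,-(A+iB)^T)$; since $A+iB$ is complex symmetric, its eigenvalues are generally not its singular values, so the block-diagonal form would give the wrong spectrum.

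The genuine gap is in your proof of \eqref{eqn:reform1}$\leq$\eqref{eqn:reform2}. The matrix $S:=\tilde{D}^{1/2}S_0$ is \emph{not} symplectic: with $\tilde{D}=\diag(D,D)$ one computes $(\tilde{D}^{1/2})^TJ\tilde{D}^{1/2}=\left(\begin{smallmatrix}0 & D\\ -D & 0\end{smallmatrix}\right)$, which equals $J$ only when $D=\id$. More fundamentally, no symplectic $S$ with $S^TS=\Gamma_0$ can exist unless $\Gamma_0$ is itself symplectic, because $S^TS$ is always a positive definite symplectic matrix (all symplectic eigenvalues equal to $1$), whereas a generic feasible $\Gamma_0$ has symplectic eigenvalues strictly larger than $1$. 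Your fallback of invoking symplectic square roots of $\Pi(n)$ matrices fails for the same reason: $\Gamma_0\in\Pi(n)$ would force $\Gamma_0$ to be symplectic. Fortunately only an inequality is needed, and the repair uses nothing beyond what you already cite: take $S:=S_0$, the Williamson symplectic matrix itself. Since $\tilde{D}\geq\id$ by Corollary \ref{cor:symplecticspec}, one has $S_0^TS_0\leq S_0^T\tilde{D}S_0=\Gamma_0\leq\Gamma$, so $S_0$ is feasible for \eqref{eqn:reform1}; and $S_0^TS_0\leq\Gamma_0$ gives $\lambda_i^{\downarrow}(S_0^TS_0)\leq\lambda_i^{\downarrow}(\Gamma_0)$ by Weyl monotonicity, hence $F(S_0)=F\bigl((S_0^TS_0)^{1/2}\bigr)\leq F\bigl(\Gamma_0^{1/2}\bigr)$, which is all the inequality requires.
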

\begin{proof} First note that the infimum in all three expressions is actually attained. We can see this most easily in the first equation (\ref{eqn:reform1}): The matrix inequalities $\Gamma\geq S^TS(\geq iJ)$ imply that the set of feasible $S$ in the minimization is compact, hence its minimum is attained. For the other two equations, this follows similarly.

To see (\ref{eqn:reform1}) $=$ (\ref{eqn:reform2}), first note that (\ref{eqn:reform2}) $\leq$ (\ref{eqn:reform1}) since any $S\in Sp(2n)$ also fulfils $S^TS\geq iJ$, hence $\Gamma\geq S^TS\geq iJ$. For equality, note that for any $\Gamma\geq \Gamma_0\geq iJ$, using Williamson's Theorem we can find $S\in Sp(2n)$ and a diagonal $\tilde{D}=(d_1,\ldots,d_n,d_1,\ldots,d_n)$ with $d_i\geq 1$ for all $i=1,\ldots n$ such that $\Gamma_0=S^TDS\geq S^TS\geq iJ$ via Corollary \ref{cor:symplecticspec}. But since $F\left(\Gamma_0^{1/2}\right)\geq F((S^TS)^{1/2})=F(S)$ via the Weyl monotonicity principle, we know that the infimum is achieved on symplectic matrices.

Finally, let us prove equality with (\ref{eqn:reform3}). Observe that as $S^TS\geq iJ$ and $S^TS$ is symplectic, $S^TS=\Ca(H)$ for some $H\in\mathcal{H}$ and conversely, for any $H\in\mathcal{H}$ there exists a symplectic matrix $S$ with $\Ca(H)=S^TS$ and we can replace $Sp(2n)$ by $\mathcal{H}$ using the Cayley transform. 

Using the fact that $s_i^{\downarrow}(S)=\lambda_i^{\downarrow}(S^TS)^{1/2}=\lambda_i^{\downarrow}(\Ca(H))^{1/2}$ and the fact that $H$ is diagonalised by the same unitary matrices as $\Ca(H)=(\id+H)\cdot(\id-H)^{-1}$ whence its eigenvalues are 
\begin{align*}
	\lambda_i^{\downarrow}(\Ca(H))=\frac{1+\lambda_i^{\downarrow}(H)}{1-\lambda_i^{\downarrow}(H)},
\end{align*}
we have:
\begin{align*}
	\inf\{F(S)|\Gamma\geq S^TS, S\in Sp(2n)\}=\inf\left\{\log \prod_{i=1}^n \left( \frac{1+\lambda^{\downarrow}_i(H)}{1-\lambda_i^{\downarrow}(H)}\right)^{\frac{1}{2}}\middle|\Gamma\geq \Ca(H), H\in \mathcal{H}\right\}
\end{align*}
Next we claim $\lambda^{\downarrow}_i(H)=s_i^{\downarrow}(A+iB)$ for $i=1,\ldots,n$. To see this note:
\begin{align}
	\frac{1}{2} \begin{pmatrix}{}	\id & i\id \\ \id & -i\id\end{pmatrix}
	\cdot \begin{pmatrix}{} A & B \\ B & -A\end{pmatrix}
	\cdot \begin{pmatrix}{}	\id & \id \\ -i\id & i\id\end{pmatrix}
	=\begin{pmatrix}{} 0 & A+iB \\ A-iB & 0\end{pmatrix} \label{eqn:aib}
\end{align}
Since we conjugated with a unitary matrix, the eigenvalues of this matrix are the same as the eigenvalues of $H$. The singular values of the matrix on the right hand side of equation (\ref{eqn:aib}), which are the absolute values of the eigenvalues of $H$, are the eigenvalues of $\diag( (A+iB)^{\dagger}(A+iB), (A+iB)(A+iB)^{\dagger})^{1/2}$, which are the singular values of $A+iB$ with double multiplicity. From the structure of $H$, it is immediate that the eigenvalues of the right hand side of equation (\ref{eqn:aib}) and thus of $H$ come in pairs $\pm s_i(A+iB)$. Hence $\lambda_i^{\downarrow}(H)=s_i^{\downarrow}(A+iB)$ for $i=1,\ldots,n$ and we have:
\begin{align*}
	\inf&\{F(S)|\Gamma\geq S^TS, S\in Sp(2n)\} \\&=\inf\left\{\frac{1}{2}\sum_{i=1}^n \log \left(\frac{1+s_i(A+iB)}{1-s_i(A+iB)}\right)\middle|\Gamma\geq \mathcal{C}(H), H\in \mathcal{H}\right\}
\end{align*}
To see that that the right hand side equals (\ref{eqn:reform3}), we only need to use the fact that $\Gamma \geq \mathcal{C}(H) \Leftrightarrow \mathcal{C}^{-1}(\Gamma)\geq H$ for all $H\in \mathcal{H}$ and $\Gamma\geq iJ$ since the Cayley transform and its inverse are operator monotone.
\end{proof}

\subsection{Convexity}
The reformulation (\ref{eqn:reform3}) will allow us to prove:
\begin{thm} \label{thm:convex} $G$ is convex on the set of covariance matrices $\{\Gamma\in \R^{2n\times 2n}| \Gamma\geq iJ\}$.
\end{thm}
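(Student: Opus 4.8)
The plan is to run an infimum-over-constraints (``feasible point'') argument on the reformulation (\ref{eqn:reform3}), which writes $G(\Gamma)=\inf\{\Phi(H)\mid \Ca^{-1}(\Gamma)\geq H,\ H\in\mathcal{H}\}$ with objective $\Phi(H):=\tfrac12\sum_{i=1}^n\log\frac{1+s_i(A+iB)}{1-s_i(A+iB)}$. Note first that the domain $\{\Gamma\mid\Gamma\geq iJ\}$ is convex (it is the affine shift $\{\Gamma-iJ\geq 0\}$ of a cone), so the statement is meaningful. I would fix $\Gamma_1,\Gamma_2\geq iJ$ and $\lambda\in[0,1]$, set $\Gamma:=\lambda\Gamma_1+(1-\lambda)\Gamma_2$, and choose $H_1,H_2\in\mathcal{H}$ attaining the infima for $\Gamma_1,\Gamma_2$; attainment is guaranteed by Proposition \ref{prop:reform}. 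Convexity will follow once I exhibit a single feasible point for $\Gamma$ whose objective value is at most $\lambda G(\Gamma_1)+(1-\lambda)G(\Gamma_2)$.

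The natural candidate is $H:=\lambda H_1+(1-\lambda)H_2$. Membership $H\in\mathcal{H}$ is routine: the block form $\begin{pmatrix}A&B\\ B&-A\end{pmatrix}$ with $A,B$ symmetric is a real-linear subspace, and the spectral condition $\operatorname{spec}(H)\subset(-1,1)$ describes the convex set $\{-\id<H<\id\}$, so both are stable under convex combination. Feasibility $\Ca^{-1}(\Gamma)\geq H$ is the place where I invoke operator concavity: since $\Gamma_1,\Gamma_2$ and $\Gamma$ are positive definite by Corollary \ref{cor:symplecticspec}, their spectra lie in $(-1,\infty)$, so property 2 of Proposition \ref{prop:cayley} yields $\Ca^{-1}(\Gamma)\geq\lambda\,\Ca^{-1}(\Gamma_1)+(1-\lambda)\,\Ca^{-1}(\Gamma_2)\geq\lambda H_1+(1-\lambda)H_2=H$, using $\Ca^{-1}(\Gamma_j)\geq H_j$. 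Hence $H$ is feasible and $G(\Gamma)\leq\Phi(H)$.

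It remains to prove the decisive fact that $\Phi$ is convex on $\mathcal{H}$, which then gives $\Phi(H)\leq\lambda\Phi(H_1)+(1-\lambda)\Phi(H_2)=\lambda G(\Gamma_1)+(1-\lambda)G(\Gamma_2)$ and closes the argument. I would rewrite $\Phi$ as a trace. Recall from the proof of Proposition \ref{prop:reform} that $s_i(A+iB)=\lambda_i^{\downarrow}(H)$ and that the $2n$ eigenvalues of the real symmetric matrix $H$ come in pairs $\pm s_i$. Introducing the even function $\psi(x):=\log\frac{1+|x|}{1-|x|}$ on $(-1,1)$, this pairing gives $\sum_{i=1}^{2n}\psi(\lambda_i(H))=2\sum_{i=1}^n\log\frac{1+s_i}{1-s_i}$, i.e. $\Phi(H)=\tfrac14\operatorname{tr}\psi(H)$, where $\psi(H)$ is defined by functional calculus (legitimate since $H$ is symmetric with spectrum in $(-1,1)$).

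The main obstacle is thus convexity of $H\mapsto\operatorname{tr}\psi(H)$, which I would reduce to scalar convexity of $\psi$. On $[0,1)$ the function $\psi=\log\Ca$ is convex by property 3 of Proposition \ref{prop:cayley}; being even, with left/right derivatives $-2$ and $+2$ at the origin, $\psi$ has a convex (upward) kink there and is therefore convex on all of $(-1,1)$. Once $\psi$ is scalar-convex, the standard fact that $X\mapsto\operatorname{tr}g(X)$ is convex on symmetric matrices whenever $g$ is convex (see \cite{bha96}) delivers convexity of $\Phi$. I expect the only delicate points to be the eigenvalue-pairing bookkeeping and the verification of convexity of $\psi$ across the kink at $0$; the remainder is an assembly of operator concavity of $\Ca^{-1}$, convexity of $\mathcal{H}$, and the trace-convexity theorem.
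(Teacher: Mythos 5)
Your proposal is correct, and although its outer skeleton coincides with the paper's --- both pass to the Cayley reformulation (\ref{eqn:reform3}), take optimal $H_1,H_2\in\mathcal{H}$, and use operator concavity of $\Ca^{-1}$ to show that $\lambda H_1+(1-\lambda)H_2$ is feasible for the convex combination --- you establish the decisive convexity of the objective by a genuinely different and more economical route. The paper's Lemma \ref{lem:convf} works directly with the singular values of $tA+(1-t)A'+i(tB+(1-t)B')$: it invokes Thompson's theorem to control $|t(A+iB)+(1-t)(A'+iB')|$, Lidskii's theorem to handle the eigenvalues of the resulting sum, and then combines log-convexity of the scalar Cayley transform with its monotonicity (monotonicity is needed there precisely because the singular-value estimate is only an inequality). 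You instead identify the objective as $\tfrac14\tr\psi(H)$ for the even scalar function $\psi(x)=\log\frac{1+|x|}{1-|x|}$, using the $\pm s_i(A+iB)$ eigenvalue pairing already established in the proof of Proposition \ref{prop:reform}; since $H$ depends linearly on $(A,B)$, joint convexity in $(A,B)$ reduces to scalar convexity of $\psi$ on $(-1,1)$ --- which holds because $\psi=g\circ|\cdot|$ with $g=\log\circ\,\Ca$ convex and increasing on $[0,1)$ by property 3 of Proposition \ref{prop:cayley} --- together with the standard fact that $X\mapsto\tr g(X)$ is convex on symmetric matrices for convex $g$. This bypasses Thompson and Lidskii entirely and absorbs the monotonicity input into the composition with $|\cdot|$; the only external ingredient is the trace-convexity theorem (a supremum-over-orthonormal-bases argument via Peierls' inequality), whereas the paper's version trades that single tool for two nontrivial matrix inequalities and some bookkeeping over permutations. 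Both arguments are sound and yield the same conclusion.
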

Before we start with the proof, let us recall two lemmata from matrix analysis:
\begin{lem}[Lidskii's Theorem] \label{lem:lidskii}
Let $A,B\in\C^{n\times n}$ be hermitian, then $\lambda^{\downarrow}(A+B)$ lies in the convex hull of $\lambda^{\downarrow}(A)+P_{\pi}\lambda^{\downarrow}(B)$, where $\lambda^{\downarrow}$ denotes the vector of eigenvalues in decreasing order and $\pi\in S_n$ is an arbitrary permutation with its permutation matrix $P_\pi$. 

Put differently:
\begin{align}
	\lambda_i^{\downarrow} (A+B)=\sum_{\pi \in S_n} p_\pi(\lambda_i^{\downarrow}(A)+\lambda_{\pi(i)}^{\downarrow}(B)) \qquad~~ p_{\pi}\geq 0, \sum_\pi p_\pi=1 \label{eqn:lidskii}
\end{align}
\end{lem}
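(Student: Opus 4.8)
The statement is equivalent to the existence of a \emph{single} doubly stochastic matrix $D\in\R^{n\times n}$ with
\[
	\lambda^{\downarrow}(A+B)=\lambda^{\downarrow}(A)+D\,\lambda^{\downarrow}(B).
\]
Indeed, by the Birkhoff--von Neumann theorem the set of doubly stochastic matrices is exactly the convex hull of the permutation matrices $\{P_\pi\}_{\pi\in S_n}$, so writing $D=\sum_\pi p_\pi P_\pi$ turns the displayed identity into $\lambda^{\downarrow}(A+B)=\sum_\pi p_\pi\big(\lambda^{\downarrow}(A)+P_\pi\lambda^{\downarrow}(B)\big)$, which is precisely (\ref{eqn:lidskii}); note that the same weights $p_\pi$ serve all coordinates $i$ automatically, since $D$ is one matrix. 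Thus the whole problem reduces to \textbf{constructing one doubly stochastic $D$}, and the plan is to do so via Lidskii's perturbation argument.

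First I would assume $A$ and $B$ are in \emph{general position}, meaning that $A+tB$ has $n$ distinct eigenvalues for every $t\in[0,1]$. Then the decreasingly ordered eigenvalues $\lambda_1(t)>\cdots>\lambda_n(t)$ of $A+tB$ are real-analytic in $t$ with real-analytic unit eigenvectors $e_i(t)$, and first-order perturbation theory (the Hellmann--Feynman formula) gives $\dot\lambda_i(t)=\langle e_i(t),B\,e_i(t)\rangle$. Writing $B=\sum_j \lambda_j^{\downarrow}(B)\,f_jf_j^{\dagger}$ in an orthonormal eigenbasis $\{f_j\}$ and setting $D(t)_{ij}:=|\langle e_i(t),f_j\rangle|^2$, one has $\dot\lambda_i(t)=\sum_j D(t)_{ij}\lambda_j^{\downarrow}(B)$, and $D(t)$ is doubly stochastic because $\{e_i(t)\}$ and $\{f_j\}$ are both orthonormal bases, so its row and column sums equal $\|e_i(t)\|^2=1$ and $\|f_j\|^2=1$. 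Integrating over $t\in[0,1]$ and using $\lambda_i(0)=\lambda_i^{\downarrow}(A)$, $\lambda_i(1)=\lambda_i^{\downarrow}(A+B)$ yields
\[
	\lambda_i^{\downarrow}(A+B)-\lambda_i^{\downarrow}(A)=\sum_j\Big(\int_0^1 D(t)_{ij}\,\mathrm{d}t\Big)\lambda_j^{\downarrow}(B),
\]
so $D:=\int_0^1 D(t)\,\mathrm{d}t$, being an average of doubly stochastic matrices, is the desired doubly stochastic matrix.

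To remove the general-position assumption I would approximate: the set of Hermitian matrices with a repeated eigenvalue is a real-algebraic variety of codimension three, so a generic affine segment $\{A_\varepsilon+tB_\varepsilon : t\in[0,1]\}$, having only one real dimension, avoids it entirely. Hence there exist $A_\varepsilon\to A$ and $B_\varepsilon\to B$ in general position, each furnishing a doubly stochastic $D_\varepsilon$ with $\lambda^{\downarrow}(A_\varepsilon+B_\varepsilon)=\lambda^{\downarrow}(A_\varepsilon)+D_\varepsilon\lambda^{\downarrow}(B_\varepsilon)$. Since the Birkhoff polytope is compact I can pass to a convergent subsequence $D_\varepsilon\to D$, and since eigenvalues depend continuously on the matrix, letting $\varepsilon\to0$ gives $\lambda^{\downarrow}(A+B)=\lambda^{\downarrow}(A)+D\lambda^{\downarrow}(B)$ with $D$ doubly stochastic, which by the first paragraph completes the argument.

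The main obstacle lies entirely in the degenerate cases: when eigenvalues cross along $A+tB$ the ordered eigenvalues lose differentiability and the eigenprojections need not vary continuously, so the clean integration above breaks down. The two points needing care are therefore the justification that general position is reachable by an arbitrarily small perturbation (the codimension-three count) and the continuity/compactness limit transporting the doubly stochastic matrix back to the original $A,B$. An alternative that sidesteps perturbation theory is to prove the equivalent Lidskii--Wielandt inequalities $\sum_{j=1}^k\lambda_{i_j}^{\downarrow}(A+B)\leq\sum_{j=1}^k\lambda_{i_j}^{\downarrow}(A)+\sum_{j=1}^k\lambda_j^{\downarrow}(B)$ for every index set $i_1<\cdots<i_k$, by applying Wielandt's min--max principle to a flag optimal for $A$ and bounding the $B$-part via Ky Fan's maximum principle; the majorization theorem of Hardy--Littlewood--P\'olya then supplies $D$, and one again finishes with Birkhoff--von Neumann.
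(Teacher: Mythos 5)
Your proposal is correct, but it takes a genuinely different route from the paper, which in fact gives no proof at all: it cites \cite{bha96} (Exercise III.4.3), where the statement is extracted from Lidskii's theorem as proved via Wielandt's minimax principle --- essentially the alternative you sketch in your closing paragraph (Lidskii--Wielandt inequalities via Ky Fan's principle, majorization by Hardy--Littlewood--P\'olya, then Birkhoff--von Neumann). Your main argument is instead the classical analytic proof: the reduction of (\ref{eqn:lidskii}) to finding a \emph{single} doubly stochastic $D$ with $\lambda^{\downarrow}(A+B)=\lambda^{\downarrow}(A)+D\lambda^{\downarrow}(B)$ is correct (and you rightly note that one matrix $D$ supplies the same weights $p_{\pi}$ for every coordinate $i$, as the lemma requires); the Hellmann--Feynman step, the double stochasticity of $D(t)_{ij}=|\langle e_i(t),f_j\rangle|^2$, and the passage to $D=\int_0^1 D(t)\,\mathrm{d}t$ all check out, as does the limit argument using compactness of the Birkhoff polytope and continuity of ordered eigenvalues. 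The two points you flag are indeed the only delicate ones, and both are standard: the degeneracy locus of Hermitian matrices has real codimension three (von Neumann--Wigner), so the usual projection/transversality count shows that generic segments avoid it. You could even bypass genericity entirely: after an arbitrarily small perturbation of $A$ alone, the discriminant of $A+tB$, a polynomial in $t$, vanishes at only finitely many $t$; since the ordered eigenvalues are Lipschitz in $t$ (by Weyl's inequality, with constant $\|B\|$) and hence absolutely continuous, the fundamental theorem of calculus applies with the almost-everywhere derivative, and no compactness limit is needed. What your route buys is a self-contained proof and the slightly stronger structural fact that $D$ arises as an average of unistochastic matrices; what the cited minimax route buys is the avoidance of perturbation theory and genericity arguments altogether, at the cost of invoking the Wielandt machinery.
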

A proof of this theorem can be found in \cite{bha96}, the explicit formulation is given in Exercise III.4.3.
\begin{lem}[Thompson's Theorem, \cite{tho76}] \label{lem:thompson}
Let $A,B\in\C^{n\times n}$, then there exist unitaries $U,V$ such that
\begin{align}
	|A+B|\leq U|A|U^*+V|B|V^* \label{eqn:thompson}
\end{align}
where $|A|=(A^*A)^{1/2}$ denotes the absolute value.
\end{lem}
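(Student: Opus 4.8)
The plan is to reduce the statement to two standard facts from matrix analysis by using polar decompositions to expose the Hermitian structure of $|A+B|$. First I would take polar decompositions $A+B=W|A+B|$, $A=U_A|A|$ and $B=U_B|B|$, where $W,U_A,U_B$ are unitary (if $A+B$, $A$ or $B$ fails to be invertible, the corresponding polar factor is only a partial isometry, which one extends to a unitary; alternatively one perturbs $A,B$ to invertible matrices and passes to the limit, using that the set of attainable right-hand sides $\{U|A|U^*+V|B|V^*\}$ is compact). Multiplying $A+B=W|A+B|$ on the left by $W^*$ gives
$$|A+B|=W^*(A+B)=W^*U_A|A|+W^*U_B|B|=P|A|+Q|B|,$$
with $P:=W^*U_A$ and $Q:=W^*U_B$ unitary.

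Since the left-hand side is Hermitian it equals its own Hermitian part, so I would write
$$|A+B|=\operatorname{Re}(P|A|)+\operatorname{Re}(Q|B|),\qquad \operatorname{Re}(M):=\tfrac12(M+M^*).$$
The problem is thereby reduced to bounding each Hermitian part $\operatorname{Re}(UK)$ (with $K\geq0$ and $U$ unitary) by a single unitary conjugate of $K$. This is where the two key ingredients enter.

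The first ingredient is the Fan--Hoffman inequality (cf.\ \cite{bha96}): for any matrix $M$ one has $\lambda_i^{\downarrow}(\operatorname{Re}(M))\leq s_i^{\downarrow}(M)$ for all $i$, which follows from the min--max characterisations together with $\operatorname{Re}\langle x,Mx\rangle\leq|\langle x,Mx\rangle|\leq\|Mx\|$. Applying it to $M=P|A|$ and using $s_i^{\downarrow}(P|A|)=s_i^{\downarrow}(|A|)=\lambda_i^{\downarrow}(|A|)$ yields $\lambda_i^{\downarrow}(\operatorname{Re}(P|A|))\leq\lambda_i^{\downarrow}(|A|)$ for every $i$, and similarly for the $B$-term. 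The second ingredient is an elementary ordering lemma: if two Hermitian matrices $X,Y$ satisfy $\lambda_i^{\downarrow}(X)\leq\lambda_i^{\downarrow}(Y)$ for all $i$, then there is a unitary $V$ with $X\leq VYV^*$; one proves this by diagonalising $X=\sum_i\lambda_i^{\downarrow}(X)\,e_ie_i^*$ and $Y=\sum_i\lambda_i^{\downarrow}(Y)\,f_if_i^*$ in orthonormal eigenbases and taking $V$ to send $f_i\mapsto e_i$, so that $VYV^*-X=\sum_i(\lambda_i^{\downarrow}(Y)-\lambda_i^{\downarrow}(X))\,e_ie_i^*\geq0$. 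Combining the two ingredients produces unitaries $U,V$ with $\operatorname{Re}(P|A|)\leq U|A|U^*$ and $\operatorname{Re}(Q|B|)\leq V|B|V^*$, and summing yields $|A+B|\leq U|A|U^*+V|B|V^*$, as claimed.

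The step I expect to be the crux is the use of Fan--Hoffman in its \emph{individual} form $\lambda_i^{\downarrow}(\operatorname{Re}(M))\leq s_i^{\downarrow}(M)$ rather than in the weaker Ky--Fan (weak majorisation) form $\sum_{i\leq k}\lambda_i^{\downarrow}(\operatorname{Re}(M))\leq\sum_{i\leq k}s_i^{\downarrow}(M)$. Only the individual domination feeds into the ordering lemma to yield a single unitary conjugate; weak majorisation would at best bound $\operatorname{Re}(P|A|)$ by an average of unitary conjugates of $|A|$, which is not of the required form. The remaining care is purely technical, namely handling rank-deficient $A$, $B$ or $A+B$ so that the polar factors may be chosen unitary, which the compactness of $\{U|A|U^*+V|B|V^*\}$ settles by a limiting argument.
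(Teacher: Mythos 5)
Your proof is correct. Note that the paper itself gives no proof of this lemma at all---it is imported by citation from \cite{tho76}---so there is no in-paper argument to compare against; what you have written is essentially Thompson's original argument: pass to unitary polar factors to get $|A+B|=\operatorname{Re}(P|A|)+\operatorname{Re}(Q|B|)$, then dominate each Hermitian part by a single unitary conjugate via the individual Fan--Hoffman bound $\lambda_i^{\downarrow}(\operatorname{Re}(M))\leq s_i^{\downarrow}(M)$ combined with the eigenbasis-matching lemma. Every step checks out---for square matrices the polar factor can always be chosen unitary, so your limiting argument is not even needed---and your closing remark is exactly right: the pointwise form of Fan--Hoffman, not the weak-majorisation form, is what makes the ordering lemma applicable and yields a single unitary conjugate on each term.
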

Let us prove another lemma that will be the crucial part of the proof of convexity:
\begin{lem} \label{lem:convf}
Consider the map $f:\mathbb{R}^{n\times n}\times \mathbb{R}^{n\times n}\to \mathbb{R}$:
\begin{align}
	f(A,B)=\frac{1}{2}\sum_{i=1}^n \log \left(\frac{1+s_i(A+iB)}{1-s_i(A+iB)}\right) \label{eqn:convexsum}
\end{align}
If we restrict $f$ to symmetric matrices $A$ and $B$ such that $s_i(A+iB)<1$ for all $i=1,\ldots,n$, $f$ is jointly convex in $A,B$, i.e.
\begin{align*}
	f(t A+(1-t) A^{\prime},t B+(1-t) B^{\prime})\leq 
		t f(A,B)+(1-t) f(A^{\prime},B^{\prime}) \qquad \forall \,t\in [0,1]
\end{align*}
\end{lem}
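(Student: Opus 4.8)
The plan is to reduce the whole statement to the scalar convexity and monotonicity of the single real function $h(x):=\log\frac{1+x}{1-x}$ on $[0,1)$. Note that $h$ is exactly $\log\Ca(x)$, so it is convex on $[0,1)$ by Proposition \ref{prop:cayley}(3), and it is clearly strictly increasing there (its derivative is $2/(1-x^2)>0$). Writing $M:=A+iB$ and $M':=A'+iB'$, both are $n\times n$ complex symmetric matrices with all singular values in $[0,1)$, and $f(A,B)=\tfrac12\sum_{i=1}^n h(s_i(M))$. Since $(A,B)\mapsto A+iB$ is linear, the convex combination $(tA+(1-t)A',\,tB+(1-t)B')$ corresponds to $tM+(1-t)M'$, so it suffices to control $\sum_{i=1}^n h\big(s_i(tM+(1-t)M')\big)$.

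First I would pass from singular values to eigenvalues of positive semidefinite matrices via Thompson's Theorem (Lemma \ref{lem:thompson}) applied to $tM$ and $(1-t)M'$: there are unitaries $U,V$ with
\begin{align*}
	|tM+(1-t)M'|\;\leq\; t\,U|M|U^* + (1-t)\,V|M'|V^* \;=:\; Y .
\end{align*}
Setting $P:=U|M|U^*$ and $Q:=V|M'|V^*$, these are positive semidefinite with eigenvalues $s_i(M),s_i(M')\in[0,1)$, so $Y=tP+(1-t)Q$ is positive semidefinite and, because $P<\id$ and $Q<\id$ strictly, also $Y<\id$; hence every eigenvalue of $Y$ lies in $[0,1)$. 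By Weyl's monotonicity principle the operator inequality above gives $s_i(tM+(1-t)M')=\lambda_i^{\downarrow}\big(|tM+(1-t)M'|\big)\leq \lambda_i^{\downarrow}(Y)<1$ for each $i$, so the combination stays in the domain where $h$ is defined. Since $h$ is increasing, this yields
\begin{align*}
	\sum_{i=1}^n h\big(s_i(tM+(1-t)M')\big)\;\leq\;\sum_{i=1}^n h\big(\lambda_i^{\downarrow}(Y)\big).
\end{align*}

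The second step reduces eigenvalue-sum convexity to scalar convexity using Lidskii's Theorem (Lemma \ref{lem:lidskii}) on $Y=tP+(1-t)Q$: there are weights $p_\pi\geq0$, $\sum_\pi p_\pi=1$, common to all $i$, with $\lambda_i^{\downarrow}(Y)=\sum_{\pi\in S_n}p_\pi\big(t\lambda_i^{\downarrow}(P)+(1-t)\lambda_{\pi(i)}^{\downarrow}(Q)\big)$. Applying convexity of $h$ twice (once to pull $h$ inside the convex combination over $\pi$, once to split the $t(\cdot)+(1-t)(\cdot)$ argument), then summing over $i$ and using that each $\pi$ is a bijection together with $\sum_\pi p_\pi=1$, gives $\sum_i h(\lambda_i^{\downarrow}(Y))\leq t\sum_i h(\lambda_i^{\downarrow}(P))+(1-t)\sum_i h(\lambda_i^{\downarrow}(Q))$. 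Since $\lambda_i^{\downarrow}(P)=s_i(M)$ and $\lambda_i^{\downarrow}(Q)=s_i(M')$, the right-hand side is $2\big(t f(A,B)+(1-t)f(A',B')\big)$, and dividing by $2$ finishes the argument. The main point to get right is the interplay of the two properties of $h$: monotonicity is what lets the inequality survive the Thompson/Weyl step, while convexity is what powers the Lidskii step, and one must check throughout that the spectra remain in $[0,1)$ (this is exactly where the strict hypothesis $s_i<1$ is used) so that both properties are available.
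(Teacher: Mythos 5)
The proposal is correct and follows essentially the same route as the paper: Thompson's theorem to control $|tM+(1-t)M'|$, Weyl monotonicity plus the monotonicity of $\log\Ca$ to pass to eigenvalues, Lidskii's theorem to reduce to scalar convex combinations, and the scalar log-convexity of the Cayley transform to conclude. The only differences are presentational (you make the use of monotonicity of $h$ explicit and replace the paper's $\max_\pi$ step by the equivalent observation that $\sum_i h(\lambda_{\pi(i)})=\sum_i h(\lambda_i)$ for any bijection $\pi$).
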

\begin{proof}
Let $\tilde{A}:=t A+(1-t) A^{\prime}$ and $\tilde{B}:=t B+(1-t) B^{\prime}$. Note that $\tilde{A}$ and $\tilde{B}$ are also symmetric, and the largest singular value of $\tilde{A}+i\tilde{B}$ fulfils $s_1^{\downarrow}(\tilde{A}+i\tilde{B})\leq t s_1^{\downarrow}(A+iB)+(1-t)s_1^{\downarrow}(A^{\prime}+iB^{\prime})$. Therefore, the singular values of any convex combination of $A+iB$ and $A^{\prime}+iB^{\prime}$ also lie in the interval $[0,1)$. This makes our restriction well-defined under convex combinations.

Then for any $j=1,\ldots,n$:
\begin{align}
	s_j(\tilde{A}+i\tilde{B})&=\lambda_j(|\tilde{A}+i\tilde{B}|)\quad \stackrel{\mathclap{\mathrm{Lemma}~\ref{lem:thompson}}}{\leq}\quad \lambda_j(U|t (A+iB)|U^{*}+V|(1-t)(A^{\prime}+iB^{\prime})|V^{*}) \nonumber \\
	&\stackrel{\mathclap{\mathrm{Lemma}~\ref{lem:lidskii}}}{\leq}\quad \lambda_j(U|t(A+iB)|U^{*})+\sum_{\pi} p_{\pi} \lambda_{\pi(j)}(V|(1-t)(A^{\prime}+iB^{\prime})|V^{*}) \nonumber \\
	&\stackrel{\mathclap{(*)}}{=}\lambda_j(|t(A+iB)|)+\sum_{\pi} p_{\pi} \lambda_{\pi(j)}(|(1-t)(A^{\prime}+iB^{\prime})|) \nonumber \\
	&=t\lambda_j(|A+iB|)+(1-t)\sum_{\pi} p_{\pi} \lambda_{\pi(j)}(|A^{\prime}+iB^{\prime}|)
\end{align}
with $p_\pi\geq 0$ and $\sum_\pi p_\pi=1$. In $(*)$, we used that unitaries do not change the spectrum.

Now each summand in equation (\ref{eqn:convexsum}) is the Cayley transform of a singular value. We can use the log-convexity of the Cayley-transform to prove the joint convexity of $f$:
\begin{align*}
	f(\tilde{A},\tilde{B})&=\sum_{i=1}^n \log \Ca[s_i(\tilde{A}+i\tilde{B})] \\
	&\leq\sum_{i=1}^n \log \Ca\left[t\lambda_i(|A+iB|)+(1-t)\sum_{\pi} p_{\pi} \lambda_{\pi(i)}(|A^{\prime}+iB^{\prime}|)\right] \\
	&\leq \sum_{i=1}^n \left(t \log \Ca[\lambda_i(|A+iB|)]+(1-t)\sum_{\pi}p_{\pi}\log\Ca[\lambda_{\pi(i)}(|A^{\prime}+iB^{\prime}|)]\right) \\
	&= \sum_{i=1}^n t \log \Ca[\lambda_i(|A+iB|)]+(1-t)\sum_{\pi}p_{\pi}\left( \sum_{i=1}^n\log \Ca[\lambda_{\pi(i)}(|A^{\prime}+iB^{\prime}|)]\right) \\
	&\leq t \sum_{i=1}^n \log \Ca[\lambda_i(|A+iB|)]+(1-t)\sum_{\pi}p_{\pi}\cdot \max_{\pi}\left(\sum_{i=1}^n \log \Ca[\lambda_{\pi(i)}(|A^{\prime}+iB^{\prime}|)]\right) \\
	&\stackrel{\mathclap{(**)}}{=} ~t \sum_{i=1}^n \log \Ca[\lambda_i(|A+iB|)]+(1-t)\sum_{i=1}^n \log \Ca[\lambda_i(|A^{\prime}+iB^{\prime}|)] \\
	&=tf(A,B)+(1-t)f(A^{\prime},B^{\prime})
\end{align*}
where in $(**)$ we use that the sum of all eigenvalues is of course not dependent on the order of the eigenvalues. 
\end{proof}
This lemma will later allow us to calculate $G$ as a convex program.

\begin{proof}[Proof of Theorem \ref{thm:convex}] We can now finish the proof of the convexity of $G$.

First note that using the definition of $f$ in Lemma \ref{lem:convf} we can reformulate (\ref{eqn:reform3}) to
\begin{align}
	G(\Gamma)=\inf\left\{f(A,B)\middle| \mathcal{C}^{-1}(\Gamma)\geq H, H\in \mathcal{H}\right\}. \label{eqn:reform4}
\end{align}

Let $\Gamma\geq iJ, \Gamma^{\prime}\geq iJ$ be two covariance matrices and let $H,H^{\prime}\in\mathcal{H}$ be the matrices that attain the minimum of $G(\Gamma),G(\Gamma^{\prime})$ respectively. Then, in particular, $tH+(1-t)H^{\prime}\in\mathcal{H}$. Furthermore, since $\Ca^{-1}(\Gamma)\geq H$ and $\Ca^{-1}(\Gamma^{\prime})\geq H^{\prime}$ we have
\begin{align*}
	\Ca^{-1}(t \Gamma+(1-t)\Gamma^{\prime})\stackrel{(*)}{\geq} t\Ca^{-1}(\Gamma)+(1-t)\Ca^{-1}(\Gamma^{\prime})\geq tH+(1-t)H^{\prime}
\end{align*}
where we used the operator concavity of $\Ca^{-1}$ in $(*)$. This means that $tH+(1-t)H^{\prime}$ is a feasible matrix for the minimisation in $G$, which implies using equation (\ref{eqn:reform4})
\begin{align*}
	G(t\Gamma+(1-t)\Gamma^{\prime})\leq f(tA+(1-t)A^{\prime},tB+(1-t)B^{\prime}).
\end{align*}
The convexity now follows directly from Lemma \ref{lem:convf} and the fact that we chose $H$ and $H^{\prime}$ to attain $G(\Gamma)$ and $G(\Gamma^{\prime})$.
\end{proof}

\subsection{Continuity properties}
From the convexity of $G$ on the set of covariance matrices, it follows from general arguments in convex analysis that $G$ is continuous on the interior of the set of covariance matrices (cf. \cite{roc97}, theorem 10.1). What more can we say about the boundary?

\begin{thm} \label{thm:continuous} 
$G$ is lower semicontinuous on the set of covariance matrices $\{\Gamma\in \R^{2n\times 2n}| \Gamma\geq iJ\}$ and continuous on its interior. Moreover, $G(\Gamma+\varepsilon \id)\to G(\Gamma)$ for $0<\varepsilon\to 0$ for any $\Gamma\geq iJ$.
\end{thm}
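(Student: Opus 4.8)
The statement splits into three claims, and I would treat them in the order: continuity on the interior, lower semicontinuity everywhere, and finally the limit $G(\Gamma+\varepsilon\id)\to G(\Gamma)$, since the last is cheap once the second is in hand. Continuity on the interior is essentially free: $G$ is a finite convex function (Theorem \ref{thm:convex}) on the convex set $\{\Gamma\geq iJ\}$, which has nonempty interior, so by the standard fact for convex functions (\cite{roc97}, Theorem 10.1) it is continuous there. The real work is lower semicontinuity.

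For lower semicontinuity I would work entirely from the reformulation (\ref{eqn:reform4}), namely $G(\Gamma)=\inf\{f(A,B)\mid \Ca^{-1}(\Gamma)\geq H,\ H\in\mathcal{H}\}$, and crucially exploit that this infimum is attained (Proposition \ref{prop:reform}). Fix a covariance matrix $\Gamma_0$ and a sequence $\Gamma_k\to\Gamma_0$; the goal is $\liminf_k G(\Gamma_k)\geq G(\Gamma_0)$. Passing to a subsequence I may assume $G(\Gamma_k)\to L:=\liminf_k G(\Gamma_k)$, and for each $k$ pick a minimizer $H_k\in\mathcal{H}$ with $\Ca^{-1}(\Gamma_k)\geq H_k$ and $G(\Gamma_k)=f(A_k,B_k)$. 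The key boundedness observation is that every $H\in\mathcal{H}$ is real symmetric with $\operatorname{spec}(H)\subset(-1,1)$, hence $\|H_k\|_\infty<1$; the minimizers thus lie in a bounded set and, after a further subsequence, converge to some real symmetric $H_0\in\R^{2n\times 2n}$ of the block form defining $\mathcal{H}$ (with blocks $A_0,B_0$) and with $\operatorname{spec}(H_0)\subset[-1,1]$.

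Two facts transfer to the limit. First, since $\Ca^{-1}$ is continuous (it is a diffeomorphism by Proposition \ref{prop:cayley}, and $\Gamma+\id$ is invertible for every covariance matrix as $\operatorname{spec}(\Gamma)\subset[0,\infty)$), we get $\Ca^{-1}(\Gamma_k)\to\Ca^{-1}(\Gamma_0)$, and closedness of the positive semidefinite cone gives $\Ca^{-1}(\Gamma_0)\geq H_0$. Second, by continuity of eigenvalues, $s_i(A_k+iB_k)=\lambda_i(H_k)\to\lambda_i(H_0)=s_i(A_0+iB_0)$ for $i=1,\dots,n$. Now I split into cases. If $\operatorname{spec}(H_0)\subset(-1,1)$ strictly, then $H_0\in\mathcal{H}$ is feasible for $G(\Gamma_0)$, the objective $f$ is continuous there, and $L=\lim_k f(A_k,B_k)=f(A_0,B_0)\geq G(\Gamma_0)$. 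If instead some eigenvalue of $H_0$ equals $1$ (equivalently some $s_i(A_0+iB_0)=1$, using the $\pm$-pairing of $\operatorname{spec}(H_0)$ from Proposition \ref{prop:reform}), then the corresponding singular values approach $1$ from below, so the term $\log\frac{1+s_i}{1-s_i}$ diverges; since every term of $f$ is nonnegative on $[0,1)$, we get $L=\lim_k f(A_k,B_k)=+\infty\geq G(\Gamma_0)$. Either way $\liminf_k G(\Gamma_k)\geq G(\Gamma_0)$. I expect this case distinction — controlling the minimizers as they run to the boundary $\partial\mathcal{H}$ where $f$ blows up — to be the main obstacle; the decisive point is that the blow-up makes the boundary case trivial rather than dangerous.

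Finally, for the limit statement I would combine lower semicontinuity with monotonicity of $G$. Reading off the definition (\ref{eqn:defG}): if $\Gamma'\geq\Gamma$ then every $S$ feasible for $\Gamma$ (i.e.\ $\Gamma\geq S^TS$) is also feasible for $\Gamma'$, so $G(\Gamma')\leq G(\Gamma)$; in particular $G(\Gamma+\varepsilon\id)\leq G(\Gamma)$ for all $\varepsilon>0$, whence $\limsup_{\varepsilon\to0^+}G(\Gamma+\varepsilon\id)\leq G(\Gamma)$. Conversely $\Gamma+\varepsilon\id\to\Gamma$, so lower semicontinuity gives $\liminf_{\varepsilon\to0^+}G(\Gamma+\varepsilon\id)\geq G(\Gamma)$. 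The two bounds together yield $G(\Gamma+\varepsilon\id)\to G(\Gamma)$, completing the proof.
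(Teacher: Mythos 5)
Your proof is correct, but it follows a genuinely different route from the paper's. The paper works from the reformulation (\ref{eqn:reform2}), $G(\Gamma)=\inf\{F(\Gamma_0^{1/2})\mid \Gamma\geq\Gamma_0\geq iJ\}$, and proves that the feasible-set correspondence $\Gamma\mapsto\{X\mid iJ\leq X\leq\Gamma\}$ is upper (Hausdorff) semicontinuous in the sense of set-valued analysis (Lemma \ref{lem:uHsc}, Proposition \ref{prop:usc}); combined with uniform continuity of $F$ on a compact neighbourhood of $\Gamma_0$, this lets it move a minimizer for $G(\Gamma)$ to a nearby feasible point for $G(\Gamma_0)$ and conclude $G(\Gamma_0)\leq G(\Gamma)+\varepsilon$. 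You instead work in the Cayley picture (\ref{eqn:reform4}): the minimizers $H_k$ live in the operator-norm unit ball because $\operatorname{spec}(H)\subset(-1,1)$ for $H\in\mathcal{H}$, so a subsequence converges, and the limit is either an interior point of $\mathcal{H}$ (hence feasible, with $f$ continuous there) or a boundary point, where the coercivity of $f$ --- the blow-up of $\log\bigl((1+s)/(1-s)\bigr)$ as $s\to 1^-$ together with the nonnegativity of all other terms --- makes the desired inequality trivial. Your argument is more elementary and self-contained: it needs no set-valued analysis (Appendix \ref{app:setvalue}) and no uniform-continuity step, only compactness of the closure of $\mathcal{H}$, closedness of the positive semidefinite cone, and continuity of ordered eigenvalues; the paper's formulation, in exchange, isolates exactly what is missing for full continuity on the boundary (lower semicontinuity of the feasible-set correspondence), which it records as an open question. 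For the final claim you replace the paper's nested-compact-sets argument by the cleaner observation that $G$ is monotone decreasing with respect to the semidefinite order combined with the lower semicontinuity just proved; both arguments use the bound $G(\Gamma+\varepsilon\id)\leq G(\Gamma)$ for the remaining direction.
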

In Corollary \ref{cor:symplecticspec}, we have seen that the set of covariance matrices is the subset of positive definite matrices with symplectic spectrum contained in $[1,\infty)$. This implies that the boundary of this set is given by all positive definite covariance matrices with at least one symplectic eigenvalue equal to one.

The ultimate goal is to extend the continuity from the interior to the exterior. At present we are only able to prove lower semicontinuity and to give a sequence converging from the interior. The proof will need a few notions from set-valued analysis that we review in appendix \ref{app:setvalue}.

\begin{proof}[Proof of Theorem \ref{thm:continuous}]
As already observed, $G$ is continuous on the interior. Let $\Gamma_0\geq iJ$ be arbitrary and suppose
\begin{align*}
	\mathcal{A}(\Gamma):=\{\hat{\Gamma}|(\Gamma-2\|\Gamma_0\|\id)\leq \hat{\Gamma}\leq \Gamma\}.
\end{align*}
By definition, $\mathcal{A}$ is compact and convex for any $\Gamma$. Moreover, it defines a set-valued function on the set of covariance matrices with nonempty values. Let $\varepsilon>0$, then for all $\Gamma\geq iJ$ with $\|\Gamma-\Gamma_0\| <\varepsilon$, we have that for any $\hat{\Gamma}\in \mathcal{A}(\Gamma)$, $\tilde{\Gamma}:=\hat{\Gamma}+(\Gamma-\Gamma_0) \in \mathcal{A}(\Gamma_0)$ and $\|\hat{\Gamma}-\tilde{\Gamma}\|<\varepsilon$. This is the condition in Lemma \ref{lem:uHsc} hence the set-valued function defined by $\mathcal{A}$ is upper semicontinuous at $\Gamma_0$, which implies that $\mathcal{A}(\Gamma)\cap \{X|iJ\leq X\}$ is also upper semicontinuous by Proposition \ref{prop:usc}. If $\varepsilon$ is small enough (e.g. $\varepsilon<1$), this implies
\begin{align*}
	\mathcal{A}(\Gamma)\cap \{X|iJ\leq X\} = \{X|iJ\leq X\leq \Gamma\}=:\mathcal{G}(\Gamma),
\end{align*}
hence this set is upper semicontinuous at $\Gamma_0$. 

Since $F$ is continuous on positive definite matrices, it is absolutely continuous if we restrict to a small neighbourhood of the covariance matrix $\Gamma_0$. More precisely, let us restrict $F$ to $\bigcup_{\|\Gamma-\Gamma_0\|<1} \mathcal{G}(\Gamma)$. This means that for every $\varepsilon>0$ there exists an $\epsilon>0$ such that 
\begin{align}
	F(\tilde{\Gamma})-\varepsilon<F(\hat{\Gamma})<F(\tilde{\Gamma})+\varepsilon \label{eqn:contproof1}
\end{align}
for all $\|\tilde{\Gamma}-\hat{\Gamma}\|<\epsilon$ and all $\tilde{\Gamma},\hat{\Gamma}\in\bigcup_{\|\Gamma-\Gamma_0\|<1} \mathcal{G}(\Gamma)$. 

Assuming without loss of generality that $\|\Gamma-\Gamma_0\|<1$, the set $\mathcal{G}(\Gamma)$ is exactly the set for the minimisation in the definition of $G$. The upper semicontinuity of $\mathcal{G}(\Gamma)$ implies by Lemma \ref{lem:uHsc} that for every $\epsilon>0$ there exists a $\delta>0$ such that for all $\|\Gamma-\Gamma_0\|<\delta$ we have: For all $\hat{\Gamma}\in \mathcal{G}(\Gamma)$ there exists a $\tilde{\Gamma}\in \mathcal{G}(\Gamma_0)$ such that $\|\hat{\Gamma}-\tilde{\Gamma}\|<\epsilon$. In particular, this is true for all minimisers $\hat{\Gamma}$ with $G(\Gamma)=F(\hat{\Gamma}^{1/2})$. Both $\hat{\Gamma}$ and $\tilde{\Gamma}\in \bigcup_{\|\Gamma-\Gamma_0\|<1} \mathcal{G}(\Gamma)$. Using equation (\ref{eqn:contproof1}) we obtain: for every $\varepsilon>0$ there exists a $\delta>0$ such that for all $\|\Gamma-\Gamma_0\|<\delta$, we have a pair $\hat{\Gamma},\tilde{\Gamma}$ with $\hat{\Gamma}\in \mathcal{G}(\Gamma)$ minimising $G(\Gamma)$ and $\tilde{\Gamma}\in \mathcal{G}(\Gamma_0)$ such that
\begin{align*}
	F(\tilde{\Gamma})-\varepsilon<F(\hat{\Gamma})=G(\Gamma)
\end{align*}
This implies that for all $\varepsilon>0$ there exists a $\delta>0$ such that
\begin{align*}
	G(\Gamma_0)\leq G(\Gamma)+\varepsilon
\end{align*}
for all $\|\Gamma-\Gamma_0\|<\delta$. 

In particular, we can also take the limit inferior on both sides to obtain
\begin{align*}
	G(\Gamma_0)\leq \liminf_{\Gamma\to \Gamma_0} G(\Gamma)
\end{align*}
which means that $G$ is lower semicontinuous at $\Gamma_0$. 

If one proved that the set $\mathcal{G}(\Gamma_0)$ is also lower semicontinuous, a similar argument would prove that $G$ would be upper semicontinuous at $\Gamma_0$. 

Finally, let us prove that $G(\Gamma_0+\varepsilon \id)\to 0$ for $\varepsilon\to 0$. To see this, consider the closed sets
\begin{align*}
	C_{n}:=\bigcup_{0\leq \xi\leq 1/n} \mathcal{G}(\Gamma_0+\xi \id)
\end{align*}
for any $n\in \N$. It is easy to see that $C_{n+1}\subseteq C_{n}$ and that $\bigcap_{n\in \infty} C_n = \mathcal{G}(\Gamma_0)$. Moreover, $C_1$ is compact. Now let $\Gamma_n$ be the sequence of minimisers for $G(\Gamma_0+1/n \id)$, then $\Gamma_n\in C_n$ for all $n\in \N$. By compactness, a subsequence will converge to 
\begin{align*}
	\Gamma\in \bigcap_{n\in \infty} C_n =\mathcal{G}(\Gamma_0)
\end{align*}
Therefore, $G(\Gamma_0)\leq \lim_{\varepsilon\to 0} G(\Gamma_0+\varepsilon \id)$, but since $\Gamma_0\leq \Gamma_0+\varepsilon \id$ for all $\varepsilon>0$ we also have $G(\Gamma)\geq \lim_{\varepsilon\to 0} G(\Gamma_0+\varepsilon \id)$.
\end{proof}

\subsection{Additivity properties}
Finally, let us consider additivity properties of $G$. To do this, we switch our basis again and use $\gamma$ and $\sigma$.
\begin{prop} \label{prop:superadd}
For any covariance matrices $\gamma_{A}\in \R^{2n_1\times 2n_1}$ and $\gamma_B\in \R^{2n_2\times 2n_2}$, we have
\begin{align*}
	\frac{1}{2}(G(\gamma_A)+G(\gamma_B))\leq G(\gamma_A\oplus \gamma_B)\leq G(\gamma_A)+G(\gamma_B).
\end{align*}
In particular, $G$ is subadditive. 
\end{prop}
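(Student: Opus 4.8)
The plan is to prove the two inequalities separately, both by exploiting the reformulation (\ref{eqn:reform2}), namely $G(\gamma)=\inf\{F(\Gamma_0^{1/2})\mid \gamma\geq \Gamma_0\geq i\sigma,\ \Gamma_0\text{ symplectic-admissible}\}$, together with the subadditivity of $F$ from Lemma \ref{lem:gelfand}. Throughout I would use that the infimum in (\ref{eqn:reform1})--(\ref{eqn:reform2}) is attained (shown in Proposition \ref{prop:reform}), so that optimal symplectic matrices exist for $\gamma_A$ and $\gamma_B$.

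For the upper bound $G(\gamma_A\oplus\gamma_B)\leq G(\gamma_A)+G(\gamma_B)$, I would take optimisers $S_A\in Sp(2n_1)$ and $S_B\in Sp(2n_2)$ with $\gamma_A\geq S_A^TS_A$, $\gamma_B\geq S_B^TS_B$, $F(S_A)=G(\gamma_A)$ and $F(S_B)=G(\gamma_B)$. The block-diagonal matrix $S:=S_A\oplus S_B$ is symplectic with respect to $\sigma_{n_1}\oplus\sigma_{n_2}$ and satisfies $\gamma_A\oplus\gamma_B\geq S^TS=(S_A^TS_A)\oplus(S_B^TS_B)$, so $S$ is feasible for $G(\gamma_A\oplus\gamma_B)$. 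The singular values of $S$ are the union (with multiplicity) of those of $S_A$ and $S_B$; summing the logarithms of the top $n_1+n_2$ of them gives exactly $F(S_A)+F(S_B)$ because each symplectic block contributes its own top half, so $F(S)=F(S_A)+F(S_B)$ and hence $G(\gamma_A\oplus\gamma_B)\leq F(S)=G(\gamma_A)+G(\gamma_B)$.

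For the lower bound I would go the other way: let $S\in Sp(2(n_1+n_2))$ be an optimiser for $G(\gamma_A\oplus\gamma_B)$, so $\gamma_A\oplus\gamma_B\geq S^TS$ and $F(S)=G(\gamma_A\oplus\gamma_B)$. Restricting the matrix inequality to the two diagonal blocks (compressing by the coordinate projections $P_A,P_B$ onto the first $2n_1$ and last $2n_2$ coordinates) yields $\gamma_A\geq P_A S^TS P_A^T$ and $\gamma_B\geq P_B S^TS P_B^T$. The compressed matrices $P_A S^TS P_A^T$ and $P_B S^TS P_B^T$ need not be symplectic, so I would invoke (\ref{eqn:reform2}): each is a positive matrix dominating $i\sigma$ (this last point must be checked — it follows because $S^TS\geq i\sigma$ and compressions of $M-i\sigma\geq 0$ to symplectic subspaces stay $\geq 0$, using that $P_A\sigma P_A^T=\sigma_{n_1}$), so $G(\gamma_A)\leq F((P_AS^TSP_A^T)^{1/2})$ and likewise for $B$. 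The final step relates $F$ of the two compressions back to $F(S)$: since the eigenvalues of the compressed Gram matrices interlace/are majorised by those of $S^TS$, I would sum and use a majorisation estimate to get $\tfrac12(G(\gamma_A)+G(\gamma_B))\leq F(S)$.

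The main obstacle is this last lower-bound step, where I must control $\tfrac12\big(F((P_AS^TSP_A^T)^{1/2})+F((P_BS^TSP_B^T)^{1/2})\big)$ by $F(S)=\tfrac12\sum_{i=1}^{n_1+n_2}\log\lambda_i^{\downarrow}(S^TS)$. The factor $\tfrac12$ in the statement signals that naive compression loses information, and the clean identity available for the upper bound (block-diagonal singular values) is unavailable here. I expect to need a Cauchy-interlacing or pinching/majorisation argument for the eigenvalues of principal submatrices of $S^TS$, combined with the monotonicity of $F$ under the partial order (the Weyl monotonicity used already in the proof of Proposition \ref{prop:reform}); carefully matching the counting of ``top $n_j$'' eigenvalues across the compressions is the delicate part, and is presumably exactly where the factor of two is unavoidable.
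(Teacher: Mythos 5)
Your upper bound is exactly the paper's argument: block-diagonal optimisers, $F(S_A\oplus S_B)=F(S_A)+F(S_B)$ because each symplectic block contributes its own top half of singular values. For the lower bound your route is genuinely different from the paper's, and it works. The paper does not compress the optimiser; instead it invokes the inequality $G(\gamma_A)\leq G(\gamma_A\oplus a\id_{n_2})$ for $a\geq 1$ (equation (\ref{eqn:post})), which is only established later in Section \ref{sec:operational} as a byproduct of the operational characterisation (``measurements cannot squeeze'', Lemma \ref{lem:mainproof2} and Proposition \ref{prop:mainproof3}); combining it with the monotonicity $G(\gamma_A\oplus a\id_{n_2})\leq G(\gamma_A\oplus\gamma_B)$ for $a\id_{n_2}\geq\gamma_B$ gives $G(\gamma_A)\leq G(\gamma_A\oplus\gamma_B)$, and adding the two such inequalities produces the factor $\tfrac12$. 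Your direct compression argument closes in exactly the way you anticipate, and the step you flag as the main obstacle is in fact easier than you fear: writing $M=S^TS$ and $M_A=P_AMP_A^T$, Cauchy interlacing gives $\lambda_i^{\downarrow}(M_A)\leq\lambda_i^{\downarrow}(M)$ for $i=1,\ldots,n_1$, while $M_A\geq i\sigma_{n_1}$ guarantees at least $n_1$ eigenvalues of $M_A$ are $\geq 1$ and the symplecticity of $M$ guarantees $\lambda_i^{\downarrow}(M)\geq 1$ for all $i\leq n_1+n_2$, so the extra logarithms are nonnegative and one gets $F(M_A^{1/2})\leq F(S)$ \emph{individually} (and likewise for $B$); the factor of two then comes solely from averaging the two separate bounds, not from any loss inside the interlacing. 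This counting is precisely the paper's equation (\ref{eqn:needforg}), just applied here directly. What your approach buys is a self-contained proof that removes the forward reference to Section \ref{sec:operational}; what the paper's approach buys is that the key ingredient ($G$ does not increase under taking principal submodes) is derived once from the operationally meaningful statement that Gaussian measurements cannot generate squeezing, which is needed anyway for Theorem \ref{thm:operational}.
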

\begin{proof}
For subadditivity, let $S^TS\leq \gamma_A$ and $S^{\prime T}S^{\prime}\leq \gamma_B$ obtain the minimum in $G(\gamma_A)$ and $G(\gamma_B)$ respectively. Then $S\oplus S^{\prime}$ is symplectic and $(S\oplus S^{\prime})^T(S\oplus S^{\prime})\leq \gamma_A\oplus \gamma_B$ hence, $G(\gamma_A\oplus \gamma_B)\leq G(A)+G(B)$. 

To prove the lower bound, we need the following equation that we will only prove later on (see equation (\ref{eqn:post})):
\begin{align}
		a\geq 1: \qquad G(\gamma_A)\leq G(\gamma_A\oplus a\id_{n_2}).
\end{align} 
Assuming this inequality, let $a\geq 1$ be such that $a\id_{n_2}\geq \gamma_B$, then  
\begin{align*}
	G(\gamma_A\oplus a\id_{n_2})\leq G(\gamma_A\oplus \gamma_B)
\end{align*}
hence $G(\gamma_A)\leq G(\gamma_A\oplus \gamma_B)$ and since we can do the same reasoning for $\gamma_B$, we have $G(\gamma_A)+G(\gamma_B)\leq 2 G(\gamma_{A}\oplus \gamma_{B})$.
\end{proof}
We don't know whether $G$ is also superadditive, which would make it additive. At present, we can only prove:
\begin{cor}
Let $\gamma_{A}\in \R^{2n_1\times 2n_1}$ and $\gamma_B\in Sp(2n_2)$, be two covariance matrices (i.e. $\gamma_B$ is a covariance matrix of a pure state). Then $G$ is additive.
\end{cor}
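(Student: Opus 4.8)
The plan is to upgrade the weak lower bound of Proposition \ref{prop:superadd} to genuine superadditivity under the purity hypothesis, and then combine it with the subadditivity already proven there. Since $G(\gamma_A\oplus\gamma_B)\le G(\gamma_A)+G(\gamma_B)$ is available, it suffices to establish the reverse inequality $G(\gamma_A\oplus\gamma_B)\ge G(\gamma_A)+G(\gamma_B)$.

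First I would pass to the reformulation (\ref{eqn:reform2}), transported to the $\sigma$-basis (which changes nothing, since $F$ is invariant under the orthogonal permutation relating $J$ and $\sigma$), so that $G(\gamma_A\oplus\gamma_B)=F(\Gamma_0^{1/2})$ for some minimiser $\Gamma_0$ with $\gamma_A\oplus\gamma_B\ge\Gamma_0\ge i\sigma$, the infimum being attained. The central observation, which I expect to carry the whole argument, is that purity of $\gamma_B$ rigidifies the $B$-block of $\Gamma_0$. Writing $\Gamma_0=\left(\begin{smallmatrix}X & Y\\ Y^T & Z\end{smallmatrix}\right)$ in $(n_1,n_2)$-block form, the two matrix inequalities restrict to $i\sigma_B\le Z\le\gamma_B$. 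Because $\gamma_B\in Sp(2n_2)$ is symplectic and positive definite, Williamson's theorem gives a symplectic $S_B$ with $\gamma_B=S_B^TS_B$; conjugating by $S_B^{-1}$ turns the interval into $i\sigma_B\le Z'\le\id$. Now $Z'\le\id$ forces $\det Z'\le1$, whereas $Z'\ge i\sigma_B$ forces, by Corollary \ref{cor:symplecticspec}, all symplectic eigenvalues of $Z'$ to be $\ge1$ and hence $\det Z'\ge1$ (the determinant being the squared product of the symplectic eigenvalues). Thus $\det Z'=1$, which pins all eigenvalues of $Z'$ to $1$ and gives $Z'=\id$, i.e. $Z=\gamma_B$. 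The very same collapse of the order-interval $[i\sigma_B,\gamma_B]$ to the single point $\{\gamma_B\}$ shows, via (\ref{eqn:reform2}) applied to $\gamma_B$ alone, that $G(\gamma_B)=F(\gamma_B^{1/2})$.

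With $Z=\gamma_B$ fixed, the positive-semidefinite matrix $\gamma_A\oplus\gamma_B-\Gamma_0=\left(\begin{smallmatrix}\gamma_A-X & -Y\\ -Y^T & 0\end{smallmatrix}\right)$ has a vanishing diagonal block, which forces $Y=0$; hence $\Gamma_0=X\oplus\gamma_B$ is block diagonal with $\gamma_A\ge X\ge i\sigma_A$, and $\Gamma_0^{1/2}=X^{1/2}\oplus\gamma_B^{1/2}$. It then remains to split $F$ across the direct sum. Here I would use that, by Definition (\ref{eqn:defF}) and the monotonicity of $\log$, one has $F(M)=\max_{|I|=k}\sum_{i\in I}\log s_i(M)$, the maximum of $\sum\log$ over any choice of $k$ singular values (the maximiser being the $k$ largest). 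Selecting the top $n_1$ singular values of $X^{1/2}$ together with the top $n_2$ of $\gamma_B^{1/2}$ is one admissible choice of $n_1+n_2$ singular values of the direct sum, so $F(X^{1/2}\oplus\gamma_B^{1/2})\ge F(X^{1/2})+F(\gamma_B^{1/2})$.

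Finally I would close the estimate: $F(X^{1/2})\ge G(\gamma_A)$ because $X$ is feasible for $G(\gamma_A)$ via (\ref{eqn:reform2}) (as $\gamma_A\ge X\ge i\sigma_A$), while $F(\gamma_B^{1/2})=G(\gamma_B)$ by the rigidity established above. Chaining these gives $G(\gamma_A\oplus\gamma_B)=F(\Gamma_0^{1/2})\ge F(X^{1/2})+F(\gamma_B^{1/2})\ge G(\gamma_A)+G(\gamma_B)$, and together with subadditivity this yields equality. The only delicate point is the rigidity of the $B$-block forced by purity; everything downstream—the vanishing of $Y$ and the superadditivity of $F$ under direct sums—is routine linear algebra.
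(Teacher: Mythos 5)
Your proof is correct and follows essentially the same route as the paper: both pass to reformulation (\ref{eqn:reform2}), show that purity of $\gamma_B$ pins the $B$-block of the minimiser to $\gamma_B$ itself, deduce that the off-diagonal block vanishes, and conclude that the $A$-block is feasible for $G(\gamma_A)$. You merely fill in two details the paper leaves implicit — the determinant argument showing the order interval $[i\sigma_B,\gamma_B]$ collapses to a point, and the superadditivity of $F$ over direct sums — so there is nothing substantively different to compare.
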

\begin{proof}
Subadditivity has already been proven in the lemma. For superadditivity, we use the second reformulation of the squeezing measure in equation (\ref{eqn:reform2}): Note that there is only one matrix $\gamma_B\geq \gamma\geq i\sigma$, namely $\gamma_B$ itself. Now write
\begin{align*}
	\gamma_A\oplus \gamma_B\geq \begin{pmatrix}{} \tilde{A} & C \\ C^T & \tilde{B} \end{pmatrix} \geq i\sigma
\end{align*}
for $\tilde{A}\in\R^{2n_1\times 2n_1}$ and $\tilde{B}\in \R^{2n_2\times 2n_2}$. Then in particular $\gamma_B-\tilde{B}\geq 0$, but also $\tilde{B}\geq i\sigma$, hence $\gamma_B\geq \tilde{B}\geq i\sigma$ and therefore $\tilde{B}=\gamma_B$. But then
\begin{align*}
	\gamma_A\oplus \gamma_B- \begin{pmatrix}{} \tilde{A} & C \\ C^T & \tilde{B} \end{pmatrix} =\begin{pmatrix}{} \gamma_A-\tilde{A} & C \\ C^T & 0 \end{pmatrix}
\end{align*} 
hence also $C=0$ and the matrix that takes the minimum in $G(\gamma_A\oplus \gamma_B)$ must be block-diagonal. Then $\gamma_A\oplus \gamma_B\geq \tilde{A}\oplus \gamma_B\geq 0$ and $\tilde{A}$ is in the feasible set of $G(\gamma_A)$.
\end{proof}
\begin{cor} \label{cor:superadd}
For any covariance matrices $\gamma_{A}\in \R^{2n_1\times 2n_1}$ and $\gamma_B\in \R^{2n_2\times 2n_2}$,
\begin{align*}
	G(\gamma_A)+G(\gamma_B)\leq 2 G\left(\begin{pmatrix}{} \gamma_{A} & C \\ C^T & \gamma_{B}\end{pmatrix}\right).
\end{align*}
If $G$ is superadditive, then this inequality holds without the factor of two.
\end{cor}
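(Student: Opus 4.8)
The plan is to reduce the case of a nonzero coupling block $C$ to the block-diagonal situation already settled in Proposition~\ref{prop:superadd}, by exploiting convexity of $G$ together with a simple sign-flip symmetry.

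First I would introduce $T := \id_{2n_1}\oplus(-\id_{2n_2})\in\R^{2(n_1+n_2)\times 2(n_1+n_2)}$. Since $(-\id)^T\sigma(-\id)=\sigma$ on the $B$-modes, $T$ is simultaneously orthogonal and symplectic (i.e.\ $T\in K(n_1+n_2)$) and satisfies $T^2=\id$. The crucial point is that $G$ is invariant under the congruence $\Gamma\mapsto T^T\Gamma T$: if $S\in Sp(2(n_1+n_2))$ is feasible for $G(\Gamma)$, i.e.\ $\Gamma\ge S^TS$, then $T^T\Gamma T\ge (ST)^T(ST)$ with $ST$ symplectic, and because $T$ is orthogonal we have $(ST)(ST)^T=SS^T$, so $S$ and $ST$ share the same singular values and $F(ST)=F(S)$. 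This gives a value-preserving bijection between the feasible sets of $G(\Gamma)$ and $G(T^T\Gamma T)$, whence $G(T^T\Gamma T)=G(\Gamma)$. Note also that $T^T\Gamma T\ge i\sigma$, so it is again a covariance matrix.

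Next, writing $\Gamma=\left(\begin{smallmatrix}\gamma_A & C\\ C^T & \gamma_B\end{smallmatrix}\right)$, a direct computation gives $T^T\Gamma T=\left(\begin{smallmatrix}\gamma_A & -C\\ -C^T & \gamma_B\end{smallmatrix}\right)$, so the off-diagonal blocks cancel in the average and
\[
	\tfrac12\,\Gamma+\tfrac12\,T^T\Gamma T=\gamma_A\oplus\gamma_B .
\]
Applying convexity of $G$ (Theorem~\ref{thm:convex}) on the set of covariance matrices together with the invariance just established, I obtain
\[
	G(\gamma_A\oplus\gamma_B)\le \tfrac12 G(\Gamma)+\tfrac12 G(T^T\Gamma T)=G(\Gamma).
\]

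Finally I would chain this with the block-diagonal estimates of Proposition~\ref{prop:superadd}. Its lower bound $\tfrac12\bigl(G(\gamma_A)+G(\gamma_B)\bigr)\le G(\gamma_A\oplus\gamma_B)$, combined with $G(\gamma_A\oplus\gamma_B)\le G(\Gamma)$, yields $G(\gamma_A)+G(\gamma_B)\le 2G(\Gamma)$, which is the asserted factor-of-two inequality; if moreover $G$ is superadditive, so that $G(\gamma_A)+G(\gamma_B)\le G(\gamma_A\oplus\gamma_B)$, the very same chaining removes the factor two. The only genuinely new ingredient is the sign-flip invariance $G(T^T\Gamma T)=G(\Gamma)$, and this is the step I expect to require the most care; everything else is a short assembly of convexity and the already-proven block-diagonal bound. (An alternative, more computational route would bound $G(\gamma_A)$ and $G(\gamma_B)$ directly by $G(\Gamma)$ via the attained minimiser of reformulation~(\ref{eqn:reform2}), using Cauchy interlacing of the principal blocks of the optimal $\Gamma_0$ together with the fact that at least $n$ eigenvalues of any covariance matrix are $\ge 1$; the symmetry argument above is cleaner since it reuses existing results directly.)
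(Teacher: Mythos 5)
Your proposal is correct and follows essentially the same route as the paper: average $\Gamma$ with its sign-flipped partner to recover $\gamma_A\oplus\gamma_B$, apply convexity, use the equality $G(\Gamma)=G(T^T\Gamma T)$, and chain with the block-diagonal bound of Proposition~\ref{prop:superadd}. Your justification of the sign-flip invariance via conjugation with the orthogonal symplectic $T=\id\oplus(-\id)$ is a slightly cleaner packaging of the paper's explicit correspondence between feasible $S$ for the two matrices, but it is the same underlying argument.
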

\begin{proof}
\begin{align*}
	G(\gamma_A)+G(\gamma_B)&\leq 2 G\left(\begin{pmatrix}{} \gamma_{A} & 0 \\ 0 & \gamma_{B}\end{pmatrix}\right) = 2 G\left(\frac{1}{2}\begin{pmatrix}{} \gamma_{A} & C \\ C^T & \gamma_{B}\end{pmatrix}+\frac{1}{2}\begin{pmatrix}{} \gamma_{A} & -C \\ -C^T & \gamma_{B}\end{pmatrix}\right) \\
	&\stackrel{\mathclap{(*)}}{\leq} G\left(\begin{pmatrix}{} \gamma_{A} & C \\ C^T & \gamma_{B}\end{pmatrix}\right)+G\left(\begin{pmatrix}{} \gamma_{A} & -C \\ -C^T & \gamma_{B}\end{pmatrix}\right)~
	\stackrel{\mathclap{(**)}}{=}~2 G\left(\begin{pmatrix}{} \gamma_{A} & C \\ C^T & \gamma_{B}\end{pmatrix}\right).
\end{align*}
Here we used Proposition \ref{prop:superadd} and then convexity of $G$ in $(*)$. Finally, in $(**)$ we used that for every
\begin{align}
	\begin{pmatrix}{} \gamma_{A} & C \\ C^T & \gamma_{B}\end{pmatrix} \geq \begin{pmatrix}{} S_A & \tilde{C} \\ \tilde{C}^T & S_B \end{pmatrix}\begin{pmatrix}{} S_A & \tilde{C} \\ \tilde{C}^T & S_B \end{pmatrix}^T \in Sp(2(n_1+n_2)) \label{eqn:subaddproof1}
\end{align}
we also have:
\begin{align}
	\begin{pmatrix}{} \gamma_{A} & -C \\ -C^T & \gamma_{B}\end{pmatrix} \geq \begin{pmatrix}{} S_A & -\tilde{C} \\ -\tilde{C}^T & S_B \end{pmatrix}\begin{pmatrix}{} S_A & -\tilde{C} \\ -\tilde{C}^T & S_B \end{pmatrix}^T \in Sp(2(n_1+n_2)) \label{eqn:subaddproof2}
\end{align}
and vice versa. Since the two matrices on the right hand side of equations (\ref{eqn:subaddproof1}) and (\ref{eqn:subaddproof2}) are related by an orthogonal transformation, the two matrices have equal spectrum; hence the two squeezing measures of the matrices on the left hand side need to be equal.
\end{proof}

\subsection{Bounds} \label{sec:bounds}
Before we try to compute the squeezing measure, let us give a few simple bounds on $G$. 
\begin{prop}[spectral bounds] \label{prop:specbounds}
Let $\Gamma\geq iJ$ be a valid covariance matrix and $\lambda^{\downarrow}(\Gamma)$ be the vector of eigenvalues in decreasing order. Then:
\begin{align}
	-\frac{1}{2}\sum_{\lambda^{\downarrow}_i(\Gamma)<1}\log(\lambda^{\downarrow}_i(\Gamma))\leq G(\Gamma)\leq \frac{1}{2} \sum_{i=1}^n\log\lambda^{\downarrow}_i(\Gamma)=F(\Gamma^{1/2}) \label{eqn:specbounds}
\end{align}
\end{prop}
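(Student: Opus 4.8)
The plan is to prove the two spectral bounds in Proposition~\ref{prop:specbounds} separately, using the reformulation in equation~(\ref{eqn:reform2}), namely $G(\Gamma)=\inf\{F(\Gamma_0^{1/2})\mid \Gamma\geq \Gamma_0\geq iJ\}$, together with standard eigenvalue majorisation inequalities.

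For the upper bound, the idea is simply that $\Gamma$ itself is a feasible choice of $\Gamma_0$, since $\Gamma\geq iJ$ trivially gives $\Gamma\geq\Gamma\geq iJ$. Hence $G(\Gamma)\leq F(\Gamma^{1/2})=\sum_{i=1}^n\log s_i^{\downarrow}(\Gamma^{1/2})=\frac12\sum_{i=1}^n\log\lambda_i^{\downarrow}(\Gamma)$, where I use that the singular values of $\Gamma^{1/2}$ are the square roots of the eigenvalues of $\Gamma$ (as $\Gamma$ is positive definite), and that $F$ sums only the $n$ largest of these. This gives the right-hand inequality and the stated equality directly.

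For the lower bound, I would work from a minimiser. By Proposition~\ref{prop:reform} the infimum defining $G(\Gamma)$ is attained, so fix $S\in Sp(2n)$ with $\Gamma\geq S^{T\!}S$ and $F(S)=G(\Gamma)$. Writing the $2n$ singular values of $S$ as $s_1\geq\cdots\geq s_n\geq 1\geq s_n^{-1}\geq\cdots\geq s_1^{-1}$ (a symplectic matrix has singular values in reciprocal pairs), I have $F(S)=\sum_{i=1}^n\log s_i$. The goal is to relate the \emph{small} singular values of $S$, i.e.\ the $s_i^{-1}$, to the small eigenvalues of $\Gamma$. Since $\Gamma\geq S^{T\!}S$, Weyl's monotonicity principle gives $\lambda_k^{\downarrow}(\Gamma)\geq \lambda_k^{\downarrow}(S^{T\!}S)=s_k^{\downarrow}(S)^2$ for every $k$. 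Taking logarithms of the $n$ smallest eigenvalues and summing, I would bound $\sum_{\lambda_i^{\downarrow}(\Gamma)<1}\log\lambda_i^{\downarrow}(\Gamma)$ from below by the sum of the logs of the corresponding small squared singular values of $S$, which are exactly the $\log s_i^{-2}=-2\log s_i$. This yields $\sum_{\lambda_i^{\downarrow}(\Gamma)<1}\log\lambda_i^{\downarrow}(\Gamma)\geq -2\sum_{i=1}^n\log s_i=-2F(S)=-2G(\Gamma)$, which rearranges to the desired left-hand inequality.

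The main obstacle is the bookkeeping in the lower bound: one must be careful that the index sets match up, i.e.\ that the eigenvalues of $\Gamma$ which fall below $1$ are controlled precisely by the reciprocal singular values $s_i^{-1}$ of $S$ and not overcounted, since $\Gamma$ may have fewer (or more) eigenvalues below $1$ than $S^{T\!}S$ does. The clean way to handle this is to note that only the eigenvalues below $1$ contribute a negative term to the sum, while each factor $-2\log s_i$ is nonpositive, so restricting the sum on the left to $\lambda_i^{\downarrow}(\Gamma)<1$ only makes the left-hand side larger (less negative) than the full sum $\sum_{i=n+1}^{2n}\log\lambda_i^{\downarrow}(\Gamma)$ would be; combined with the Weyl bound applied to the $n$ smallest eigenvalues, the inequality follows. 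I would double-check the direction of each monotonicity step, since the interplay of the decreasing ordering, the square root, and the sign from the logarithm below $1$ is where an error would most easily creep in.
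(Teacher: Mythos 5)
Your upper bound is correct: taking $\Gamma_0=\Gamma$ in reformulation (\ref{eqn:reform2}) immediately gives $G(\Gamma)\leq F(\Gamma^{1/2})=\frac12\sum_{i=1}^n\log\lambda_i^{\downarrow}(\Gamma)$, and this is if anything cleaner than what the paper writes. The setup for the lower bound (fix an optimal $S$ with $\Gamma\geq S^{T\!}S$ and apply Weyl monotonicity) is also the paper's route. The problem is the final bookkeeping step, exactly where you suspected it. You claim that restricting the sum to the indices with $\lambda_i^{\downarrow}(\Gamma)<1$ makes it \emph{larger} than the full sum $\sum_{i=n+1}^{2n}\log\lambda_i^{\downarrow}(\Gamma)$. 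It is the other way around: every index with $\lambda_i^{\downarrow}(\Gamma)<1$ already lies in $\{n+1,\dots,2n\}$ (since $S^{T\!}S$, and hence $\Gamma$, has at least $n$ eigenvalues $\geq 1$), so passing from the full sum to the restricted one means \emph{discarding} the terms with $\lambda_i^{\downarrow}(\Gamma)\geq1$, which are nonnegative; hence
\begin{align*}
\sum_{\lambda_i^{\downarrow}(\Gamma)<1}\log\lambda_i^{\downarrow}(\Gamma)\;\leq\;\sum_{i=n+1}^{2n}\log\lambda_i^{\downarrow}(\Gamma),
\end{align*}
and the Weyl bound $\sum_{i=n+1}^{2n}\log\lambda_i^{\downarrow}(\Gamma)\geq-2F(S)$ then says nothing about the restricted sum. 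Concretely, for $\Gamma=2\id_{2n}$ the optimal $S$ is $\id$, so your chain of inequalities would read $0\geq n\log 2\geq 0$, and the first inequality is false.

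The repair is to restrict \emph{both} sides of the Weyl comparison to the $k:=\#\{i\,:\,\lambda_i^{\downarrow}(\Gamma)<1\}\leq n$ smallest eigenvalues: from $\Gamma\geq S^{T\!}S$ one gets $\lambda_{2n-j+1}^{\downarrow}(\Gamma)\geq\lambda_{2n-j+1}^{\downarrow}(S^{T\!}S)=s_j^{\downarrow}(S)^{-2}$ for $j=1,\dots,k$, hence
\begin{align*}
-\frac12\sum_{\lambda_i^{\downarrow}(\Gamma)<1}\log\lambda_i^{\downarrow}(\Gamma)\;\leq\;\sum_{j=1}^{k}\log s_j^{\downarrow}(S)\;\leq\;\sum_{j=1}^{n}\log s_j^{\downarrow}(S)=F(S)=G(\Gamma).
\end{align*}
The nonnegativity you need in order to drop terms sits on the singular-value side ($\log s_j^{\downarrow}(S)\geq0$), not on the eigenvalue side. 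This is precisely what the paper does, phrased there via $\Gamma^{-1}\leq(S^{T\!}S)^{-1}$ and the $k$ largest eigenvalues of the inverses.
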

\begin{proof}
According to the Euler decomposition, a symplectic positive definite matrix has positive eigenvalues that come in pairs $s,s^{-1}$ and we can find $O\in SO(2n)$ such that for any $S^TS\leq \Gamma$
\begin{align*}
	O^T\Gamma O\geq \diag(s_1,\ldots,s_n,s_1^{-1},\ldots,s_n^{-1})
\end{align*}
But then, $\lambda_k^{\downarrow}(\Gamma)\geq \lambda_k^{\downarrow}(\diag(s_1,\ldots,s_n,s_1^{-1},\ldots,s_n^{-1}))$ via the Weyl inequalities $\lambda_i^{\downarrow}(A)\geq \lambda_i^{\downarrow}(B)$ for all $i$ and $A-B\geq 0$ (cf. \cite{bha96}, Theorem III.2.3). This implies:
\begin{align*}
	G(\Gamma)\leq \sum_{i=1}^n \log (\max\{s_i,s_i^{-1}\})\leq \sum_{i=1}^n \log \lambda_i^{\downarrow}(\Gamma)^{1/2}
\end{align*}
For the lower bound, given an optimal matrix $S$ with eigenvalues $s_i$, we have
\begin{align*}
	G(\Gamma)=\sum_i \max\{s_i,s_i^{-1}\}
\end{align*}
If $S^TS=O^T\diag(s_1^2,\ldots, s_n^2,s_1^{-2},\ldots,s_n^{-2})O$ with $O\in SO(2n)$ is the diagonalisation of $S^TS$, we can write:
\begin{align*}
	O^{-T}\Gamma^{-1}O^{-1}\leq \diag(s_1^2,\ldots, s_n^2,s_1^{-2},\ldots,s_n^{-2})
\end{align*}
and again by Weyl's inequalites, we can find for all $k\leq n$:
\begin{align}
	-\frac{1}{2}\sum_{i=2n-k+1}^{2n} \log(\lambda^{\downarrow}_i(\Gamma))\leq \frac{1}{2}\sum_{i=1}^k\log \lambda_i^{\downarrow}(\diag(s_1^2,\ldots, s_n^2,s_1^{-2},\ldots,s_n^{-2}))\leq G(\Gamma) \label{eqn:boundsproof}
\end{align}
Now, $-\frac{1}{2}\sum_{i=2n-k+1}^{2n} \lambda_i^{\downarrow}(\Gamma)$ can be upper bounded by restricting to eigenvalues $\lambda_i^{\downarrow}(\Gamma)<1$. This implies
\begin{align*}
	-\frac{1}{2}\sum_{\lambda_i^{\downarrow}(\Gamma)<1}\log (\lambda_i^{\downarrow}(\Gamma))\leq G(\Gamma)
\end{align*}
using that the number of eigenvalues $\lambda_i(\Gamma)<1$ can at most be $n$ (hence $k\leq n$ in the inequality of equation (\ref{eqn:boundsproof})), since $\Gamma\geq S^TS$ and $S^TS$ has at least $n$ eigenvalues bigger than one. 
\end{proof}
Numerics suggest that the lower bound is often very good for low dimensions. In fact, it can sometimes be achieved:
\begin{prop} \label{prop:achievelower}
Let $\Gamma\geq iJ$ be a covariance matrix, then $G$ achieves the lower bound in equation (\ref{eqn:specbounds}) if there exists an orthonormal eigenvector basis $\{v_i\}_{i=1}^{2n}$ of $\Gamma$ with $v_i^TJv_j=\delta_{i,n+j}$. Conversely, if $G$ achieves the lower bound, then $v_i^TJv_j=0$ for all normalised eigenvectors $v_i,v_j$ of $\Gamma$ with $\lambda_i,\lambda_j<1$. 
\end{prop}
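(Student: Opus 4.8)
The statement combines a sufficient condition and a necessary condition, so I will treat the two directions separately. Throughout I work with the reformulation $G(\Gamma)=\inf\{F(\Gamma_0^{1/2})\mid \Gamma\geq\Gamma_0\geq iJ\}$ of equation~(\ref{eqn:reform2}), using that the infimum is attained on a symplectic (i.e.\ pure) matrix $\Gamma_0=S^TS$. I write $k$ for the number of eigenvalues of $\Gamma$ strictly below one (at most $n$ by Corollary~\ref{cor:symplecticspec}), and $V_-$ for the span of the corresponding eigenvectors; the assumed condition $v_i^TJv_j=\delta_{i,n+j}$ is read as saying that the orthonormal eigenbasis is \emph{simultaneously symplectic}, i.e.\ the orthogonal matrix $O=(v_1\mid\cdots\mid v_{2n})$ satisfies $O^T\Gamma O=D=\diag(\lambda_1,\ldots,\lambda_{2n})$ and $O^TJO=\tilde J$ is again a standard symplectic form pairing $v_j$ with $v_{n+j}$.

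For the sufficient direction, the plan is to exploit that in the $O$-frame both $\Gamma$ and $J$ are block diagonal over the $n$ symplectic planes, so the constraints $\Gamma\geq\Gamma_0\geq iJ$ decouple into single modes. In the $j$-th plane the restricted covariance matrix is $\diag(\lambda_j,\lambda_{n+j})$ with $\lambda_j\lambda_{n+j}\geq1$ by Corollary~\ref{cor:symplecticspec}, so at most one of the two eigenvalues lies below one. I then build a feasible optimiser blockwise: take $\Gamma_0^{(j)}=\id$ if both eigenvalues are $\geq1$, and the single-mode squeezer $\diag(1/\mu,\mu)$ with $\mu=\min\{\lambda_j,\lambda_{n+j}\}$ otherwise, which is feasible precisely because $1/\mu\leq\max\{\lambda_j,\lambda_{n+j}\}$. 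Conjugating $\bigoplus_j\Gamma_0^{(j)}$ back by $O$ produces a valid $\Gamma_0$ with $\Gamma\geq\Gamma_0\geq iJ$; since orthogonal conjugation preserves eigenvalues, $F(\Gamma_0^{1/2})=-\tfrac12\sum_{\lambda_i<1}\log\lambda_i$, which is exactly the lower bound of~(\ref{eqn:specbounds}). As that expression also lower-bounds $G$, the bound is attained.

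For the necessary direction, fix an optimal symplectic $\Gamma_0=S^TS$ with $F(\Gamma_0^{1/2})=G(\Gamma)$ equal to the lower bound. The first step is a spectral comparison: since $\Gamma_0$ is symplectic its eigenvalues pair as $\nu,1/\nu$, whence $F(\Gamma_0^{1/2})=-\tfrac12\sum_{\nu<1}\log\nu$, while Weyl monotonicity applied to $\Gamma\geq\Gamma_0$ gives $\lambda_j^\uparrow(\Gamma)\geq\lambda_j^\uparrow(\Gamma_0)$ for every $j$. Matching the two sums term by term against the assumed equality $F(\Gamma_0^{1/2})=-\tfrac12\sum_{\lambda_i<1}\log\lambda_i$ forces $\Gamma_0$ to have exactly $k$ eigenvalues below one, coinciding with the $k$ eigenvalues of $\Gamma$ below one. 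Because both matrices then have a spectral gap at the value one, an energy argument upgrades this to equality of eigenspaces: any unit $v\in V_-$ obeys $\langle v,\Gamma_0 v\rangle\leq\langle v,\Gamma v\rangle\leq\lambda_k^\uparrow(\Gamma)=\lambda_k^\uparrow(\Gamma_0)<1$, so by min--max the $k$-dimensional $V_-$ must be the span of the eigenvectors of $\Gamma_0$ for its $k$ smallest eigenvalues. The second step uses symplecticity of $\Gamma_0$: from $S^TJS=J$ and $SJS^T=J$ one gets $\Gamma_0 J\Gamma_0=S^T(SJS^T)S=J$, hence $\Gamma_0 J=J\Gamma_0^{-1}$, so if $\Gamma_0u=\nu u$ then $\Gamma_0(Ju)=\nu^{-1}(Ju)$. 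Thus $JV_-$ is the top-$k$ eigenspace of $\Gamma_0$, whose eigenvalues exceed one and hence differ from those on $V_-$; as eigenspaces of a symmetric matrix for distinct eigenvalues are Euclidean-orthogonal, $V_-\perp JV_-$, giving $v^TJw=\langle v,Jw\rangle=0$ for all $v,w\in V_-$, which is the claimed $J$-orthogonality.

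I expect the main obstacle to be the first step of the necessary direction, namely converting the scalar equality of squeezing values into the assertion that $\Gamma$ and its optimiser $\Gamma_0$ share their entire sub-unit spectrum together with the associated eigenspace. This needs the careful termwise comparison based on $\lambda_j^\uparrow(\Gamma)\geq\lambda_j^\uparrow(\Gamma_0)$, followed by the passage from equality of eigenvalues to equality of eigenspaces, which relies on the spectral gap at one and on the fact that a positive semidefinite matrix of vanishing trace must be zero.
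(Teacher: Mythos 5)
Your overall architecture matches the paper's: the sufficient direction builds a blockwise optimiser in the simultaneously orthogonal and symplectic eigenbasis, and the necessary direction combines Weyl monotonicity, identification of the sub-unit eigenspaces of $\Gamma$ and of the optimiser $\Gamma_0=S^TS$, and the fact that $J$ exchanges the $\nu$- and $\nu^{-1}$-eigenspaces of a positive symplectic matrix. Your route to that last fact via $\Gamma_0J\Gamma_0=J$ is a clean substitute for the paper's detour through $\Ca^{-1}(S^TS)\in\mathcal{H}$ anticommuting with $J$; both yield the Euclidean orthogonality of the sub-unit and super-unit eigenspaces, and your sufficient direction is the same explicit construction the paper writes down.

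The one step that does not hold as written is the ``energy/min--max'' upgrade from equality of eigenvalues to equality of eigenspaces. Knowing only that $\langle v,\Gamma_0v\rangle\le\lambda_k^{\uparrow}(\Gamma_0)<1\le\lambda_{k+1}^{\uparrow}(\Gamma_0)$ for every unit $v$ in the $k$-dimensional space $V_-$ does not force $V_-$ to be the bottom-$k$ eigenspace of $\Gamma_0$: for a symmetric matrix with orthonormal eigenvectors $u_1,u_2,u_3$ and eigenvalues $0.1,\,0.5,\,2$ and $k=2$, the plane spanned by $u_2$ and $\sqrt{0.8}\,u_1+\sqrt{0.2}\,u_3$ satisfies the Rayleigh bound $\le 0.5$ everywhere yet is not the bottom-$2$ eigenspace. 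The implication you need must be run eigenvalue by eigenvalue, which is what the paper's ``iteratively'' hides: for a unit eigenvector $w$ of $\Gamma$ with $\Gamma w=\lambda_1^{\uparrow}(\Gamma)w$ one has $0\le\langle w,(\Gamma-\Gamma_0)w\rangle\le\lambda_1^{\uparrow}(\Gamma)-\lambda_1^{\uparrow}(\Gamma_0)=0$, hence $(\Gamma-\Gamma_0)w=0$ and $w$ is a $\Gamma_0$-eigenvector with the same eigenvalue; the equality of the ordered eigenvalue lists up to index $k$ forces matching multiplicities, so one can pass to the orthocomplement and repeat. You do name the right ingredient ($M\ge0$ and $\langle w,Mw\rangle=0$ imply $Mw=0$) in your closing paragraph, but applied to $V_-$ in one shot via min--max the argument is invalid; applied level by level it closes the gap, and the rest of your proposal is sound.
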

\begin{proof}
Suppose that the lower bound in equation (\ref{eqn:specbounds}) is achieved. Via Weyl's inequalities (cf. \cite{bha96} Theorem III.2.3), for all $S^TS\leq \Gamma$ in the definition of $G$ we have $\lambda_i^{\downarrow}(S^TS)\leq \lambda_i^{\downarrow}(\Gamma)$. For the particular $S$ achieving $G$, this implies that for all $\lambda_i^{\downarrow}(\Gamma)<1$ we have $\lambda_i^{\downarrow}(S^TS)= \lambda_i^{\downarrow}(\Gamma)$, since $G$ is equal to the lower bound in (\ref{eqn:specbounds}). In particular, the smallest eigenvalue of $S^TS$ must be the same as the smallest eigenvalue of $\Gamma$. But then, it is easy to see that every eigenvector $v$ of $S^TS$ corresponding to the smallest eigenvalue must be an eigenvector of $\Gamma$ to the smallest eigenvalue since otherwise $\Gamma\not \geq S^TS$. Iteratively, we can see that if $G$ achieves the lower bound, every eigenvector of $\Gamma$ with $\lambda_i(\Gamma)<1$ must be an eigenvector of $S^TS$ with the same eigenvalue.

Since the matrix diagonalising $S^TS$ also diagonalises $\mathcal{C}^{-1}(S^TS)$, the eigenvectors of the two matrices are the same. Now, since $\mathcal{C}^{-1}(S^TS)\in\mathcal{H}$ by reformulation (\ref{eqn:reform3}), for any eigenvector $v_i$ of any eigenvalue $\mathcal{C}^{-1}(\lambda_i)<0$, $Jv_i$ is also an eigenvector of $\mathcal{C}^{-1}(S^TS)$ to the eigenvalue $-\mathcal{C}^{-1}(\lambda_i)$. Since the eigenvectors of different eigenspaces are orthogonal, this implies $v_i^TJv_j=0$ for all $i,j$. By definition, this means that $\{v_i,Jv_j\}$ forms a symplectic basis. Above, we already saw that the eigenvectors of $\Gamma$ for $\lambda_i(\Gamma)<1$ are also eigenvalues of $S^TS$, hence $v_i^TJv_j=0$ for all $i$ such that $\lambda_i(\Gamma)<1$.

Conversely, suppose we have an orthonormal basis $\{v_i\}_{i=1}^{2n}$ such that $v_i^TJv_j=\delta_{i,j+n}$ (modulo $2n$ if necessary) for all eigenvectors of $\Gamma$, i.e. $\Gamma$ is diagonalisable by a symplectic orthonormal matrix $\tilde{O}\in U(n)$. Then
\begin{align*}
	\tilde{O}\Gamma\tilde{O}^T=\diag(\lambda_1,\ldots,\lambda_{2n})
\end{align*}
Since $\Gamma\geq iJ$ we have $\lambda_i\lambda_{2i}\geq 1$. Assume that $\lambda_i\geq \lambda_{n+i}$ for all $i=1,\ldots,n$ and the $\lambda_{n+i}$ are ordered in decreasing order. Then $\lambda_{n+r}<1\leq \lambda_{n+r-1}$ for some $r\leq n$ and
\begin{align*}
	S^TS=\tilde{O}^T\diag(1,\ldots,1,\lambda_r^{-1},\ldots, \lambda_n^{-1},1,\ldots,1,\lambda_{n+r},\ldots,\lambda_{2n})\tilde{O}
\end{align*}
fulfils $S^TS\leq \Gamma$ and obviously achieves the lower bound in equation (\ref{eqn:specbounds}).
\end{proof}

In contrast to this, the upper bound can be arbitrarily bad. For instance, consider the thermal state $\Gamma=(2N+1)\cdot\id$ for increasing $N$. It can easily be seen that $G(\Gamma)=0$, since $\Gamma\geq \id\in \Pi(n)$ and $F(\id)=0$, hence $G(\Gamma)\leq 0$. However, the upper bound in equation (\ref{eqn:specbounds}) is $n/2\log(2N+1)\to \infty$ for $N\to \infty$, therefore arbitrarily bad.

We can achieve better upper bounds by using Williamson's normal form:
\begin{prop}(Williamson bounds) \label{prop:boundswill}
Let $\Gamma\in\mathbb{R}^{2n\times 2n}$ be such that $\Gamma\geq iJ$ and consider its Williamson normal form $\Gamma=S^TDS$. Then:
\begin{align}
	F(S)-\log(\sqrt{\det(\Gamma)})\leq G(\Gamma)\leq F(S) \label{eqn:boundswill}
\end{align}
\end{prop}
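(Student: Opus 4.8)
The plan is to treat the two inequalities separately: the right-hand bound is essentially immediate from the definition, while the left-hand bound requires a short ``reverse'' argument routed through an auxiliary symplectic matrix.

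For the upper bound $G(\Gamma)\le F(S)$, I would simply verify that the Williamson matrix $S$ is itself feasible for the infimum defining $G$. Since $\Gamma\ge iJ$, Corollary \ref{cor:symplecticspec} forces every symplectic eigenvalue to satisfy $d_i\ge 1$, so that $D=\diag(d_1,\dots,d_n,d_1,\dots,d_n)\ge\id$ and hence $\Gamma=S^T D S\ge S^T S$. Thus $S$ lies in the feasible set $\{S'\in Sp(2n)\mid \Gamma\ge S'^TS'\}$, and $G(\Gamma)\le F(S)$ follows at once.

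For the lower bound I would show that \emph{every} feasible $S'$ obeys $F(S')\ge F(S)-\tfrac12\log\det\Gamma$ and then pass to the infimum. The key device is to factor $S'=TS$ with $T:=S'S^{-1}\in Sp(2n)$, so that conjugating $\Gamma\ge S'^TS'$ by $S^{-1}$ turns it into $T^TT\le D$. Writing $S=T^{-1}S'$ and invoking subadditivity (Lemma \ref{lem:gelfand}) together with $F(T^{-1})=F(T)$ for symplectic $T$ (whose singular values come in reciprocal pairs, so inversion merely permutes them), I obtain the reverse estimate $F(S)\le F(T^{-1})+F(S')=F(T)+F(S')$, that is, $F(S')\ge F(S)-F(T)$.

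It then remains to bound $F(T)\le\tfrac12\log\det\Gamma=\log\sqrt{\det\Gamma}$. Using that $T$ is symplectic I write $F(T)=\tfrac12\sum_{i=1}^n\log\lambda_i^{\downarrow}(T^TT)$, and from $T^TT\le D$ the Weyl monotonicity principle (\cite{bha96}, Theorem III.2.3) gives $\lambda_i^{\downarrow}(T^TT)\le\lambda_i^{\downarrow}(D)$ for each $i$. Multiplying the top $n$ eigenvalues and using that every eigenvalue of $D$ is at least one (so that reinstating the omitted bottom $n$ factors only enlarges the product) yields $\prod_{i=1}^n\lambda_i^{\downarrow}(T^TT)\le\prod_{i=1}^n\lambda_i^{\downarrow}(D)\le\det D=\det\Gamma$, whence $F(T)\le\tfrac12\log\det\Gamma$. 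Combining with the previous step and taking the infimum over feasible $S'$ gives $G(\Gamma)\ge F(S)-\log\sqrt{\det\Gamma}$. I expect the main obstacle to be precisely the lower bound: one is tempted to apply Weyl's inequalities directly to $S'^TS'\le\Gamma$, but these run in the wrong direction for bounding $F(S')$ from below, so the estimate must instead be channelled through the correction matrix $T$, where Lemma \ref{lem:gelfand} can be used in reverse. The two supporting facts, $F(T^{-1})=F(T)$ and the product-of-top-eigenvalues inequality, both hinge on the reciprocal pairing of the symplectic spectrum and should be stated carefully.
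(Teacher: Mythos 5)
Your proof is correct, but the lower bound is argued by a genuinely different route than the paper's. The paper simply specialises the already-established spectral lower bound of Proposition \ref{prop:specbounds} (equation (\ref{eqn:boundsproof}) with $k=n$), rewrites $-\tfrac12\sum_{i=n+1}^{2n}\log\lambda_i^{\downarrow}(\Gamma)$ as $F(\Gamma^{1/2})-\log\sqrt{\det\Gamma}$, and then uses Weyl monotonicity on $S^TS\le\Gamma$ to replace $F(\Gamma^{1/2})$ by $F(S)$ — a three-line corollary of what is already proved. You instead take an arbitrary feasible $S'$, introduce the symplectic ``correction'' $T=S'S^{-1}$, and combine the subadditivity of $F$ (Lemma \ref{lem:gelfand}) with the symmetry $F(T^{-1})=F(T)$ (valid because the symplectic singular spectrum is closed under reciprocals) and the estimate $F(T)\le\tfrac12\log\det\Gamma$ obtained from $T^TT\le D$ and $D\ge\id$. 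All the individual steps check out: the conjugation $D\ge T^TT$, the termwise Weyl bound on positive eigenvalues, the product bound $\prod_{i=1}^n\lambda_i^{\downarrow}(D)\le\det D=\det\Gamma$ (using $\det S=1$), and the passage to the infimum. What your approach buys is independence from Proposition \ref{prop:specbounds} and a more structural reading of the bound — the defect $\log\sqrt{\det\Gamma}$ appears explicitly as the maximal cost of the symplectic factor relating the Williamson matrix to any competitor; what it costs is length, since the paper gets the same statement essentially for free from its earlier spectral bounds.
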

\begin{proof}
Since $D\geq \id$ via $\Gamma\geq iJ$, the upper bound follows directly from the definition. Also, $F(S)\leq F(\Gamma^{1/2})$, which makes this bound trivially better than the spectral upper bound in equation (\ref{eqn:specbounds}). 

The lower bound follows from:
\begin{align*}
	G(\Gamma)&\stackrel{\mathclap{(\ref{eqn:boundsproof})}}{\geq} \frac{1}{2}\log\left(\prod_{i=n+1}^{2n} \lambda_i^{\downarrow}(\Gamma)^{-1}\right) = \frac{1}{2}\log \frac{\prod_{i=1}^n\lambda_i^{\downarrow}(\Gamma)}{\prod_{i=1}^{2n}\lambda_i^{\downarrow}(\Gamma)} \\
	&=F(\Gamma^{1/2})-\log(\det(\Gamma)^{1/2})\geq F((S^TS)^{1/2})-\log(\sqrt{\det(\Gamma)}) \\
	&= F(S)-\log(\sqrt{\det(\Gamma)})
\end{align*}
using Weyl's inequalities once again, implying that since $S^TS\leq \Gamma$, we also have $F(S)^2=F(S^TS)\leq F(\Gamma)$.
\end{proof}
The upper bound here can also be arbitrarily bad. One just has to consider $\Gamma:=S^T(N\cdot \id)S$ with $S^2=\diag(N-1,\ldots,N-1,(N-1)^{-1},\ldots,(N-1)^{-1})\in Sp(2n)$. Then $\Gamma\geq \id$, i.e. $G(\Gamma)=0$, but $F(S)\to \infty$ for $N\to\infty$.
\begin{prop}
Let $\Gamma\geq iJ$ be a covariance matrix. Then 
\begin{align}
	G(\Gamma)\geq \frac{1}{4}\inf\{\|\gamma_0\|_1 | \log \Gamma \geq \gamma_0, \gamma_0\in\pi(n)\} \label{eqn:sdpbound}
\end{align}
where $\pi(n)$ was defined in Proposition \ref{prop:liealg} as the Lie algebra of the positive semidefinite symplectic matrices.
This infimum can be computed efficiently as a semidefinite program.
\end{prop}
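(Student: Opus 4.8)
The plan is to exhibit, for every symplectic matrix $S$ that is feasible in the definition of $G(\Gamma)$, a matrix $\gamma_0\in\pi(n)$ feasible for the minimisation on the right-hand side of (\ref{eqn:sdpbound}) with the \emph{same} cost; passing to the infimum over the (strictly larger) feasible set on the right can then only decrease the value, which yields the inequality. The bridge between the two problems is the identity
\begin{align*}
	F(S)=\tfrac{1}{4}\|\log(S^TS)\|_1 \qquad \text{for all } S\in Sp(2n).
\end{align*}
I would prove this by recalling that the singular values of a symplectic matrix come in reciprocal pairs $s_i^{\downarrow}(S)$ and $1/s_i^{\downarrow}(S)$. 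Hence $S^TS$ has eigenvalues $\{(s_i^{\downarrow}(S))^{2},(s_i^{\downarrow}(S))^{-2}\}_{i=1}^n$, so the symmetric matrix $\log(S^TS)$ has eigenvalues $\{\pm 2\log s_i^{\downarrow}(S)\}_{i=1}^n$. Since the top $n$ singular values satisfy $s_i^{\downarrow}(S)\geq 1$, summing absolute values of eigenvalues gives $\|\log(S^TS)\|_1=4\sum_{i=1}^n\log s_i^{\downarrow}(S)=4F(S)$.

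The second ingredient is that $\log(S^TS)\in\pi(n)$. Here I would use that $M:=S^TS$ is positive definite and symplectic, so that $M^TJM=J$ together with $M=M^T$ gives $MJM=J$, hence $J^{-1}MJ=M^{-1}$. Applying the logarithm yields $J^{-1}(\log M)J=-\log M$, i.e. $J(\log M)+(\log M)J=0$, and since $\log M$ is symmetric this places it in the symmetric part of $\mathfrak{sp}(2n)$, which is precisely $\pi(n)$. Now for any $S$ with $\Gamma\geq S^TS$, the operator monotonicity of the logarithm on positive definite matrices gives $\log\Gamma\geq\log(S^TS)$, so $\gamma_0:=\log(S^TS)\in\pi(n)$ is feasible for the right-hand problem and satisfies $\tfrac14\|\gamma_0\|_1=F(S)$. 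Taking the infimum over all feasible $S$ and then enlarging the feasible set to \emph{all} admissible $\gamma_0$ establishes $G(\Gamma)\geq\tfrac14\inf\{\|\gamma_0\|_1\mid \log\Gamma\geq\gamma_0,\ \gamma_0\in\pi(n)\}$.

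For the semidefinite-programming claim I would note that $\log\Gamma$ is a fixed symmetric matrix once $\Gamma$ is given, that the membership $\gamma_0\in\pi(n)$ is a linear constraint cutting out a subspace, and that $\log\Gamma\geq\gamma_0$ is a linear matrix inequality in $\gamma_0$. The only nonlinear object is the objective $\|\gamma_0\|_1$; but for the symmetric matrix $\gamma_0$ this trace norm admits the standard epigraph reformulation as the minimum of $\tr(W)$ subject to $-W\leq\gamma_0\leq W$ (equivalently $W\pm\gamma_0\geq 0$), with $\tr(W)=\|\gamma_0\|_1$ at the optimum. Minimising a linear functional subject to linear matrix inequalities and linear equalities is a semidefinite program, which gives the efficiency statement.

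I expect the main subtlety to be conceptual rather than computational: the argument yields only an inequality and cannot be reversed, because $\exp$ is \emph{not} operator monotone. Concretely, a generic $\gamma_0\in\pi(n)$ with $\gamma_0\leq\log\Gamma$ exponentiates to some $\exp(\gamma_0)\in\Pi(n)$, but one cannot conclude $\exp(\gamma_0)\leq\Gamma$, so not every feasible point of the right-hand problem corresponds to a symplectic matrix admissible in $G(\Gamma)$; this slack is exactly what makes (\ref{eqn:sdpbound}) a genuine lower bound rather than an identity. A secondary point needing care is the verification that $\pi(n)$ really is the full symmetric part of $\mathfrak{sp}(2n)$, so that $\log(S^TS)$ lands in it, together with well-definedness of $\log M$; both follow from $M$ being positive definite.
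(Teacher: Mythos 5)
Your proof is correct and takes essentially the same route as the paper's: both arguments rest on the identity $F(S)=\tfrac14\|\log(S^TS)\|_1$ for symplectic $S$ (via the reciprocal pairing of singular values) combined with the operator monotonicity of the logarithm, followed by the standard trace-norm SDP reformulation. You additionally spell out the verification that $\log(S^TS)\in\pi(n)$, which the paper leaves implicit, and you use the equivalent epigraph form $-W\leq\gamma_0\leq W$, $\min\tr(W)$, in place of the paper's block-matrix characterisation of $\|\cdot\|_1$; these are cosmetic differences only.
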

\begin{proof}
Recall that the logarithm is operator monotone on positive definite matrices. Using this, we have:
\begin{align*}
	G(\Gamma)&= \log \inf\left\{ \prod_{i=1}^n \lambda_i^{\downarrow}(S^TS)^{1/2} \middle| \Gamma\geq S^TS\right\} \\
	&\geq \inf\left\{ \sum_{i=1}^n \log \lambda_i^{\downarrow}(\exp(\gamma_0))^{1/2}\middle|\log \Gamma\geq \gamma_0, \gamma_0\in\pi(n)\right\} \\
	&=\inf\left\{ \frac{1}{2} \sum_{i=1}^n \lambda_i^{\downarrow}(\gamma_0)\middle| \log\Gamma\geq \gamma_0, \gamma_0\in\pi(n)\right\} \\
	&=\inf\left\{ \frac{1}{4} \sum_{i=1}^{2n} s_i^{\downarrow}(\gamma_0)\middle| \log\Gamma\geq \gamma_0, \gamma_0\in\pi(n)\right\}
\end{align*}
The last step is valid, because the eigenvalues of matrices $\gamma_0\in\pi(n)$ come in pairs $\pm \lambda_i$. Since the sum of all the singular values is just the trace-norm, we are done.

It remains to see that this can be computed by a semidefinite program. First note that since the matrices $H\in \pi(n)$ are those symmetric matrices with $HJ+JH=0$, the constraints are already linear semidefinite matrix inequalities. It only remains to see that computing the trace norm is a semidefinite program. This is fairly standard \cite{rec10,van96}:
\begin{align*}
	\|\gamma_0\|_1=\min\left\{\frac{1}{2} \tr(A+B)\middle| \begin{pmatrix}{} A & \gamma_0 \\ \gamma_0 & B\end{pmatrix}\geq 0 \right\}
\end{align*}
which is clearly a semidefinite program.
\end{proof}
Numerics for small dimensions suggest that this bound is mostly smaller than the spectral lower bounds.

\section{An operational definition of the squeezing measure} \label{sec:operational}
Let us now prove that $G$ is a squeezing measure answering the question: Given a state, what is the minimal amount of single-mode squeezers needed to prepare it? In other words, it quantifies the amount of squeezing needed for the preparation of a state. Therefore, we need to specify the preparation procedure:

\subsection{Operations for state preparation and an operational measure for squeezing} \label{sub:operations}
In order to prepare a state, we need to start with a state that we can freely obtain. Obviously, such a state should not be squeezed. As many experiments are calibrated against the vacuum, starting with the vacuum ($\gamma=\id$) seems natural. Alternatively, if one is more interested in thermodynamics, the free states would be thermal states for some bath ($\gamma=(1/2+N)\id$ with photon number $N$, see e.g. \cite{oli12}). In any case, such states are not squeezed and fulfil $\gamma\geq \id$, hence we posit:
\begin{itemize}
	\item[(O0)] We can always draw $N$-mode states with $\gamma\in \mathbb{R}^{2N\times 2N}$ for any dimension $N$ from the vacuum $\gamma=\id$ or a bath fulfiling $\gamma\geq \id$.
\end{itemize}
Of course, we should be able to draw arbitrary ancillary modes of this system, too. Given a state $\rho$ with covariance matrix $\gamma$, this means that we can add modes $\gamma_{\mathrm{anc}}\geq \id$ and consider  $\gamma\oplus\gamma_{\mathrm{anc}}$ which should also be free:
\begin{itemize}
	\item[(O1)] We can always add ancillary modes from the vacuum $\gamma_{\mathrm{anc}}= \id$ or a bath $\gamma\geq \id$ and consider $\gamma\oplus\gamma_{\mathrm{anc}}$. 
\end{itemize}
A natural time evolution is not free of noise. Sometimes, it may even be beneficial to use naturally occurring noise as resource. As a noise model, let us only consider Gaussian noise, which acts on the covariance matrix $\gamma$ of our quantum state by addition of noise-covariance $\gamma_{\mathrm{noise}}\geq 0$ such that $\gamma\mapsto \gamma+\gamma_{\mathrm{noise}}$ \cite{lin00b}.
\begin{itemize}
	\item[(O2)] We can freely add noise with $\gamma_{\mathrm{noise}}\geq 0$ to our state, which is simply added to the covariance matrix of a state.
\end{itemize}
As with other squeezing measures, passive transformations should not change the squeezing measure, while single-mode-squeezers are not free. The effect of symplectic transformations on the covariance matrix has already been observed in equation (\ref{eqn:symptraf}). 
\begin{itemize}
	\item[(O3)] We can perform any beam splitter or phase shifter and in general any operation $S\in K(n)$, which translates to a map $\gamma\mapsto S^T\gamma S$ on covariance matrices of states.
	\item[(O4)] We can perform any single-mode squeezer $S=\diag(1,\ldots, 1,s,s^{-1},1\ldots,1)$ for some $s\in \R_+$.
\end{itemize}
Since we have the Weyl-system at our disposal, we can also consider its action on a quantum state. A Weyl-operator, as we have seen in equation (\ref{eqn:weylsystem}), is a translation in phase space. Direct computation shows that it does not affect the covariance matrix, but we want to include it in the set of allowed transformations nevertheless:
\begin{itemize}
	\item[(O5)] We can perform any Weyl-translation leaving the covariance matrix invariant.
\end{itemize}
Another operation which we want to consider is taking a convex combination of states. In an experiment, this can be done by creating ensembles of the states of the convex combination and creating another ensemble where the ratio of the different states is that of the convex combination. On the level of covariance matrices, we have the following lemma:
\begin{lem}
Let $\rho$ and $\rho^{\prime}$ be two states with displacement $d^{\rho}$ and $d^{\rho^{\prime}}$ and (centred) covariance matrices $\gamma^{\rho}$ and $\gamma^{\rho^{\prime}}$. For $\lambda\in (0,1)$, the covariance matrix of $\tilde{\rho}:=\lambda \rho+(1-\lambda)\rho^{\prime}$ is given by:
\begin{align*}
	\gamma^{\tilde{\rho}}=\lambda \gamma^{\rho}+(1-\lambda)\gamma^{\rho^{\prime}}+2\lambda(1-\lambda)(d^{\rho}-d^{\rho^{\prime}})(d^{\rho}-d^{\rho^{\prime}})^T
\end{align*}
\end{lem}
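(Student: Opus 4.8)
The plan is to exploit that the \emph{uncentred} second moments are linear in the state, whereas the centring of $\tilde\rho$ by its \emph{own} displacement is what produces the rank-one correction term. First I would record the behaviour of the first moments: by linearity of the trace,
\begin{align*}
	d^{\tilde\rho}_k=\tr(\tilde\rho R_k)=\lambda\,\tr(\rho R_k)+(1-\lambda)\,\tr(\rho^{\prime}R_k)=\lambda\, d^{\rho}_k+(1-\lambda)\, d^{\rho^{\prime}}_k,
\end{align*}
so $d^{\tilde\rho}=\lambda d^{\rho}+(1-\lambda)d^{\rho^{\prime}}$. Next I would introduce the uncentred second-moment matrix $M_{kl}(\rho):=\tr(\rho\{R_k,R_l\}_+)$, which is manifestly affine-linear in $\rho$ and hence satisfies $M_{kl}(\tilde\rho)=\lambda M_{kl}(\rho)+(1-\lambda)M_{kl}(\rho^{\prime})$.

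The key elementary identity is a centring formula. Expanding the anticommutator and using $\tr(\rho R_k)=d^{\rho}_k$, one computes
\begin{align*}
	\{R_k-d_k\id,R_l-d_l\id\}_+=\{R_k,R_l\}_+-2d_l R_k-2d_k R_l+2d_kd_l\,\id,
\end{align*}
so that, after taking $\tr(\rho\,\cdot)$ with $d=d^{\rho}$, the definition (\ref{eqn:gamma}) yields $\gamma^{\rho}_{kl}=M_{kl}(\rho)-2d^{\rho}_kd^{\rho}_l$. I would apply this identity three times: to $\rho$, to $\rho^{\prime}$, and to $\tilde\rho$ (centred by $d^{\tilde\rho}$), noting that the centring for $\tilde\rho$ must use $d^{\tilde\rho}$ and not the individual displacements.

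Combining these, I would substitute $M(\rho)=\gamma^{\rho}+2d^{\rho}(d^{\rho})^T$ and $M(\rho^{\prime})=\gamma^{\rho^{\prime}}+2d^{\rho^{\prime}}(d^{\rho^{\prime}})^T$ into $\gamma^{\tilde\rho}=M(\tilde\rho)-2d^{\tilde\rho}(d^{\tilde\rho})^T$, which gives
\begin{align*}
	\gamma^{\tilde\rho}=\lambda\gamma^{\rho}+(1-\lambda)\gamma^{\rho^{\prime}}+2\lambda\, d^{\rho}(d^{\rho})^T+2(1-\lambda)\, d^{\rho^{\prime}}(d^{\rho^{\prime}})^T-2\,d^{\tilde\rho}(d^{\tilde\rho})^T.
\end{align*}
The proof then closes with a short algebraic simplification of the displacement terms: inserting $d^{\tilde\rho}=\lambda d^{\rho}+(1-\lambda)d^{\rho^{\prime}}$ and collecting the coefficients of $d^{\rho}(d^{\rho})^T$, $d^{\rho^{\prime}}(d^{\rho^{\prime}})^T$ and the cross terms, the surviving prefactor is $2\lambda(1-\lambda)$ on each, assembling into $2\lambda(1-\lambda)(d^{\rho}-d^{\rho^{\prime}})(d^{\rho}-d^{\rho^{\prime}})^T$. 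There is no real obstacle here beyond bookkeeping; the only point that requires care is that the mixture is centred by its own mean, which is precisely what turns the naive linear combination $\lambda\gamma^{\rho}+(1-\lambda)\gamma^{\rho^{\prime}}$ into the classical variance-of-a-mixture correction.
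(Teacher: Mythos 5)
Your proof is correct, and it is the standard computation: the paper does not prove this lemma itself but defers to the reference \cite{wer01}, where the argument is exactly this decomposition into linear uncentred second moments plus the centring correction $\gamma^{\rho}=M(\rho)-2d^{\rho}(d^{\rho})^T$. Your algebra checks out, including the factor of $2$ coming from the paper's anticommutator convention in (\ref{eqn:gamma}), so this fills in precisely the computation the paper omits by citation.
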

A proof of this statement can be found in \cite{wer01} (in the proof of proposition 1). Note that for \emph{centralized} states with $d^{\rho}=0$ and $d^{\rho^{\prime}}=0$, a convex combination of states translates to a convex combination of covariance matrices. Since in particular, $2\lambda(1-\lambda)(d^{\rho}-d^{\rho^{\prime}})(d^{\rho}-d^{\rho^{\prime}})^T\geq 0$, any convex combination of $\rho$ and $\rho^{\prime}$ is on the level of covariance matrices equivalent to 
\begin{itemize}
	\item centring the states (no change in the covariance matrices),
	\item taking a convex combination of the states (resulting in a convex combination of covariance matrices),
	\item performing a Weyl translation to undo the centralization in the first step (no change in the covariance matrix).
	\item Adding noise $2\lambda(1-\lambda)(d^{\rho}-d^{\rho^{\prime}})(d^{\rho}-d^{\rho^{\prime}})^T\geq 0$.
\end{itemize}
This implies that the effect of any convex combination of states on the covariance matrix can equivalently be obtained from operations (O2) and (O5) and a convex combination of centred states:
\begin{itemize}
	\item[(O6)] Given two (WLOG centred) states with covariance matrices $\gamma_1$ and $\gamma_2$, we can always take their convex combination $p\gamma_1+(1-p)\gamma_2$ for any $p\in [0,1]$.
\end{itemize}
Finally, we should also allow discarding certain modes of the system (which corresponds to a partial trace on the level of the density matrix) and measuring the system. The most common type of measurement in continuous variable quantum information is \emph{homodyne detection}. Homodyne detection is the measurement of $Q$ or $P$ in one of the modes, which corresponds to the measurement of an infinitely squeezed pure state in Lemma \ref{lem:measure}. A broader class of measurements known as \emph{heterodyne detection} measures arbitrary coherent states \cite{wee12}. Let us focus our attention on the even broader class of projections onto Gaussian pure states.
\begin{lem} \label{lem:measure}
Let $\rho$ be an $(n+1)$-mode quantum state with covariance matrix $\gamma$ and $|\gamma_G,d\rangle\langle \gamma_G,d|$ be a pure single-mode Gaussian state with covariance matrix $\gamma_G\in \R^{2\times 2}$ and displacement $d$. Let 
\begin{align*}
	\gamma=\begin{pmatrix}{} A & C \\ C^T & B \end{pmatrix}, \qquad B\in \R^{2\times 2}
\end{align*}
then the selective measurement of $|\gamma_G,d\rangle$ in the last mode results in a change of the covariance matrix of $\rho$ according to:
\begin{align}
	\gamma^{\prime}=A-C(B-\gamma_G)^{MP}C^{T} \label{eqn:gaussianmeasure}
\end{align}
where $^{MP}$ denotes the Moore-Penrose pseudoinverse. \emph{Homodyne detection} corresponds to the case where $\gamma_G$ is an infinitely squeezed state.
\end{lem}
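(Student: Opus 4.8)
The plan is to compute the conditional covariance matrix resulting from a Gaussian projective measurement directly at the level of Wigner functions, exploiting the fact that products and partial integrals of Gaussians remain Gaussian. First I would write the Wigner function $W_\rho(x_A,x_B)$ of the $(n+1)$-mode state as a Gaussian with covariance $\gamma = \left(\begin{smallmatrix} A & C \\ C^T & B \end{smallmatrix}\right)$ in the quadrature variables $(x_A,x_B)$, where $x_A\in\R^{2n}$ labels the retained modes and $x_B\in\R^2$ the measured mode, and likewise write the measured pure state $|\gamma_G,d\rangle$ via its Wigner function $W_G(x_B)$, a Gaussian centred at $d$ with covariance $\gamma_G$. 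The selective (post-selected) measurement outcome produces an unnormalised conditional state on the $A$-modes whose Wigner function is the overlap integral $\int W_\rho(x_A,x_B)\,W_G(x_B)\,\mathrm{d}x_B$, since the Born probability for projecting onto a Gaussian pure state is the phase-space overlap of the two Wigner functions.

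The core computation is then a Gaussian integral over the $2$-dimensional block $x_B$. Writing the joint exponent in block form using the inverse covariance $\gamma^{-1}$ and completing the square in $x_B$, the marginal over $x_B$ of the product of the two Gaussians yields a Gaussian in $x_A$ whose covariance is obtained by a Schur-complement-type formula. The key algebraic identity to verify is that integrating out $x_B$ against a Gaussian of covariance $\gamma_G$ has the same effect on the $A$-block covariance as replacing the $B$-block by $B-\gamma_G$ and taking the Schur complement, giving
\begin{align*}
	\gamma' = A - C\,(B-\gamma_G)^{-1}\,C^T .
\end{align*}
To see this cleanly I would either convolve/multiply the characteristic (Weyl) functions rather than the Wigner functions, or carry out the $x_B$-integration directly; both route to the same block-inversion bookkeeping. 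The displacement $d$ of the projected state and the first-moment dependence on the measurement outcome do not affect the \emph{covariance} $\gamma'$, which is why $d$ and the resulting displacement of the conditional state are absent from \eqref{eqn:gaussianmeasure}.

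The main subtlety, and the reason the formula is stated with the Moore–Penrose pseudoinverse rather than the ordinary inverse, is the homodyne limit: homodyne detection corresponds to an \emph{infinitely} squeezed $\gamma_G$, for which one eigenvalue of $\gamma_G$ diverges and $B-\gamma_G$ becomes singular. I would therefore first establish \eqref{eqn:gaussianmeasure} with an ordinary inverse under the assumption that $B-\gamma_G$ is invertible, and then treat the singular case by a limiting argument: parametrise $\gamma_G$ by a squeezing parameter $s$, compute $A - C(B-\gamma_G(s))^{-1}C^T$, and show that as $s\to\infty$ the limit exists and equals $A - C(B-\gamma_G)^{MP}C^T$, where $\gamma_G$ now denotes the (rank-deficient, suitably understood) limiting object. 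The hard part will be handling this degenerate limit carefully — verifying that the divergent directions of $\gamma_G$ drop out so that only the pseudoinverse on the surviving subspace contributes, and confirming that the resulting $\gamma'$ still satisfies $\gamma'\geq iJ$ so that it is a legitimate covariance matrix. This can be checked using the positivity $\gamma\geq iJ$ of the input together with standard properties of Schur complements.
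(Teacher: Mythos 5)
Your plan is exactly the computation the paper delegates to its references: the paper offers no derivation of Lemma \ref{lem:measure} beyond the remark that it ``can most easily be seen on the level of Wigner functions'' together with a citation, and your proposal fills in precisely that route (phase-space overlap with the projected state, Gaussian integration over the measured mode, Schur-complement bookkeeping, and a limiting argument to justify the pseudoinverse for homodyne detection). The structural points are all right: the Born probability as a Wigner overlap, the irrelevance of the displacement $d$ for the second moments, and the need to treat the infinitely squeezed limit separately.

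There is, however, one concrete step that would fail as written: the Gaussian integral you describe does \emph{not} produce $B-\gamma_G$. Multiplying the joint Wigner function by a Gaussian of covariance $\gamma_G$ in the measured variables and integrating them out \emph{adds} the covariances in that block, giving $\gamma'=A-C(B+\gamma_G)^{-1}C^T$; this is the standard conditional-covariance formula, and it is the form the paper itself uses later (see the identity $\mathcal{M}(\gamma)=A-C(B+\diag(d,1/d))^{-1}C^T$ in the proof of Proposition \ref{prop:mainproof3}). The minus sign in equation (\ref{eqn:gaussianmeasure}) is a slip in the statement of the lemma; the homodyne limit (\ref{eqn:homodyne}) happens to be insensitive to the sign because the divergent direction of $\gamma_G$ drops out either way, which is presumably why it went unnoticed, but for finite squeezing the two formulas genuinely differ (heterodyning one arm of a two-mode squeezed state gives $\gamma'=\id$ with the plus sign and the non-positive matrix $-\id$ with the minus sign). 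By asserting that integrating out $x_B$ ``has the same effect as replacing the $B$-block by $B-\gamma_G$'' you have matched the typo rather than the computation; carrying out the completion of the square will force the $+$ sign, and you should also note that $B+\gamma_G>0$ is then automatically invertible, so the Moore--Penrose pseudoinverse is needed only in the homodyne limit, exactly as your last paragraph anticipates.
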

This can most easily be seen on the level of Wigner functions, as demonstrated in \cite{eis02b,gie02}. The generalisation to multiple modes is straightforward.

Since the covariance matrix of a Gaussian pure state is a symplectic matrix (cf. prop \ref{prop:puregaussian}), using the Euler decomposition we can implement a selective Gaussian measurement by
\begin{enumerate}
	\item a passive symplectic transformation $S\in K(n+1)$,
	\item a measurement in the Gaussian state $\diag(d,1/d)$ for some $d\in \R_+$ according to Lemma \ref{lem:measure}.
\end{enumerate}
A non-selective measurement (forgetting the information obtained from measurement) would then be a convex combination of such projected states. A measurement of a multi-mode state can be seen as successive measurements of single-mode states since the Gaussian states we measure are diagonal. 

For homodyne detection, since an infinitely squeezed single-mode state is given by the covariance matrix $\lim_{d\to\infty} \diag(1/d,d)$, we have 
\begin{align}
	\gamma^{\prime}=\lim\limits_{d\to\infty}\left( A-C(B-\diag(1/d,d))^{-1}C^T\right)=A-C(\pi B\pi)^{MP}C^{T} \label{eqn:homodyne}
\end{align}
where $\pi=\diag(1,0)$ is a projection and $^{MP}$ denotes the Moore-Penrose-pseudoinverse. 

It has been shown (see \cite{wee12} E.2 and E.3 as well as \cite{eis02b,gie02}) that any (partial or total) Gaussian measurement is a combination of passive transformations, discarding subsystems, projection onto Gaussian states and homodyne detection.

Therefore, we should also allow to discard part of the system, i.e. taking the partial trace. However, this can be expressed as a combination of operations (O1)-(O6) and homodyne detection:
\begin{lem} \label{lem:partialtrace} 
Given a covariance matrix $\gamma=\begin{pmatrix}{} A & C \\ C^T & B\end{pmatrix}$ a partial trace on the second system translates to a map $\gamma\mapsto A$. The partial trace can then be implemented by measurements and adding noise.
\end{lem}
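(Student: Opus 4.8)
The plan is to prove the two claims of the lemma in turn: first that tracing out the second subsystem acts on covariance matrices as $\gamma\mapsto A$, and then that this map can be realised through the admissible operations, concretely by a homodyne measurement of the second subsystem followed by the addition of classical noise via (O2). The first claim needs no Gaussianity: by the definition in equation (\ref{eqn:gamma}), whenever $R_k$ and $R_l$ both belong to the first subsystem, the entry $\gamma_{kl}=\tr(\rho\{R_k-d_k\id,R_l-d_l\id\}_+)$ depends on $\rho$ only through the reduced state $\tr_2\rho$. Hence the covariance matrix of the reduced state is exactly the upper-left block $A$, and this is all the first assertion requires.

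For the implementation I would invoke equation (\ref{eqn:homodyne}): a homodyne detection of one mode of the second subsystem sends the block of the first subsystem from $A$ to $A-C(\pi B\pi)^{MP}C^T$, where $\pi=\diag(1,0)$. The decisive point is that the subtracted matrix is positive semidefinite. Since $\gamma$ is a covariance matrix it is positive definite, so its principal block $B$ is positive definite, whence $\pi B\pi\geq 0$; its Moore--Penrose inverse is then positive semidefinite, and therefore $C(\pi B\pi)^{MP}C^T\geq 0$. Measuring the modes of the second subsystem successively, as permitted by the remark following Lemma \ref{lem:measure}, drives the first subsystem's covariance matrix to some $A-M$, and adding precisely the noise $M\geq 0$ through operation (O2) restores the block to $A$. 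This reproduces the partial trace using only a measurement and the addition of noise, exactly as claimed; as a consistency check, one may note that a \emph{non-selective} homodyne measurement already yields $A$ directly, the averaging over outcomes contributing precisely the compensating noise $C(\pi B\pi)^{MP}C^T$ through operation (O6).

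The routine ingredients are the block computation and the positivity of each Schur-complement-type term; the one point that needs slight care, and the main obstacle I anticipate, is the bookkeeping in the multimode case. Concretely, after successively measuring all modes of the second subsystem the first subsystem is left with some conditional covariance matrix $\gamma'_{\mathrm{final}}$, and I must argue that the total reduction $M:=A-\gamma'_{\mathrm{final}}$ is a single positive semidefinite matrix, so that one application of (O2) suffices. This follows because $\gamma'_{\mathrm{final}}$ is itself a legitimate covariance matrix produced from $\gamma$ by a chain of Gaussian measurements, and each step in the chain only subtracts a positive semidefinite matrix of the form above from the relevant block; hence $A\geq\gamma'_{\mathrm{final}}$ and $M\geq 0$, which is precisely the admissible noise to add back.
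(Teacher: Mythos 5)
Your proposal is correct and follows the same route as the paper: homodyne detection of the second subsystem yields $A-C(\pi B\pi)^{MP}C^T$ with the subtracted term positive semidefinite, and adding that term back as noise via (O2) recovers $A$. The extra care you take with the reduced-state claim and the multimode bookkeeping is sound but not essentially different from the paper's (much terser) argument.
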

\begin{proof}
When measuring the modes $B$, we note that since $C(\pi B\pi)^{MP}C^{T}\geq 0$ in equation (\ref{eqn:homodyne}), a partial trace is equivalent to first performing a homodyne detection on the $B$-modes of the system and then adding noise. 
\end{proof} 

Given the discussion above, Lemma \ref{lem:measure} and Lemma \ref{lem:partialtrace} put together imply: On the level of covariance matrices, in order to allow for general Gaussian measurements, it suffices to consider Gaussian measurements of the state $|\gamma_d,0\rangle\langle \gamma_d,0|$ with covariance matrix $\gamma_d=\diag(1/d,d)$ for $d\in \R_+\cup\{+\infty\}$. All Gaussian measurements are then just combinations of these special measurements and operations (O1)-(O6):
\begin{itemize}
	\item[(O7)] At any point, we can perform a selective measurement of the system corresponding to a projection into a finitely or infinitely squeezed state. Given a state with covariance matrix $\gamma=\begin{pmatrix}{} A & B \\ B^T & C\end{pmatrix}$, this translates to a map as in equation (\ref{eqn:gaussianmeasure}) for finitely and (\ref{eqn:homodyne}) for infinitely squeezed states.
\end{itemize}

Having defined all allowed operations for preparing our quantum state, we will now define an operational measure for squeezing:
\begin{dfn}
Let $\rho$ be a quantum state with covariance matrix $\gamma$. Consider arbitrary sequences 
\begin{align*}
	\vec{\gamma}_N:=\gamma_0\to\gamma_1\to\cdots \to \gamma_N
\end{align*}
where $\gamma_0$ fulfils (O0) and every arrow corresponds to an arbitrary operation (O1)-(O5) or (O7). We then define the set of all such sequences that end in $\gamma$ by 
\begin{align*}
	\mathfrak{O}^N(\gamma)&:=\{\gamma_N=\gamma|\vec{\gamma}_N\} \\
	\mathfrak{O}(\gamma)&:=\bigcup_{N\in\N} \mathfrak{O}^N(\gamma)
\end{align*}
Furthermore, for any $\vec{\gamma}_N$, let $\vec{s}=\{s_i\}_{i=1}^M$ be the sequence of the largest singular values of any single-mode squeezer (O4) implemented along the sequence (in particular, $M\leq N$). Then we can define the measure of squeezing $G^{op}$ via
\begin{align}
	G^{op}(\rho)\equiv G^{op}(\gamma):=\inf\left\{\sum_i \log s_i\middle|s_i\in\vec{s}, \vec{\gamma}\in \mathfrak{O}(\gamma)\right\}
\end{align}
\end{dfn}
The idea of the operational measure is the following: We are allowed to prepare the state using a sequence of the allowed operations (O1)-(O7) in any order we like such that the final state is the state we would like to prepare. Then the minimal squeezing cost is the minimal amount of single-mode-squeezers that we need for any such sequence. This amount of single-mode-squeezers is measured by the logarithm of the largest singular value of the symplectic matrix needed to implement it. In order to make it more readable (it would force us to consider trees instead of sequences and introduce even more notation), we excluded convex combinations (O6) in the definition of the measure here, although this does not change anything:
\begin{thm} \label{thm:operational}
Let $\rho$ be a quantum state, then $G^{op}(\rho)=G(\rho)$. Furthermore, given $\gamma,\gamma^{\prime}$, if we allow convex combinations $\lambda\gamma+(1-\lambda)\gamma^{\prime}$ as in (O6), and assume that the costs are added according to $\lambda G(\gamma)+(1-\lambda)G(\gamma^{\prime})$, then the value of $G^{op}$ does not change.
\end{thm}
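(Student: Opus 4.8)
The plan is to prove the identity by the two inequalities $G^{op}(\gamma)\le G(\gamma)$ and $G^{op}(\gamma)\ge G(\gamma)$, and then to verify separately that enlarging the operation set by (O6) does not change the value.

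For the upper bound $G^{op}\le G$, I would exhibit an explicit preparation procedure and use no measurements or ancillas at all. Fix a symplectic $S$ attaining the infimum in $G(\gamma)$, which exists by Proposition \ref{prop:reform}. Since $\gamma\ge S^TS$ is equivalent (by congruence with $S^{-1}$) to $S^{-T}\gamma S^{-1}\ge\id$, set $\gamma_0:=S^{-T}\gamma S^{-1}\ge\id$. Starting from the vacuum $\id$ (O0), add the noise $\gamma_0-\id\ge 0$ (O2) to reach $\gamma_0$, and then apply $S$ (O3/O4) through its Euler decomposition $S=KAK'$, which by Proposition \ref{prop:foverlinef} costs exactly $F(S)$ units of single-mode squeezing. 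The result is $S^T\gamma_0 S=\gamma$, so $G^{op}(\gamma)\le F(S)=G(\gamma)$.

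For the lower bound $G^{op}\ge G$, the idea is to show that $G$ is a \emph{squeezing monotone}: it does not increase under any cost-free operation, and it increases by at most the squeezer's cost under (O4). Most cases follow from results already established. Adding noise (O2) only enlarges the feasible set $\{S^TS\le\gamma\}$, so $G(\gamma+E)\le G(\gamma)$; passive transformations (O3) leave the singular values of $S$ and hence $F(S)$ invariant, and Weyl translations (O5) leave $\gamma$ invariant, so $G$ is unchanged; ancillas (O1) are handled by subadditivity (Proposition \ref{prop:superadd}) together with $G(\gamma_{\mathrm{anc}})=0$ for $\gamma_{\mathrm{anc}}\ge\id$; and for a squeezer $Z\in Z(n)$ one has $\gamma\ge S^TS\Rightarrow Z^T\gamma Z\ge(SZ)^T(SZ)$ with $F(SZ)\le F(S)+F(Z)$ by Lemma \ref{lem:gelfand}, where $F(Z)=\log s_1^\downarrow(Z)$ is precisely the cost charged in $G^{op}$. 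Granting these, an induction along a preparation sequence $\gamma_0\to\cdots\to\gamma_N=\gamma$, started from $G(\gamma_0)=0$, gives $G(\gamma)\le\sum_i\log s_i$ for every sequence, hence $G(\gamma)\le G^{op}(\gamma)$.

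The genuinely delicate step, and the main obstacle, is monotonicity under Gaussian measurement (O7): I must show $G(\mathcal{M}(\gamma))\le G(\gamma)$ for the map $\mathcal{M}$ of Lemma \ref{lem:measure}. My plan is to take an optimal $S$ with $\gamma\ge S^TS$ and argue in three steps. First, $\mathcal{M}$ is a generalized Schur-complement map and is operator monotone in $\gamma$, so $\mathcal{M}(\gamma)\ge\mathcal{M}(S^TS)$. Second, $S^TS$ is symplectic, hence the covariance matrix of a pure Gaussian state (Proposition \ref{prop:puregaussian}); projecting one mode of a pure state onto a pure Gaussian state again yields a pure Gaussian state, so $\mathcal{M}(S^TS)=\tilde S^T\tilde S$ for some symplectic $\tilde S$ on the remaining modes. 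Third, and this is the crux, one needs $F(\tilde S)\le F(S)$, i.e.\ that measuring a single mode cannot increase the total squeezing $F(S)=\tfrac14\|\log(S^TS)\|_1$ of a pure state; I expect to obtain this from the explicit form of $\mathcal{M}$ on symplectic matrices combined with an eigenvalue interlacing/majorization estimate, reducing the finitely- and infinitely-squeezed cases to the homodyne formula (\ref{eqn:homodyne}). Chaining the three steps yields $G(\mathcal{M}(\gamma))\le F(\tilde S)\le F(S)=G(\gamma)$, and discarding modes (Lemma \ref{lem:partialtrace}) then needs no separate treatment. Finally, for the ``furthermore'', allowing (O6) with the cost rule $\lambda G(\gamma)+(1-\lambda)G(\gamma')$ adds exactly one more monotone step: convexity of $G$ (Theorem \ref{thm:convex}) gives $G(\lambda\gamma+(1-\lambda)\gamma')\le\lambda G(\gamma)+(1-\lambda)G(\gamma')$, so the induction still bounds $G$ by the accumulated cost and the lower bound persists, while the upper bound is unaffected since an optimal sequence need not use (O6); hence $G^{op}=G$ continues to hold.
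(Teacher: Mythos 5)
Your proposal is correct, and it reaches the result by a genuinely different organization of the argument even though the technical core coincides with the paper's. The paper first proves a rearrangement lemma (Lemma \ref{lem:rearrange}) showing that any preparation sequence can be commuted into the normal form ``ancillas, then noise, then one combined symplectic, then measurements'' without increasing the cost, and then analyses only that normal form (Lemmas \ref{lem:mainproof1}--\ref{lem:mainproof2}, Proposition \ref{prop:mainproof3}); you instead run a resource-theoretic monotonicity induction, bounding $G(\gamma_{i+1})\le G(\gamma_i)+c_i$ step by step. Your per-operation bounds are all sound and non-circular (in particular, the subadditivity half of Proposition \ref{prop:superadd} that you invoke for ancillas is proved in the paper independently of Theorem \ref{thm:operational}), and your treatment of the hard case --- measurements --- uses exactly the paper's two ingredients: operator monotonicity of $\mathcal{M}$ via Schur complements, and the product-of-eigenvalues bound $F(\mathcal{M}(\cdot)^{1/2})\le F((\cdot)^{1/2})$ via $\mathcal{M}(\hat\gamma)\le\hat A$ plus Cauchy interlacing, counting that at least $m$ (resp.\ $n$) eigenvalues exceed $1$. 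One small remark: your intermediate claim that $\mathcal{M}(S^TS)$ is again symplectic (purity preservation under Gaussian projections) is true but unnecessary --- the paper sidesteps it by using reformulation (\ref{eqn:reform2}), $G(\gamma)=\inf\{F(\Gamma_0^{1/2})\mid \gamma\ge\Gamma_0\ge i\sigma\}$, so that $\mathcal{M}(\gamma)\ge\mathcal{M}(S^TS)\ge i\sigma$ already suffices without identifying $\mathcal{M}(S^TS)$ as $\tilde S^T\tilde S$; adopting that shortcut would spare you a limit argument for the homodyne case. What each route buys: yours avoids the lengthy case analysis of Lemma \ref{lem:rearrange}; the paper's normal form is reused elsewhere (for the resource-measure theorem in Section \ref{sec:resourcemeasure}, and to derive equation (\ref{eqn:post}), which completes the lower bound in Proposition \ref{prop:superadd}), so if you follow your route those byproducts would need separate arguments. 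The handling of (O6) by convexity is identical in both.
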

Since we consider many different operations, the proof is rather lengthy, where the main difficulties will be in showing that measurements do not squeeze. In order to increase readability, the proof will be split into several lemmata.

\subsection{Proof of the main theorem} \label{sub:proof}
\begin{lem} \label{lem:rearrange}
Let $\gamma\in \R^{2n\times 2n}$ be a covariance matrix, $\gamma_0\geq \id$, let $N\in \N$ and
\begin{align} 
	\gamma_0 \to \gamma_1\to \cdots \to \gamma_N=\gamma \label{eqn:chain}
\end{align}
be any sequence of actions (O1)-(O5) or (O7). If we denote the cost (sum of the logarithm of the largest singular values of any symplectic matrix involved) of this sequence by $c$, then one can replace this sequence by:
\begin{align}
\begin{split}
	\gamma_0 &\stackrel{(O1)}{\to} \gamma_0\oplus \gamma_{\mathrm{anc}}\stackrel{(O2)}{\to} \gamma_0\oplus \gamma_{\mathrm{anc}}+\gamma_{\mathrm{noise}} \stackrel{(O3), (O4)}{\to} S^T(\gamma_0\oplus \gamma_{\mathrm{anc}}+\gamma_{\mathrm{noise}})S \\ &\stackrel{(O7)}{\to} \mathcal{M}(S^T(\gamma_0\oplus \gamma_{\mathrm{anc}}+\gamma_{\mathrm{noise}})S) \label{eqn:normalform}
\end{split}
\end{align}
with $\gamma_{\mathrm{anc}}\geq \id$, $\gamma_{\mathrm{noise}}\geq 0$, $S\in Sp(2n)$ and $\mathcal{M}$ a partial Gaussian measurement of type specified in (O7). For this sequence, $c\geq F(S)$.
\end{lem}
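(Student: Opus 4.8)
The plan is to prove the statement in two stages: first a purely structural rearrangement, showing that any admissible sequence produces the same final covariance matrix as one in the normal form \eqref{eqn:normalform}, and second a cost estimate, showing $c \geq F(S)$ for the symplectic matrix $S$ appearing in that normal form. From the outset I would discard the Weyl translations (O5), since by construction they leave the covariance matrix unchanged and carry no cost, so they are irrelevant at the level of covariance matrices.

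For the structural part the idea is to establish a small set of commutation moves between the operation types and then apply them by induction on the length $N$ of the sequence, pushing each operation into its designated slot (ancillas first, then noise, then the symplectic, then the measurement). The moves I would verify are: (i) an ancilla addition (O1) commutes to the left past a noise addition, a symplectic, and a measurement, where passing a symplectic $S$ replaces it by $S \oplus \id$ and passing a noise term $\gamma_{\mathrm{noise}}$ replaces it by $\gamma_{\mathrm{noise}} \oplus 0$; (ii) a noise addition (O2) commutes to the left past a symplectic $S$ via the identity $S^T \gamma S + E = S^T(\gamma + S^{-T} E S^{-1}) S$, with $S^{-T} E S^{-1} \geq 0$ because $S^{-1}$ is again symplectic; and (iii) a symplectic (O3)/(O4) commutes to the left past a measurement. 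Accumulating along the induction, the ancilla additions combine under $\oplus$ into a single $\gamma_{\mathrm{anc}} \geq \id$, the noise additions combine under $+$ into a single $\gamma_{\mathrm{noise}} \geq 0$, the symplectics combine under multiplication into a single $S \in Sp(2n)$, and the measurements combine into one partial Gaussian measurement $\mathcal{M}$, using that a multi-mode Gaussian measurement is a succession of single-mode ones.

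The cost estimate then follows from subadditivity of $F$. Since passive operations (O3) have all singular values equal to one and hence contribute nothing, the cost is $c = \sum_i \log s_1^{\downarrow}(A_i)$ summed over the single-mode squeezers $A_i$ used in (O4), i.e. $c = \sum_i F(A_i)$. In the normal form $S$ is the ordered product of these squeezers together with the passive factors, each extended by identity blocks to act on the full space; since extending a symplectic by an identity block appends only singular values equal to one and therefore leaves $F$ invariant, Lemma \ref{lem:gelfand} gives $F(S) \leq \sum_i F(A_i \oplus \id) = \sum_i F(A_i) = c$.

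The main obstacle is move (iii), commuting a symplectic past a measurement, together with the bookkeeping caused by measurements changing the number of modes. Here I would invoke the measurement formula of Lemma \ref{lem:measure}: writing the pre-measurement matrix as $\begin{pmatrix} A & C \\ C^T & B\end{pmatrix}$ with $B$ the measured mode and conjugating by $S \oplus \id$ (the identity acting on the measured mode) leaves the block $B$, and hence the pseudoinverse $(B-\gamma_G)^{MP}$, untouched. The resulting post-measurement matrix is then
\[ S^T A S - (S^T C)(B-\gamma_G)^{MP}(C^T S) = S^T\bigl(A - C(B-\gamma_G)^{MP}C^T\bigr)S, \]
whose right-hand side is precisely $S$ applied to the measurement outcome $\gamma'$ of Lemma \ref{lem:measure}; thus measuring and then applying $S$ equals applying $S \oplus \id$ and then measuring. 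An analogous computation shows that a noise addition commutes past a measurement (extended by zero on the measured mode), while an ancilla, entering as an uncorrelated direct summand, is inert under any measurement of the other modes. The only remaining care is in tracking mode indices as modes are created and destroyed, which is routine bookkeeping and does not affect the identities above.
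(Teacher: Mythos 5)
Your proposal is correct and follows essentially the same route as the paper: the same commutation moves (noise past a symplectic via $S^{-T}\gamma_{\mathrm{noise}}S^{-1}$, ancillas and symplectics extended by identity blocks past measurements, the Schur-type block computation showing $\mathcal{M}((S\oplus\id)^T\gamma(S\oplus\id))=S^T\mathcal{M}(\gamma)S$), the same normal-form rearrangement, and the same cost estimate via Lemma \ref{lem:gelfand} together with the invariance of $F$ under appending identity blocks.
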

\begin{proof}
We prove the proposition by proving that given any chain $\gamma_0 \to \gamma_1\to \cdots \to \gamma_N=\gamma$ as in (\ref{eqn:chain}), we can interchange all operations and obtain a chain as in equation (\ref{eqn:normalform}). For readability, we will not always specify the size of the matrices and we will assume that $\gamma\geq i\sigma$, $\gamma_{\mathrm{anc}}\geq \id$, $\gamma_{\mathrm{noise}}\geq 0$, and $S$ a symplectic matrix, whenever the symbols arise:

\begin{enumerate}
\item We can combine any sequence $\gamma_i\to\gamma_{i+1}\to\cdots \to \gamma_{i+m}$ for some $m\in\N$ where each of the arrows corresponds to a symplectic transformation $S_j$, $j=1,\ldots,m$ as in (O3) or (O4), into a single symplectic matrix $S\in Sp(2n)$ such that $\gamma_{i+m}=S^T\gamma_iS$. Furthermore Lemma \ref{lem:gelfand} implies $F(S)\leq \sum_i s^{\downarrow}_1(S_i)$, hence this recombination of steps only lowers the amount of squeezing.

\item Any sequence $\gamma\to S^T\gamma S\to S^T\gamma S+\gamma_{\mathrm{noise}}$ can be converted into a sequence $\gamma \to S^T(\gamma +\tilde{\gamma}_{\mathrm{noise}})S$ with the same $S$ and hence the same costs by setting $\tilde{\gamma}_{\mathrm{noise}}:=S^{-T}\gamma_{\mathrm{noise}}S^{-1}\geq 0$. 

\item Any sequence $\gamma\to S^T\gamma S\to S^T\gamma S\oplus \gamma_{\mathrm{anc}}$ can be converted into a sequence $\gamma\to \gamma\oplus \gamma_{\mathrm{anc}}\to \tilde{S}^T(\gamma\oplus \gamma_{\mathrm{anc}})\tilde{S}$ by setting $\tilde{S}=S\oplus \id$ with $\id$ of the same dimension as $\gamma_{\mathrm{anc}}$. Since we only add the identity, we have $F(\tilde{S})=\sum_i \log s^{\downarrow}_i(\tilde{S})=F(S)$ and the costs do not increase.

\item Any sequence $\gamma\to \gamma +\gamma_{\mathrm{noise}}\to (\gamma +\gamma_{\mathrm{noise}})\oplus \gamma_{\mathrm{anc}}$ can be converted into a sequence $\gamma\to \gamma \oplus\gamma_{\mathrm{anc}}\to \gamma\oplus \gamma_{\mathrm{anc}} +\tilde{\gamma}_{\mathrm{noise}}$ by setting $\tilde{\gamma}_{\mathrm{noise}}=\gamma_{\mathrm{noise}}\oplus 0\geq 0$, which is again a valid noise matrix. As no operation of type (O4) is involved, the squeezing costs do not change.
\end{enumerate}
In a next step we consider measurements. We will only consider homodyne detection, since the proof is exactly the same for arbitrary Gaussian measurements of type (O7). Given a covariance matrix $\gamma$, we assume a decomposition 
\begin{align*}
	\gamma=\begin{pmatrix} A & C \\ C^T & B\end{pmatrix}; \qquad \mathcal{M}(\gamma)=A-C(\pi B\pi)^{MP}C^T
\end{align*}
as in the definition of (O7) with $\pi=\diag(1,0)$. 

\begin{enumerate}
\setcounter{enumi}{4}
\item Any sequence $\gamma\to \mathcal{M}(\gamma)\to S^T\mathcal{M}(\gamma) S$ can be converted into a sequence $\gamma\to \tilde{S}^T\gamma \tilde{S}\to \mathcal{M}(\tilde{S}^T\gamma \tilde{S})$ by setting $\tilde{S}=S\oplus \id_2$. To see this, write $S^T\mathcal{M}(\gamma) S=S^TAS-S^TC(\pi B\pi)^{MP}C^TS$ and 
\begin{align*}
	\mathcal{M}\left(\begin{pmatrix}{} S & 0 \\ 0 & \id\end{pmatrix}^T\begin{pmatrix} A & C \\ C^T & B\end{pmatrix}\begin{pmatrix}{} S & 0 \\ 0 & \id\end{pmatrix}\right)&=\mathcal{M}\left(\begin{pmatrix} S^TAS & S^TC \\ C^TS & B\end{pmatrix}\right)\\ &=S^TAS-S^TC(\pi B\pi)^{MP}C^TS
\end{align*}
hence the final covariance matrices are the same. By the same reasoning as in 3., the costs are equivalent.

\item Any sequence $\gamma\to \mathcal{M}(\gamma)\to \mathcal{M}(\gamma)+\gamma_{\mathrm{noise}}$ can be converted into a sequence $\gamma\to \gamma+\tilde{\gamma}_{\mathrm{noise}}\to \mathcal{M}(\gamma+\tilde{\gamma}_{\mathrm{noise}})$ by setting $\tilde{\gamma}_{\mathrm{noise}}=\gamma_{\mathrm{noise}}\oplus 0$, with $0$ on the last mode being measured. Since no symplectic matrices are involved, the costs are equivalent.

\item Any sequence $\gamma\to \mathcal{M}(\gamma)\to \mathcal{M}(\gamma)\oplus \gamma_{\mathrm{anc}}$ can be changed into a sequence $\gamma\to \gamma\oplus\gamma_{\mathrm{anc}} \to \tilde{\mathcal{M}}(\gamma\oplus\gamma_{\mathrm{anc}})$, where the measurement $\tilde{\mathcal{M}}$ measures the last mode of $\gamma$, i.e.
\begin{align*}
	\tilde{M}\left( \begin{pmatrix}{} A & C & 0 \\ C^T & B & 0 \\ 0 & 0 & \gamma_{\mathrm{anc}}\end{pmatrix}\right)= (A\oplus \gamma_{\mathrm{anc}})-(C\oplus 0)(\pi B \pi)^{MP}(C\oplus 0)^T
\end{align*}
Clearly, the resulting covariance matrices of the two sequences are the same and the costs are equivalent.
\end{enumerate}

We can now easily prove the lemma. Let $\gamma_0\to\ldots \to \gamma_n$ be an arbitrary sequence with operations of type (O1)-(O5) or (O7). We can first move all measurements to the right of the sequence, i.e. we first perform all operations of type (O1)-(O5) and then all measurements. This is done using the observations above: the seventh for (O1), the sixth for (O2) and the fifth for (O3) and (O4). Since (O5) does not change the covariance matrix at all, we can completely neglect them. We have also seen that the resulting sequence has the same squeezing costs. Note also that this step is similar to the quantum circuit idea to ``perform all measurements last'' (cf. \cite{nie00}, chapter 4).

Then, we can move all symplectic operations to the right (perform all operations (O1)-(O2) ahead of all operations (O3)-(O4)) using the second observation for (O1) and the third for (O2), which does not change the squeezing costs, either.

Using the forth observation, we can switch all operations (O1) to the beginning of the chain, which also does not change the squeezing costs and using the first observation we can combine all operations (O3)-(O4) into one application of a symplectic matrix $S$. This might reduce the squeezing cost. 

All in all, we obtain a new sequence as in equation (\ref{eqn:normalform}) with at most the costs of the sequence $\gamma_1\to \cdots \to \gamma_m$ we started with.
\end{proof}

We can now slowly work towards Theorem \ref{thm:operational}:
\begin{lem} \label{lem:mainproof1}
Let $\gamma\in\R^{2n\times 2n}$ be a covariance matrix, then
\begin{align}
	G(\gamma)=\inf\{ F(S)| \gamma=S^T(\gamma_0\oplus \gamma_{\mathrm{anc}}+\gamma_{\mathrm{noise}})S,~S\in Sp(2n),~\gamma_0\oplus\gamma_{\mathrm{anc}}\geq \id,~\gamma_{\mathrm{noise}}\geq 0\}
\end{align}
\end{lem}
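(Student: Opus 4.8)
The plan is to show that the right-hand side, which I denote $\tilde G(\gamma)$, ranges over exactly the same feasible symplectic matrices $S$ (with the same objective value $F(S)$) as the defining infimum for $G(\gamma)$ in equation (\ref{eqn:defG}). The whole argument rests on two elementary observations about the Loewner order, so I expect no serious obstacle beyond careful bookkeeping.

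First I would collapse the ``initial state plus ancilla plus noise'' term into a single matrix constraint. Since $\gamma_0\oplus\gamma_{\mathrm{anc}}\geq\id$ and $\gamma_{\mathrm{noise}}\geq 0$, their sum is always $\geq\id$; conversely, any $M\geq\id$ is realised by taking the block-diagonal term to be $\id$ itself (for instance $\gamma_0=\id$ with no ancilla) and absorbing the entire excess into $\gamma_{\mathrm{noise}}:=M-\id\geq 0$. Hence
\begin{align*}
	\{\gamma_0\oplus\gamma_{\mathrm{anc}}+\gamma_{\mathrm{noise}}\mid \gamma_0\oplus\gamma_{\mathrm{anc}}\geq\id,\ \gamma_{\mathrm{noise}}\geq 0\}=\{M\in\R^{2n\times 2n}\mid M\geq\id\},
\end{align*}
so that $\tilde G(\gamma)=\inf\{F(S)\mid \gamma=S^TMS,\ M\geq\id,\ S\in Sp(2n)\}$. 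Note that no enlargement of the system occurs: since $\gamma\in\R^{2n\times 2n}$ and $S\in Sp(2n)$, the term $\gamma_0\oplus\gamma_{\mathrm{anc}}$ is itself $2n\times 2n$, so the direct-sum structure is merely a convenient labelling and imposes no restriction.

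Next I would use the invertibility of $S$ to pass between the equality form and the inequality form of the constraint. For a fixed $S\in Sp(2n)$, the existence of some $M\geq\id$ with $\gamma=S^TMS$ is equivalent to $\gamma\geq S^TS$: given such an $M$, congruence gives $\gamma-S^TS=S^T(M-\id)S\geq 0$; conversely, if $\gamma\geq S^TS$ then $M:=S^{-T}\gamma S^{-1}$ satisfies $\gamma=S^TMS$ and $M-\id=S^{-T}(\gamma-S^TS)S^{-1}\geq 0$, where I use that congruence by the invertible matrix $S^{-1}$ preserves positive semidefiniteness. Thus the set of feasible $S$ coincides for $\tilde G(\gamma)$ and for the definition (\ref{eqn:defG}) of $G(\gamma)$, and on this common set the objective is the same function $F(S)$; taking infima yields $G(\gamma)=\tilde G(\gamma)$, as claimed.

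The only point requiring mild care is the first step, namely checking that the block-diagonal shape of $\gamma_0\oplus\gamma_{\mathrm{anc}}$ is not an obstruction to reaching an arbitrary $M\geq\id$; it is not, precisely because the whole difference $M-\id$ may be dumped into the unconstrained noise term $\gamma_{\mathrm{noise}}$. Everything else is a direct application of the fact that congruence by an invertible matrix is an order isomorphism for the Loewner order, so I do not anticipate any genuine difficulty.
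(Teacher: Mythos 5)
Your argument is correct and is essentially the paper's own proof: both reduce the constraint to the existence of a single matrix $M=\gamma_0\oplus\gamma_{\mathrm{anc}}+\gamma_{\mathrm{noise}}\geq\id$ with $\gamma=S^TMS$, and then use that congruence by the invertible $S$ turns this into $\gamma\geq S^TS$ (with $M:=S^{-T}\gamma S^{-1}$ in the converse direction). The only cosmetic difference is that the paper additionally remarks, via Williamson's theorem, that the feasible set is nonempty.
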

\begin{proof}
First note that for any $\gamma\geq i\sigma$, we can find $S\in Sp(2n)$, $\gamma_0\in \R^{2n\times 2n}$ with $\gamma_0\geq \id$ and $\gamma_{\mathrm{noise}}\in\R^{2n\times 2n}$ with $\gamma_{\mathrm{noise}}\geq 0$ such that $\gamma=S^T(\gamma_0+\gamma_{\mathrm{noise}})S$ by using Williamson's Theorem, hence the feasible set is never empty. The lemma is immediate by observing that for any $\gamma=S^T(\gamma_0\oplus \gamma_{\mathrm{anc}}+\gamma_{\mathrm{noise}})S$ since $(\gamma_0\oplus \gamma_{\mathrm{anc}}+\gamma_{\mathrm{noise}})\geq \id$ we have $\gamma\geq S^TS$ and conversely, for any $\gamma\geq S^TS$, defining $\gamma_0:=S^{-T}\gamma S^{-1}\geq \id$, we have $\gamma=S^T\gamma_0S$. 
\end{proof}

As an intermediate step towards the Theorem, define:
\begin{dfn}
For $\gamma\in\R^{2n\times 2n}$ a covariance matrix, define 
\begin{align}
\begin{split}
	\tilde{G}^{\mathrm{op}}(\gamma):=\inf&\left\{F(S)|\gamma=\mathcal{M}(S^T(\gamma_0\oplus \gamma_{\mathrm{anc}}+\gamma_{\mathrm{noise}})S),~S\in Sp(2n),\right. \\
	&\left.\qquad \quad \gamma_0\oplus\gamma_{\mathrm{anc}}\geq \id_{2n},~\gamma_{\mathrm{noise}}\geq 0,\mathcal{M}~\mathrm{measurement}\right\}
\end{split}
\end{align}
\end{dfn}

Then we have:
\begin{lem} \label{lem:mainproof2}
For $\gamma\in\R^{2n\times 2n}$ a covariance matrix, we have
\begin{align}
	\tilde{G}^{\mathrm{op}}(\gamma)=\inf\{F(\hat{\gamma}^{1/2})|\gamma=\mathcal{M}(\tilde
{\gamma}),\tilde{\gamma}\geq \hat{\gamma}\geq i\sigma,~\mathcal{M}~\mathrm{
measurement}\}
\end{align}
\end{lem}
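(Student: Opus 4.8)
The plan is to observe that the measurement $\mathcal{M}$ plays an entirely passive role in both infima: it is merely carried along, so that conditioning on the \emph{pre-measurement} covariance matrix reduces the claim to the reformulation of $G$ already established in Proposition \ref{prop:reform} and Lemma \ref{lem:mainproof1}. Concretely, I would organise both sides as a double infimum, first over the (arbitrary, higher-dimensional) pre-measurement matrix $\tilde{\gamma}$ and the measurement $\mathcal{M}$ with $\mathcal{M}(\tilde{\gamma})=\gamma$, and then over the remaining data. For fixed $\tilde{\gamma}$ and $\mathcal{M}$, the inner problem on the left is $\inf\{F(S)\mid \tilde{\gamma}=S^T(\gamma_0\oplus\gamma_{\mathrm{anc}}+\gamma_{\mathrm{noise}})S\}$, which by the equivalence argument of Lemma \ref{lem:mainproof1} is exactly $\inf\{F(S)\mid\tilde{\gamma}\geq S^TS\}$, and this in turn equals $\inf\{F(\hat{\gamma}^{1/2})\mid\tilde{\gamma}\geq\hat{\gamma}\geq i\sigma\}$ by reformulation (\ref{eqn:reform1})$=$(\ref{eqn:reform2}) applied to $\tilde{\gamma}$ in place of $\Gamma$. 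Taking the outer infimum over $\tilde{\gamma}$ and $\mathcal{M}$ then yields the statement.

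To make the two feasibility translations explicit, for the direction $\tilde{G}^{\mathrm{op}}(\gamma)\geq$ (right-hand side) I take any feasible tuple $(S,\gamma_0,\gamma_{\mathrm{anc}},\gamma_{\mathrm{noise}},\mathcal{M})$ and set $\tilde{\gamma}:=S^T(\gamma_0\oplus\gamma_{\mathrm{anc}}+\gamma_{\mathrm{noise}})S$ and $\hat{\gamma}:=S^TS$. Since $\gamma_0\oplus\gamma_{\mathrm{anc}}+\gamma_{\mathrm{noise}}\geq\id$ one gets $\tilde{\gamma}\geq\hat{\gamma}$, while $\hat{\gamma}=S^TS$ is symplectic and positive definite, hence $\hat{\gamma}\geq i\sigma$ by Corollary \ref{cor:symplecticspec}. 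Because $(S^TS)^{1/2}$ is positive its singular values coincide with those of $S$, so $F(\hat{\gamma}^{1/2})=F(S)$, and $(\tilde{\gamma},\hat{\gamma},\mathcal{M})$ is feasible on the right with identical value.

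For the converse $\tilde{G}^{\mathrm{op}}(\gamma)\leq$ (right-hand side) I start from $\tilde{\gamma}\geq\hat{\gamma}\geq i\sigma$ with $\mathcal{M}(\tilde{\gamma})=\gamma$ and apply Williamson's Theorem to $\hat{\gamma}$, writing $\hat{\gamma}=S^T\tilde{D}S$ with $\tilde{D}\geq\id$ by Corollary \ref{cor:symplecticspec}; then $S^TS\leq\hat{\gamma}\leq\tilde{\gamma}$. Setting $\gamma_0:=S^{-T}\tilde{\gamma}S^{-1}\geq\tilde{D}\geq\id$ (with a trivial ancilla) and $\gamma_{\mathrm{noise}}:=0$ makes $(S,\gamma_0,\mathcal{M})$ feasible for $\tilde{G}^{\mathrm{op}}$, and exactly as in the proof of Proposition \ref{prop:reform} the Weyl monotonicity principle applied to $S^TS\leq\hat{\gamma}$ gives $F(S)=F((S^TS)^{1/2})\leq F(\hat{\gamma}^{1/2})$, so this point costs no more than $F(\hat{\gamma}^{1/2})$.

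I do not expect a genuine obstacle here, since the argument is essentially the equivalence (\ref{eqn:reform1})$=$(\ref{eqn:reform2}) transported inside the measurement. The two points that require care are the bookkeeping of dimensions (the matrices $S$ and $\gamma_0\oplus\gamma_{\mathrm{anc}}$ live on the larger pre-measurement space, with $\gamma$ recovered only after $\mathcal{M}$) and the fact that the $\hat{\gamma}$ appearing on the right need not itself be symplectic; this is precisely why Williamson's Theorem together with Weyl monotonicity is needed to pass from a general $\hat{\gamma}$ to a genuine $S^TS\leq\hat{\gamma}$ without increasing $F$.
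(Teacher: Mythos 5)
Your proposal is correct and follows essentially the same route as the paper: condition on the pre-measurement matrix $\tilde{\gamma}$, identify the inner infimum with $G(\tilde{\gamma})$ via Lemma \ref{lem:mainproof1}, and then invoke the equivalence of reformulations (\ref{eqn:reform1}) and (\ref{eqn:reform2}) from Proposition \ref{prop:reform}. The explicit feasibility translations you give (in particular $\hat{\gamma}:=S^TS$ in one direction and Williamson plus Weyl monotonicity in the other) are exactly the content the paper compresses into its chain of equalities.
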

\begin{proof}
This follows from Lemma \ref{lem:mainproof1}:
\begin{align}
	\tilde{G}^{\mathrm{op}}(\gamma)&=\inf\{F(S)|\gamma=\mathcal{M}(S^T(\gamma_0\oplus \gamma_{\mathrm{anc}}+\gamma_{\mathrm{noise}})S)\} \nonumber \\
	&=\inf\{F(S)|\gamma=\mathcal{M}(\tilde{\gamma}),\tilde{\gamma}=S^T(\gamma_0 \oplus \gamma_{\mathrm{anc}}+\gamma_{\mathrm{noise}})S\geq i\sigma \} \nonumber \\
	&\stackrel{\mathclap{\mathrm{Lemma}~\ref{lem:mainproof1}}}{=}\quad~ \inf\{G(\tilde{\gamma})|\gamma=\mathcal{M}(\tilde{\gamma}), \tilde{\gamma}\geq i \sigma\} \label{eqn:postpost} \\
	&\stackrel{\mathclap{\mathrm{Prop.}~\ref{prop:reform}}}{=} \quad~ \inf\{F(
\hat{\gamma}^{1/2})|\gamma=\mathcal{M}(\tilde{\gamma}), \tilde{\gamma}\geq\hat
{\gamma}\geq i\sigma\} \nonumber
\end{align}
by taking the infimum over all measurements last. 
\end{proof}
Note here, that equation (\ref{eqn:postpost}) together with the following Proposition \ref{prop:mainproof3} finishes the proof of Proposition \ref{prop:superadd} via:
\begin{align}
	G(\gamma)= \inf\{G(\tilde{\gamma})| \gamma=\mathcal{M}(\tilde{\gamma}), \tilde{\gamma}\geq i
\sigma\}\leq G(\gamma\oplus a\id_{n_2}) \label{eqn:post}
\end{align}
for $a\geq 1$, using that measuring the last modes we obtain $\mathcal{M}(\gamma\oplus a\id_{n_2})=\gamma$ and therefore, $\gamma\oplus a\id_{n_2}$ is in the feasible set of $\tilde{G}^{\mathrm{op}}(\gamma)=G(\gamma)$.

\begin{prop} \label{prop:mainproof3}
For $\gamma\in\R^{2n\times 2n}$ a covariance matrix we have 
\begin{align*}
	\tilde{G}^{\mathrm{op}}(\gamma)=G(\gamma)
\end{align*}
\end{prop}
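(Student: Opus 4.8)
The plan is to prove the two inequalities $\tilde{G}^{\mathrm{op}}(\gamma)\le G(\gamma)$ and $\tilde{G}^{\mathrm{op}}(\gamma)\ge G(\gamma)$ separately, working from the reformulation already obtained in equation (\ref{eqn:postpost}), namely $\tilde{G}^{\mathrm{op}}(\gamma)=\inf\{G(\tilde{\gamma})\mid \gamma=\mathcal{M}(\tilde{\gamma}),\ \tilde{\gamma}\ge i\sigma\}$. The inequality $\tilde{G}^{\mathrm{op}}(\gamma)\le G(\gamma)$ is immediate: the trivial measurement (measuring no mode) leaves $\tilde{\gamma}=\gamma\ge i\sigma$ feasible, so the infimum is at most $G(\gamma)$. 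Everything therefore rests on the reverse inequality, which by the displayed reformulation is equivalent to the single monotonicity statement
\[
	G(\mathcal{M}(\tilde{\gamma}))\le G(\tilde{\gamma})\qquad\text{for every valid }\tilde{\gamma}\text{ and Gaussian measurement }\mathcal{M},
\]
that is, a measurement can never increase the squeezing cost; taking the infimum over all $\tilde{\gamma}$ with $\mathcal{M}(\tilde{\gamma})=\gamma$ then yields $G(\gamma)\le\tilde{G}^{\mathrm{op}}(\gamma)$.

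To prove this monotonicity I would combine three ingredients. First, $G$ is \emph{antitone}: if $\Gamma_1\ge\Gamma_2\ge i\sigma$ then $\{S\mid \Gamma_2\ge S^TS\}\subseteq\{S\mid \Gamma_1\ge S^TS\}$, hence $G(\Gamma_1)\le G(\Gamma_2)$. Second, $\mathcal{M}$ is monotone in the matrix order: writing $\mathcal{M}(\tilde{\gamma})$ as a Schur complement of $\tilde{\gamma}$ with respect to the fixed positive block supplied by the measured state, the variational characterization $A-CB^{-1}C^T=\min_Z \begin{pmatrix} \id & Z\end{pmatrix}(\cdots)\begin{pmatrix}\id \\ Z^T\end{pmatrix}$ shows that $\tilde{\gamma}\ge\hat{\gamma}$ implies $\mathcal{M}(\tilde{\gamma})\ge\mathcal{M}(\hat{\gamma})$. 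Now let $S$ attain $G(\tilde{\gamma})$, so $\tilde{\gamma}\ge S^TS\ge i\sigma$ and $G(\tilde{\gamma})=F(S)$. Monotonicity of $\mathcal{M}$ gives $\mathcal{M}(\tilde{\gamma})\ge\mathcal{M}(S^TS)$, and antitonicity of $G$ reduces the claim to the pure-state estimate $G(\mathcal{M}(S^TS))\le F(S)$.

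For this last inequality I would use that a Gaussian measurement maps pure Gaussian states to pure Gaussian states (standard, and also readable from the measurement formula), so $\mathcal{M}(S^TS)$ is again symplectic, say $\mathcal{M}(S^TS)=(S')^TS'$ with $S'\in Sp(2n)$; by definition of $G$ it then suffices to show $F(S')\le F(S)$. I would attack this by reduction. A general Gaussian measurement equals a passive transformation $K\in K(n+k)$ followed by single-mode measurements in states $\diag(d,1/d)$; since $\id_n\oplus K\in K(n+k)$ has unit singular values, Lemma \ref{lem:gelfand} gives $F(S(\id_n\oplus K))=F(S)$, so the passive part is free and one may assume $\mathcal{M}$ is a product of single-mode measurements. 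Measuring one mode of a pure state again yields a pure state, so by induction it is enough to treat a single measured mode; and homodyne detection arises as the limit $d\to\infty$, where $\mathcal{M}_d(S^TS)\to\mathcal{M}_\infty(S^TS)$ by equation (\ref{eqn:homodyne}) and lower semicontinuity of $G$ (Theorem \ref{thm:continuous}) passes a finite-$d$ bound to the limit. It thus remains to bound $F$ after measuring a single mode in a finite state $\diag(d,1/d)$, where $\mathcal{M}(S^TS)$ is a rank-one Schur-complement modification of the reduced block.

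The genuinely hard part, and the main obstacle, is establishing $F(S')\le F(S)$ in this single-mode step. Since for a symplectic matrix $e^{2F(\cdot)}$ equals the product of those eigenvalues of $(\cdot)^T(\cdot)$ that are $\ge 1$, the inequality amounts to a domination of the ``large'' symplectic spectrum of the measured pure state by that of the original one. I expect this to follow from eigenvalue interlacing for the rank-one Schur complement (Cauchy/Weyl interlacing, together with the pairing $\lambda\leftrightarrow\lambda^{-1}$ of symplectic eigenvalues), but converting interlacing of ordinary eigenvalues into the required control of the symplectic eigenvalues of the post-measurement pure state is the delicate point where the detailed structure of $Sp(2n)$ must enter.
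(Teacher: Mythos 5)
Your setup is sound and coincides with the paper's strategy up to the decisive step: via the reformulation (\ref{eqn:postpost}) everything reduces to showing that a Gaussian measurement cannot increase the squeezing cost, and your two structural ingredients --- antitonicity of $G$ under the matrix order, and monotonicity of the Schur complement so that $\tilde{\gamma}\geq\hat{\gamma}\geq i\sigma$ implies $\gamma=\mathcal{M}(\tilde{\gamma})\geq\mathcal{M}(\hat{\gamma})\geq i\sigma$ --- are exactly those of equation (\ref{eqn:operationproof2}). But the proof is not finished: the inequality you defer to the very end, namely that measuring a mode does not increase $F$ of the optimizer, is precisely where the content of the proposition lies, and you only conjecture that it follows from interlacing. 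As written, the argument has a hole exactly at the point where the work must be done, and your stated worry about having to control the ``symplectic spectrum'' of the post-measurement pure state is a red herring: the symplectic eigenvalues of a pure-state covariance matrix are all equal to $1$, and what enters $F$ are the \emph{ordinary} eigenvalues of $(S')^{T}S'$.

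The missing step is in fact short and needs no symplectic structure beyond eigenvalue pairing. Write $\hat{\gamma}=\bigl(\begin{smallmatrix}\hat{A} & \hat{C}\\ \hat{C}^{T} & \hat{B}\end{smallmatrix}\bigr)$ for the optimizer (the paper works with an arbitrary $\hat{\gamma}$ satisfying $\tilde{\gamma}\geq\hat{\gamma}\geq i\sigma$ via reformulation (\ref{eqn:reform2}); in your version $\hat{\gamma}=S^{T}S$). Since the Schur-complement correction $\hat{C}(\hat{B}+\diag(d,1/d))^{-1}\hat{C}^{T}$ is positive semidefinite (also in the homodyne limit), one has $\mathcal{M}(\hat{\gamma})\leq\hat{A}$, hence $\lambda_{j}^{\downarrow}(\mathcal{M}(\hat{\gamma}))\leq\lambda_{j}^{\downarrow}(\hat{A})$ by Weyl's inequalities; Cauchy interlacing for the principal submatrix $\hat{A}$ gives $\lambda_{j}^{\downarrow}(\hat{A})\leq\lambda_{j}^{\downarrow}(\hat{\gamma})$; and finally
\begin{align*}
	\prod_{j=1}^{m}\lambda_{j}^{\downarrow}(\hat{\gamma})\;\leq\;\prod_{j=1}^{n}\lambda_{j}^{\downarrow}(\hat{\gamma}),
\end{align*}
because $\hat{\gamma}\geq i\sigma$ forces at least $n$ of its eigenvalues to be $\geq 1$. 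Chaining these three facts yields $F(\mathcal{M}(\hat{\gamma})^{1/2})\leq F(\hat{\gamma}^{1/2})$, which is equation (\ref{eqn:needforg}) of the paper and closes your argument. Note that since this works for arbitrary $\hat{\gamma}\geq i\sigma$, your detours through purity of $\mathcal{M}(S^{T}S)$, the reduction to single-mode measurements via Lemma \ref{lem:gelfand}, and the lower-semicontinuity passage $d\to\infty$ are all avoidable, though not incorrect.
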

This proposition shows that $G$ is operational if we exclude convex combinations (and therefore also non-selective measurements).
\begin{proof}
Using Lemma \ref{lem:mainproof2}, the proof of this proposition reduces to the question whether:
\begin{align}
	\inf\{F(\hat{\gamma}^{1/2})|\tilde{\gamma}\geq \hat{\gamma}\geq i\sigma, \mathcal{M}(\tilde{\gamma})=\gamma\}
	=\inf\{F(\overline{\gamma}^{1/2})|\gamma\geq \overline{\gamma}\geq i\sigma\} \label{eqn:operationproof1}
\end{align}
Since we do not need to use measurements, $\leq$ is obvious. The crucial part will be proving $\geq$, which is equivalent to saying that measurements cannot squeeze. Similar observations have been made in papers about squeezing (see for instance \cite{kra03b}), but this only ever referred to the smallest eigenvalue of the covariance matrix, whereas we need to have control over the product of several eigenvalues here.

Let $\tilde{\gamma}\geq \hat{\gamma}\geq i\sigma$ for some $\mathcal{M}(\tilde{\gamma})=\gamma$. Our first claim is that
\begin{align}
	\gamma\geq \mathcal{M}(\hat{\gamma})\geq i\sigma \label{eqn:operationproof2}
\end{align}
$\mathcal{M}(\hat{\gamma})\geq i\sigma$ is clear from the fact that $\hat{\gamma}$ is a covariance matrix and a measurement takes states to states. $\gamma\geq \mathcal{M}(\hat{\gamma})$ is proved using \emph{Schur complements}. Let $\mathcal{M}$ be a Gaussian measurement as in equation (\ref{eqn:gaussianmeasure}) with $\gamma_G=\diag (d,1/d)$ with $d\in \R^+$. It is well-known that 
\begin{align*}
	(\id\oplus \diag(1/d, d)&\gamma (\id\oplus \diag(1/d,d))+0\oplus \id_2)^S \\
	&=A-C \diag(1/d,d)(\diag(1/d,d) B \diag(1/d,d)+\id)^{-1}\diag(1/d,d) C^T \\
	&=A-C(B+\diag(d,1/d))^{-1}C^T=\mathcal{M}(\gamma)
\end{align*}
where $^S$ denotes the Schur complement of the block in the lower-right corner of the matrix and we decompose 
\begin{align*}
	\gamma=\begin{pmatrix} A & C \\ C^T & B\end{pmatrix}
\end{align*}
as usual. For homodyne measurements, we take the limit $d\to \infty$. Since for any $\tilde{\gamma}\geq \hat{\gamma}\geq 0$, the Schur complements of the lower right block fulfil $\tilde{\gamma}^S\geq \hat{\gamma}^S\geq 0$ (cf. \cite{bha07}, exercise 1.5.7), we have $\gamma\geq \mathcal{M}(\hat{\gamma})$ as claimed in equation (\ref{eqn:operationproof2}).

Next, we claim
\begin{align}
	F(\mathcal{M}(\hat{\gamma})^{1/2})\leq F(\hat{\gamma}^{1/2}) \label{eqn:operationproof3}
\end{align}
To prove this claim, note that via the monotonicity of the exponential function on $\R$, it suffices to prove
\begin{align*}
	\prod_{j=1}^m s^{\downarrow}_j(\mathcal{M}(\hat{\gamma}))\leq \prod_{j=1}^n s^{\downarrow}_j(\hat{\gamma})
\end{align*}
when we assume $\hat{\gamma}\in\R^{2n\times 2n}$ and $\mathcal{M}(\hat{\gamma})\in \R^{2m\times 2m}$ with $m\leq n$. Again, we write
\begin{align*}
	\hat{\gamma}=\begin{pmatrix}{} \hat{A} & \hat{C} \\ \hat{C}^T & \hat{B}\end{pmatrix}
\end{align*} 
then the state after measurement is given by $\mathcal{M}(\hat{\gamma})=\hat{A}-\hat{C}(\hat{B}+\diag(d,1/d))^{-1}\hat{C}^T$ or the limit $d\to \infty$ for homodyne measurements. In any case $\hat{C}(\hat{B}+\diag(d,1/d))^{-1}\hat{C}^T\geq 0$ and $\mathcal{M}(\hat{\gamma})\leq \hat{A}$ and therefore, by Weyl's inequalities, also
\begin{align*}
	\prod_{j=1}^m s^{\downarrow}_j(\mathcal{M}(\hat{\gamma}))\leq \prod_{j=1}^m s^{\downarrow}_j(\hat{A})
\end{align*}
Now we use Cauchy's interlacing theorem (cf. \cite{bha96}, Corollary III.1.5): As $\hat{A}$ is a submatrix of $\hat{\gamma}$, we have $\lambda_i^{\downarrow}(\hat{A})\leq \lambda_i^{\downarrow}(\hat{\gamma})$ for all $i=1,\ldots,2m$. Since at least $m$ eigenvalues of $\hat{A}$ are bigger or equal one and at least $n$ eigenvalues of $\hat{\gamma}$ are bigger or equal one, this implies 
\begin{align}
	\prod_{j=1}^m s_j^{\downarrow}(\hat{A})=\prod_{j=1}^m \lambda_j^{\downarrow}(\hat{A})\leq \prod_{j=1}^m \lambda_j^{\downarrow}(\hat{\gamma}) \leq \prod_{j=1}^n \lambda_j^{\downarrow}(\hat{\gamma})=\prod_{j=1}^n s_j^{\downarrow}(\hat{\gamma}) \label{eqn:needforg}
\end{align}
In particular, this proves equation (\ref{eqn:operationproof3}).


We can then complete the proof: Let $\tilde{\gamma}\geq \hat{\gamma}\geq i\sigma$ for some $\mathcal{M}(\tilde{\gamma})=\gamma$ in equation (\ref{eqn:operationproof1}). We have just seen that this implies $\gamma\geq \mathcal{M}(\hat{\gamma})\geq i\sigma$ via equation (\ref{eqn:operationproof2}) and furthermore that $F(\hat{\gamma}^{1/2})\geq F(\mathcal{M}(\hat{\gamma})^{1/2})$ via equation (\ref{eqn:operationproof3}). But this means that we have found $\overline{\gamma}:=\mathcal{M}(\hat{\gamma})$ such that $\gamma\geq \overline{\gamma}\geq i\sigma$. Hence $\overline{\gamma}$ is in the feasible set of the right hand side of (\ref{eqn:operationproof1}) and $F(\tilde{\gamma}^{1/2})\geq F(\overline{\gamma}^{1/2})$, which implies $\geq$ in equation (\ref{eqn:operationproof1}).
\end{proof}

Finally, we can prove Theorem \ref{thm:operational} by also covering convex combinations:
\begin{proof}
Let $\gamma\in\R^{2n\times 2n}$ be a covariance matrix. In the definition of $G^{\mathrm{op}}$, we considered all possible sequences of operations (O1)-(O7) (excluding convex combinations as in (O6)). Using Lemma \ref{lem:mainproof1}, we can replace these sequences by a very special type of sequences (first (O1), then (O2), then (O3) and (O4), then (O7), (O5) can be left out). For these sequences, we have seen that the minimum cost is given by $G(\gamma)$ in Proposition \ref{prop:mainproof3}. Hence, Proposition \ref{prop:mainproof3} actually already proves $G^{\mathrm{op}}(\gamma)=G(\gamma)$. 

However, we explicitly excluded convex combinations (O6) from the definition of $G^{\mathrm{op}}$, since allowing convex combinations forces us to consider trees instead of sequences in the definition of $G^{\mathrm{op}}$: Consider a tree of operations (O1)-(O7) which has $\gamma$ at its root and $\gamma_0=\id$ as leaves (i.e. the natural generalisation of sequences $\gamma_0\to\cdots\to\gamma_N=\gamma$ including convex combinations). Let us consider any node closest to the leaves. At such a node, we start with two covariance matrices $\gamma_1$ and $\gamma_2$ that were previously constructed without using convex combinations and with costs $G(\gamma_1)$ and $G(\gamma_2)$. The combined matrix would be $\tilde{\gamma}:=\lambda \gamma_1+(1-\lambda)\gamma_2$ for some $\lambda\in (0,1)$ and the costs would be $\lambda G(\gamma_1)+(1-\lambda)G(\gamma_2)$. 

By convexity of $G$ (see Theorem \ref{thm:convex}):
\begin{align*}
	G(\lambda \gamma_1+(1-\lambda)\gamma_2)\leq \lambda G(\gamma_1)+(1-\lambda) G(\gamma_2)
\end{align*}
which means that we can find a sequence (without any convex combinations) producing $\lambda \gamma_1+(1-\lambda)\gamma_2$ which is cheaper than first producing $\gamma_1$ and $\gamma_2$ and then taking a convex combination. Iteratively, this means we can eliminate every node from the tree and replace the tree by a sequence of operations (O1)-(O5) and (O7), which is cheaper than the tree. Therefore, we can conclude that an inclusion of convex combinations as in (O6) cannot change $G^{\mathrm{op}}(\gamma)$ for any $\gamma$. Therefore, $G^{\mathrm{op}}(\gamma)$ is already operational for all operations (O1)-(O7) and any operation that can be constructed as a mixture thereof.
\end{proof}

\subsection{The squeezing measure as a resource measure} \label{sec:resourcemeasure}
We have now seen that the measure $G$ can be interpreted as a measure of the amount of single-mode squeezing needed to create a state $\rho$. Let us now take a different perspective, which is the analogue of the entanglement of formation for squeezing: Consider covariance matrices of the form
\begin{align}
	\gamma_s:=\begin{pmatrix}{} s & 0 \\ 0 & s^{-1} \end{pmatrix}
\end{align}
These are single-mode squeezed states with squeezing parameter $s\geq 1$. We will now allow these states as \emph{resources} and ask the question: Given a (Gaussian) state $\rho$ with covariance matrix $\gamma$, what is the minimal amount of these resources needed to construct $\gamma$, if we can freely transform the state by
\begin{enumerate}
	\item passive transformations,
	\item adding ancillas in the vacuum,
	\item adding noise,
	\item displacing the state (Weyl displacements),
	\item performing partial or total Gaussian measurements,
	\item and taking convex combinations of states.
\end{enumerate}
This implies that we define the following measure:
\begin{dfn}
Let $\rho$ be an $n$-mode state with covariance matrix $\gamma\in \R^{2n\times 2n}$. Let 
\begin{align}
	G^{\mathrm{resource}}(\gamma):=\inf\left\{\sum_{i=1}^m \frac{1}{2} \log(s_m) \middle| \gamma=\mathcal{T}\left(\bigoplus_{i=1}^m\gamma_{s_i}\right)\right\}
\end{align}
where $\mathcal{T}:\R^{2m\times 2m}\to \R^{2n\times 2n}$ is a combination of the operations 1-6 above. 
\end{dfn}
In principle, we could also add ancillary resource states later on, thereby enlarging the class of allowed transformations $\mathcal{T}$ in the definition, however using the same proof as in Lemma \ref{lem:rearrange}, it is easy to see that this does not change the measure.

We introduced the factor $1/2$ in the definition of $G^{\mathrm{resource}}$ in order to have the following easy characterisation:
\begin{thm}
Let $\rho$ be an $n$-mode state with covariance matrix $\gamma\in \R^{2n\times 2n}$. Then
\begin{align}
	G^{\mathrm{resource}}(\gamma)=G(\gamma)
\end{align}
\end{thm}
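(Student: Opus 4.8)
The plan is to prove the two inequalities $G(\gamma)\le G^{\mathrm{resource}}(\gamma)$ and $G^{\mathrm{resource}}(\gamma)\le G(\gamma)$ separately. The conceptual point that makes both work is that the factor $1/2$ in the definition of $G^{\mathrm{resource}}$ is calibrated exactly so that supplying a single-mode resource $\gamma_s=\diag(s,s^{-1})$ costs the same as the single-mode squeezer that creates it from the vacuum: applying the squeezer $\diag(\sqrt{s},1/\sqrt{s})\in Z(1)$ to the vacuum $\id$ gives $\diag(\sqrt s,1/\sqrt s)^T\,\id\,\diag(\sqrt s,1/\sqrt s)=\diag(s,s^{-1})=\gamma_s$, at cost $F(\diag(\sqrt s,1/\sqrt s))=\tfrac12\log s$. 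I would isolate this little computation at the outset, since it underlies both directions.

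For $G(\gamma)\le G^{\mathrm{resource}}(\gamma)$ I would argue that every resource preparation embeds into an operational preparation of the same cost. The free operations $1$--$6$ allowed for $\mathcal{T}$ (passive transformations, vacuum ancillas, noise, Weyl displacements, Gaussian measurements, convex combinations) are precisely the operations (O1)--(O3) and (O5)--(O7) of Section~\ref{sub:operations}, while each resource state $\gamma_{s_i}$ can be obtained by drawing the vacuum (O0) and applying a single single-mode squeezer (O4) at cost $\tfrac12\log s_i$ by the calibration above. Hence any admissible decomposition $\gamma=\mathcal{T}(\bigoplus_i\gamma_{s_i})$ yields an admissible sequence for $G^{\mathrm{op}}$ of identical cost $\sum_i\tfrac12\log s_i$, so $G^{\mathrm{op}}(\gamma)\le G^{\mathrm{resource}}(\gamma)$, and by Theorem~\ref{thm:operational} this reads $G(\gamma)\le G^{\mathrm{resource}}(\gamma)$.

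The reverse inequality is the constructive heart of the proof. By Proposition~\ref{prop:reform} the infimum defining $G$ is attained, so I pick $S\in Sp(2n)$ with $\gamma\ge S^TS$ and $F(S)=G(\gamma)$, and take its Euler (Bloch--Messiah) decomposition $S=K_1AK_2$ with $K_1,K_2\in K(n)$ and $A=\diag(a_1,a_1^{-1},\dots,a_n,a_n^{-1})$, where WLOG $a_i\ge1$. Since $K_1,K_2$ are orthogonal, $F(S)=\sum_i\log a_i$ and $S^TS=K_2^TA^2K_2$. I then choose the resources $\gamma_{a_i^2}=\diag(a_i^2,a_i^{-2})$, whose direct sum is exactly $A^2$, with total cost $\sum_i\tfrac12\log a_i^2=\sum_i\log a_i=F(S)=G(\gamma)$. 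The transformation $\mathcal{T}$ is then: first apply the passive transformation $K_2$, sending $A^2\mapsto K_2^TA^2K_2=S^TS$, and then add the noise $E:=\gamma-S^TS$. Both are free operations, and the output equals $K_2^TA^2K_2+E=\gamma$, so $\gamma=\mathcal{T}(\bigoplus_i\gamma_{a_i^2})$ witnesses $G^{\mathrm{resource}}(\gamma)\le G(\gamma)$.

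I expect the main obstacle to be purely one of bookkeeping rather than analysis: one must check that the cost conventions match through the factor $1/2$ (handled by the vacuum-plus-squeezer computation) and, most importantly, that the final ``add noise'' step in the construction of $\mathcal{T}$ is a legitimate free operation, i.e.\ that $E=\gamma-S^TS\ge 0$. This positivity is exactly the feasibility constraint $\gamma\ge S^TS$ defining $G$, so it is automatic; no ancillas or measurements are even needed here, which keeps $\mathcal{T}$ manifestly within the free class. All the genuinely hard work (that measurements and convex combinations cannot squeeze) has already been absorbed into Theorem~\ref{thm:operational}, so beyond these calibration checks I anticipate no further difficulty.
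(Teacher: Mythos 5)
Your proposal is correct. The direction $G^{\mathrm{resource}}(\gamma)\le G(\gamma)$ is essentially the paper's argument: the paper takes an optimal $S$ with $S^TS\le\gamma$, writes $S^TS=O^TDO$ with $D=\bigoplus_i\gamma_{s_i}$ via the Euler decomposition, and absorbs $\gamma-S^TS\ge0$ as noise; your $K_2^TA^2K_2$ with resources $\gamma_{a_i^2}$ is the same construction with $s_i=a_i^2$. For the direction $G(\gamma)\le G^{\mathrm{resource}}(\gamma)$ you genuinely diverge: the paper reruns the rearrangement machinery of Lemma \ref{lem:rearrange} with $\bigoplus_i\gamma_{s_i}\oplus\id$ in place of the vacuum, normalising every resource preparation to the form $\mathcal{M}(O(\bigoplus_i\gamma_{s_i}\oplus\id+\gamma_{\mathrm{noise}})O^T)$ and then repeating the argument of Theorem \ref{thm:operational}, whereas you simply simulate each resource state $\gamma_{s_i}$ by a vacuum mode plus a single-mode squeezer $\diag(\sqrt{s_i},1/\sqrt{s_i})$ of cost $\tfrac12\log s_i$ and invoke Theorem \ref{thm:operational} as a black box. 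Your route is shorter and avoids re-checking all the commutation relations between operations; its only cost is that it leans on the full strength of Theorem \ref{thm:operational}, including the convex-combination extension (needed because $\mathcal{T}$ may contain operation 6), and on the small bookkeeping observation that the weighted-average cost convention for convex combinations in $G^{\mathrm{op}}$ can only decrease the cost relative to the plain sum $\sum_i\tfrac12\log s_i$ charged by $G^{\mathrm{resource}}$ — so the inequality still points the right way. Both of these are fine, and the calibration computation you isolate at the start is exactly the reason the paper inserts the factor $1/2$ into the definition.
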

\begin{proof}
$\leq$: Note that for any feasible $S\in Sp(2n)$ in $G(\gamma)$, i.e. any $S$ with $S^TS\leq \gamma$, we can find $O\in Sp(2n)\cap O(2n)$ and $D=\bigoplus_{i=1}^n \gamma_{s_i}$ with $S^TS=O^TDO$ via the Euler decomposition. Using that the Euler decomposition minimises $F$, we have $F(S)=\frac{1}{2}F(D)=\sum_{i=1}^n \frac{1}{2} \log(s_i)$. But then, since we can find $\gamma_{\mathrm{noise}}\geq 0$ such that $\gamma=O^T\bigoplus_{i=1}^n \gamma_{s_i}O+\gamma_{\mathrm{noise}}$, we have that $D$ is a feasible resource state to produce $\gamma$. This implies $G^{\mathrm{resource}}(\gamma)\leq G(\gamma)$.

$\geq$: For the other direction, the proof proceeds exactly as the proof of Theorem \ref{thm:operational}. First, we exclude convex combinations. Then, we realize that we can change the order of the different operations (even if we include adding resource states during any stage of the preparation process) according to Lemma \ref{lem:rearrange}, making sure that any preparation procedure can be implemented via:
\begin{align*}
	\gamma=\mathcal{M}\left(O\left(\bigoplus_{i=1}^m\gamma_{s_i}\oplus \id_{2m^{\prime}}+\gamma_{\mathrm{noise}}\right)O^T\right)
\end{align*}
where $O\in Sp(2m+2m^{\prime})\cap O(2m+2m^{\prime})$, $\gamma_{\mathrm{noise}}\in \R^{2m+2m^{\prime}\times 2m+2m^{\prime}}$ with $\gamma_{\mathrm{noise}}\geq 0$ and $\mathcal{M}$ a measurement. Now the only difference to proof of \ref{thm:operational} is that we had the vacuum $\id$ instead of $\bigoplus_{i=1}^m\gamma_{s_i}\oplus \id_{2m^{\prime}}$ and an arbitrary symplectic matrix $S$ instead of $O$, but the two ways of writing the maps are completely interchangeable, so that the proof proceeds as in Theorem \ref{thm:operational}.
\end{proof}

\section{Calculating the squeezing measure} \label{sec:examples}
We have seen that the measure $G$ is operational. However, to be useful, we need a way to compute it.

\subsection{Analytical solutions}
\begin{prop}
Let $n=1$, then $G(\Gamma)=-\frac{1}{2}\min_i \log(\lambda_i(\Gamma))$ for all $\Gamma\in \R^{2n\times 2n}$.
\end{prop}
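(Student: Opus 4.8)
The plan is to run everything through reformulation (\ref{eqn:reform1}), $G(\Gamma)=\inf\{F(S)\mid \Gamma\geq S^TS,\ S\in Sp(2)\}$, and to exploit the one-mode coincidence $Sp(2,\R)=SL(2,\R)$: every $2\times 2$ matrix satisfies $S^TJS=\det(S)\,J$, so $S^TJS=J$ is equivalent to $\det S=1$. Hence the positive definite symplectic matrices $\Pi(1)$ are exactly the positive definite matrices of determinant one (each such $P$ being realised by $S=P^{1/2}$, which has determinant one), and since $F(S)=\log s_1^{\downarrow}(S)=\tfrac12\log\lambda_{\max}(S^TS)$, setting $P:=S^TS$ reduces the problem to
\begin{align*}
	G(\Gamma)=\inf\left\{\tfrac12\log\lambda_{\max}(P)\ \middle|\ P>0,\ \det P=1,\ \Gamma\geq P\right\}.
\end{align*}
I write $\lambda_1\geq\lambda_2>0$ for the eigenvalues of $\Gamma$; the Heisenberg condition $\Gamma\geq iJ$ forces $\det\Gamma=\lambda_1\lambda_2\geq1$.

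For the lower bound I would take any feasible $P$ and apply the Weyl monotonicity principle (as already used in Proposition \ref{prop:specbounds}): $\Gamma\geq P$ gives $\lambda_{\min}(P)\leq\lambda_{\min}(\Gamma)=\lambda_2$. Since $\det P=1$ we have $\lambda_{\max}(P)=\lambda_{\min}(P)^{-1}\geq\lambda_2^{-1}$, and combining with the trivial bound $\lambda_{\max}(P)\geq1$ yields $\tfrac12\log\lambda_{\max}(P)\geq\max\{0,-\tfrac12\log\lambda_2\}$, hence $G(\Gamma)\geq\max\{0,-\tfrac12\log\lambda_2\}$.

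For achievability I would diagonalise $\Gamma=O^T\diag(\lambda_1,\lambda_2)O$ with $O\in O(2)$. If $\Gamma$ is squeezed, i.e.\ $\lambda_2<1$, I set $P:=O^T\diag(\lambda_2^{-1},\lambda_2)O$. Then $\det P=1$, so $P\in\Pi(1)$ by the identification above, and $P\leq\Gamma$ because the second diagonal entry is unchanged while $\lambda_2^{-1}\leq\lambda_1$ is precisely the inequality $\det\Gamma\geq1$; this $P$ attains $\tfrac12\log\lambda_{\max}(P)=-\tfrac12\log\lambda_2$. If $\Gamma$ is not squeezed, i.e.\ $\lambda_2\geq1$, then $\Gamma\geq\id\in\Pi(1)$ and $P=\id$ gives $G(\Gamma)=0$. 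Combining the two bounds gives $G(\Gamma)=\max\{0,-\tfrac12\log\lambda_2\}=\max\{0,-\tfrac12\min_i\log\lambda_i(\Gamma)\}$.

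The computation is short, so there is no serious obstacle; the only genuine ingredient is the identification $Sp(2)=SL(2)$, which dissolves the symplectic constraint and turns the problem into a one-line determinant-and-Weyl estimate. The one point needing care is the unsqueezed boundary case $\lambda_2\geq1$: there the literal right-hand side $-\tfrac12\min_i\log\lambda_i$ is nonpositive whereas $G=0$, so the stated identity is to be read with a $\max$ against $0$ (equivalently, as holding for squeezed single-mode states, the unsqueezed ones trivially giving $G=0$). As an alternative route I note that the conclusion also follows from Proposition \ref{prop:achievelower}: in $\R^2$ any orthonormal eigenbasis of $\Gamma$ is a symplectic basis up to orientation, so its hypothesis holds for every single-mode $\Gamma$ and the spectral lower bound of Proposition \ref{prop:specbounds} is always attained.
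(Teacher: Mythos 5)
Your proof is correct and follows essentially the same route as the paper's: the lower bound is the $n=1$ case of the spectral bound (Weyl's inequality plus $\det P=1$), and achievability uses the same optimiser $O^T\diag(\lambda_2^{-1},\lambda_2)O$, with the identification $Sp(2)=SL(2)$ merely replacing the paper's appeal to $SO(2)=K(1)$ and $Z(1)$ to certify symplecticity. Your remark about the unsqueezed case $\lambda_2\geq 1$, where the literal right-hand side is negative and the identity must be read with a $\max$ against $0$, is a valid observation that the paper's statement and proof gloss over.
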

\begin{proof}
Note that this is the lower bound in Proposition \ref{prop:specbounds}, hence $-\frac{1}{2}\min_i \log(\lambda_i(\Gamma))\leq G(\Gamma)$. Now consider the diagonalisation $\Gamma=O\diag(\lambda_1,\lambda_2)O^T$ with $O\in SO(2)$ and assume $\lambda_1\geq \lambda_2$. Then, $\lambda_2^{-1}\leq \lambda_1$ since otherwise, $\Gamma\not\geq iJ$. 

Consider $\diag(\lambda_1,\lambda_2)\geq O^{-T}S^TSO^{-1}$ for some $S\in Sp(2)$ with eigenvalues $s\geq 1$ and $s^{-1}$. Since $\diag(\lambda_1,\lambda_2)\geq O^{-T}S^TSO^{-1}$, this implies in particular that $s^{-1}\leq \lambda_2$ by Weyl's inequality. Since $F(S^TS)=\log s$, in order to minimise $F(S)$ over $S^TS\leq \Gamma$, we need to maximize $s^{-1}$. Setting $s^{-1}=\lambda_2$ we obtain $s=\lambda_2^{-1}\leq \lambda_1$ and $\diag(\lambda_1,\lambda_2)\geq \diag(s,s^{-1})$. Since $SO(2)=K(1)$, $S^TS:=O^T\diag(\lambda_1,\lambda_2)O\leq \Gamma$ is the minimising matrix in $G$ and $G(\Gamma)=F(S)=\frac{1}{2}\log \lambda_2^{-1}$. 
\end{proof}
\begin{prop}
Let $\rho$ be a pure, Gaussian state with covariance matrix $\Gamma\in \R^{2n\times 2n}$. Then $G(\Gamma)=F(\Gamma^{1/2})$.
\end{prop}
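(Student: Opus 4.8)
The plan is to deduce the claim from the spectral bounds of Proposition~\ref{prop:specbounds} by showing that, for a pure state, the upper and lower bounds collapse onto the single value $F(\Gamma^{1/2})$.

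First I would record the structural input. By Proposition~\ref{prop:puregaussian} a Gaussian state is pure iff $\det(\Gamma)=1$; combined with Williamson's Theorem and Corollary~\ref{cor:symplecticspec} this forces all symplectic eigenvalues of $\Gamma$ to equal $1$, i.e. $\tilde D=\id$ in the Williamson normal form, so that $\Gamma=S^TS$ is a positive definite \emph{symplectic} matrix. As already used in the proof of Proposition~\ref{prop:specbounds}, a positive definite symplectic matrix has eigenvalues occurring in reciprocal pairs $\{\lambda,\lambda^{-1}\}$; hence the $n$ largest eigenvalues $\lambda_1^{\downarrow},\dots,\lambda_n^{\downarrow}$ are exactly those $\geq 1$, each paired with a matching eigenvalue $\leq 1$, and eigenvalues equal to $1$ pair with themselves.

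Next I would invoke Proposition~\ref{prop:specbounds}, which gives
\begin{align*}
	-\frac{1}{2}\sum_{\lambda_i^{\downarrow}(\Gamma)<1}\log\lambda_i^{\downarrow}(\Gamma)\leq G(\Gamma)\leq \frac{1}{2}\sum_{i=1}^n\log\lambda_i^{\downarrow}(\Gamma)=F(\Gamma^{1/2}).
\end{align*}
The heart of the argument is that the left-hand side already equals $F(\Gamma^{1/2})$ once $\Gamma$ is symplectic. Using the reciprocal pairing, for every eigenvalue $\lambda<1$ there is a matching eigenvalue $\lambda^{-1}>1$ among the top $n$, while eigenvalues equal to $1$ contribute $\log 1 = 0$ to both sides. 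Thus
\begin{align*}
	-\frac{1}{2}\sum_{\lambda_i^{\downarrow}(\Gamma)<1}\log\lambda_i^{\downarrow}(\Gamma)=\frac{1}{2}\sum_{\lambda_i^{\downarrow}(\Gamma)>1}\log\lambda_i^{\downarrow}(\Gamma)=\frac{1}{2}\sum_{i=1}^n\log\lambda_i^{\downarrow}(\Gamma)=F(\Gamma^{1/2}),
\end{align*}
so the upper and lower bounds agree and $G(\Gamma)=F(\Gamma^{1/2})$ follows immediately.

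I do not expect a serious obstacle; the only point requiring care is the bookkeeping of eigenvalues equal to $1$ and of multiplicities, which is entirely controlled by the reciprocal-pairing structure of symplectic spectra. As a cross-check (and an alternative route that avoids the spectral bounds), one can instead use the Williamson bounds of Proposition~\ref{prop:boundswill}: for a pure state $\det(\Gamma)=1$ makes the correction term $\log\sqrt{\det(\Gamma)}$ vanish, so the two Williamson bounds coincide at $F(S)$, while purity forces $\tilde D=\id$, i.e. $\Gamma=S^TS$, whence $F(S)=F(\Gamma^{1/2})$. Both routes yield the same equality, so either may be presented.
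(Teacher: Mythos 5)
Your proposal is correct. Your primary route is genuinely different from the paper's: the paper simply observes that purity gives $\det(\Gamma)=1$ by Proposition \ref{prop:puregaussian}, so the two Williamson bounds of Proposition \ref{prop:boundswill} collapse onto $F(S)$ — which is exactly the ``cross-check'' you relegate to the end. Your main argument instead goes through the spectral bounds of Proposition \ref{prop:specbounds}, using that purity forces $\tilde D=\id$ in the Williamson form, hence $\Gamma=S^TS$ is positive definite symplectic with eigenvalues in reciprocal pairs, so the lower and upper spectral bounds coincide at $F(\Gamma^{1/2})$. Your bookkeeping of eigenvalues equal to $1$ and of how many eigenvalues can exceed $1$ is sound. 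What each approach buys: the paper's Williamson route is a one-line deduction, while your spectral route is marginally longer but makes the mechanism transparent (it identifies exactly why the lower bound of Proposition \ref{prop:specbounds} is attained here, connecting to Proposition \ref{prop:achievelower}); moreover, your explicit remark that $\tilde D=\id$ so that $F(S)=F(\Gamma^{1/2})$ fills in a step the paper's proof leaves implicit, since the Williamson bounds as stated only give $G(\Gamma)=F(S)$ and one still needs $\Gamma=S^TS$ to convert this to $F(\Gamma^{1/2})$.
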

\begin{proof}
From Proposition \ref{prop:puregaussian}, we know that $\det(\Gamma)=1$ in particular. Therefore, the bounds in Proposition \ref{prop:boundswill} are tight and $G(\Gamma)=F(\Gamma^{1/2})$.
\end{proof}

\subsection{Numerical calculations using Matlab}
The crucial observation to numerically find the optimal squeezing measure and a symplectic matrix $S$ at the optimal point is given in Lemma \ref{lem:convf}: If we use $G$ in the form of equation (\ref{eqn:reform3}), we know that the function to be minimised is convex on $\mathcal{H}$. In general, convex optimization with convex constraints is efficiently implementable and there is a huge literature on the topic (see \cite{boy04} for an overview). 

In our case, a certain number of problems occur when performing convex optimization:
\begin{enumerate}
	\item The function $f$ in equation (\ref{eqn:convexsum}) is highly nonlinear. It is also not differentiable at eigenvalue crossings of $A+iB$ or $H\in\mathcal{H}$. In particular, it is not differentiable when one of the eigenvalues becomes zero, which is to be expected at the minimum.
	\item While the constraints $\mathcal{C}^{-1}(\gamma)\geq H$ and $\id > H > -\id$ are linear in matrices, they are nonlinear in simple parameterisations of matrices.
	\item For $\gamma$ on the boundary of the set of allowed density operators, the set of feasible solutions might not have an inner point.
\end{enumerate}
The first and second problem imply that most optimization methods (including the standard ones of the \textsc{Matlab}-optimization toolbox) are unsuitable, as they are either gradient-based or need more problem structure. It also means that there is no guarantee for good stability of the solutions. The third problem implies that interior point methods become unsuitable on the boundary. Since many interesting states studied in the literature usually do have symplectic eigenvalues equal to one, this would limit the applicability and usefulness of the program.

As a proof of principle implementation, we used the \textsc{Matlab}-based solver \textsc{SolvOpt}, which can solve nonsmooth, nonlinear optimization problems with nonlinear constraints based on a penalty method (for details see the manual \cite{kun97}). The optimization used by \textsc{SolvOpt} is sub-gradient based and requires the objective function to be differentiable almost everywhere, which is true in our case. We believe our implementation could be made more efficient and more stable, but it seems to work well in most cases for less than ten modes. More information including the details of the implementation as well as the source-code are provided in appendix \ref{app:code}. 

\subsection{Squeezing-optimal preparation for certain three-mode separable states}
Let us now work with a particular example that has been studied in the quantum information literature. In \cite{mis08}, Mi\v{s}ta Jr. and Korolkova define the following three-parameter group of three-mode states where the modes are labeled $A,B,C$:
\begin{align}
	\gamma=\gamma_{AB}\oplus \id_{C}+x(q_1q_1^T+q_2q_2^T) \label{eqn:korolkovastate}
\end{align}
with
\begin{align*}
	\gamma_{AB}&=\begin{pmatrix}{} 
		e^{2d}a & 0 & -e^{2d}c & 0 \\ 0 & e^{-2d}a & 0 & e^{-2d}c \\
		-e^{2d}c & 0 & e^{2d}a & 0 \\ 0 & e^{-2d}c & 0 & e^{-2d}a
	\end{pmatrix} \\
	q_1&=(0,\sin \phi, 0, -\sin \phi, \sqrt{2},\sqrt{2})^T \\
	q_2&=(\cos \phi, 0, \cos \phi, 0 \sqrt{2},\sqrt{2})^T
\end{align*}
where $a=\cosh(2r)$, $c=\sinh(2r)$, $\tan\phi = e^{-2r}\sinh(2d)+\sqrt{1+e^{-4r}\sinh^2(2d)}$. The remaining parameters are $d\geq r>0$ and $x\geq 0$. For 
\begin{align*}
	x=x_{\mathrm{sep}}\geq \frac{2\sinh(2r)}{e^{2d}\sin^2\phi+e^{-2d}\cos^2\phi}
\end{align*}
the state becomes fully separable \cite{mis08}. The state as such is a special case of a bigger family described in \cite{gie01b}. In \cite{mis08}, it was used to entangle two systems at distant locations using fully separable mediating ancillas (here the system labeled $C$). Therefore, Mi\v{s}ta Jr. and Korolkova considered also an LOCC procedure to prepare the state characterised by (\ref{eqn:korolkovastate}). For our purposes, this is less relevant and we allow for arbitrary preparations of the state. This was also done in \cite{mis08} by first preparing modes $A$ and $B$ each in a pure squeezed-state with position quadratures $e^{2(d-r)}$ and $e^{2(d+r)}$. A vacuum mode in $C$ was added and $x(q_1q_1^T+q_2q_2^T)$ was added as random noise. Therefore, the squeezing needed to produce this state in this protocol is given by
\begin{align}
	c=\frac{1}{2}\log(e^{2(d-r)}\cdot e^{2(d+r)})=2d \label{eqn:corsqueeze}
\end{align}
We numerically approximated the squeezing measure for $\gamma_{ABC}$, choosing $x=x_{\mathrm{sep}}$, which leaves a two-parameter family of states. We chose parameters $d$ and $r$ according to 
\begin{align}
	r=0.1+j\cdot 0.05, \qquad d=r+i\cdot 0.03 \label{eqn:examplenumerics}
\end{align}
with $i,j\in\{1,\ldots,30\}$ for a total of 900 data points. Since the algorithm is not an interior point algorithm as described above, to check the result, we reprepared the state in the following way:
\begin{enumerate}
	\item Let $S$ be the symplectic matrix at the value optimum found by \textsc{SolvOpt} for a covariance matrix $\gamma_{ABC}$. 
	\item Calculate $S^{-T}\gamma_{ABC}S^{-1}$ and calculate its lowest eigenvalue $\lambda_{2n}$. 
	\item Define $\tilde{\gamma}:=S^{-T}\gamma_{ABC}S^{-1}+(1-\min\{1,\lambda_{2n}\})\id\geq \id$. Calculate the largest singular value of $S^T\tilde{\gamma}S-\gamma$. 
\end{enumerate}
If $S$ was a feasible point, then $S^T\tilde{\gamma}S=\gamma$. Since it is obvious how to prepare $\tilde{\gamma}$ with operations specified in section \ref{sec:operational}, the largest singular value of $S^T\tilde{\gamma}S-\gamma$ is an indicator of how well we can approximate the state we want to prepare by a state with comparably low squeezing costs. One can easily see that $\gamma_{ABC}$ cannot achieve the spectral lower bound, i.e. the assumptions of Lemma \ref{prop:achievelower} are not met.

The results of the numerical computation are shown in figure \ref{fig:korolkova}. We computed the minimum both with the help of numerical and analytical subgradients (see appendix \ref{app:code}) and took the value with a better approximation error. Usually, the proposed minimum as well as the preparation error of the two algorithms were extremely close (the difference was in the sub-permille regime), while at rare occasions, one algorithm failed to obtain a minimum (luckily, we never had an instance where all algorithms failed). Possible reasons for this are discussed in appendix \ref{app:code}. The optimal values computed by the algorithm are close to the lower bound and a lot better than the upper bound and the costs obtained by equation (\ref{eqn:corsqueeze}). The preparation error is also usually very small ($\mathcal{O}(10^{-7})$), and therefore we are allowed to conclude that the result is a good approximation to the real optimum.

Let us point out that the states in equation (\ref{eqn:korolkovastate}) have one symplectic eigenvalue equal to one and the feasible set has no interior point in our parameterisation, which makes it impossible to apply interior point methods. Further comments regarding errors and stability problems with the algorithm are discussed in appendix \ref{app:code}.

\begin{figure}[htbp]
\centering
	\includegraphics[trim=1cm 9cm 6cm 10cm,width=14cm]{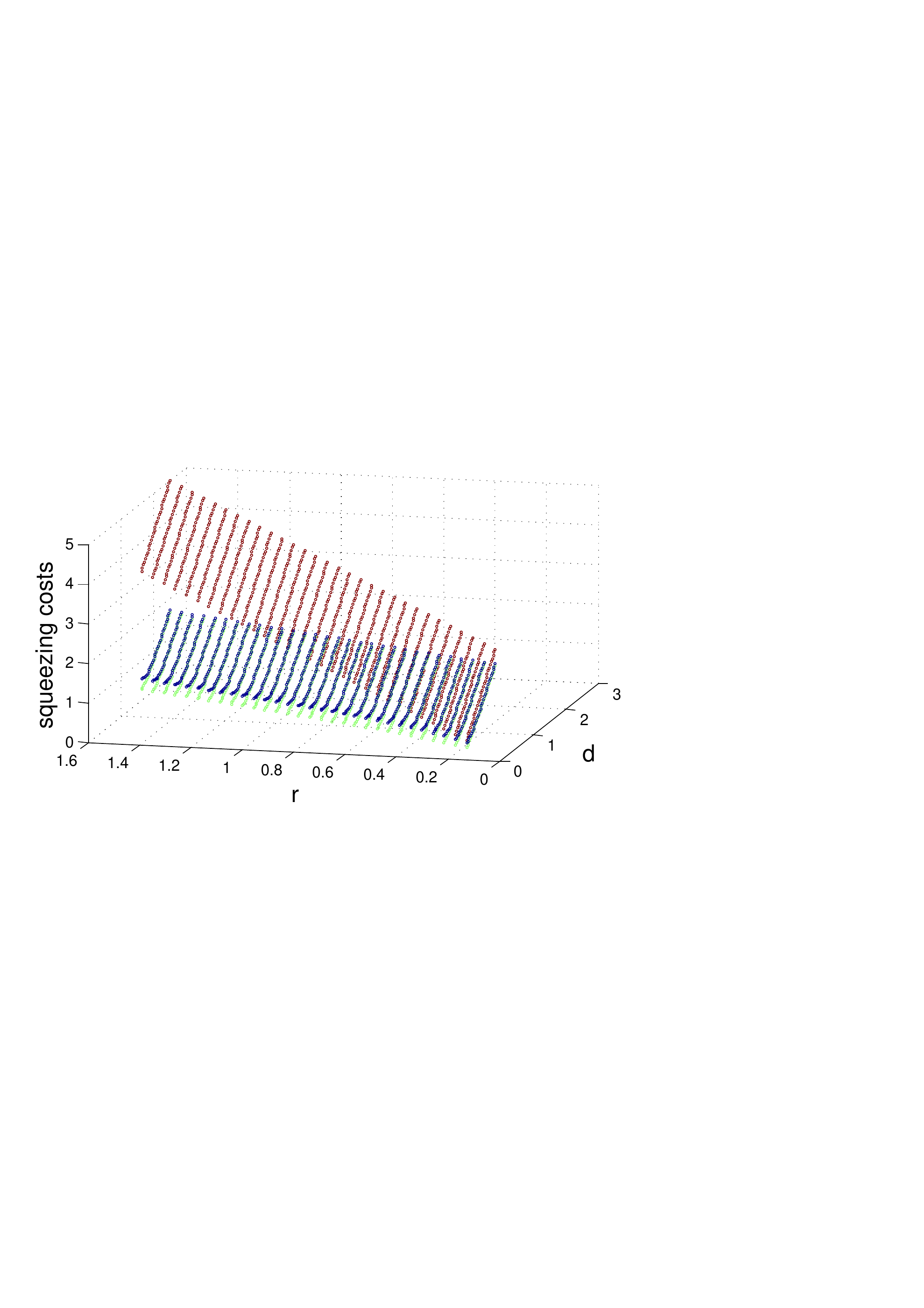} \\
	\includegraphics[trim=2cm 9cm 5cm 13cm,width=12cm]{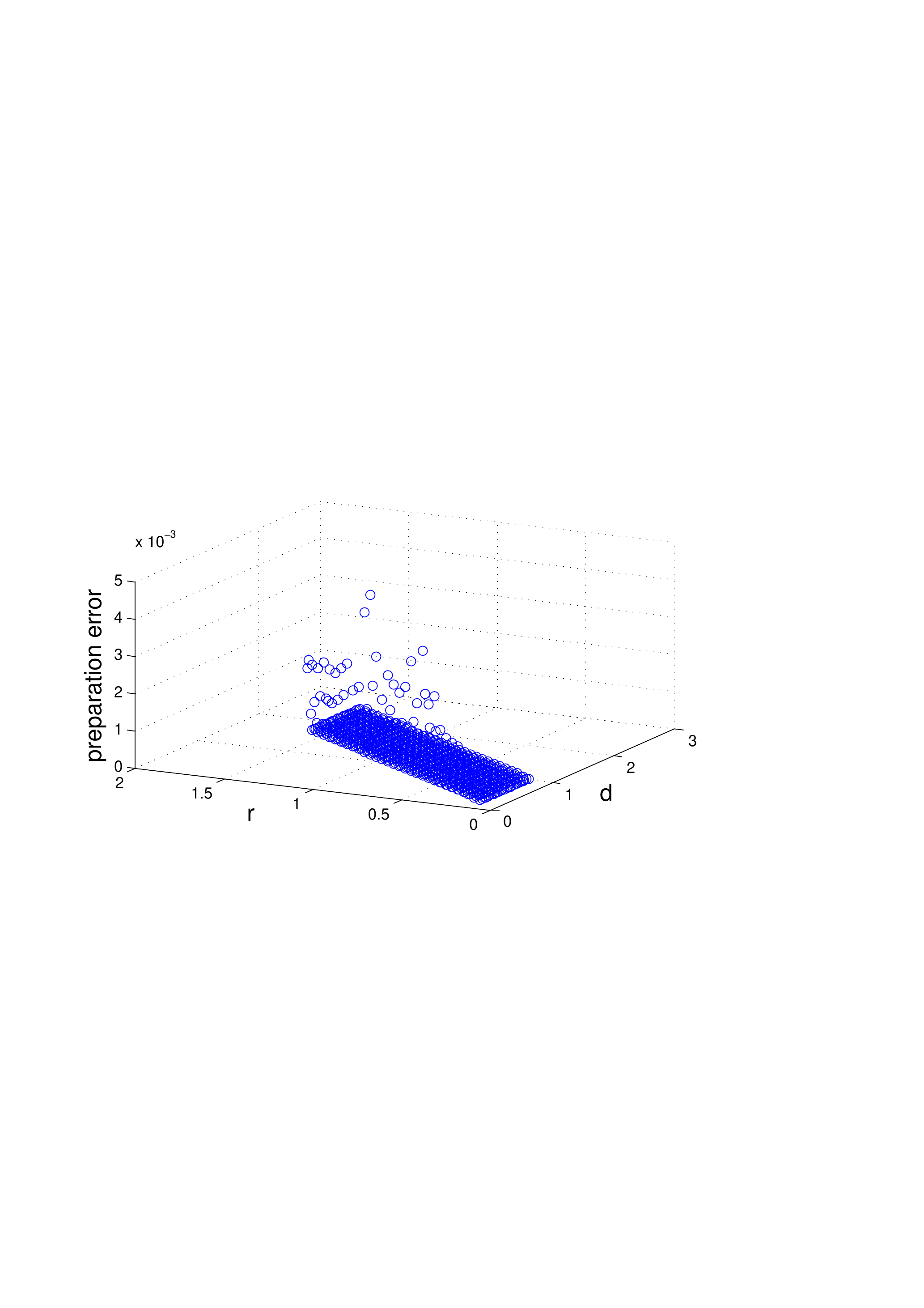}
	\caption{Results of numerical calculations (formulas for $d$ and $r$ in equation (\ref{eqn:examplenumerics})). On the upper figure, the green points are the best lower bound, the blue points denote the value of the objective function at the minimum found by \textsc{SolvOpt} and the red points denote the squeezing costs of the preparation protocol of \cite{mis08} (equation (\ref{eqn:corsqueeze}). The lower figure shows the preparation error. It is mostly below $10^{-6}$.}
\label{fig:korolkova}
\end{figure}

\section{Discussion and open questions} \label{sec:discopen}
We have defined two measures of squeezing - one for quantifying the amount of single-mode squeezing needed to prepare a state and the other for quantifying the amount of squeezed resource states needed to prepare a state. We have seen that both measures give equivalent results and that the measures themselves are continuous on the interior and convex, which gives us a way to calculate them. Let us now discuss applications, open questions and useful modifications while putting the measures into context.

\subsection{Comparison to existing theoretical and experimental squeezing measures}
Squeezed states have been studied experimentally in quantum optical systems for 30 years now and it remains a very challenging endeavour to produce and maintain states with a lot of squeezing. This is the reason why it is interesting to quantify the amount of squeezing necessary for a given task.

In experiments, squeezing of a state is most commonly measured by the logarithm of the smallest eigenvalue. More precisely, given a single-mode pure state $\rho$ with variance $\Delta Q^2\leq \Delta P^2$, the squeezing is measure according to
\begin{align}
	G_{\mathrm{exp}}(\rho)=10\log_{10} 2\langle \Delta Q^2\rangle
\end{align}
and the unit is usually referred to as decibel [dB] \cite{lvo15}. This is (up to a constant which is irrelevant for our purposes since we use different units) equal to the logarithm of the smallest eigenvalue of the covariance matrix $\rho$, if the experimental measure is measured in the basis where the covariance matrix is diagonal. Let us call this measure $G_{\mathrm{minEig}}(\rho):=-\log \lambda_{\mathrm{min}}(\gamma_{\rho})$. Since it is being used in experiments, it has been used and studied also in the theoretical literature, see for instance \cite{kra03a}. We know of no operational interpretation for this measure that is similar to the interpretation given in section \ref{sec:operational}. 

For multi-mode states, there is one very clear drawback to this measure of squeezing: The two states $\rho$ and $\rho^{\prime}$ with covariance matrices 
\begin{align}
	\gamma=\diag(s,s^{-1},1,1), \qquad \gamma^{\prime}=\diag(s,s^{-1},s,s^{-1})
\end{align}
for some parameter $s\geq 1$ have the same smallest eigenvalue $s$ and therefore the same amount of squeezing: $G_{\mathrm{minEig}}(\rho)=G_{\mathrm{minEig}}(\rho^{\prime})=\log(s)$. Especially if we increase the number of modes a lot, this is not very convincing and our measure $G$ has a better behaviour in that respect.

However, there is also a problem regarding our squeezing measure $G$: Squeezing is not just experimentally challenging, it gets much harder if we want to achieve a larger amount of single-mode squeezing. Currently, the highest amount of squeezing obtained in quantum optical systems seems to be about $13\,\mathrm{dB}$ \cite{and15}, which means that if $G_{\mathrm{minEig}}$ exceeds this value, the state cannot be prepared anymore. In other words, the two states $\rho$ and $\rho^{\prime}$ with covariance matrices
\begin{align}
	\gamma=\diag(s,s^{-1},s,s^{-1}), \qquad \gamma^{\prime}=\diag(s^2,s^{-2},1,1) \label{eqn:expprob}
\end{align}
will not be equally hard to prepare although $G(\gamma)=G(\gamma^{\prime})$. This is due to the fact that we quantified the cost of a single-mode squeezer by $\log s$. 

To amend this, one could propose an easy modification to the definition of $F$ in equation (\ref{eqn:defF}):
\begin{align}
	F_g(\gamma)=\sum_{i=1}^n \log(g(s_i^{\downarrow}(S)))
\end{align}
by inserting another function $g:\R\to \R$ to make sure that for the corresponding measure $G_g(\rho)\equiv G_g(\gamma)$, we have $G_g(\gamma)\neq G_g(\gamma^{\prime})$ in equation (\ref{eqn:expprob}). We pose the following natural restrictions on $g$:
\begin{itemize}
	\item We need $g(1)=1$ since $G_g(\rho)$ should be zero for unsqueezed states. 
	\item Squeezing should get harder with larger parameter, hence $g$ should be monoto\-nously increasing.
	\item For simplicity, we assume $g$ to be differentiable. 
\end{itemize}

Let us first consider squeezing operations and the measure $F_g$. We proved in Proposition \ref{prop:foverlinef} and Theorem \ref{thm:liepaths} that $F$ is minimised by the Euler decomposition. A crucial part was given by Lemma \ref{lem:gelfand}, which implies that we cannot simply reduce the costs by splitting the squeezing operation into smaller and smaller parts. In order to be useful for applications, we must require the same to be true for $F_g$, i.e.
\begin{align*}
	\sum_{i=1}^n \log (g(s_i^{\downarrow}(SS^{\prime})))\leq \sum_{i=1}^n [\log (g(s_i^{\downarrow}(S)))+\log(g(s_i^{\downarrow}(S^{\prime})))]
\end{align*}
This puts quite strong restraints on $g$: Considering $n=1$ and assuming that $S$ and $S^{\prime}$ are diagonal with ordered singular values, this implies that $g$ must fulfill $g(xy)\leq g(x)g(y)$ for $x,y\geq 1$. This submultiplicativity restraint rules out all interesting classes of functions: Assume for instance that $g(2)=c$, then $g(2^n)\leq c^n$, where equality is attained if $g(x)=c\cdot x$. Therefore, all submultiplicative functions $g(x)$ for $x\geq 1$ must lie below $g(x)=c\cdot x$ at least periodically - which means that they will grow less than the function $g(x)=c\cdot x$. Hence, Lemma \ref{lem:gelfand} does not hold if we consider increasingly growing functions $g$ that depict the experimental fact that single-mode squeezing is the more challenging the more squeezing we want. This implies that one could at least in some cases make the measure arbitrarily small by splitting the single-mode squeezer into many small single-mode squeezers, which does not help in experiments.

A way to circumvent the failure of Lemma \ref{lem:gelfand} would be to work with the squeezing of formation measure, which uses squeezed state as a resource and does not allow arbitrary splitting. Likewise, one could require that there was only one operation of type (O4) as specified in section \ref{sec:operational} in any preparation procedure. Once again, we would like to have a look at fast-growing functions $g$ such that $G_g$ remains operational. In order to be able to apply the same methods as in the proofs of Theorem \ref{thm:operational}, we need to require the following:
\begin{enumerate}
	\item $\log\circ g\circ \Ca$ is convex on $(1,\infty)$.
	\item $\log(g(\exp(t)))$ is convex and monotone increasing in $t$. 
\end{enumerate}
The first condition replaces the $\log$-convexity of the Cayley transform in the proof of Theorem \ref{thm:convex}, making the measure convex. Using \cite{bha96}, II.3.5 (v), the second condition makes sure that equation (\ref{eqn:needforg}) still holds. The second condition can probably be relaxed while the proof of Theorem \ref{thm:operational} is still applicable. It is intuitive that both conditions hold for convex functions that grow fast enough. The so-defined measures $G_g$ would, given an appropriate $g$, both be operational and they would both reflect that a lot of squeezing in one mode is experimentally hard. A function $g$ fulfilling these prerequisites is $g(x)=\exp(x)$, which would correspond to a squeezing cost increasing linearly in the squeezing parameter. One could even introduce a cutoff after which $g$ would be infinite, reflecting the impossibility of single-mode squeezing beyond a certain amount.

A simpler way to reflect the problems of equation (\ref{eqn:expprob}) would be to consider the measures $G$ and $G_{\mathrm{minEig}}$ together (calculating $G_{\mathrm{minEig}}$ of both the state and the minimal preparation procedure in $G$). Then the former can be used as an overall measure of required squeezing and the latter to determine whether the maximal amount of single-mode squeezing is possible with the equipment used.

A second problem is associated with the form of the Hamiltonian (\ref{eqn:hamilton}). In the lab, the Hamiltonians that can be implemented might not be single-mode squeezers, but other squeezers such as symmetric two-mode squeezers (such as in \cite{scu97}, chapter 2.8). It is clear how to define a measure $T^{\prime}$ for these kinds of squeezers. Clearly, we can always use passive transformations to express any such multi-mode squeezers as several single-mode squeezers, hence $G$ is a lower bound to $G^{\prime}$. We did not further investigate any other set of Hamiltonians, but we believe that the methods developed here for the simplest case of single-mode squeezers can help in developing other cases of experimental interest.

\subsection{The squeezing measure as a resource measure}
In addition to the first definition of $G$ quantifying the amount of single-mode squeezing needed to prepare a state, we also considered squeezed state as a resource and defined the equivalent of the entanglement of formation, the ``squeezing of formation'' in section \ref{sec:resourcemeasure}. We believe that this is the first instance of an operational measure for the resource theory of squeezing in Gaussian states. Together with section \ref{sec:mathmeasure}, we gave an explicit mathematical argument why squeezing can be seen as a resource theory when restricting to the experimentally interesting class of Gaussian states and Gaussian operations. 

A resource theory of squeezing would be interesting from a theoretical perspective, since it is closely linked to the information-theoretically highly relevant resource theory of entanglement (recall that highly entangled states are usually highly squeezed). We provided the first step to investigate this resource theory from an operational perspective and in a systematic way. 

Given the ``squeezing of formation'', one natural further question would be whether ``distillation of squeezing'' is possible with Gaussian operations. It has been shown that in some sense this is impossible for the measure $G_{\mathrm{minEig}}$ in \cite{kra03b}, while it is possible and has been investigated for non-Gaussian states in many papers (cf. \cite{fil13,hee06} and references therein). In our case, it is not immediately clear that it is not possible to extract single-mode squeezed states of less squeezing from a given squeezed state. This and similar questions could be investigated in future work.

\subsection{Open Questions}
Finally, let us list a number of mostly mathematical questions that remain unanswered:
\begin{enumerate}
	\item Is $G$ continuous everywhere? The only cases not covered are jumps along the boundary of the set of covariance matrices.
	\item Is $G$ additive? We know that it is subadditive and we have a good upper bound on superadditivity.
	\item Numerical tests suggest that the lower bounds are pretty good. Can one find simple to calculate good upper bounds?
	\item Is there a simple analytical formula for $G$? Can one give a matrix attaining the minimum?
	\item How does the measure change if we allow different types of basic Hamiltonians as elementary building blocks, for instance symmetric two-mode squeezers instead of single-mode squeezers?
\end{enumerate}

\subsection*{Acknowledgements}
M.I. thanks Konstantin Pieper for discussions about convex optimization and Alexander M\"{u}ller-Hermes for discussions about \textsc{Matlab}. M.I. is supported by the Studienstiftung des deutschen Volkes.
\printbibliography
\appendix
\section{The Cayley trick for matrices} \label{app:cay}
In this appendix, we give an introduction to the Cayley-transform and prove Proposition \ref{prop:cayley}.
\begin{dfn}
Define the Cayley transform and its inverse via:
\begin{align}
\begin{split}
	\Ca:\{H\in\R^{m\times m}|\operatorname{spec}(H)\cap\{+1\}=\emptyset\}\to \R^{m\times m}\\
	H \mapsto \frac{\id+H}{\id-H}
\end{split} \\
\begin{split}
	\Ca^{-1}:\{S\in\R^{m\times m}|\operatorname{spec}(H)\cap\{-1\}=\emptyset\}\to\R^{m\times m} \\
	S\mapsto \frac{S-\id}{S+\id}
\end{split}
\end{align}
\end{dfn}
\begin{lem}
$\Ca$ and $\Ca^{-1}$ are well-defined and inverses of each other. Moreover, $\Ca$ is a diffeomorphism onto its image $\operatorname{dom}(\Ca^{-1})$.
\end{lem}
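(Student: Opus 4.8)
The plan is to reduce everything to scalar-style algebra by exploiting that $\Ca(H)$ and $\Ca^{-1}(S)$ are rational functions of a \emph{single} matrix, so that all matrices appearing in a given expression are polynomials in $H$ (respectively $S$) and hence commute. First I would make the fraction notation precise: for $H$ with $1\notin\operatorname{spec}(H)$ the matrix $\id-H$ is invertible, and since $\id+H$ and $(\id-H)^{-1}$ commute, the two candidate readings $(\id+H)(\id-H)^{-1}$ and $(\id-H)^{-1}(\id+H)$ agree; this commutativity is precisely what makes the expression well-defined. Likewise $\Ca^{-1}$ is well-defined exactly on matrices $S$ with $-1\notin\operatorname{spec}(S)$, where $S+\id$ is invertible.

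Next I would verify the inverse relations by direct computation, keeping the commutativity in mind. Writing $S:=\Ca(H)$, I would compute $S+\id=2(\id-H)^{-1}$ and $S-\id=2H(\id-H)^{-1}$. The first identity shows $S+\id$ is invertible, hence $S\in\operatorname{dom}(\Ca^{-1})$, so $\Ca$ indeed maps into the domain of $\Ca^{-1}$. Then $\Ca^{-1}(S)=(S-\id)(S+\id)^{-1}=2H(\id-H)^{-1}\cdot\frac{1}{2}(\id-H)=H$, using that all factors commute. The symmetric computation with $H:=\Ca^{-1}(S)$ gives $\id-H=2(S+\id)^{-1}$ and $\id+H=2S(S+\id)^{-1}$, showing $H\in\operatorname{dom}(\Ca)$ and $\Ca(H)=S$. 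This simultaneously establishes that $\Ca$ and $\Ca^{-1}$ are mutually inverse and that the image of $\Ca$ is precisely $\operatorname{dom}(\Ca^{-1})$ (the forward inclusion from the identity $S+\id=2(\id-H)^{-1}$, the reverse from $\id-H=2(S+\id)^{-1}$).

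Finally I would address smoothness. The domain of $\Ca$ is open because it is cut out by the condition $\det(\id-H)\neq 0$, and similarly for $\Ca^{-1}$; on these open sets $\Ca$ and $\Ca^{-1}$ are compositions of matrix addition, multiplication, and inversion, all of which are smooth (matrix inversion being smooth on the open set of invertible matrices). Having already shown the two maps to be mutual inverses between these open subsets of $\R^{m\times m}$, it follows that $\Ca$ is a diffeomorphism onto its image $\operatorname{dom}(\Ca^{-1})$.

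I expect the only real subtlety to be the bookkeeping around commutativity — ensuring every ``fraction'' is unambiguous and that the cancellations in the middle step are legitimate — rather than any genuine obstacle; the argument is otherwise a routine transcription of the scalar Cayley identity $\frac{1+x}{1-x}$ to the matrix setting.
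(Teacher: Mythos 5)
Your proof is correct, and for the one step where the paper does real work it takes a genuinely different and cleaner route. Where the paper shows $\operatorname{spec}(\Ca(H))\cap\{-1\}=\emptyset$ by passing to the complex Jordan normal form of $H$, explicitly inverting the Jordan blocks $J(n_i,1-\lambda_i)$, and reading off that the product is upper triangular with diagonal entries $(1+\lambda_i)/(1-\lambda_i)\neq -1$, you instead derive the identities $\Ca(H)+\id=2(\id-H)^{-1}$ and $\Ca(H)-\id=2H(\id-H)^{-1}$, from which invertibility of $\Ca(H)+\id$ is immediate. Your identities also make the cancellation $\Ca^{-1}(\Ca(H))=H$ a one-line computation and, via the mirror identities $\id-\Ca^{-1}(S)=2(S+\id)^{-1}$ and $\id+\Ca^{-1}(S)=2S(S+\id)^{-1}$, they establish \emph{surjectivity} of $\Ca$ onto $\operatorname{dom}(\Ca^{-1})$ explicitly, a point the paper handles only with the remark that ``the same considerations with a few signs reversed'' apply. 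The smoothness argument (open domains cut out by $\det(\id-H)\neq 0$, composition of addition, multiplication and inversion) coincides with the paper's decomposition $\Ca=f_1\circ f_2\circ f_3$. What the paper's heavier Jordan-form computation buys is a description of the full spectral and Jordan structure of $\Ca(H)$ in terms of that of $H$, which is reused in spirit elsewhere (e.g.\ in the eigenvalue formula $\lambda_i(\Ca(H))=(1+\lambda_i(H))/(1-\lambda_i(H))$ appearing in Proposition \ref{prop:reform}); for the lemma as stated, your shorter argument suffices.
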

\begin{proof}
If $\operatorname{spec}(H)\cap\{+1\}=\emptyset$, then $\id-H$ is invertible and $H\mapsto (\id+H)/(\id-H)$ is well-defined, as $[\id+H,\id-H]=0$. Now let $H\in \R^{m\times m}$ be such that $\operatorname{spec}(H)\cap\{+1\}=\emptyset$. We will show that $\Ca(H)$ contains no eigenvalue $-1$. To see this, let
\begin{align}
	H=T\bigoplus_i J(n_i,\lambda_i)T^{-1}
\end{align}
be the Jordan normal form with block sizes $n_i$ and eigenvalues $\lambda_i$. Let us here consider the complex Jordan decomposition, i.e. $\lambda_i$ are allowed to be complex. Then:
\begin{align}
	\id+H=T\bigoplus_i J(n_i,1+\lambda_i)T^{-1}, \quad \id-H=T\bigoplus_i J(n_i,1-\lambda_i)T^{-1}
\end{align}
and thus
\begin{align*}
	\Ca(H)=T\bigoplus_i J(n_i,1+\lambda_i)\cdot J(n_i,1-\lambda_i)^{-1}T^{-1}
\end{align*}
For the inverse of the Jordan blocks, we can use the well-known formula:
\begin{align*}
	\begin{pmatrix}{} 1-\lambda_i & 1 & \ldots & 0 \\
		0 & 1-\lambda_i & \ldots & 0 \\
		\vdots & \vdots & \ddots & \vdots \\
		0 & 0 & \ldots & 1-\lambda_i
	\end{pmatrix}^{-1} 
	=\begin{pmatrix}{} 
		\frac{1}{1-\lambda_i} & \frac{-1}{(1-\lambda_i)^2} & \ldots & \frac{(-1)^{n_i-1}}{(1-\lambda_i)^{n_i}} \\
		0 & \frac{1}{1-\lambda_i} & \ldots & \frac{(-1)^{n_i-2}}{(1-\lambda_i)^{n_i-1}} \\
		\vdots & \vdots & \ddots & \vdots \\
		0 & 0 & \ldots & \frac{1}{1-\lambda_i}
	\end{pmatrix}
\end{align*}
In particular, this is still upper triangular. Then $J(n_i,1+\lambda_i)J(n_i,1-\lambda_i)^{-1}$ is still upper triangular with diagonal entries $(1+\lambda_i)/(1-\lambda_i)$. Since $(1+\lambda_i)/(1-\lambda_i)\neq -1$ for all $\lambda_i\in\C$, we find that $J(n_i,1+\lambda_i)J(n_i,1-\lambda_i)^{-1}$ cannot have eigenvalue $-1$ for any $i$, hence $\operatorname{spec}(\Ca(H))\cap \{-1\}\neq \emptyset$.

Finally, we observe:
\begin{align*}
	\Ca^{-1}\Ca(H)=\frac{\frac{\id+H}{\id-H}-\id}{\frac{\id+H}{\id-H}+\id}
		=\frac{\id+H-\id+H}{\id+H+\id-H}=H
\end{align*}
Moreover, set $f_1(A)=-2A-\id$ for all matrices $A\in\R^{m\times m}$, $f_2(A)=A^{-1}$ for all invertible matrices $A\in\R^{m\times m}$ and $f_3(A)=A-\id$ for all matrices $A\in\R^{m\times m}$. Then we have
\begin{align}
	f_1\circ f_2\circ f_3(H)=f_1\circ	f_2(H-\id)=f_1\left(\frac{1}{H-\id}\right)=-\frac{2}{H-\id}-\id=\Ca(H) \label{eqn:f1f2f3}
\end{align}
Since $f_i$ are differentiable for all $i=1,2,3$, we have that $\Ca$ is invertible.

The same considerations with a few signs reversed also lead us to conclude that $\Ca^{-1}$ is well-defined and indeed the inverse of $\Ca$. We can similarly decompose $\Ca^{-1}$ to show that it is differentiable, making $\Ca$ a diffeomorphism. Here, we define $g_1(A)=2A+\id$ for all $A\in\R^{m\times m}$, $g_2(A)=A^{-1}$ for all invertible $A\in\R^{m\times m}$ and $g_3(A)=A+\id$ for all $A\in\R^{m\times m}$. A quick calculation shows
\begin{align}
	g_1\circ g_2\circ g_3(S)=\Ca^{-1}(S). \label{eqn:g1g2g3}
\end{align}
\end{proof}

Denote by $\mathcal{H}$ the set 
\begin{align}
	\mathcal{H}:=\left\{H=\begin{pmatrix}{} A & B \\ B & -A \end{pmatrix}\middle|A\in\R^{2n\times 2n} A^T=A,B^T=B,-\id<H<\id\right\}
\end{align}
where $H<\id$ means that $\id-H$ is positive definite (not just positive semidefinite). We can then prove the Cayley trick:
\begin{prop}
Let $H\in \R^{2n\times 2n}$. Then $H\in\mathcal{H} \Leftrightarrow (\Ca(H)\in Sp(2n) \wedge \Ca(H)\geq iJ)$.
\end{prop}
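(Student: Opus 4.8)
The plan is to use the fact that the Cayley transform intertwines the Lie algebra $\mathfrak{sp}(2n)$ with the group $Sp(2n)$, and to keep the \emph{symmetry} condition and the \emph{spectral} condition in $\mathcal{H}$ separate. Throughout I write $S:=\Ca(H)$. The algebraic core is the observation that for any $T\in\mathfrak{sp}(2n)$ (equivalently $T^TJ=-JT$) with $1\notin\operatorname{spec}(T)$ one has $\Ca(T)\in Sp(2n)$. This is a direct computation: from $T^TJ=-JT$ one gets $(\id+T^T)J=J(\id-T)$ and $(\id-T^T)J=J(\id+T)$, whence
\begin{align*}
	\Ca(T)^TJ\Ca(T)=(\id-T^T)^{-1}(\id+T^T)J(\id+T)(\id-T)^{-1}=J.
\end{align*}
Since $\Ca$ is a diffeomorphism onto its image (preceding lemma) with $\Ca^{-1}$ given by the analogous formula, running the computation backwards gives the converse $S\in Sp(2n)\Rightarrow H\in\mathfrak{sp}(2n)$. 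I would also record two elementary facts: $\Ca$ sends symmetric matrices to symmetric matrices (the commuting factors $\id\pm H$ are symmetric), and the members of $\mathcal{H}$ are precisely the symmetric elements of $\mathfrak{sp}(2n)$ with spectrum in $(-1,1)$, i.e.\ the elements of $\pi(n)$ with $-\id<H<\id$ (the block form $\begin{pmatrix}A&B\\B&-A\end{pmatrix}$ with $A,B$ symmetric is exactly $H=H^T$ together with $HJ+JH=0$).

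Next I would translate the spectral constraints. By the computation in equation (\ref{eqn:aib}), the eigenvalues of a symmetric $H\in\mathfrak{sp}(2n)$ come in pairs $\pm\mu_i$ with $\mu_i=s_i(A+iB)\geq 0$; as $S$ is a scalar function of $H$, its eigenvalues are $\tfrac{1\pm\mu_i}{1\mp\mu_i}$. Hence $-\id<H<\id$ (all $\mu_i<1$) is equivalent to $S>0$, whereas any $\mu_i>1$ produces a negative eigenvalue of $S$. It then remains to match $S>0$ with $S\geq iJ$ for a symmetric symplectic $S$.

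This last equivalence is the one place that needs genuine input, so I expect it to be the main obstacle. For $S>0\Rightarrow S\geq iJ$ I would use that a positive definite symplectic matrix $S\in\Pi(n)$ has a symplectic square root $S^{1/2}$, which is symmetric and symplectic; then $S=(S^{1/2})^T\,\id\,S^{1/2}$ is a Williamson decomposition all of whose symplectic eigenvalues equal $1$, so Corollary \ref{cor:symplecticspec} gives $S\geq iJ$. For the reverse, I note that $S\geq iJ$ already forces $S=S^T$ (otherwise $S-iJ$ is not Hermitian), and testing against real vectors $v$, where $v^TJv=0$, yields $v^TSv\geq 0$, so $S\geq 0$; invertibility of the symplectic $S$ then upgrades this to $S>0$.

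Assembling the pieces is pure bookkeeping. For ``$\Rightarrow$'': $H\in\mathcal{H}$ gives $H\in\mathfrak{sp}(2n)$ with $1\notin\operatorname{spec}(H)$, so $S\in Sp(2n)$, and $\operatorname{spec}(H)\subset(-1,1)$ gives $S>0$, hence $S\geq iJ$. For ``$\Leftarrow$'': the assumption $S\geq iJ$ forces $S$ symmetric, so $H=\Ca^{-1}(S)$ is symmetric; $S\in Sp(2n)$ gives $H\in\mathfrak{sp}(2n)$, placing $H$ in the block form of $\pi(n)$; and $S\geq iJ\Rightarrow S>0\Rightarrow\operatorname{spec}(H)\subset(-1,1)$, i.e.\ $-\id<H<\id$. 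Thus $H\in\mathcal{H}$, completing the equivalence.
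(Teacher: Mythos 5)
Your proof is correct and follows essentially the same route as the paper's: the same $J$-intertwining computation showing that the Cayley transform exchanges symmetric elements of $\mathfrak{sp}(2n)$ with symmetric symplectic matrices, the same spectral translation between $-\id<H<\id$ and $\Ca(H)>0$, and the same appeal to Corollary \ref{cor:symplecticspec} to pass between positive definiteness and $\geq iJ$. The only (harmless) differences are organizational — you phrase the block-form condition as ``symmetric and in $\mathfrak{sp}(2n)$'' rather than verifying it entrywise, and you obtain $S>0$ from $S\geq iJ$ by testing on real vectors instead of averaging with the complex conjugate inequality.
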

\begin{proof}
Note that for $H\in\mathcal{H}$, $1\notin \operatorname{spec}(H)$, hence $\Ca(H)$ is always well-defined. $\Ca(H)=(\id+H)(\id-H)^{-1}\geq 0$, since $\id+H\geq 0$ and $(\id-H)^{-1}\geq 0$ as $-\id<H<\id$. Observe:
\begin{align*}
	HJ=\begin{pmatrix}{} A & B \\ B & -A\end{pmatrix}\begin{pmatrix} 0 & \id \\ -\id & 0 \end{pmatrix}
		=\begin{pmatrix}{} -B & A \\ A & B\end{pmatrix}
		=-\begin{pmatrix} 0 & \id \\ -\id & 0 \end{pmatrix}\begin{pmatrix}{} A & B \\ B & -A\end{pmatrix}
		=-JH.
\end{align*}
Then we can calculate:
\begin{align*}
	(\id+H)\cdot (\id-H)^{-1}J&=-(\id+H)\cdot (J(\id-H))^{-1}=-(\id+H)\cdot ((\id+H)J)^{-1} \\
	&=(\id+H)J(\id+H)^{-1}=J(\id-H)\cdot (\id+H)^{-1},
\end{align*}
hence	$\Ca(H)J=J\Ca(H)^{-1}$ and as $\Ca(H)$ is Hermitian, we have $\Ca(H)^TJ\Ca(H)=J$ and $\Ca(H)$ is symplectic. Via Corollary \ref{cor:symplecticspec}, as $\Ca(H)$ is symplectic and positive definite, we can conclude that $\Ca(H)\geq iJ$. 

Conversely, let $S\in Sp(2n)$ and $S\geq iJ$. Then $S\geq -iJ$ by complex conjugation and $S\geq 0$ after averaging the two inequalities. Since any element of $Sp(2n)$ is invertible, this implies $S>0$. From this we obtain:
\begin{align*}
	\frac{S-\id}{S+\id}&> -\id \qquad \mathrm{as~}S+\id > \id \\
	\frac{S-\id}{S+\id}&< \id \qquad ~~\mathrm{always} 
\end{align*}
Write $(S-\id)\cdot (S+\id)^{-1}=\begin{pmatrix}{} A & B \\ C & D\end{pmatrix}$. As $S$ is Hermitian, $A^T=A$ and $C=B^T$, $D^T=D$. We have on the one hand
\begin{align*}
	\frac{S-\id}{S+\id}J&=(S-\id)\cdot (-S^{-T}J-J)^{-1}=(S-\id)(-J)^{-1}(S^{-T}+\id)^{-1} \\
	&=(SJ-J)\cdot (S^{-T}+\id)^{-1}=J(S^{-T}-\id)S^TS^{-T}(S^{-T}+\id)^{-1} \\
	&=-J\frac{S-\id}{S+\id}
\end{align*}
and on the other hand
\begin{align*}
	\begin{pmatrix}{} A & B \\ B^T & D\end{pmatrix}J&=\begin{pmatrix}{} -B & A \\ -D & B^T \end{pmatrix} \\ -J\begin{pmatrix}{} A & B \\ B^T & D\end{pmatrix}&=\begin{pmatrix}{} -B^T & -D \\ A & B \end{pmatrix}
\end{align*}
Put together this implies $B=B^T$ and $D=-A$, hence $\Ca^{-1}(S)\in \mathcal{H}$, which is what we claimed.
\end{proof}

\begin{prop}
The Cayley transform $\Ca$ is operator monotone and operator convex on the set of $A=A^T\in\R^{m\times m}$ with $\operatorname{spec}(A)\subset (-1,1)$. $\Ca^{-1}$ is operator monotone and operator concave on the set of $A=A^T\in\R^{m\times m}$ with $\operatorname{spec}(A)\subset (-1,\infty)$.
\end{prop}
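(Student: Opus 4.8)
The plan is to reduce both statements to a single classical fact: the map $X\mapsto X^{-1}$ is operator monotone decreasing and operator convex on the cone of positive definite matrices (see e.g.\ \cite{bha07}). The key observation is that, after a partial-fraction rewriting, both $\Ca$ and $\Ca^{-1}$ are merely affine functions of one matrix inverse. Concretely, I would first record the identities
\begin{align*}
	\Ca(H)=\frac{\id+H}{\id-H}=-\id+2(\id-H)^{-1}, \qquad
	\Ca^{-1}(S)=\frac{S-\id}{S+\id}=\id-2(S+\id)^{-1}.
\end{align*}
On the stated domains the shifted arguments are strictly positive: if $\operatorname{spec}(H)\subset(-1,1)$ then $\id-H>0$, and if $\operatorname{spec}(S)\subset(-1,\infty)$ then $S+\id>0$. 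This is exactly what is needed to apply the classical result to $(\id-H)^{-1}$ and $(S+\id)^{-1}$.

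For $\Ca$, monotonicity is immediate: if $A\le B$ with spectra in $(-1,1)$ then $\id-A\ge \id-B>0$, so $(\id-A)^{-1}\le(\id-B)^{-1}$ by operator monotonicity (decreasing) of the inverse, and the affine postcomposition $X\mapsto -\id+2X$ preserves this inequality. For convexity, I would use $\id-\bigl(tA+(1-t)B\bigr)=t(\id-A)+(1-t)(\id-B)$ together with operator convexity of the inverse to obtain $\bigl(\id-(tA+(1-t)B)\bigr)^{-1}\le t(\id-A)^{-1}+(1-t)(\id-B)^{-1}$; again applying $X\mapsto -\id+2X$ yields operator convexity of $\Ca$.

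The treatment of $\Ca^{-1}$ is identical up to one sign, since here the coefficient multiplying the inverse is negative. Monotonicity follows because $A\le B$ gives $(A+\id)^{-1}\ge (B+\id)^{-1}$, and multiplying by $-2$ reverses the inequality back to the correct direction. The negative coefficient is precisely what turns the operator convexity of the inverse into operator concavity of $\Ca^{-1}$: from $\bigl(t(A+\id)+(1-t)(B+\id)\bigr)^{-1}\le t(A+\id)^{-1}+(1-t)(B+\id)^{-1}$ one multiplies by $-2$ and adds $\id$, reversing the inequality to the concavity estimate.

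I expect no serious obstacle here; the entire content sits in the two classical properties of the matrix inverse and in the affine rewriting. The only points requiring care are \emph{(i)} checking that the spectral constraints keep $\id-H$ and $S+\id$ strictly positive, so that the inverse-function results genuinely apply, and \emph{(ii)} tracking the sign of the leading coefficient, which is exactly what distinguishes the convexity of $\Ca$ from the concavity of $\Ca^{-1}$.
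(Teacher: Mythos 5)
Your proof is correct and follows essentially the same route as the paper: both arguments rewrite $\Ca$ and $\Ca^{-1}$ as affine functions of a single matrix inverse and then invoke the operator antimonotonicity and operator convexity of $X\mapsto X^{-1}$. Your version is if anything slightly cleaner, since by inverting $\id-H$ rather than $H-\id$ you keep the argument of the inverse positive definite throughout and avoid the paper's detour through the behaviour of the inverse on negative definite matrices.
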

\begin{proof}
Recall equation (\ref{eqn:f1f2f3}) and the definition of $f_1,f_2,f_3$. $f_1$ and $f_3$ are affine and thus for all $X\geq Y$: $f_3(X)\geq f_3(Y)$ and $f_1(X)\leq f_1(Y)$. For $X\geq Y\geq 0$, we also have $f_2(Y)\geq f_2(X)\geq 0$ since matrix inversion is antimonotone. Now let $-\id\leq Y\leq X \leq 1$, then $-2\id\leq f_3(Y)\leq f_3(X)\leq 0$ and $-1/2\id\geq f_2\circ f_3(Y)\geq f_2\circ f_3(X)\geq 0$ and finally $\Ca(X)\geq \Ca(Y)\geq 0$, proving monotonicity of $\Ca$. Similarly, one can prove that $\Ca^{-1}$ is monotonous using equation (\ref{eqn:g1g2g3}).

For the convexity of $\Ca$, we note that since $f_1,f_3$ are affine they are both convex and concave. It is well-known that $1/x$ is operator convex for positive definite and operator concave for negative definite matrices (to prove this, consider convexity/concavity of the functions $\langle \psi, X^{-1}\psi\rangle$ for all $\psi$). It follows that for $-\id\leq H\leq \id$ we have $f_3(x)\leq 0$, hence $f_2\circ f_3$ is operator concave on $-\id \leq H\leq \id$. As $f_1(A)=-2A-\id$, this implies that $\Ca=f_1\circ f_2\circ f_3$ is operator convex. 

For the concavity of $\Ca^{-1}$, recall equation (\ref{eqn:g1g2g3}) and the definitions of $g_1,g_2,g_3$. Then, given $-\id\leq X$, we have $g_3(X)$ is positive definite and concave as an affine map. $g_2$ is concave on positive definite matrices, as $1/x$ is convex and $(-1)$ is order-reversing, hence $-1/x$ is concave on positive definite matrices. Since $g_1$ is concave as an affine map, $g_1\circ g_2\circ g_3=\Ca^{-1}$ is operator concave for all $-\id\leq X$. 
\end{proof}

\begin{lem}
$\Ca:\R\to \R$ is $\log$-convex on $[0,1)$.
\end{lem}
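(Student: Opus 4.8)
The plan is to reduce log-convexity to an elementary second-derivative computation, since here we are dealing with a scalar function of one real variable rather than with operator-monotonicity. By definition, log-convexity of $\Ca$ on $[0,1)$ means that the composite $h(x):=\log \Ca(x)$ is convex there, so it suffices to check $h''(x)\geq 0$ on the open interval $(0,1)$ (convexity on the half-open interval then follows by continuity of $\Ca$ at the endpoint $x=0$).

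First I would write out $h$ explicitly. Since $\Ca(x)=(1+x)/(1-x)>0$ for $x\in[0,1)$, the logarithm is well-defined and
\begin{align*}
	h(x)=\log\left(\frac{1+x}{1-x}\right)=\log(1+x)-\log(1-x).
\end{align*}
Differentiating twice gives
\begin{align*}
	h'(x)=\frac{1}{1+x}+\frac{1}{1-x}, \qquad h''(x)=-\frac{1}{(1+x)^2}+\frac{1}{(1-x)^2}.
\end{align*}

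The decisive step is then to observe that $h''(x)\geq 0$ on $[0,1)$. Indeed, $h''(x)\geq 0$ is equivalent to $(1-x)^{-2}\geq (1+x)^{-2}$, i.e. to $(1+x)^2\geq (1-x)^2$, and this holds because $(1+x)^2-(1-x)^2=4x\geq 0$ for all $x\geq 0$. Hence $h$ is convex on $[0,1)$, which is precisely the claimed log-convexity of $\Ca$.

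I do not expect any genuine obstacle here: the statement is the scalar specialisation that was needed in the proof of Lemma~\ref{lem:convf}, and unlike the operator-convexity claims it requires no functional-calculus machinery. The only points meriting a word of care are that $\Ca$ is strictly positive on $[0,1)$ (so that $\log\Ca$ makes sense throughout) and that the inequality $h''\geq 0$ is strict only on $(0,1)$, with equality at $x=0$; neither affects convexity on the closed-left interval.
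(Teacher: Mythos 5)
Your proof is correct and follows essentially the same route as the paper: both reduce log-convexity to checking $h''\geq 0$ for $h(x)=\log\frac{1+x}{1-x}$, and your expression $-\frac{1}{(1+x)^2}+\frac{1}{(1-x)^2}$ is just the paper's $\frac{4x}{(1-x^2)^2}$ before combining over a common denominator. Nothing further is needed.
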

\begin{proof}
We need to see that the function $h(x)=\log \frac{1+x}{1-x}$ is convex for $x\in[0,1)$. Since $h$ is differentiable on $[0,1)$, this is true iff the second derivative is nonnegative:
\begin{align*}
	h^{\prime\prime}(x)=\frac{4x}{(1-x^2)^2}
\end{align*}
is clearly positive on $[0,1)$ and $h$ is therefore $\log$-convex. 
\end{proof}

\section{Continuity of set-valued functions}\label{app:setvalue}
Here, we provide some definitions and lemmata from set-valued analysis for the reader's convenience. This branch of mathematics deals with functions $f:X\to 2^Y$ where $X$ and $Y$ are topological spaces and $2^Y$ denotes the power set of $Y$. 

In order to state the results interesting to us we define:
\begin{dfn}
Let $X,Y\subseteq \mathbb{R}^{n\times m}$ and $f:X\to 2^Y$ be a set-valued function. Then we say that a function is \emph{upper semicontinuous} (often also called \emph{upper hemicontinuous} to distinguish it from other notions of continuity) at $x_0\in X$ if for all open neighbourhoods $Q$ of $f(x_0)$ there exists an open neighbourhood $W$ of $x_0$ such that $W\subseteq \{x\in X| f(x)\subset Q\}$.

Likewise, we call it \emph{lower semicontinuous} (often called \emph{lower hemicontinuous}) at a point $x_0$ if for any open set $V$ intersecting $f(x_0)$, we can find a neighbourhood $U$ of $x_0$ such that $f(x)\cap V\neq \emptyset$ for all $x\in U$. 
\end{dfn}
Note that the definitions are valid in all topological spaces, but we only need the case of finite dimensional normed vector spaces. Using the metric, we can give the following characterisation of upper semicontinuity:
\begin{lem} \label{lem:uHsc}
Let $X,Y\subseteq \mathbb{R}^{n\times m}$ and $f:X\to 2^Y$ be a set-valued function such that $f(x)$ is compact for all $x$. Then $f$ is upper semicontinuous at $x_0$ if and only if for all $\varepsilon>0$ there exists a $\delta>0$ such that for all $x\in X$ with $\|x-x_0\|< \delta$ we have: for all $y\in f(x)$ there exists a $\tilde{y}\in f(x_0)$ such that $\|y-\tilde{y}\|<\varepsilon$.
\end{lem}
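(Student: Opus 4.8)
The plan is to prove the two implications separately, observing that only one direction genuinely uses the compactness of $f(x_0)$.

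For the ``only if'' direction (topological upper semicontinuity implies the metric condition), I would fix $\varepsilon>0$ and consider the open $\varepsilon$-neighbourhood $Q_\varepsilon:=\{y\in Y\mid \exists\,\tilde{y}\in f(x_0),\ \|y-\tilde{y}\|<\varepsilon\}$ of $f(x_0)$. This set is open and contains $f(x_0)$, so the definition of upper semicontinuity yields an open neighbourhood $W$ of $x_0$ with $f(x)\subset Q_\varepsilon$ for every $x\in W$. Choosing $\delta>0$ so that $B(x_0,\delta)\cap X\subseteq W$ then reproduces the metric condition verbatim: each $y\in f(x)$ lies within $\varepsilon$ of some $\tilde{y}\in f(x_0)$. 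This direction uses nothing about $f(x_0)$ being compact.

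For the ``if'' direction (metric condition implies upper semicontinuity), the key and only nontrivial step is the standard compactness fact: if $K:=f(x_0)$ is compact and $Q$ is any open set with $K\subseteq Q$, then some $\varepsilon$-neighbourhood of $K$ is already contained in $Q$. I would prove this by considering the continuous map $k\mapsto d(k,Y\setminus Q)$ on $K$ (taking $\varepsilon$ arbitrary in the degenerate case $Y\setminus Q=\emptyset$, i.e.\ $Q=Y$). It is strictly positive since $K\subseteq Q$ and $Q$ is open, so compactness forces it to attain a positive minimum, say $2\varepsilon$, and the triangle inequality then gives $\{y\mid d(y,K)<\varepsilon\}\subseteq Q$. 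Given an arbitrary open neighbourhood $Q$ of $f(x_0)$, I would extract such an $\varepsilon$, feed it into the metric hypothesis to obtain the matching $\delta$, and set $W:=B(x_0,\delta)\cap X$. Then for $x\in W$ every $y\in f(x)$ lies within $\varepsilon$ of $f(x_0)$, hence in $Q$, so $f(x)\subseteq Q$, which is exactly upper semicontinuity at $x_0$.

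The main obstacle is precisely this compactness step in the second direction: without it, a family that is ``uniformly $\varepsilon$-close'' to $f(x_0)$ in the metric sense need not fit inside every open neighbourhood, because open neighbourhoods of a noncompact set can squeeze arbitrarily close to it in some directions. Everything else is a routine dictionary between the $\varepsilon$-$\delta$ formulation and the neighbourhood language of the topological definition.
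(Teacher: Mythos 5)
Your proposal is correct and follows essentially the same route as the paper: the forward direction is the same translation of the $\varepsilon$-neighbourhood $B(\varepsilon,f(x_0))$ into the neighbourhood definition, and the backward direction hinges on the same key fact that a compact $f(x_0)$ inside an open $Q$ admits an $\varepsilon$-neighbourhood still contained in $Q$. The only (inessential) difference is how that fact is established: you take the positive minimum of the continuous map $k\mapsto d(k,Y\setminus Q)$ on the compact set, whereas the paper argues by contradiction with a sequence $y_n\in Y\setminus Q$ approaching $f(x_0)$ and extracts a convergent subsequence; both are standard and equivalent here.
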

\begin{proof}
$\Rightarrow$: Let $f$ be lower semicontinuous at $x_0$. For any $\varepsilon>0$ the set 
\begin{align}
	B(\varepsilon,f(x_0))\bigcup_{y\in f(x_0)} \{\hat{y}\in Y|\|y-\hat{y}\|<\varepsilon\} \label{eqn:defball}
\end{align}
is an open neighbourhood of $f(x_0)$. Hence there exists an open neighbourhood $W$ of $x_0$, which contains a ball of radius $\delta>0$ such that $B_{\delta}(x_0)\subseteq W\subseteq \{x\in X| f(x)\subset B(\varepsilon,f(x_0))\}$. Clearly this implies the statement.

$\Leftarrow$: Let $Q$ be a neighbourhood of $f(x_0)$. Since $f(x_0)$ is compact this implies that there is a $\varepsilon>0$ such that $B(\varepsilon,f(x_0))\subseteq Q$ where this set is defined as in equation (\ref{eqn:defball}). If this were not the case, for every $n\in \mathbb{N}$ there must be a $y_n\in Y\setminus Q$ such that $\inf_{\hat{y}\in f(x_0)} \|y_n-\hat{y}\|<1/n$. Since by construction this implies that $y_n\in B(1,f(x_0))$, which is compact, a subsequence of these $y_n$ must converge to $y$. As $Y\setminus Q$ is closed as $Q$ is open, $y\in Y\setminus Q$. However, $\inf_{\hat{y}\in f(x_0)}\|y-\hat{y}\|=0$ by construction and since $f(x_0)$ is compact, the infimum is attained, which implies $y\in f(x_0)$. This contradicts the fact that $Q$ is a neighbourhood of $f(x_0)$.

Hence we know that for any open $Q$ containing $f(x_0)$ there exists a $\varepsilon>0$ such that $B(\varepsilon,f(x_0))\subseteq Q$. By assumption, this implies that there exists a $\delta>0$ such that $B_{\delta}(x_0)\subseteq \{x\in X|f(x)\subset B(\varepsilon,f(x_0))\}$. Since clearly $\{x\in X|f(x)\subset B(\varepsilon,f(x_0))\}\subseteq \{x\in X|f(x)\subset Q\}$ we can choose $W:=B_{\delta}(x_0)$ to finish the proof.
\end{proof}
This second characterisation is sometimes called \emph{upper Hausdorff semicontinuity} and it can equally be defined in any metric space. Clearly, the notions can differ for set-valued functions with noncompact values or in spaces which are not finite dimensional. With these two definitions, we can state the following classic result:

\begin{prop}[\cite{dol79}] \label{prop:usc}
Let $Y$ be a complete metric space, $X$ a topological space and $f:X\to 2^Y$ a compact-valued set-valued function. The following statements are equivalent:
\begin{itemize}
	\item $f$ is upper semicontinuous at $x_0$.
	\item for each closed $K\subseteq X$, $K\cap f(x_0)$ is upper semicontinuous at $x_0$.
\end{itemize}
\end{prop}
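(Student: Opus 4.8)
The plan is to prove the two implications separately, noting first that the only sensible reading of the second statement is that the set-valued map $x \mapsto K \cap f(x)$ is upper semicontinuous at $x_0$ for every closed $K \subseteq Y$, the intersection being taken in the value space $Y$ (not in $X$, as the displayed text might suggest). With this understood, the implication from the second statement to the first is immediate: taking $K = Y$, which is closed, gives $K \cap f(x) = f(x)$ for all $x$, so upper semicontinuity of $f$ itself falls out as a special case.

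For the converse I would fix a closed set $K \subseteq Y$ and verify that $g(x) := K \cap f(x)$ is upper semicontinuous at $x_0$ straight from the definition given in the excerpt. Let $Q$ be an arbitrary open neighbourhood of $g(x_0) = K \cap f(x_0)$. The key construction is the open set $O := Q \cup (Y \setminus K)$, which is open precisely because $K$ is closed. The crucial observation is that $O$ is a neighbourhood of all of $f(x_0)$, and not merely of $g(x_0)$: any $y \in f(x_0)$ either lies outside $K$, hence in $Y \setminus K \subseteq O$, or lies in $K$, hence in $K \cap f(x_0) = g(x_0) \subseteq Q \subseteq O$.

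With this in hand, the assumed upper semicontinuity of $f$ at $x_0$ supplies a neighbourhood $W$ of $x_0$ with $f(x) \subseteq O$ for all $x \in W$. Then for every such $x$,
\[
	g(x) = K \cap f(x) \subseteq K \cap O = K \cap \bigl(Q \cup (Y \setminus K)\bigr) = K \cap Q \subseteq Q,
\]
which is exactly the condition for $g$ to be upper semicontinuous at $x_0$. The compactness of the values of $f$ enters only to guarantee that $g(x_0)$ is itself compact, being a closed subset of the compact set $f(x_0)$, so that the conclusion is a statement of the same type as the hypothesis.

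I do not expect a genuine obstacle here; the entire content of the argument is the choice of the enlarged open set $O = Q \cup (Y \setminus K)$, which \emph{fills in} the part of $f(x_0)$ lying outside $K$ so that the hypothesis on $f$ becomes applicable. The only point requiring care is the typechecking of the statement, namely recognising that $K$ must be a subset of the value space $Y$ rather than of the domain $X$; once this is fixed, the proof reduces to the one-line set manipulation displayed above.
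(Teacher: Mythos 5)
Your proof is correct and complete. Note that the paper itself offers no proof of this proposition: it is quoted as a classical result with a citation to the reference [dol79], so there is no in-paper argument to compare against. Your reading of the statement is the right one --- the closed set $K$ must live in the value space $Y$ (the displayed ``$K\subseteq X$'' is a typo), and ``$K\cap f(x_0)$ is upper semicontinuous'' must mean that the map $x\mapsto K\cap f(x)$ is upper semicontinuous at $x_0$; this is exactly how the proposition is invoked in the proof of Theorem \ref{thm:continuous}, where $\mathcal{A}(\Gamma)$ is intersected with the closed set $\{X\mid iJ\leq X\}$ of matrices. The backward implication via $K=Y$ and the forward implication via the enlarged open set $O=Q\cup(Y\setminus K)$ are both sound, and the computation $K\cap O=K\cap Q\subseteq Q$ closes the argument against the definition of upper semicontinuity given in the appendix. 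Your observation that compactness of the values plays no role in this implication under the purely topological definition (it only ensures $K\cap f(x)$ remains compact-valued, so that Lemma \ref{lem:uHsc} stays applicable downstream) is also accurate.
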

An interesting question would be whether the converse is also true. Even if $f(x)$ is always convex, this need not be the case if $K\cap f(x_0)$ has empty interior as simple counterexamples can show. In case the interior is nonempty, another classic results guarantees a converse in many cases:
\begin{prop}[\cite{mor75}] \label{prop:lsc}
Let $X$ be a compact interval and $Y$ a normed space. Let $f:X\to 2^Y$ and $g:X\to 2^Y$ be two convex-valued set-valued functions. Suppose that $\operatorname{diam}(f(t)\cap g(t))<\infty$ and $f(t)\cap \operatorname{int}(g(t))\neq \emptyset$ for all $t$. Then if $f,g$ are continuous (in the sense above) so is $f\cap g$.
\end{prop}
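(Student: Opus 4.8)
The plan is to prove the two halves of continuity separately, since they behave very differently: upper semicontinuity of the intersection is essentially automatic and does \emph{not} use the transversality hypothesis $f(t)\cap\operatorname{int}(g(t))\neq\emptyset$, whereas lower semicontinuity is the real content and is exactly where that hypothesis enters. Throughout I would exploit that the values are convex and that $f(t)\cap g(t)$ is closed and, by the $\operatorname{diam}$ hypothesis, bounded, hence compact, so that the Hausdorff description of semicontinuity from Lemma \ref{lem:uHsc} is available for the intersection.

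For upper semicontinuity at a point $t_0$ I would argue by contradiction. The $\operatorname{diam}$ hypothesis keeps the intersection values bounded, and together with the semicontinuity of $f$ and $g$ (and the interior point they provide) this can be upgraded to a bound on $f(t)\cap g(t)$ that is uniform on a neighbourhood of $t_0$. Granting this, suppose $f\cap g$ were not upper semicontinuous; then there would be $\varepsilon>0$, a sequence $t_n\to t_0$, and points $y_n\in f(t_n)\cap g(t_n)$ with $\operatorname{dist}(y_n,f(t_0)\cap g(t_0))\geq\varepsilon$. By the uniform bound a subsequence converges to some $y^*$; upper semicontinuity of $f$ forces $\operatorname{dist}(y_n,f(t_0))\to 0$, so $y^*\in f(t_0)$, and likewise $y^*\in g(t_0)$, whence $y^*\in f(t_0)\cap g(t_0)$, contradicting the distance bound.

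The heart of the argument is lower semicontinuity. I would fix $t_0$ and an open set $V$ meeting $f(t_0)\cap g(t_0)$, say at $y_0$, and use the hypothesis to choose $z_0\in f(t_0)\cap\operatorname{int}(g(t_0))$. By the line-segment principle for convex sets, $y_1:=(1-\lambda)y_0+\lambda z_0$ lies in $f(t_0)$ (convexity of $f(t_0)$) and in $\operatorname{int}(g(t_0))$ for every $\lambda\in(0,1]$, and in $V$ once $\lambda$ is small, so I obtain $y_1\in f(t_0)\cap\operatorname{int}(g(t_0))\cap V$ and may fix $\rho>0$ with $\overline{B(y_1,\rho)}\subseteq\operatorname{int}(g(t_0))\cap V$. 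The decisive sub-step is an interior-stability claim: for all $t$ near $t_0$ one has $B(y_1,\rho/2)\subseteq g(t)$. To prove it I would pick finitely many points $w_1,\dots,w_k\in B(y_1,\rho)$ whose convex hull contains $B(y_1,\rho/2)$ in its interior, apply lower semicontinuity of $g$ to small balls about each $w_j$ to produce $w_j'\in g(t)$ close to $w_j$ for $t$ near $t_0$, and invoke convexity of $g(t)$ to get $\operatorname{conv}\{w_1',\dots,w_k'\}\subseteq g(t)$; for the perturbation small enough this hull still contains $B(y_1,\rho/2)$. Finally, lower semicontinuity of $f$ applied to the open ball $B(y_1,\rho/2)$, which meets $f(t_0)$ at $y_1$, yields for $t$ near $t_0$ a point in $f(t)\cap B(y_1,\rho/2)\subseteq f(t)\cap g(t)\cap V$, which is precisely lower semicontinuity of $f\cap g$ at $t_0$.

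The main obstacle is exactly the interior-stability claim. Plain lower semicontinuity of $g$ only guarantees that $g(t)$ comes close to a single prescribed target point, whereas I need an entire ball to be contained in $g(t)$; convexity of the values and the transversality hypothesis $f(t)\cap\operatorname{int}(g(t))\neq\emptyset$, which lets me slide a boundary point of the intersection strictly into $\operatorname{int}(g(t_0))$, are the two ingredients that bridge this gap. Without them simple examples show that $f\cap g$ can fail to be lower semicontinuous even when $f$ and $g$ are continuous, so I would not expect any shortcut that avoids both.
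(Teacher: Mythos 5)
A preliminary remark: the paper offers no proof of Proposition \ref{prop:lsc} at all --- it is quoted from \cite{mor75} and used as a black box --- so there is no in-paper argument to compare yours against; I can only assess the proposal on its own terms. Your overall strategy is the standard one and is sound: upper semicontinuity of the intersection is the soft half, while lower semicontinuity is where $f(t)\cap\operatorname{int}(g(t))\neq\emptyset$ enters, via sliding a point $y_0$ of the intersection along the segment towards an interior point $z_0$ of $g(t_0)$ (your $y_1$, correctly placed in $f(t_0)\cap\operatorname{int}(g(t_0))\cap V$ by convexity and the line-segment principle) and then stabilising a ball around $y_1$ inside $g(t)$ for nearby $t$. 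In the setting the paper actually needs --- $Y$ a finite-dimensional space of real matrices, closed values --- this goes through.

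As a proof of the proposition \emph{as stated}, however, there are two genuine gaps. First, $Y$ is an arbitrary normed space, and two of your steps are intrinsically finite-dimensional: the convex hull of finitely many points $w_1,\dots,w_k$ lies in a finite-dimensional affine subspace and hence has empty interior when $\dim Y=\infty$, so it can never contain $B(y_1,\rho/2)$; and the extraction of a convergent subsequence from the bounded sequence $(y_n)$ in the usc half fails without local compactness. The interior-stability claim must then be proved differently, e.g.\ by separation: if every point of $B(y_1,\rho)$ lies within $\varepsilon<\rho$ of the convex set $g(t)$, then a Hahn--Banach functional separating a putative $w\in B(y_1,\rho-\varepsilon)\setminus\overline{g(t)}$ from $g(t)$ yields a contradiction, giving $B(y_1,\rho-\varepsilon)\subseteq\overline{g(t)}$, after which one still has to pass from the closure back to $g(t)$ itself. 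Second, even in finite dimensions your usc half rests on the assertion that the pointwise bound $\operatorname{diam}(f(t)\cap g(t))<\infty$ ``can be upgraded'' to a bound uniform near $t_0$, followed by ``Granting this''. That upgrade is not automatic --- the diameters could a priori blow up as $t\to t_0$ --- and establishing it requires its own argument (for instance: your lsc step places a point of $f(t)\cap g(t)$ in the fixed ball $B(y_1,\rho/2)$ for all $t$ near $t_0$; unbounded diameters together with convexity would then put points of $f(t_n)\cap g(t_n)$ at any prescribed distance from that ball, and a compactness-plus-usc limit would contradict $\operatorname{diam}(f(t_0)\cap g(t_0))<\infty$). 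Note also that both your limit arguments silently use closedness of $f(t_0)$ and $g(t_0)$, which is not among the stated hypotheses. So: right skeleton, lsc half essentially complete in finite dimensions, but not yet a proof in the stated generality, and the usc half leans on an unproven uniform bound.
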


\section{Numerical implementation and documentation including source code} \label{app:code}
Here, we provide a short documentation to the program written in \textsc{Matlab}, Version R2014a, and used for the numerical computations in section \ref{sec:examples}. The source Code can be found at GitHub \url{https://github.com/Martin-Idel/operationalsqueezing}.

The program tries to minimise the function $f$ defined in equation (\ref{eqn:convexsum}) over the set $\mathcal{H}$. Throughout, suppose we are given a covariance matrix $\gamma$.

Let us first describe the implementation of $f$: As parameterisation of $\mathcal{H}$, we choose the simplest parameterisation such that for matrices with symplectic eigenvalues larger than one, the set of feasible points has nonempty interior: We parameterise $A,B$ via matrix units $E_i,E_{jk}$ with $i\in\{1,\ldots,n\}$, $k\in\{1,\ldots,n-1\}$ and $j<k$, where $(E_i)_{jk}=\delta_{ij}\delta_{ik}$ and $(E_{jk})_{lm}=\delta_{jl}\delta_{km}+\delta{jm}\delta_{kl}$. This parameterisation might not be very robust, but it is good enough for our purpose. Instead of working with complex parameters, we compute $s_i(A+iB)$ as $\lambda_i^{\downarrow}(H)$ for the matrix 
\begin{align} 
	H=\begin{pmatrix}{} A & B \\ B & -A\end{pmatrix}. \label{eqn:specialform}
\end{align} 
The evaluation of $f$ is done in function \textsc{objective.m}. Since $f$ is not convex for $(A,B)$ with the corresponding $H$ having eigenvalues $\geq 1$ or $\leq -1$, the function first checks, whether this constraint is satisfied and outputs a value that is $10^7$-times larger than the value of the objective function at the starting point otherwise.

The constraints are implemented in function \textsc{maxresidual.m}. Via symmetry, it is enough to check that for any $H$ tested, $\lambda_{2n}^{\downarrow}(H)\geq 1$. The second constraint is given by $\mathcal{C}^{-1}(\gamma)\geq H$ and this is tested by computing the smallest eigenvalue of the difference. 

The function which is most important for users is \textsc{minimum.m}, which takes a covariance matrix $\Gamma\geq iJ$, its dimensions $n$ and a number of options as arguments and outputs the minimum. Note that the program checks whether the covariance matrix is valid. For the minimisation, we use the \textsc{Matlab}-based solver \textsc{SolvOpt} (\cite{kun97}, latest version 1.1). \textsc{SolvOpt} uses a subgradient based method and the method of exact penalization to compute (local) minima. For convex programming, any minimum found by the solver is therefore an absolute minimum. In order to work, the objective function may not be differentiable on a set of measure zero and it is allowed to be nondifferentiable at the minimum. Since $f$ is differentiable for all $H$ with nondegenerate eigenvalues, this condition is met. In addition, \textsc{SolvOpt} needs $f$ to be defined everywhere, as it is not an interior point method. Since $f$ is well-defined but not convex for $H\notin\mathcal{H}$ and $\operatorname{spec}(H)\cup \{1\}=\emptyset$, we remedy this by changing the output of \textsc{objective.m} to be very large when $H\notin \mathcal{H}$ as described above. Constraints are handled via the method of exact penalisation. We used \textsc{SolvOpt}'s algorithm to compute the penalisation functions on its own. 

It is possible (and for speed purposes advisable) to implement analytical gradients of both the objective and the constraint functions. Following \cite{mag85}, for diagonalisable matrices $A$ with no eigenvalue multiplicities, the derivative of an eigenvalue $\lambda_i(A)$ is given by:
\begin{align}
	\partial_E\lambda_i(A)=v_i(A)^T\partial_EAv_i(A) \label{eqn:eigderiv}
\end{align}
where $v_i(A)$ is the eigenvector corresponding to $\lambda_i(A)$ and $\partial_v(A)=\lim\limits_{h\to 0} (A+hE-A)/h=E$. Luckily, if $A$ is not differentiable, this provides at least one subgradient. An easy calculation shows that a subgradient of the objective function $f$ for matrices $H$ with $-\id<H<\id$ in the parameterisation of the matrix units $E_{ij}$ is given by
\begin{align}
	(\nabla f)_i=\sum_{j=1}^n\frac{\partial_i \lambda_j^{\downarrow}(H)}{(1+\lambda_j(H))(1-\lambda_j(H))^2}
		=\sum_{j,k=1}^n\frac{v^T_{j,k}F(i)v_{k,j}}{(1+\lambda_j(H))(1-\lambda_j(H))^2}
\end{align}
with $F$ being the matrices corresponding to the chosen parameterisation. The gradient of the constraint function is very similar and given by equation (\ref{eqn:eigderiv}) for $A=\gamma-H$ or $A=2\id-H$ depending on which constraint is violated. This is implemented in functions \textsc{objectivegrad.m} and \textsc{maxresidualgrad.m}.

\textsc{SolvOpt} needs a starting point. Given $\Gamma$, via Williamson's Theorem, $\Gamma=S^TDS\geq S^TS$, hence $S^TS$ provides a good starting point. The function \textsc{williamson.m} computes the Williamson normal form for $\gamma$ and returns $S$, $D$ and $S^TS$, the latter of which is used as starting point. It computes $S$ and $D$ essentially by computing the Schur decomposition of $\Gamma^{-1/2}J\Gamma^{-1/2}$ (in the $\sigma$-basis instead of the $J$-basis). $S$ is then given by $S^T=\gamma_{1/2}KD^{-1/2}$ (see the proof of \cite{sim99}), where $K$ is the Schur transformation matrix. 

A number of comments are in order:
\begin{enumerate}
	\item All functions use global variables instead of function handles. This is required by the fact that \textsc{SolvOpt} has not been adapted to the use of function handles. The user should therefore always reset all variables before running the program.
	\item \textsc{SolvOpt} is not an interior point method, i.e. the results can at times violate constraints. We use the default value for the accuracy of constraints, which is $10^{-8}$ and can be modified by option six. The preparation error should be of the same order than the accuracy of constraints as long as the largest eigenvalue of the minimising symplectic matrix is of order one. 
	\item For our numerical tests, we used bounds on the minimal step-size and the minimal error in $f$ (\textsc{SolvOpt} options two and three) of the order $10^{-6}$ and $10^{-8}$, which seemed sufficient.
	\item All functions called by \textsc{SolvOpt} (the functions \textsc{objective.m}, \textsc{objectivegrad.m}, \textsc{maxresidual,m}, \textsc{maxresidualgrad.m} and \textsc{xtoH.m}) are properly vectorised to ensure maximal speed. 
\end{enumerate}

Finally, \textsc{bounds.m} contains all lower- and upper bounds described in section \ref{sec:bounds}. The semidefinite programme was solved using \textsc{CVX} (version SDPT3 4.0), a toolbox developed in \textsc{Matlab} for disciplined convex programming including semidefinite programming \cite{gra08}. The third bound is not described in section \ref{sec:bounds} - it is an iteration of Corollary \ref{cor:superadd} assuming superadditivity, hence in principle it could be violated. If it were violated, this would immediately disprove superadditivity, which has never been observed in our tests.

\paragraph*{Issues and further suggestions:}
It occurs sometimes that the algorithm does not converge to a minimum inside or near the feasible set. We believe that this is due to instabilities in the parameterisation and implementation. The behaviour can occur while using numerical as well as analytical subgradients, although it occurs more often with analytical ones. For every example where we could observe a failure with either numerical or analytical subgradients, one other method (using numerical subgradients, using analytical subgradients or a mixture thereof) worked fine. In cases of failure, the routine issued several warnings and the result usually lies below the lower bound. A different type of implementation might lead to an algorithm that is more stable, but we did not pursue this any further. It might also be worth to consider trying to compute the penalty function analytically.

In terms of performance times, the algorithm is generally fast for small numbers of modes. When analytical subgradients are not implemented, the performance bottleneck is given by the functions \textsc{xtoH.m}, which is called most often. When analytical subgradients are provided, the performance is naturally much faster. This is particularly important when the number of modes increases. While for five modes, the calculation is done within seconds, already for ten modes and depending on the matrix, it can take a minute on a usual laptop (the algorithm now takes the most amount of time for eigenvalue computations, which seems unavoidable). For even larger matrices, it might be advisable to switch from using the Matlab function \textsc{Eig} to \textsc{Eigf}, but for our examples this did not lead to a time gain.

\end{document}